\tikzset{
  threept/.style={
    circle,
    draw,
    inner sep=2pt,
  },
  twopt/.style={
    circle,
    draw,
    fill=black,
    inner sep=1pt,
    minimum size=1pt
  },
  cross/.style={
    cross out,
    draw=black, 
    minimum size=7pt, 
    inner sep=0pt,
    outer sep=0pt
  },
  scalar/.style={
    thick,
    dashed,
    postaction={
      decorate,
      decoration={
        markings,
        mark=at position 0.5 with {\arrow{>}}
      }
    }
  },
  spinning/.style={
    thick,
    postaction={
      decorate,
      decoration={
        markings,
        mark=at position 0.5 with {\arrow{>}}
      }
    }
  },
  spinning no arrow/.style={
    thick,
  },
  finite with arrow/.style={
    decoration={
      snake,
      amplitude=1pt,
      segment length=6pt,
      post length=2pt
    },
    decorate,
    thick,->
  },
  finite/.style={
    decoration={
      snake,
      amplitude=1pt,
      segment length=6pt,
    },
    decorate,
    thick
  }
}
\newcommand\wL{\mathbf{L}}
\newcommand\wS{\mathbf{S}}
\newcommand\wSD{\mathbf{S}_\De}
\newcommand\wSE{\mathbf{S}_E}
\newcommand\wSJ{\mathbf{S}_J}
\newcommand\wF{\mathbf{F}}
\newcommand\wR{\mathbf{R}}
\newcommand\wRb{\bar{\mathbf{R}}}
\newcommand{\tsym}{\cT}
\def\@fpheader{\ }
\title{Light-ray operators in conformal field theory}
\author{Petr Kravchuk and David Simmons-Duffin}
\affiliation{Walter Burke Institute for Theoretical Physics, Caltech, Pasadena, California 91125, USA }
\date{}
\abstract{We argue that every CFT contains light-ray operators labeled by a continuous spin $J$. When $J$ is a positive integer, light-ray operators become integrals of local operators over a null line. However for non-integer $J$, light-ray operators are genuinely nonlocal and give the analytic continuation of CFT data in spin described by Caron-Huot. A key role in our construction is played by a novel set of intrinsically Lorentzian integral transforms that generalize the shadow transform.  Matrix elements of light-ray operators can be computed via the integral of a double-commutator against a conformal block. This gives a simple derivation of Caron-Huot's Lorentzian OPE inversion formula and lets us generalize it to arbitrary four-point functions. Furthermore, we show that light-ray operators enter the Regge limit of CFT correlators, and generalize conformal Regge theory to arbitrary four-point functions. The average null energy operator is an important example of a light-ray operator. Using our construction, we find a new proof of the average null energy condition (ANEC), and furthermore generalize the ANEC to continuous spin.}
\preprint{CALT-TH 2018-018}
\begin{document}

\maketitle

\newpage

\section{Introduction}

Singularities of Euclidean correlators in conformal field theory (CFT) are described by the operator product expansion (OPE). However, in Lorentzian signature there exist singularities that cannot be described in a simple way using the OPE. One of the most important is the Regge limit of a time-ordered four-point function  (figure~\ref{fig:reggelimit}) \cite{Cornalba:2006xk,Cornalba:2006xm,Cornalba:2007fs,Cornalba:2008qf,Cornalba:2009ax,Banks:2009bj}.\footnote{In perturbation theory, Lorentzian singularities correspond to Landau diagrams \cite{Maldacena:2015iua}. It is possible that this is also true nonperturbatively.} The Regge limit is the CFT version of a high-energy scattering process: operators $\cO_1(x_1)$ and $\cO_3(x_3)$ create excitations that move along nearly lightlike trajectories, interact, and then are measured by operators $\cO_2(x_2)$ and $\cO_4(x_4)$. In holographic theories, the Regge limit is dual to high-energy forward scattering in the bulk \cite{Brower:2006ea}.

\begin{figure}[ht!]
	\centering
		\begin{tikzpicture}
		
		\draw (-3,0) -- (0,3) -- (3,0) -- (0,-3) -- cycle;
				
		\draw[fill=black] (-1.4,-1) circle (0.07); 
		\draw[fill=black] (1.4,1) circle (0.07);  
		\draw[fill=black] (-1.4,1) circle (0.07); 
		\draw[fill=black] (1.4,-1) circle (0.07);  
				
		\draw[opacity=.3] (-1.5,-1.5) -- (1.5,1.5);
		\draw[opacity=.3] (-1.5,1.5) -- (1.5,-1.5);

		\draw[->,opacity=0.7] (-1.42,1.12) -- (-1.48,1.37);
		\draw[->,opacity=0.7] (-1.42,-1.12) -- (-1.48,-1.37);
		\draw[->,opacity=0.7] (1.42,1.12) -- (1.48,1.37);
		\draw[->,opacity=0.7] (1.42,-1.12) -- (1.48,-1.37);				
		
		\node[left] at (-1.5,-1) {$1$};
		\node[right] at (1.5,1) {$2$};	
		\node[left] at (-1.5,1) {$4$};
		\node[right] at (1.5,-1) {$3$};	
		
		\end{tikzpicture}
		\caption{The Regge limit of a four-point function: the points $x_1,\dots,x_4$ approach null infinity, with the pairs $x_1,x_2$ and $x_3,x_4$ becoming nearly lightlike separated.
		}
		\label{fig:reggelimit}
\end{figure}
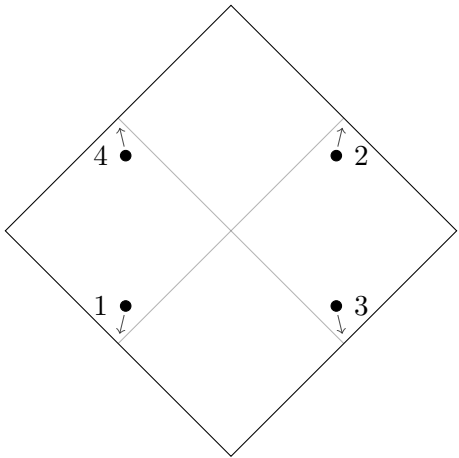

In Lorentzian signature, the OPE $\cO_i\x\cO_j$ converges if the product $\cO_i \cO_j$ acts on the vacuum (either past or future) \cite{mack1977}. That is, we have an equality of states
\be
\label{eq:opeeq}
\cO_i \cO_j |\Omega\> &= \sum_k f_{ijk} \cO_k|\Omega\>,
\ee
where $k$ runs over local operators of the theory (we suppress position dependence, for brevity). 
Thus, in figure~\ref{fig:reggelimit} the OPE $\cO_1\x\cO_3$ converges because it acts on the past vacuum, the OPE $\cO_2\x\cO_4$ converges because it acts on the future vacuum, and the OPEs $\cO_1\x\cO_4$ and $\cO_2\x\cO_3$ converge because they act on either the past or future vacuum. (Here we use the fact that spacelike-separated operators commute to rearrange the operators in the time-ordered correlator to apply (\ref{eq:opeeq}).) However, each of these OPEs is converging very slowly in the Regge limit. They can be used to prove results like analyticity and boundedness in the Regge limit \cite{Maldacena:2015waa,Hartman:2015lfa}, but they are less useful for computations (unless one has good control over the theory). Meanwhile, the OPEs $\cO_1 \x \cO_2$ and $\cO_3\x\cO_4$ are invalid in the Regge regime.

The problem of describing four-point functions in the Regge regime was partially solved in \cite{Brower:2006ea,Cornalba:2007fs,Costa:2012cb}. The behavior of the correlator is controlled by the analytic continuation of data in the $\cO_1\x\cO_2$ and $\cO_3 \x\cO_4$ OPEs to non-integer spin. For example, in a planar theory, the Regge correlator behaves (very) schematically as
\be
\frac{\<\cO_1\cO_2\cO_3\cO_4\>}{\<\cO_1\cO_2\>\<\cO_3\cO_4\>} &\sim 1-f_{12\cO}(J_0)f_{34\cO}(J_0) e^{t(J_0-1)} + \dots.
\label{eq:reggecartoonplanar}
\ee
Here, $f_{12\cO}(J)$ and $f_{34\cO}(J)$ are OPE coefficients that have been analytically continued in the spin $J$ of $\cO$.
The parameter $t$ measures the boost of $\cO_1,\cO_2$ relative to $\cO_3,\cO_4$. $J_0\in \R$ is the Regge/Pomeron intercept, and is determined by the analytic continuation of the dimension $\De_\cO$ to non-integer $J$.\footnote{In $d=2$, the Regge regime is the same as the chaos regime. In $d\geq 3$, it is related to chaos in hyperbolic space. See \cite{Roberts:2014ifa,Murugan:2017eto} for discussions. Note that $J_0-1$ plays the role of a Lyapunov exponent, and it is constrained by the chaos bound to be less than $1$ \cite{Shenker:2014cwa,Maldacena:2015waa}.} The ``$\dots$" in (\ref{eq:reggecartoonplanar}) represent higher-order corrections in $1/N^2$ and also terms that grow slower than $e^{t(J_0-1)}$ in the Regge limit $t\to \oo$.

A missing link in this story was provided recently by Caron-Huot, who proved that OPE coefficients and dimensions have a natural analytic continuation in spin in any CFT \cite{Caron-Huot:2017vep}. The analytic continuation of OPE data in a scalar four-point function $\<\f_1\f_2\f_3\f_4\>$ can be computed by a ``Lorentzian inversion formula," given by the integral of a double-commutator $\<[\f_4,\f_1][\f_2,\f_3]\>$ times a conformal block $G_{J+d-1,\De-d+1}$ with unusual quantum numbers. Specifically, $\De,J$ are replaced with
\be
\label{eq:chtransformation}
(\De,J)\to(J+d-1,\De-d+1)
\ee
relative to a conventional conformal block.
Caron-Huot's Lorentzian inversion formula has many other useful applications, for example to large-spin perturbation theory and the lightcone bootstrap \cite{Fitzpatrick:2012yx,Komargodski:2012ek,Fitzpatrick:2015qma,Li:2015itl,Li:2015rfa,Alday:2015eya,Alday:2015ota,Alday:2015ewa,Simmons-Duffin:2016wlq,Alday:2016njk}, and to the SYK model \cite{Sachdev:1992fk,kitaevfirsttalk,Maldacena:2016hyu,Polchinski:2016xgd}.\footnote{In the 1-dimensional SYK model, the analog of analytic continuation in spin is analytic continuation in the weight of discrete states in the conformal partial wave expansion \cite{Maldacena:2016hyu,Simmons-Duffin:2017nub}.}

However, Caron-Huot's result raises some obvious questions:
\begin{itemize}
\item Can operators themselves (not just their OPE data) be analytically continued in spin?
\item What is the space of continuous spin operators in a given CFT?
\item Do continuous-spin operators have a Hilbert space interpretation (similar to how integer-spin operators correspond to CFT states on $S^{d-1}$)?
\item What is the meaning of the funny block in the Lorentzian inversion formula, and how do we generalize it?
\end{itemize}
Answering these questions is important for making sense of the Regge limit, and more generally for understanding how to write a convergent OPE in non-vacuum states.

It is easy to describe continuous-spin operators mathematically. Consider first a primary operator $\cO^{\mu_1\cdots\mu_J}(x)$ with integer spin $J$. Let us introduce a null polarization vector $z_\mu$ and contract it with the indices of $\cO$ to form a function of $(x,z)$:
\be
\label{eq:integerspincontraction}
\cO(x,z) &\equiv \cO^{\mu_1\cdots\mu_J}(x) z_{\mu_1}\cdots z_{\mu_J},
\qquad (z^2 = 0).
\ee
The tensor $\cO^{\mu_1\cdots\mu_J}(x)$ can be recovered from the function $\cO(x,z)$ by stripping off the $z$'s and subtracting traces. Thus, $\cO(x,z)$ is a valid alternative description of a traceless symmetric tensor.  Note that $\cO(x,z)$ is a homogeneous polynomial of degree $J$ in $z$. The generalization to a continuous spin operator $\mathbb{O}$ is now straightforward: we simply drop the requirement that $\mathbb{O}(x,z)$ be polynomial in $z$ and allow it to have non-integer homogeneity,
\be
\mathbb{O}(x,\l z) &= \l^J \mathbb{O}(x,z),\qquad \l>0,\ J\in \C.
\ee

Continuous-spin operators are necessarily nonlocal. This follows from Mack's classification of positive-energy representations of the Lorentzian conformal group $\tl \SO(d,2)$ \cite{mack19772}, which only includes nonnegative integer spin representations.\footnote{For non traceless-symmetric tensor operators, we define spin as the length of the first row of the Young diagram for their $\SO(d)$ representation. For fermionic representations spin is a half-integer and for simplicity of language we include this case into the notion of ``integer spin'' operators.} CFT states have positive energy, so by the state-operator correspondence, local operators must have nonnegative integer spin, and conversely continuous-spin operators must be nonlocal.
Mack's classification also shows that continuous-spin operators must annihilate the vacuum:
\be
\mathbb{O}(x,z)|\Omega\> &= 0\qquad (J\notin \Z_{\geq 0}),
\ee
otherwise $\mathbb{O}(x,z)|\Omega\>$ would transform in a nontrivial continuous-spin representation, which would include a state with negative energy.

If continuous-spin operators annihilate the vacuum, how can we analytically continue the local operators of a CFT, which certainly do not annihilate the vacuum? The answer is that we must first turn local operators into something nonlocal that annihilates the vacuum, and then analytically continue that. The correct object turns out to be the integral of a local operator along a null line,
\be
\label{eq:nulllineintegral}
\int_{-\oo}^\oo d\a\, \cO(\a z,z) &= \int_{-\oo}^\oo d\a\,\cO^{\mu_1\cdots \mu_J}(\a z)z_{\mu_1}\cdots z_{\mu_J}.
\ee
This can be written more covariantly by performing a conformal transformation to bring the beginning of the null line to a generic point $x$:\footnote{As $\a\to 0^-$, the point $x-z/\a$ diverges to future null infinity, and the integration contour should be understood as extending into the next Poincare patch on the Lorentzian cylinder. We give more detail in section~\ref{sec:lighttransformintro}.}
\be
\wL[\cO](x,z) &\equiv \int_{-\oo}^\oo d\a (-\a)^{-\De-J} \cO\p{x-\frac z \a, z}.
\ee
This defines an integral transform $\wL$ that we call the ``light transform." The expression (\ref{eq:nulllineintegral}) corresponds to $\wL[\cO](-\oo z,z)$, where $x=-\oo z$ is a point at past null infinity.

After reviewing some representation theory in sections~\ref{sec:lorentziancylinder} and \ref{sec:reptheoryreview}, we show in section~\ref{sec:weylandintegral} that if $\cO_{\De,J}$ has dimension $\De$ and spin $J$, then $\wL[\cO_{\De,J}](x,z)$ transforms like a primary operator with dimension $1-J$ and spin $1-\De$:
\be
\label{eq:wLquantumnumbers}
\wL : (\De,J) &\to (1-J,1-\De).
\ee
In particular, $\wL[\cO_{\De,J}]$ can have non-integer spin. The average null energy operator $\cE=\wL[T]$ (the light transform of the stress tensor) is a special case, having dimension $-1$ and spin $1-d$. We will see that $\wL$ is part of a dihedral group ($D_8$) of intrinsically Lorentzian integral transforms that generalize the Euclidean shadow transform \cite{Ferrara:1972uq,SimmonsDuffin:2012uy}. These Lorentzian transforms implement affine Weyl reflections that preserve the Casimirs of the conformal group. For example, the quadratic Casimir eigenvalue is given by
\be
C_2(\De,J) &= \De(\De-d) + J(J+d-2),
\ee
and this is indeed invariant under (\ref{eq:wLquantumnumbers}). The transformation (\ref{eq:chtransformation}) appearing in Caron-Huot's formula is another affine Weyl reflection. The Lorentzian transforms do not give precisely a representation of $D_8$, but instead satisfy an interesting ``anomalous" algebra that we derive in section~\ref{sec:algebraoftransforms}. Mack's classification implies that $\wL[\cO_{\De,J}]$ must annihilate the vacuum whenever $\cO_{\De,J}$ is a local operator. This is also easy to see directly by deforming the $\a$ contour into the complex plane, as we show in section~\ref{sec:lightproperties}.

We claim that the operators $\wL[\cO_{\De,J}]$ can be analytically continued in $J$, and their continuations are light-ray operators.\footnote{Note that $\wL[\cO_{\De,J}](x,z)$ has dimension $1-J$ and spin $1-\De$. Thus, analytic continuation in $J$ is really analytic continuation in the {\it dimension\/} of $\wL[\cO_{\De,J}]$ away from negative integer values. We will continue to refer to it as analytic continuation in spin, since $J$ labels the spin of local operators.} As an example, consider Mean Field Theory (a.k.a.\ Generalized Free Fields) in $d=2$ with a scalar primary $\f$. This theory contains ``double-trace" operators
\be
[\f\f]_J(u,v)\equiv:\!\f(u,v) \ptl_v^J \f(u,v)\!:+\,\ptl_v(\ldots)
\ee
 with dimension $2\De_\f+J$ and even spin $J$.
Here, $:\,:$ denotes normal ordering and we have written out the definition up to total derivatives (which are required to ensure that this is a primary operator).  We are using lightcone coordinates $u=x-t,v=x+t$, and for simplicity focusing on operators with $\ptl_v$ derivatives only.  The corresponding analytically-continued light-ray operators are
\be
\label{eq:examplecontinuousspin}
\mathbb{O}_J(0,-\oo) &=\frac{i\Gamma(J+1)}{2^{J}}\int_{-\oo}^\oo dv \int_{-\oo}^\oo \frac{ds}{2\pi}\, \p{\frac{1}{(s+i\e)^{J+1}} + \frac{1}{(-s+i\e)^{J+1}}} :\!\f(0,v+s)\f(0,v-s)\!:.
\ee
When $J$ is an even integer, we have
\be
\frac{i\Gamma(J+1)}{2\pi}\, \p{\frac{1}{(s+i\e)^{J+1}} - \frac{1}{(s-i\e)^{J+1}}} = \frac{\ptl^J\de(s)}{\ptl s^J} \qquad (J\in 2\Z_{\geq 0}).
\ee
Thus, when $J$ is an even integer, $\mathbb{O}_J$ becomes
\be
\mathbb{O}_J(0,-\oo) &=
2^{-J}\int_{-\oo}^\oo dv \int_{-\oo}^\oo ds\, \frac{\ptl^J\de(s)}{\ptl s^J}:\!\f(0,v+s)\f(0,v-s)\!: \nn\\
&= \int_{-\oo}^\oo dv :\!\f\ptl_v^J \f\!:\!(0,v)
=\wL[[\f\f]_J](0,-\oo)\qquad (J\in 2\Z_{\geq 0}).
\ee
By contrast, when $J$ is not an even integer, $\mathbb{O}_J$ is a legitimately nonlocal light-ray operator whose correlators are analytic continuations of the correlators of $\wL[[\f\f]_J]$. In particular, three-point functions $\<\cO_1 \cO_2 \mathbb{O}_J\>$ give an analytic continuation of the three-point coefficients of $\<\cO_1 \cO_2 [\f\f]_J\>$.

Similar light-ray operators have a long history in the gauge-theory literature \cite{BALITSKY1989541,Braun:2003rp} (see \cite{Caron-Huot:2013fea,Balitsky:2013npa,Balitsky:2015oux,Balitsky:2015tca} for recent discussions). There, one often considers a bilocal integral of operators inserted along a null Wilson line. Such operators were discussed in \cite{Hofman:2008ar}, where they were argued to control OPEs of the average null energy operator $\cE$.  In perturbation theory, it is reasonable to imagine constructing more operators like (\ref{eq:examplecontinuousspin}). However, it is less clear how to define them in a nonperturbative context where normal ordering is not well-defined, and there can be  complicated singularities when two operators become lightlike-separated. It is also not clear what a null Wilson line means in an abstract CFT.

Our tool for constructing analogs of $\mathbb{O}_J$ in general CFTs will be harmonic analysis \cite{Dobrev:1977qv}.  Given primary operators $\cO_1,\cO_2$, we find in section~\ref{sec:lightray} an integration kernel $K_{\De,J}(x_1,x_2,x,z)$ such that
\be\label{eq:lightraykernelschematic}
\mathbb{O}_{\De,J}(x,z) &= \int d^dx_1 d^dx_2 K_{\De,J}(x_1,x_2,x,z) \cO_1(x_1) \cO_2(x_2)
\ee
transforms like a primary with dimension $1-J$ and spin $1-\De$ (when inserted in a time-ordered correlator). The object $\mathbb{O}_{\De,J}$ is meromorphic in $\De$ and $J$ and has poles of the form
\be
\mathbb{O}_{\De,J}(x,z) &\sim \frac{1}{\De-\De_i(J)} \mathbb{O}_{i,J}(x,z).
\ee
We conjecture based on examples that poles must come from the region where $x_1,x_2$ are close to the light ray $x+\R_{\geq 0} z$ (we have not established this rigorously in a general CFT). The residues of the poles can thus be interpreted as light-ray operators $\mathbb{O}_{i,J}(x,z)$ that make sense in arbitrary correlators. Furthermore, when $J$ is an integer, the residues are light-transforms of local operators $\wL[\cO]$. Thus the $\mathbb{O}_{i,J}$ give analytic continuations of $\wL[\cO]$ for all $\cO\in\cO_1\x\cO_2$.

In section~\ref{sec:inversionformulae}, we show that $\<\cO_3 \cO_4 \mathbb{O}_{\De,J}\>$ can be computed via the integral of a double-commutator $\<[\cO_4,\cO_1][\cO_2,\cO_3]\>$ over a Lorentzian region of spacetime. This leads to a simple proof of Caron-Huot's Lorentzian inversion formula. The contour manipulation from \cite{Simmons-Duffin:2017nub} is crucial for this computation. However, the light-ray perspective makes our proof simpler than the one in \cite{Simmons-Duffin:2017nub}. In particular, it makes it clearer why the unusual conformal block $G_{J+d-1,\De-d+1}$ appears.  The reason is that the quantum numbers $(J+d-1,\De-d+1)$ are dual to those of the light-transform $(1-J,1-\De)$ in the sense that the product
\be
d^d x\, d^d z\, \de(z^2)\, \cO_{1-J,1-\De}(x,z)\cO_{J+d-1,\De-d+1}(x,z)  
\ee
has dimension zero and spin zero.  Our perspective also leads to a natural generalization of Caron-Huot's formula to the case of arbitrary operator representations, which we describe in section~\ref{sec:generalizedinversiontwo}. Subsequently in section~\ref{sec:conformalregge}, we generalize conformal Regge theory to arbitrary operator representations as well, along the way showing that light-ray operators describe part of the Regge limit of four-point functions as conjectured in \cite{Banks:2009bj}. 

As mentioned above, the average null energy operator $\cE=\wL[T]$ is an example of a light-ray operator. The average null energy condition (ANEC) states that $\cE$ is positive-semidefinite, i.e.\ its expectation value in any state is nonnegative. Some implications of the ANEC in CFTs are discussed in \cite{Hofman:2008ar,Cordova:2017zej,Cordova:2017dhq}. The ANEC was recently proven in \cite{Faulkner:2016mzt} using techniques from information theory and in \cite{Hartman:2016lgu} using causality.  By expressing $\cE$ as the residue of an integral of a pair of real operators $\f(x_1) \f(x_2)$, we find  a new proof of the ANEC in section~\ref{sec:positivityandtheanec}.\footnote{Our proof is conceptually very similar to the one in \cite{Hartman:2016lgu}, but it has a technical advantage that it does not require any assumptions about the behavior of correlators outside the regime of OPE convergence. A disadvantage is that we require the dimension $\De_\f$ to be sufficiently low, though we expect it should be possible to relax this restriction.} Furthermore, $\cE$ is part of a family of light-ray operators $\cE_J$ labeled by continuous spin $J$, and our construction of light-ray operators applies to this entire family. This lets us derive a novel generalization of the ANEC to continuous spin. More precisely, we show that
\be
\<\Psi|\cE_J|\Psi\> &\geq 0,\qquad (J\in \R_{\geq J_\mathrm{min}}),
\ee
where $\cE_J$ is the family of light-ray operators whose values at even integer $J$ are given by
\be
\cE_J &= \wL[\cO_{\De_\mathrm{min}(J),J}] \qquad (J\in 2\Z,\ J\geq 2),
\ee
where $\cO_{\De_\mathrm{min}(J),J}$ is the operator with spin $J$ of minimal dimension. Here, $J_\mathrm{min}\leq 1$ is the smallest value of $J$ for which the Lorentzian inversion formula holds \cite{Caron-Huot:2017vep}.

We conclude in section~\ref{sec:discussion} with discussion and numerous questions for the future. The appendices contain useful mathematical background, further technical details, and some computations needed in the main text. In particular, appendix~\ref{sec:contspincorrelators} includes a general discussion of continuous-spin tensor structures and their analyticity properties, appendix~\ref{sec:euclideanharmonicanalysis} contains a lightning review of harmonic analysis for the Euclidean conformal group, and appendix~\ref{sec:conformalblockscontspin} gives details on conformal blocks with continuous spin.

\subsection*{Notation}

In this work, we use the convention that correlators in the state $|\Omega\>$ represent physical correlators in a CFT. For example,
\be
\<\Omega|\cO_1\cdots\cO_n|\Omega\>
\ee
is a physical Wightman function, and
\be
\<\cO_1\cdots\cO_n\>_\O &\equiv \<\O|T\{\cO_1\cdots\cO_n\}|\O\>
\ee
is a physical time-ordered correlator.

Often, we discuss two- and three-point structures that are fixed by conformal invariance up to a constant. These structures do not represent physical correlators --- they are simply known functions of spacetime points. We write them as correlators in the ficticious state $|0\>$. For example, if $\f_i$ are scalar primaries with dimensions $\De_i$, then
\be
\label{eq:structurenotation}
\<0|\f_1(x_1)\f_2(x_2)\f_3(x_3)|0\> &= \frac{1}{(x_{12}^2 + i \e t_{12})^{\frac{\De_1+\De_2-\De_3}{2}} (x_{23}^2 + i \e t_{23})^{\frac{\De_2+\De_3-\De_1}{2}} (x_{13}^2 + i \e t_{13})^{\frac{\De_1+\De_3-\De_2}{2}}}
\ee
denotes the unique conformally-invariant three-point structure for scalars with dimensions $\De_i$, with the  $i\e$-prescription appropriate for the given Wightman ordering. Similarly,
\be
\label{eq:structurenotationtimeordered}
\<\f_1(x_1)\f_2(x_2)\f_3(x_3)\> &= \frac{1}{(x_{12}^2 + i \e)^{\frac{\De_1+\De_2-\De_3}{2}} (x_{23}^2 + i \e)^{\frac{\De_2+\De_3-\De_1}{2}} (x_{13}^2 + i \e)^{\frac{\De_1+\De_3-\De_2}{2}}}
\ee
denotes the unique conformally-invariant structure with the $i\e$-prescription for a time-ordered correlator.
In particular, (\ref{eq:structurenotation}) and (\ref{eq:structurenotationtimeordered}) do not include OPE coefficients.

\section{The light transform}

This section is devoted to mathematical background and results that will be needed for constructing and studying light-ray operators.
We first review some basic facts about the Lorentzian conformal group and its representation theory, with an emphasis on continuous spin operators. We then introduce a set of intrinsically Lorentzian integral transforms, which generalize the well-known Euclidean shadow transform, and study their properties. One of these transforms is the ``light transform" mentioned in the introduction. It will play a key role in the sections that follow.

\subsection{Review: Lorentzian cylinder}
\label{sec:lorentziancylinder}

Similarly to Euclidean space $\R^{d}$, Minkowski space $\cM_d=\R^{d-1,1}$ is not invariant under finite conformal transformations. In Euclidean space, this problem is easily solved by studying CFTs on $S^d$, the conformal compactification of $\R^d$. In Lorentzian signature, the problem is more subtle.

The simplest extension of Minkowski space $\cM_d=\R^{d-1,1}$  that is invariant under the Lorentzian conformal group $\SO(d,2)$ is its conformal compactification $\cM_d^c$. The space $\cM_d^c$ can be easily described by the embedding space construction~\cite{Dirac:1936fq,Mack:1969rr,Boulware:1970ty,Ferrara:1973eg,Ferrara:1973yt,Cornalba:2009ax,Weinberg:2010fx}: it is the projectivization of the null cone in $\R^{d,2}$ on which $\SO(d,2)$ acts by its vector representation. If we choose coordinates on $\R^{d,2}$ to be $X^{-1}, X^{0}, \ldots X^{d}$ with the metric 
\be
	X^2=-(X^{-1})^2-(X^{0})^2+(X^1)^2+\ldots+(X^d)^2,
\ee
then the null cone is defined by
\be
	(X^{-1})^2+(X^{0})^2=(X^1)^2+\ldots+(X^d)^2.
\ee
If we mod out by positive rescalings (i.e.\ by $\R_+$), we can set both sides of this equation to $1$, identifying the space of solutions with $S^1\times S^{d-1}$, where the $S^1$ is timelike. To get $\cM_d^c$, we mod out by $\R$ rescalings,\footnote{In the Euclidean embedding space construction based on $\R^{d+1,1}$ we usually just take the future null cone instead of considering negative rescalings, but in $\R^{d,2}$ the null cone is connected and this is not possible.} obtaining $\cM_d^c=S^1\times S^{d-1}/\Z_2$, where $\Z_2$ identifies antipodal points in both $S^1$ and $S^{d-1}$.  Minkowski space $\cM_d\subset \cM_d^c$ can be obtained by introducing lightcone coordinates in $\R^{d,2}$,
\be
X^\pm = X^{-1}\pm X^d,
\ee
and considering points with $X^+\neq 0$. Using $\R$ rescalings we can set $X^+=1$ for such points, and the null cone equation becomes
\be
	X^-=-(X^0)^2+(X^1)^2+\ldots+(X^{d-1})^2.
\ee
If we set $x^\mu = X^\mu$ for $\mu=0,\ldots d-1$, this gives the standard embedding of $\R^{d-1,1}$, 
\be
(X^+,X^-,X^\mu)=(1,x^2,x^\mu).
\label{eq:MinkowskiEmbedding}
\ee
One can check that the action of $\SO(d,2)$ on $X$ induces the usual conformal group action on $x^\mu$. The points that lie in $\cM_d^c\backslash\cM_d$ have $X^+=0$ and thus $X^\mu X_\mu=0$ with arbitrary $X^-$. They correspond to space-time infinity\footnote{In $\cM_d^c$ the infinite future, the infinite past and the spatial infinity of Minkowski space are identified. The past neighborhood of the future infinity, the future neighborhood of the past infinity and the spacelike neighborhood of the spatial infinity together form a complete neighbourhood of the space-time infinity in $\cM_d^c$.} ($X^\mu= 0$) and null infinity ($X^\mu\neq 0$).

By construction, $\cM_d^c$ has an action of $\SO(d,2)$ and is thus a natural candidate for the space on which a conformally-invariant QFT can live. However, it is unsuitable for this purpose due to the existence of closed timelike curves that are evident from its description as $S^1\times S^{d-1}/\Z_2$ with timelike $S^1$. This problem can be fixed by instead considering the universal cover $\widetilde\cM_d=\R\times S^{d-1}$,\footnote{For $d=2$ this is not the universal cover.} which is simply the Lorentzian cylinder. It was shown in~\cite{Luscher:1974ez} that Wightman functions of a CFT on $\R^{d-1,1}$ can be analytically continued to $\widetilde{\cM}_d$. Indeed, one can first Wick-rotate the CFT to $\R^d$, map it conformally to the Euclidean cylinder $\R\times S^{d-1}$, and then Wick-rotate to $\widetilde{\cM}_d$ (of course the actual proof in~\cite{Luscher:1974ez} is more involved). 

\begin{figure}[t!]
	\centering
	\begin{tikzpicture}

	\draw[fill=yellow, opacity = 0.05,blue] (3,0) -- (0,3) -- (-3,0) -- (0,-3) -- cycle;
	
	\draw[] (-3,0) -- (0,3) -- (3,0) -- (0,-3) -- cycle;
	\draw[] (0,3) -- (0.3,3.3);
	\draw[] (0,3) -- (-0.3,3.3);
	\draw[] (0,-3) -- (0.3,-3.3);
	\draw[] (0,-3) -- (-0.3,-3.3);
	
	\draw[fill=black] (-3,0) circle (0.07); 
	\draw[fill=black] (3,0) circle (0.07); 
	
	\node[left] at (-3,0) {$\infty$};
	\node[right] at (3,0) {$\infty$};
	
	\node at (0,0) {$\cM_d$};
	
	\node at (2,2) {$\tl\cM_d$};
	
	\draw[dashed] (-3,-3.3) -- (-3,3.3);				
	\draw[dashed] (3,-3.3) -- (3,3.3);				
	
	\end{tikzpicture}
	\caption{Poincare patch $\cM_d$ (blue, shaded) inside the Lorentzian cylinder $\tl\cM_d$ in the case of 2 dimensions. The spacelike infinity of $\cM_d$ is marked by $\infty$. The dashed lines should be identified.}
	\label{fig:lorentziancylinder}
\end{figure}
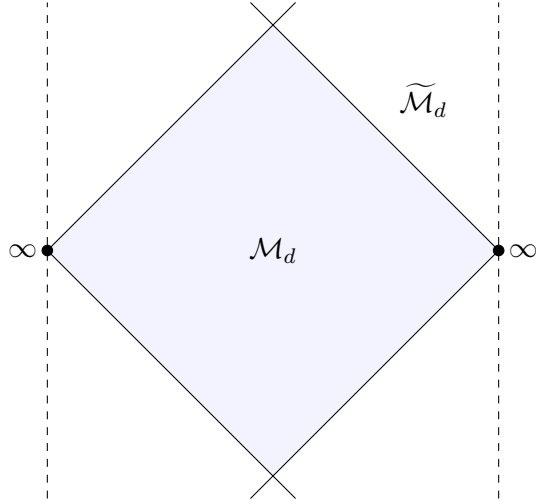

To describe coordinates on $\widetilde{\cM}_d$, it is convenient to first consider the null cone in $\R^{d,2}$ mod $\R_+$. It is equivalent to $S^1\times S^{d-1}$ defined by
\be
	(X^{-1})^2+(X^{0})^2=(X^1)^2+\ldots+(X^d)^2=1,
\ee
and we can use the parametrization
\be
	X^{-1}&=\cos\tau,\nn\\
	X^{0}&=\sin\tau,\nn\\
	X^i&=e^i,\quad i=1\ldots d,\label{eq:CylinderCovering}
\ee
where $\vec e$ is a unit vector in $\R^d$. Here $\tau$ is the coordinate on $S^1$ with identification $\tau\sim \tau+2\pi$, and taking the universal cover is equivalent to removing this identification. The coordinates $(\tau,\vec e)$ with $\tau\in\R$ then cover $\widetilde\cM_d$ completely. Minkowski space $\cM_d$ can be conformally identified with a particular region in $\widetilde{\cM}_d$ by using the embedding~\eqref{eq:MinkowskiEmbedding}. This gives
\be
	x^0&=\frac{\sin\tau}{\cos\tau+e^d},\nn\\
	x^i&=\frac{e^i}{\cos\tau+e^d}, \quad i=1,\ldots d-1,
\ee
in the region where $\cos\tau+e^d>0$ and $-\pi<\tau<\pi$. This region consists of points spacelike separated from $\tau=0,\, \vec e=(0,\ldots,0,-1)$, which is the spatial infinity of $\cM_d$ (see figure~\ref{fig:lorentziancylinder}). We will refer to this particular region as the (first) Poincare patch. Note that the null cone in $\R^{d,2}$ modulo $\R_+$ contains two Poincare patches -- one with $X^+>0$ and one with $X^+<0$. The relation between Wightman functions on $\cM_d$ and $\widetilde\cM_d$ (in their natural metrics) for operators reads as\footnote{When applied to operators with spin, this identity does not produce a nice function on $\tl\cM_d$, because in typical bases of spin indices on Minkowski space translations in $\tau$ act by matrices which have singularities. Therefore, in order to have nice functions on~$\tl\cM_d$ one has to perform a redefinition of spin indices~\cite{Luscher:1974ez}.}
\be
	\<\Omega|\cO_1(x_1)\cdots \cO_n(x_n)|\Omega\>_{\cM_d}=\prod_{i=1}^{n}(\cos\tau_i+e^d_i)^{\Delta_i}\<\Omega|\cO_1(\tau_1,\vec e_1)\cdots \cO_n(\tau_n,\vec e_n)|\Omega\>_{\widetilde\cM_d}.
\ee

Let us discuss the action of the conformal group on $\widetilde\cM_d$. First of all, because we have taken the universal cover of $\cM_d^c$, it is no longer true that $\SO(d,2)$ acts on $\widetilde\cM_d$. Instead, the universal covering group $\widetilde\SO(d,2)$ acts on $\widetilde\cM_d$. Indeed, the rotation generator $M_{-1,0}$ generates shifts in $\tau$ and in $\SO(d,2)$ we have $e^{2\pi M_{-1,0}}=1$, whereas this is definitely not true on $\widetilde\cM_d$ because $\tau\nsim\tau+2\pi$. In the universal cover $\widetilde{\SO}(d,2)$, this direction gets decompactified so that the action becomes consistent. 

\subsubsection{Symmetry between different Poincare patches}

There exists an important symmetry $\tsym$ of $\widetilde{\cM}_d$ that commutes with the action of $\widetilde\SO(d,2)$. Namely, if we take a point with coordinates $p=(\tau,\vec e)$ and send light rays in all future directions, they will all converge at the point $\tsym p\equiv (\tau+\pi,-\vec e)$. The points $p$ and $\tsym p$ in $\widetilde\cM_d$ correspond to the same point in $\cM_d^c$ and thus $\tsym$ commutes with infinitesimal conformal generators and therefore also with the full $\widetilde\SO(d,2)$.

When $d$ is even, $\tsym$ lies in the center of $\widetilde{\SO}(d,2)$ and we can take
\be\label{eq:evendt}
	\tsym=e^{\pi M_{-1,0}}e^{\pi M_{1,2}+\pi M_{3,4}+\ldots+\pi M_{d-1,d}}.
\ee
For odd $d$ only $\tsym^2$ lies in $\tl\SO(d,2)$. But if the theory preserves parity, i.e.\ we have an operator $P$ that maps $x^1\to -x^1$ in the first Poincare patch, then we can take
\be
	\tsym = e^{\pi M_{0,-1}+\pi M_{23}+\ldots +\pi M_{d-1,d}}P.
\ee
If the theory doesn't preserve parity, $\tsym$ can still be defined as an operation on correlation functions in the sense specified below.

If $\tsym$ exists as a unitary operator on the Hilbert space ($d$ even or parity-preserving theory in odd $d$), then we can consider its action on local operators. For scalars we clearly have
\be
	\tsym \phi(x) \tsym^{-1}=\phi(\tsym x),
\ee
up to intrinsic parity in odd $d$. To understand the action of $\tsym$ on operators with spin, it is convenient to work in the embedding space, where we have for tensor operators
\be
	\tsym\cO(X,Z_1,Z_2,\ldots Z_n)\tsym^{-1}=\cO(-X,-Z_1,-Z_2,\ldots, -Z_n).
\ee
Here the point $-X$ is interpreted as the point in the Poincare patch which is in immediate future of the first Poincare patch, and $Z_i$ are null polarizations corresponding to the various rows of the Young diagram of $\cO$. Again, in odd dimensions we might need to add a factor of intrinsic parity.

Note that the above action on tensor operators can be defined regardless of the dimension $d$ or whether or not the theory preserves parity. We will thus define $\tsym$ as an operator which can act on functions on $\tl\cM_d$ according to
\be
	(\tsym\.\cO)(X,Z_1,Z_2,\ldots Z_n)\equiv\cO(-X,-Z_1,-Z_2,\ldots, -Z_n),
\ee
where again $-X$ is interpreted as corresponding to $\tsym x$. As discussed above, in even dimensions this always comes from a unitary symmetry of the theory defined by~\eqref{eq:evendt}, but in odd dimensions it may not be a symmetry (even if the theory preserves parity). In such cases we can still use $\tsym$ thus defined to study conformally-invariant objects, similarly to how we can separate tensor structures into parity-odd and parity-even regardless of whether the theory preserves parity. To have a uniform discussion, we will use this definition of $\tsym$ action in the rest of the paper.

Finally, let us note that in even dimensions for tensor operators
\be\label{eq:tsymeigenvalue}
	\tsym \cO(x)|\O\>&=e^{i\pi(\De+N)}\cO(x)|\O\>,\nn\\
	\<\O|\cO(x)\tsym&=e^{i\pi(\De+N)}\<\O|\cO(x),
\ee
where $N$ is the total number of boxes in the $\SO(d-1,1)$ Young diagram of $\cO$. This follows from the fact that the representation generated by $\cO$ acting on the vacuum is irreducible. One can check the eigenvalue by considering this identity inside a Wightman two-point function. The same relation holds in parity-even structures in odd dimensions (in particular, in two-point functions) and with a minus sign in parity-odd structures.

\subsubsection{Causal structure}

The action of $\widetilde\SO(d,2)$ on $\widetilde\cM_d$ preserves the causal structure of the Lorentzian cylinder~\cite{Luscher:1974ez}. This property will allow us to define conformally-invariant integration regions. We usually label points in $\widetilde\cM_d$ by natural numbers and we write $1<2$ when point $1$ is inside the past lightcone of $2$ and $1\approx 2$ when $1$ is spacelike from $2$. Furthermore, we write $1^\pm$ for $\tsym^{\pm 1}1$ (more generally, $1^{\pm k}$ for $\tsym^{\pm k}1$). That is, $1^+$ is the point in the ``next" Poincare patch with the same Minkowski coordinates as $1$. Similarly, $1^-$ is the point in the ``previous" Poincare patch with the same Minkowski coordinates as 1. Some causal relationships between points can be written in different ways, for example $1\approx 2$ if and only if $2^-<1<2^+$ (figure~\ref{fig:causalrelationships}).

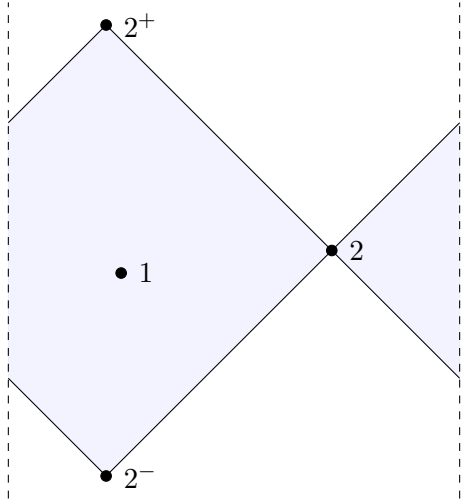
\begin{figure}[ht!]
	\centering
		\begin{tikzpicture}
		\draw[fill=yellow, opacity = 0.05,blue] (1.3,0) -- (-1.7,3) -- (-3,1.7) -- (-3,-1.7) -- (-1.7,-3) -- cycle;
		\draw[fill=yellow, opacity = 0.05,blue] (1.3,0) -- (3,1.7) -- (3,-1.7) -- cycle;
		
		\draw[fill=black] (-1.5,-0.3) circle (0.07); 
		\draw[fill=black] (1.3,0) circle (0.07);  
		\draw[fill=black] (-1.7,-3) circle (0.07);  
		\draw[fill=black] (-1.7,3) circle (0.07);  

        \draw[dashed] (-3,-3.3) -- (-3,3.3);				
        \draw[dashed] (3,-3.3) -- (3,3.3);				
		\draw[] (1.3,0) -- (-1.7,3);
		\draw[] (1.3,0) -- (-1.7,-3);
		\draw[] (1.3,0) -- (3,1.7);
		\draw[] (1.3,0) -- (3,-1.7);
		\draw[] (-1.7,3) -- (-3,1.7);
		\draw[] (-1.7,-3) -- (-3,-1.7);
		
		\node[right] at (-1.4,-0.3) {$1$};
		\node[right] at (1.4,0) {$2$};
		\node[right] at (-1.6,3) {$2^+$};
		\node[right] at (-1.6,-3) {$2^-$};
		
		\end{tikzpicture}
		\caption{$1$ is spacelike from $2$ ($1\approx 2$) if and only if $1$ is in the future of $2^-$ and the past of $2^+$ ($2^-<1<2^+$). The figure shows the Lorentzian cylinder in 2-dimensions. The dashed lines should be identified.}
		\label{fig:causalrelationships}
\end{figure}

\subsection{Review: Representation theory of the conformal group}
\label{sec:reptheoryreview}

We will also need some facts from unitary representation theory of the conformal groups $\SO(d+1,1)$ and $\SO(d,2)$. These groups are non-compact and their unitary representations are infinite-dimensional. We will mostly be interested in a particular class of unitary representations known as principal series representations, and also their non-unitary analytic continuations.

Unitary principal series representations of $\SO(d+1,1)$ are the easiest to describe. In this case, a principal series representation $\cE_{\De,\rho}$ is labeled by a pair $(\De,\rho)$, where $\De$ is a scaling dimension of the form $\De=\frac{d}{2}+i s$ with $s\in \R$ and an $\rho$ is an irreducible $\SO(d)$ representation. The elements of $\cE_{\De,\rho}$ are functions on $\R^d$ (more precisely, on the conformal sphere $S^d$) that transform under $\SO(d+1,1)$ as primary operators with scaling dimension $\De$ and $\SO(d)$ representation $\rho$. The inner product between two functions $f^a(x)$ and $g^a(x)$ (where $a$ is an index for $\rho$) is defined by
\be
	(f,g)\equiv \int d^d x (f^a(x))^* g^a(x).
\ee
This is positive-definite by construction. It is conformally-invariant because while $g$ transforms with scaling dimension $\De=\frac d 2 + is$ in $\rho$ of $\SO(d)$, $f^*$ transforms with scaling dimension $\De^*=\frac d 2 - is$ in $\rho^*$ of $\SO(d)$, and thus the integrand is a scalar of scaling dimension $\De+\De^*=d$, as required for conformal invariance. The representations $\cE_{\De,\rho}$ are important because the representations of primary operators that appear in CFTs are their analytic continuations to real $\De$.\footnote{It will not be important to give a precise meaning to this ``analytic continuation''; in most of the paper we only use $\cE_{\De,\rho}$ as a guide for writing conformally-invariant formulas. The same remark concerns representations of $\tl\SO(d,2)$ below.} Also, $\cE_{\De,\rho}$ appear in partial wave analysis of Euclidean correlators~\cite{Dobrev:1977qv}.

The pair $(\De,\rho)$ can be thought of as a weight of the algebra $\mathfrak{so}_\C(d+2)$ if we define $-\De$ to be the length of the first row of a Young diagram, and use the Young diagram of $\rho$ for the remaining rows. Through this identification, the unitary representations of $\SO(d+2)$ have non-positive (half-)integer $\De$. For $\SO(d+1,1)$, we instead have continuous  $\De$ because the corresponding Cartan generator $D\propto M_{-1,d+1}$ of $\SO(d+1,1)$ is noncompact (i.e.\ it must be multiplied by $i$ in order to relate the Lie algebra $\mathfrak{so}(d+1,1)$ to the compact form $\mathfrak{so}(d+2)$).

In $\SO(d,2)$ there are two noncompact Cartan generators ($D$ and $M_{01}$), and both of their weights become continuous. Thus, the unitary principal series representations $\cP_{\De,J,\l}$  for $\SO(d,2)$ are parametrized by a triplet $(\De, J, \l)$, where $\De\in\frac d 2+i\R$, $J\in-\frac{d-2}{2}+i\R$ and $\l$ is an irrep of $\SO(d-2)$. Here the pair $(J,\l)$ can be thought of as a weight of $\SO(d)$, where $J$ is the component corresponding to the length of the first row of a Young diagram. In this sense we have a continuous-spin generalization of $\SO(d)$ irreps.

To make sense of functions with continuous spin, we follow the logic described in the introduction. Let us first review the case of integer spin, and take $\l$ to be trivial for simplicity. The elements of integer spin representations are tensors that are traceless and symmetric in their indices
\be
\label{eq:indexf}
f^{\mu_1\cdots\mu_J}(x).
\ee
We can always contract $f$ with a null polarization vector $z^\mu$ to obtain a homogeneous polynomial of degree $J$ in $z$,
\be
\label{eq:polarizationf}
f(x,z) &\equiv f^{\mu_1\cdots\mu_J}(x)z_{\mu_1}\cdots z_{\mu_J}.
\ee
The tensor $f^{\mu_1\cdots \mu_J}(x)$ can be recovered from $f(x,z)$ via
\be
\label{eq:recovertensor}
f_{\mu_1\cdots\mu_J}(x)= \frac{1}{J!(\frac{d-2}{2})_J}D_{\mu_1}\cdots D_{\mu_J}f(x,z),
\ee
where
\be
D_\mu=\p{\frac{d-2}{2}+z\.\pdr{}{z}}\pdr{}{z^\mu} - \frac 1 2 z_\mu \frac{\ptl^2}{\ptl z^2}
\ee
is the Thomas/Todorov operator \cite{Thomas1926,Dobrev:1975ru,bailey1994}. Thus, the two ways (\ref{eq:indexf}) and (\ref{eq:polarizationf}) of representing $f$  are equivalent.

The generalization to continuous spin is now as stated in the introduction: we can consider functions $f(x,z)$ that are homogeneous of degree $J$ in $z$, where $J$ is no longer an integer and $f(x,z)$ is no longer a polynomial in $z$. More precisely, the elements of $\cP_{\De,J}$ are functions $f(x,z)$ with $x\in \cM_{d}^c$ and $z\in \R^{d-1,1}_+$ a future-pointing null vector that are constrained to satisfy
\be\label{eq:conthomogeneity}
	f(x,\a z)=\a^J f(x,z),\quad \alpha>0.
\ee
The object $f(x,z)$ transforms under conformal transformations in the same way as functions of the form (\ref{eq:polarizationf}) would. The operation of recovering the underlying tensor (\ref{eq:recovertensor}) only makes sense when $J$ is a nonnegative integer.\footnote{Also, $f(x,z)$ should satisfy a differential equation in $z$. This differential equation is conformally invariant and is essentially a generalization of the $(d-2)$-dimensional conformal Killing equation, similarly to the equations discussed in~\cite{Karateev:2017jgd}. Such equations only exist for nonnegative integer $J$ and express the fact that $f(x,z)$ is actually polynomial in $z$.}

To describe representations $\cP_{\De,J,\l}$ with non-trivial $\l$, we can make use of an analogy between the space of polarization vectors $z$ and the embedding space. The embedding space lets us lift functions on $\R^d$ with indices for an $\SO(d)$ representation to functions on the null cone in $d+2$ dimensions with indices for an $\SO(d+1,1)$ representation. In the present case, $\l$ is a representation of $\SO(d-2)$, so we can lift it to a representation of $\SO(d-1,1)$ defined on the null cone $z^2=0$ in a similar way. For example, if $\l$ is a rank-$k$ tensor representation of $\SO(d-2)$, then we consider functions
\be
	f^{a_1\ldots a_k}(x,z),
\ee
with $a_i$ being $\SO(d-1,1)$-indices, where $f$ obeys gauge redundancies and transverseness constraints~\cite{Costa:2011mg}
\be
	f^{a_1\ldots a_k}(x,z)&\sim f^{a_1\ldots a_k}(x,z)+z^{a_i}h^{a_1\ldots a_{i-1}a_{i+1}\ldots a_k}(x,z),\label{eq:contspingaugeinvariance}\\
	z_{a_i}f^{a_1\ldots a_k}(x,z)&=0.
\ee
Additionally, $f$ should be homogeneous (\ref{eq:conthomogeneity}) and satisfy the same tracelessness and symmetry conditions in $a_i$ as $\l$-tensors of $\SO(d-2)$.\footnote{To make more direct contact with integer spin, instead of~\eqref{eq:contspingaugeinvariance} one can use 
\be
	D_{a_i}f^{a_1\ldots a_k}(x,z)&=0,
\ee
where $D$ is the Todorov operator acting on $z$. In this case, for integer spin tensors the function $f^{a_1\ldots a_k}(x,z)$ is given simply by contracting $z_\mu$ with the first-row indices of the tensor.} Other types of representations can be described by adapting other embedding space formalisms. In most of this paper we focus on trivial $\l$ for simplicity.

We can define an inner product for Lorentzian principal series representations by
\be
	(f,g)&\equiv \int d^dx D^{d-2} z f^*(x,z) g(x,z),\\
	D^{d-2} z&\equiv \frac{2 d^d z \theta(z^0) \delta(z^2)}{\mathrm{vol}\, \R_+}.\label{eq:zmeasure}
\ee
Here the integral over $z$ replaces the index contraction that we would use for integer $J$. The measure for $z$ is manifestly Lorentz-invariant and supported on the null cone. Together with the measure, the integrand is invariant under rescaling of $z$. Thus, we obtain a finite result by dividing by the volume of the group of positive rescalings, $\vol\, \R_+$. The $z$-integral is exactly the kind of integral considered in~\cite{SimmonsDuffin:2012uy} in the context of the embedding space formalism. Here, we have adapted it to describe $\SO(d-1,1)$-invariant integration on the null cone $z^2=0$. 

In section~\ref{sec:weylandintegral} we will use analytic continuations of $\cP_{\De,J,\l}$ to find interesting relations for primary operators in Lorentzian CFTs. But before we can do this, we should note that these representations are constructed on $\cM^c_d$, which is unsatisfactory from the physical point of view. We can construct similar representations of $\widetilde\SO(d,2)$ consisting of functions on $\tl\cM_d$, which we call $\widetilde\cP_{\De,J,\l}$. These representations behave very similarly to $\cP_{\De,J,\l}$ but there is an important distinction. While the representations $\cP_{\De,J,\l}$ are generically irreducible, their analogues $\widetilde\cP_{\De,J,\l}$ are not. Indeed, the action of $\tsym$ on $\widetilde\cM_d$ commutes with the action of $\widetilde\SO(d,2)$ and thus $\widetilde\cP_{\De,J,\l}$ decompose into a direct integral of irreducible subrepresentations in which $\tsym$ acts by a constant phase. 

\subsection{Weyl reflections and integral transforms}
\label{sec:weylandintegral}

Given the principal series representations described in section~\ref{sec:reptheoryreview}, we can ask whether there exist equivalences between them. Equivalent representations must have the same eigenvalues of the Casimir operators,\footnote{Here we mean all Casimir operators, not just the quadratic Casimir.} and these eigenvalues are polynomials in the weights $(\De,\rho)$ (for $\SO(d+1,1)$) and $(\De,J,\l)$ (for $\SO(d,2)$). For example, the quadratic and quartic Casimir eigenvalues for $\cP_{\De,J}$ (with trivial $\l$) are
\be
\label{eq:examplecas}
C_2(\cP_{\De,J}) &= \De(\De-d) + J(J+d-2), \nn\\
C_4(\cP_{\De,J}) &= (\De-1)(d-\De-1)J(2-d-J).
\ee
The ``restricted Weyl group" $W'$ is a finite group that acts on these weights, doesn't mix discrete and continuous labels, and leaves the Casimir eigenvalues invariant. Conversely, if two principal series weights have the same Casimirs, they can be related by an element of $W'$.

For example, in the case of $\SO(d+1,1)$, the restricted Weyl group is $W'=\Z_2$. Its non-trivial element $\mathrm{S}_E\in W'$ acts by
\be
	\mathrm{S}_E(\De,\rho)=(d-\De,\rho^R),
\ee 
where $\rho^R$ is the reflection of $\rho$. Other transformations exist that leave all Casimir eigenvalues invariant, but $\mathrm{S}_E$ is the only one that does not mix the integral weights of $\rho$ with the continuous weight $\De$.

In the case of $\SO(d,2)$, there are two continuous parameters that can mix, and thus the restricted Weyl group $W'$ is larger. It is isomorphic to a dihedral group of order $8$, $W'=\mathrm{D}_8$.\footnote{This also turns out to be the Weyl group of $BC_2$ root system, which was recently studied in the context of conformal blocks in~\cite{Isachenkov:2016gim,Isachenkov:2017qgn}. It would be interesting to better understand the connection of the present discussion with that work.} This group has a faithful representation on $\R^2$ where it acts as symmetries of the square. Its action on $\De=\frac d 2 + is$ and $J=-\frac{d-2}{2}+iq$ can be described by taking $s$ and $q$ to be Cartesian coordinates in this $\R^2$. It is easy to see that this action preserves the eigenvalues (\ref{eq:examplecas}). Altogether, the elements of $W'$ are given in table~\ref{tab:wprime}.\footnote{To check that the action on $\l$ is as in the table, one can consider the 4d case. The eigenvalues of all 3 Casimirs of $\tl\SO(2,4)$ are written out, for example, in appendix F of~\cite{Cuomo:2017wme} with $\ell=J+\l, \bar\ell=J-\l$ and $\l^R=-\l$. More generally, by solving the system of polynomial equations expressing invariance of these explicit Casimir eigenvalues, one can check that $W'$ is indeed isomorphic to $\mathrm{D}_8$.}

\begin{table}[ht!]
\begin{center}
	\begin{tabular}{ll|c|cccc}
		$w$ && order & $\De'$ & $J'$ & $\l'$ \\
		\hline
		1 && 1 & $\De$ & $J$ & $\l$ \\
		$\mathrm{S}_\De$\!\!\!\!\!\! &= $\mathrm{L} \mathrm{S}_J \mathrm{L}$ & 2& $d-\De$ & $J$ & $\l^R$ \\
		$\mathrm{S}_J$\!\!\!\!\!\! &&2& $\De$ & $2-d-J$ & $\l^R$ \\
		$\mathrm{S}$\!\!\!\!\!\! &= $(\mathrm{S}_J\mathrm{L})^2$ & 2& $d-\De$ & $2-d-J$ & $\l$ \\
		$\mathrm{L}$\!\!\!\!\!\! && 2 &$1-J$ & $1-\De$ & $\l$ \\
		$\mathrm{F}$\!\!\!\!\!\! &= $\mathrm{S}_J \mathrm{L} \mathrm{S}_J$ &2& $J+d-1$ & $\De-d+1$ & $\l$\\
		$\mathrm{R}$\!\!\!\!\!\! &= $\mathrm{S}_J \mathrm{L}$ &4& $1-J$ & $\De-d+1$ & $\l^R$\\
		$\mathrm{\bar R}$\!\!\!\!\!\! &= $\mathrm{L} \mathrm{S}_J$ &4& $J+d-1$ & $1-\De$ & $\l^R$
	\end{tabular}
\end{center}
\caption{
\label{tab:wprime}
The elements of the restricted Weyl group $W'=\mathrm{D}_8$ of $\SO(d,2)$. Each element $w$ takes the weights $(\De,J,\l)$ to $(\De',J',\l')$. The order $2$ elements other than $\mathrm{S}$ are the four reflection symmetries of the rectangle, while $\mathrm{S}$ is the rotation by $\pi$. The center of the group is $Z\mathrm{D}_8=\{1,\mathrm{S}\}$. Finally, the element $\mathrm{R}$ is a $\pi/2$ rotation.  The group is generated by $\mathrm{L}$ and $\mathrm{S}_J$, with the relations $\mathrm{L}^2=\mathrm{S}_J^2=(\mathrm{L}\mathrm{S}_J)^4=1$.}
\end{table}

As mentioned above, the representations defined by weights in an orbit of $W'$ have equal Casimir eigenvalues, which means that potentially they can be equivalent. This indeed turns out to be true~\cite{KnappStein1,KnappStein2}.
Equivalence of representations means that there exist intertwining maps between $\cE_{(\De,\rho)}$ and $\cE_{w(\De,\rho)}$, as well as between $\cP_{(\De,J,\l)}$ and $\cP_{w(\De,J,\l)}$ for all $w\in W'$.

The intertwining map between $\SO(d+1,1)$ representations $\cE_{\De,\rho}$ and $\cE_{d-\De,\rho^R}$ is well-known~\cite{Ferrara:1972uq,Dobrev:1977qv,SimmonsDuffin:2012uy}: it is given by the so-called shadow transform
\be\label{eq:euclidshadow}
	\widetilde\cO^a(x)=\wSE[\cO]^a(x')\equiv\int d^d x' \<\widetilde\cO^a(x)\widetilde\cO^\dagger_b(x')\> \cO^b(x').
\ee
Here $\widetilde\cO\in \cE_{d-\De,\rho^R}$, $\cO\in \cE_{\De,\rho}$, we use dagger to denote taking the dual reflected representation of $\SO(d)$, and $\<\widetilde\cO^a(x)\widetilde\cO^\dagger_b(x')\>$ is a standard choice of two-point function for the operators in their respective representations. 
The integration region is the full $\R^d$ (more precisely, the conformal sphere $S^d$).

According to our discussion above, in Lorentzian signature there should exist $6$ new integral transforms, corresponding to the other non-trivial elements of $W'$. There in fact exists a general formula for these transforms, valid for any element of $W'$~\cite{KnappStein1,KnappStein2}.\footnote{In the mathematical literature, these transforms are known as Knapp-Stein intertwining operators.} However, it is most naturally written using a different construction of $\cP_{\De,J,\l}$, and the conversion to the form appropriate for our purposes is cumbersome.\footnote{See~\cite{Dobrev:1977qv} for an example of this conversion in the case of the shadow transform~\eqref{eq:euclidshadow}.} Thus instead of deriving these transforms from the general result we will simply give the final expressions and check that they are indeed conformally-invariant. Furthermore, we will lift these transforms to representations $\widetilde\cP_{\De,J,\l}$ of $\widetilde\SO(d,2)$.

Although the Lorentzian transforms we define are only necessarily isomorphisms when acting on principal series representations $\cP_{\De,J,\l}$, it is still interesting to consider the analytic continuation of their action on other representations, like those associated to physical CFT operators. For example the action of $\mathrm{L}$ will be well-defined on physical local operators. The result of this action will generically be a primary operator with non-integer spin. One can then ask how such operators make sense in a CFT and what properties do they have. In this and the following sections we will be able to answer these question by studying the examples provided by integral transforms. In appendix~\ref{sec:contspincorrelators} we study the same questions on more general grounds (by using unitarity, positivity of energy, and conformal symmetry) and reach similar conclusions.

\subsubsection{Transforms for $\mathrm{S}_\De,\mathrm{S}_J,\mathrm{S}$}

Let us start with the Lorentzian analogue of~\eqref{eq:euclidshadow}. The idea is to essentially keep the form~\eqref{eq:euclidshadow} while generalizing to continuous spin,
\be\label{eq:lorentzshadow}
	\wSD[\cO](x,z)&\equiv i\int_{x'\approx x} d^d x' \frac{1}{(x-x')^{2(d-\De)}} \cO(x',I(x-x')z),\\
	I^\mu_\nu(x)&=\delta^\mu_\nu - 2\frac{x^\mu x_\nu}{x^2}.
\ee
The integrand is conformally-invariant because $I(x-x')$ performs a conformally-invariant translation of a vector at $x$ to a vector at $x'$. The factor of $i$ is to match a Wick-rotated version of the Euclidean shadow transform, although we still have $\wSE=(-2)^J\wSD$ after Wick rotation because of our convention for two-point functions~(\ref{eq:standardtwoptconvention}).

We must specify a conformally-invariant integration region for $x'$. The essentially unique choice is to integrate over the region spacelike separated from $x$. If $x$ is at spatial infinity of $\cM_d$, then this region is the full Poincare patch $\cM_d\subset \widetilde\cM_d$, and for integer $J$ the integral is simply the Wick rotation of the Euclidean shadow integral~\eqref{eq:euclidshadow}. If, however, $x$ is inside the first Poincare patch, then the integral extends beyond the first Poincare patch on the Lorentzian cylinder $\widetilde{\cM}_d$. All other conformally-invariant regions defined by $x$ are translations of the spacelike region by powers of $\tsym$ or unions thereof. The two-point function in these regions differs from the two-point function in the spacelike region only by a constant phase, and thus the most general choice of $\wSD$ differs from the above by multiplication by a function of $\tsym$.\footnote{In particular, there is no ambiguity in representations $\cP_{\De,J,\l}$ of $\SO(d,2)$.}  The possibility of multiplying by a function of $\tsym$ is present for all the transforms we consider and we just make the simplest choice. The choice~\eqref{eq:lorentzshadow} is natural because of its relation to~\eqref{eq:euclidshadow}. 

For $\mathrm{S}_J$, the integral transform is
\be
	\wSJ[\cO](x,z)&\equiv \int D^{d-2} z' (-2\,z\cdot z')^{2-d-J} \cO(x,z'),
\ee
where the measure $D^{d-2}z$ is defined in~\eqref{eq:zmeasure}. We call this the ``spin shadow transform."
Note that this is essentially the same as the shadow transform in the embedding space~\cite{SimmonsDuffin:2012uy}, with $X$ replaced by $z$ and $d$ replaced by $d-2$.

The transform for $\mathrm{S}$, which we call the ``full shadow transform," is simply the composition of the commuting transforms for $\mathrm{S}_\De$ and $\mathrm{S}_J$,
\be
	\wS[\cO](x,z)&\equiv (\wSJ \wSD)[\cO](x,z)= i\int_{x'\approx x} d^d x' D^{d-2} z' \frac{(-2\,z\cdot z')^{2-d-J}}{(x-x')^{2(d-\De)}} \cO(x',I(x-x')z')\nn\\
	&= (\wSD\wSJ)[\cO](x,z)=i\int_{x'\approx x} d^d x' D^{d-2} z' \frac{(-2\,z\cdot I(x-x')z')^{2-d-J}}{(x-x')^{2(d-\De)}} \cO(x',z').
	\label{eq:integralforfullshadow}
\ee
These two forms of $\wS$ are equivalent because $I(x-x')^2=1$, for spacelike $x-x'$ $I(x-x')$ is an element of the orthochronous Lorentz group $\mathrm{O}^+(d-1,1)$, and the measure of the $z$-integration is invariant under $\mathrm{O}^+(d-1,1)$.

The second line of (\ref{eq:integralforfullshadow}) can also be written as
\be
\label{eq:fullshadowastwoptintegral}
\wS[\cO](x,z) &= i \int_{x'\approx x} d^d x' D^{d-2} z' \<\cO^\mathrm{S}(x,z) \cO^\mathrm{S}(x',z')\> \cO(x',z')
\ee
where $\cO^\mathrm{S}$ denotes the representation with dimension $d-\De$ and spin $2-d-J$. Here, we are using the following convention for a two-point structure
\be
\<\cO(x_1,z_1)\cO(x_2,z_2)\> &= \frac{(-2 z_1\.I(x_{12})\.z_2)^J}{x_{12}^{2\De}},
\ee
which differs by a factor of $(-2)^J$ from some more traditional conventions. Our conventions for two- and three-point structures are summarized in appendix~\ref{app:23conventions}

\subsubsection{Transform for $\mathrm{L}$}
\label{sec:lighttransformintro}
 The integral transform corresponding to $\mathrm{L}$ is
\be\label{eq:lightdefinition}
	\wL[\cO](x,z)=\int_{-\infty}^{+\infty} d\a\, (-\a)^{-\De-J} \cO\left(x-\frac{z}{\a},z\right).
\ee
Because it involves integration along a null direction, we call $\wL$ the ``light transform." Although most of the transforms in this section are only well-defined on nonphysical representations like Lorentzian principal series representations, the light transform is significant because it can be applied to physical operators as well.  Note that it converges near $\a=\pm\infty$ only for $\De+J>1$.\footnote{For Lorentzian principal series $\Re(\De+J)=1$ but for non-zero $\Im(\De+J)$ the integral still makes sense.} In unitary theories it can therefore be applied to all non-scalar operators and to scalars with dimension $\De>1$ (which includes all non-trivial scalars in $d\geq 4$).

Before discussing conformal invariance, let us describe the contour of integration in more detail. The integral starts at $\alpha=-\infty$, in which case the argument of $\cO$ is simply $x$. It then increases to $\alpha=-0$, and in the process $\cO$ moves along $z$ to future null infinity in $\cM_d$. As $\alpha$ crosses $0$, the integration contour leaves the first Poincare patch $\cM_d$ and enters the second Poincare patch $\tsym\cM_d\subset \widetilde\cM_d$. Finally, at $\alpha=+\infty$ it ends at $\tsym x\in \tsym\cM_d$. In other words, the integration contour is a null geodesic in $\widetilde{\cM}_d$ from $x$ to $\tsym x$ with direction defined by $z$ (figure~\ref{fig:lighttransformcontour}). This is obviously a conformally-invariant contour.

\begin{figure}[ht!]
	\centering
		\begin{tikzpicture}
		
		\draw[] (-3,0) -- (0,3) -- (3,0) -- (0,-3) -- cycle;
		\draw[] (0,3) -- (0.3,3.3);
		\draw[] (0,3) -- (-0.3,3.3);
		\draw[] (0,-3) -- (0.3,-3.3);
		\draw[] (0,-3) -- (-0.3,-3.3);
		\draw[color=blue] (0,1.2) -- (1.5,2.7);				
		\draw[->,color=blue] (-1.5,-0.3) -- (0,1.2);				
				
		\draw[fill=black] (-1.5,-0.3) circle (0.07); 
		\draw[fill=black] (1.5,2.7) circle (0.07); 

        \draw[dashed] (-3,-3.3) -- (-3,3.3);				
        \draw[dashed] (3,-3.3) -- (3,3.3);				
		
		\node[right] at (-1.4,-0.3) {$x$};
		\node[right] at (1.6,2.8) {$\tsym x$};
		
		\end{tikzpicture}
		\caption{The contour prescription for the light-transform. The contour starts at $x\in \cM_d$ and moves along the $z$ direction to the point $x^+=\tsym x$ in the next Poincare patch $\tsym \cM_d$.}
		\label{fig:lighttransformcontour}
\end{figure}
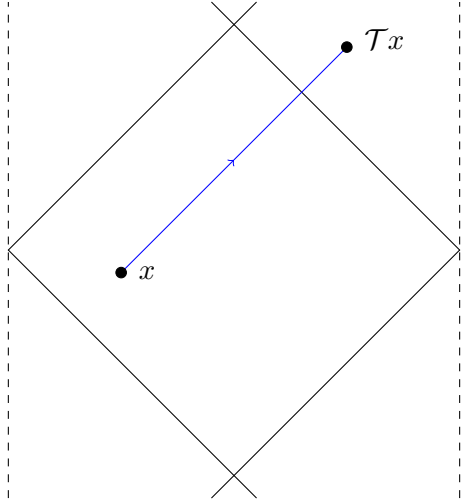

It turns out that no phase prescription is necessary to define $(-\a)^{-\De-J}$ for $\a>0$, because the naive singularity at $\a=0$ is cancelled in correlators of $\cO$. To see this, note that (\ref{eq:lightdefinition}) is equivalent to the following integral in the embedding formalism of~\cite{Costa:2011mg},
\be\label{eq:lightdefinitionEF}
	\wL[\cO](X,Z) &=\int_{-\infty}^{+\infty} d\a\, (-\a)^{-\De-J} \cO\left(X-\frac{Z}{\a},Z\right)
	\nn\\
	&=\int_{-\infty}^{+\infty} d\a\, \cO(Z-\a X,-X),
\ee
where in the second equality we used the homogeneity properties of $\cO(X,Z)$ in the region $\alpha<0$, together with  gauge invariance $\cO(X,Z+\b X)=\cO(X,Z)$. In (\ref{eq:lightdefinitionEF}) it is clear that the point $\a=0$ is not special (see also appendix~\ref{app:squarelight} for yet another explanation). 

The embedding space integral (\ref{eq:lightdefinitionEF}) makes conformal invariance of the light-transform manifest: it is $\SO(d,2)$ invariant, and gauge invariance  
\be
	\wL[\cO](X,Z+\beta X)=\wL[\cO](X,Z)
\ee
can be proved by shifting $\alpha$ by $\beta$ in the integral. It is also clear from homogeneity in $X$ and $Z$ that the dimension and spin of $\wL[\cO](X,Z)$ are $1-J$ and $1-\De$, respectively. (Note that the parameter $\a$ carries homogeneity $1$ in $Z$ and $-1$ in $X$.) Finally, (\ref{eq:lightdefinitionEF}) confirms the prescription that the integral goes between $x$ and $\tsym x$. Indeed, according to the discussion in section~\ref{sec:lorentziancylinder} the embedding space covers two Poincare patches and $\tsym X$ is simply $-X$. The integral in~\eqref{eq:lightdefinitionEF} starts at the argument $Z+\infty X$ which is the same as $X$ modulo $\R_+$  and ends at $Z-\infty X$ which is $-X=\tsym X$ modulo $\R_+$ .

Let us describe another way of writing $\wL$ that will be useful.
Equation (\ref{eq:lightdefinition}) expresses $\wL$ in a conformal frame where $x$ is in the interior of a Poincare patch. In this case, the integration contour extends from one patch into the next. However, if we place $x$ at past null infinity, the integration contour fits entirely within a single Poincare patch. Specifically, in the integral (\ref{eq:lightdefinitionEF}), let us set\footnote{This choice reverses the role of $X,Z$ relative to the usual Poincare section gauge fixing. However, it still satisfies the required conditions $X^2=Z^2=X\.Z=0$. To obtain these expressions, consider the usual Poicare coordinates for a point shifted by $-L z$ for large $L$, 
	\be X&=(1,(x-Lz)^2, x-Lz)\simeq L\times (0,-2x\.z,-z),\nn\\
		Z&=(0,2z\.x,z)= L^{-1}\times\p{(1,x^2,x)-X},\nn
	\ee
from where the new gauge-fixing follows.}

\be
Z &= (1,y^2,y), \nn\\
X &= (0,-2y\.z,-z)
\ee
to obtain
\be
\label{eq:nullrayintegral}
\wL[\cO](x,z) &= \int_{-\oo}^\oo d\a\, \cO(y+\a z,z).
\ee
Here, $x=y-\oo z$.
Equation~(\ref{eq:nullrayintegral}) is simply the integral of $\cO$ along a null ray from past null infinity to future null infinity, contracted with a tangent vector to the ray. As an example, the ``average null energy" operator is given by
\be
\mathcal{E} &= \int_{-\oo}^\oo d\a T_{\mu\nu}(\a z) z^{\mu} z^{\nu} = \wL[T](-\oo z,z),
\ee
where $T_{\mu\nu}$ is the stress tensor.
It follows from our discussion that $\mathcal{E}$ transforms like a primary with dimension $-1$ and spin $1-d$, centered at $-\oo z$.

\subsubsection{Transforms for $\mathrm{F}, \mathrm{R}, \mathrm{\bar R}$}
The transforms for the remaining elements $\mathrm{F}, \mathrm{R}, \mathrm{\bar R} \in \mathrm{D}_8$ are compositions
\be
	\wF&\equiv \wSJ \wL\wSJ,\nn\\
	\wR &\equiv \wSJ \wL,\nn\\
	\wRb &\equiv \wL\wSJ.
\ee
For example, 
\be
	\wF[\cO](x,z)\equiv \int d^d \zeta D^{d-2} z'\delta(\zeta^2)\theta(\zeta^0)(-2\,\zeta\cdot z')^{-J-d+2}
	(-2\,\zeta\cdot z)^{\De-d+1} \cO(x+\zeta,z')\nn\\
	+\int d^d \zeta D^{d-2} z'\delta(\zeta^2)\theta(\zeta^0)(-2\,\zeta\cdot z')^{-J-d+2}
	(-2\,\zeta\cdot z)^{\De-d+1} (\tsym\cO)(x-\zeta,z').
\ee
Note that here the second term involves an integral over the second Poincare patch $\tsym\cM_d$. Similarly to the light transform, here we integrate over all future-directed null geodesics from $x$ to $\tsym x$. Because we integrate over all null directions, we call $\wF$ the ``floodlight transform." 

Similarly, we have
\be
	\wR[\cO](x,z)&=\int d^d \zeta \delta(\zeta^2)\theta(\zeta^0)(-2z\.\zeta)^{1-d+\Delta}\cO(x+\zeta,\zeta)\nn\\
	&\quad+\int d^d \zeta \delta(\zeta^2)\theta(\zeta^0)(-2z\.\zeta)^{1-d+\Delta}(\tsym\cO)(x-\zeta,\zeta),\\
	\wRb[\cO](x,z)&=\int d\a
	D^{d-2}z'(-\a)^{-\De-2+d+J} (-2\,z\.z')^{2-d-J}\cO\p{x-\frac{z}{\a},z'},
\ee

As an example, $\wR[T]=\wSJ[\wL[T]]$ is given by integrating the average null energy operator $\cE=\wL[T]$ over null directions. This is equivalent to integrating the stress tensor over a complete null surface, which produces a conformal charge. We can understand this more formally as follows. Note that the dimension and spin of $\wR[T]$ are given by
\be
\mathrm{R}(d,2) &= (-1,1).
\ee
These are exactly the weights of the adjoint representation of the conformal group. Conservation of $T^{\mu\nu}$ ensures that $\wR[T]$ transforms irreducibly, so that it transforms precisely in the adjoint representation. In other words, conservation equation for $T$ becomes the conformal Killing equation for $\wR[T]$. It can thus be written as a linear combination of conformal Killing vectors (CKVs):\footnote{See \cite{Karateev:2017jgd} for more discussion of writing finite-dimensional representations of the conformal group in terms of fields on spacetime.}
\be
\wR[T](x,z) &= Q^A w_A^\mu(x) z_\mu \nn\\
&= K\.z - 2(x\.z) D + (x_\rho z_\nu - x_\nu z_\rho)M^{\nu\rho} + 2(x\.z) (x\.P) - x^2 (z\.P).
\label{eq:adjointcharges}
\ee
Here, $A$ is an index for the adjoint representation of the conformal group, $w_A^\mu(x)$ are CKVs, and the $Q^A$ are the associated charges. On the second line, we've given the charges their usual names. We can see from (\ref{eq:adjointcharges}) that inserting $\wR[T]$ at spatial infinity $x=\oo$ gives the momentum charge. This is a familiar fact from ``conformal collider physics" \cite{Hofman:2008ar}. Similarly, when $J$ is a conserved spin-1 current, $\wR[J]$ has dimension-0 and spin-0, which are the correct quantum numbers for a conserved charge.

\subsection{Some properties of the light transform}
\label{sec:lightproperties}

As noted above, the light transform of the stress-energy tensor is the average null energy operator $\wL[T]=\cE$. The average null energy condition (ANEC) states that $\cE$ is non-negative,
\be
\label{eq:anec}
	\<\Psi|\cE|\Psi\>\geq 0.
\ee
Non-negative operators with vanishing vacuum expectation value $\<\Omega|\cE|\Omega\>=0$ must necessarily annihilate the vacuum $|\Omega\>$~\cite{Epstein:1965zza}.\footnote{We thank Clay C\'ordova for discussion on this point.}${}^,$\footnote{Intuitively, the vacuum must contain the same amount of positive-$\cE$ states and negative-$\cE$ states in order for $\<\O|\cE|\O\>$ to vanish. Since there are no negative-$\cE$ states, the vacuum only contains vanishing-$\cE$ states and is thus annihilated by $\cE$.} Indeed, using the Cauchy-Schwarz inequality for the inner product defined by $\cE$, we find
\be
	|\<\Psi|\cE|\Omega\>|^2 \leq \<\Psi|\cE|\Psi\>\<\Omega|\cE|\Omega\>=0
\ee
for any state $|\Psi\>$.
Thus $\cE|\Omega\>=0$.

In fact, we know that $\wL[\cO]|\Omega\>=0$ for any local primary operator $\cO$ --- not just the stress tensor. Indeed, if $\cO$ has scaling dimension $\De$, then $\wL[\cO]$ has spin $1-\De$, which in a unitary theory is a non-negative integer only if $\De=0$ or $\De=1$. However, in these cases $J=0$ and the light transform diverges. For all other scaling dimensions $\wL[\cO]$ is a continuous-spin operator and thus must annihilate the vacuum. This makes it possible for other null positivity conditions (like those proved in \cite{Hartman:2016lgu} and section~\ref{sec:positivityandtheanec}) to hold as well. In the rest of this subsection we check explicitly that $\wL[\cO]|\Omega\>=0$ for all $\De+J>1$ and make some general comments about properties of $\wL$.

\begin{lemma}\label{lemma:light}
	The light transform of a local primary operator, when it exists (i.e.\ $\De+J>1$), annihilates the vacuum,\footnote{For general spin representations $J$ must be replaced by the sum of all Dynkin labels with spinor labels taken with weight $\half$.\label{ft:generalJ}}
	\be
		\wL[\cO]|\O\>=0.
	\ee
\end{lemma}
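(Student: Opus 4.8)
\emph{Proof proposal.} The plan is to deform the $\a$-contour in \eqref{eq:lightdefinition} into the complex plane and close it, so that the integral vanishes because the integrand is the boundary value of a holomorphic function with no singularities enclosed. It suffices to show $\<\Psi|\wL[\cO](x,z)|\O\>=0$ for $|\Psi\>$ ranging over a dense set of states, and since $\wL[\cO]$ transforms as a primary of dimension $1-J$ and spin $1-\De$ we may work in any convenient conformal frame. The one input from physics is positivity of the energy: $\cO(w,z)|\O\>$ has a spectral representation $\int_{\bar V_+} d^dp\, e^{ip\.w}\,|\cO;p,z\>$ supported on the closed forward cone $\bar V_+$, so that $\<\Psi|\cO(w,z)|\O\>$ extends holomorphically to the tube $\Im w\in V_-$. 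The key kinematic fact is that for $p\in\bar V_+$ and future-pointing null $z$ one has $p\.z\le 0$, with equality only when $p\parallel z$.

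Substituting $w=x-z/\a$ into the spectral representation, each mode picks up a factor $e^{-i(p\.z)/\a}$. Because $p\.z\le 0$ and $\Im(1/\a)=-\Im\a/|\a|^2$, we have $|e^{-i(p\.z)/\a}|=e^{(p\.z)\Im(1/\a)}\le 1$ on the entire open lower half-plane $\Im\a<0$; hence the $p$-integral converges uniformly on compacts there and $\a\mapsto\<\Psi|\cO(x-\tfrac{z}{\a},z)|\O\>$ is holomorphic for $\Im\a<0$, tending to the constant $\<\Psi|\cO(x,z)|\O\>$ as $|\a|\to\infty$. The prefactor $(-\a)^{-\De-J}$ is holomorphic on $\C\setminus\R_{\ge 0}$, hence on the same half-plane, and the light-transform integrand is precisely the boundary value from below of the product $h(\a)$. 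I would then close the real-$\a$ contour with a large semicircle in the lower half-plane: on $|\a|=R$ one has $|h(\a)|\lesssim R^{-\De-J}$, so the arc contributes $O(R^{1-\De-J})\to 0$ exactly because the light transform exists, $\De+J>1$. With no singularities enclosed, $\int_{-\infty}^\infty h(\a)\,d\a=0$, so $\<\Psi|\wL[\cO](x,z)|\O\>=0$ and therefore $\wL[\cO]|\O\>=0$.

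The same statement can be read off directly in momentum space: placing $x$ at past null infinity so that $\wL[\cO]=\int d\a\,\cO(y+\a z,z)$ as in \eqref{eq:nullrayintegral}, the $\a$-integral yields $2\pi\de(p\.z)$, which restricts the spectral support to the null ray $p\parallel z$ --- a measure-zero subset of $\bar V_+$ --- whence $\wL[\cO]|\O\>=0$. I expect the only delicate point, and the thing to get exactly right, to be the behavior at $\a=0$, where $(-\a)^{-\De-J}$ has a branch point lying on the contour and $x-z/\a$ runs off to null infinity. This singularity is spurious: as exhibited in \eqref{eq:lightdefinitionEF}, in embedding coordinates $(-\a)^{-\De-J}\cO(x-\tfrac{z}{\a},z)=\cO(Z-\a X,-X)$, which is smooth through $\a=0$, so the infinitesimal detour one makes there contributes nothing and the deformation is legitimate; alternatively one may give $x$ a small imaginary part in $V_-$ to make every step manifestly analytic and remove it at the end. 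Everything above goes through for $\cO$ in a general Lorentz representation once $J$ is replaced by the combination in footnote~\ref{ft:generalJ}.
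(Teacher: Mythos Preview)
Your strategy --- close the light-transform contour using positive-energy analyticity, with $\De+J>1$ supplying the decay that kills the large arc --- is exactly the paper's. Two points need fixing.

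\textbf{Wrong half-plane.} For $\cO$ acting on the \emph{ket} vacuum (rightmost in the Wightman ordering), the spectral condition reads $\cO(w)|\O\>=\int_{\bar V_+}d^dp\,e^{-ip\cdot w}\,|\cO;p\>$ and yields analyticity in the tube $\Im w\in V_+$, not $V_-$. (Quick check: $\cO(t)|\O\>\propto e^{+iEt}$ with $E>0$, damped for $\Im t>0$.) With $w=x-z/\a$ one has $\Im w=-z\,\Im(1/\a)$, which is future-directed precisely when $\Im\a>0$, so the contour must close in the \emph{upper} half $\a$-plane. Your bound $|e^{-i(p\cdot z)/\a}|\le 1$ on $\Im\a<0$ is internally consistent only because the sign error in the spectral representation compensates the wrong choice of half-plane; the same slip recurs when you propose giving $x$ an imaginary part in $V_-$.

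\textbf{Two Poincar\'e patches.} The $\a$-contour passes from the first patch ($\a<0$) into the second ($\a>0$) at $\a=0$, but your flat-space spectral representation is a single-patch object. You assert that the physical integrand is the boundary value of $h(\a)$, but you have not shown that this boundary value on $\a>0$ is the \emph{second-patch} correlator rather than the first-patch correlator at the Minkowski point $x-z/a$ times a phase from $(-\a)^{-\De-J}$. This can be argued --- regularity of the embedding-space integrand $\cO(Z-\a X,-X)$ at $\a=0$ plus uniqueness of analytic continuation forces the match --- but it is more than the one line you give it. The paper sidesteps the issue entirely by working in the null-ray frame \eqref{eq:nullrayintegral}, where $x$ sits at past null infinity and the whole contour lies in a single patch; analyticity in the upper half $v$-plane is then immediate from the standard tube. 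The trade-off is that large-$|v|$ decay is no longer manifest (the correlator does not tend to a constant), and the paper spends an extra paragraph using the $V_i$ OPE acting on the left vacuum to establish the $|v|^{-\De-J}$ falloff --- precisely the step your $\a$-frame argument gets for free.
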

\begin{proof}
	We will show that for any local operators $V_i$,
	\be
		\<\O|V_n(x_n)\cdots V_1(x_1) \wL[\cO](y,z)|\O\>=0,
	\ee
	which implies the result. Let us work in a Poincare patch where $y$ is at past null infinity and for simplicity assume that the $x_i$ fit in this patch; other configurations can be obtained by analytic continuation. Using a Lorentz transformation we can set $z=(1,1,0,\ldots,0)$ and parameterize the light transform contour as $x_0=\left(\frac{v-u}{2},\frac{v+u}{2},0,0,\ldots\right)$ for $v\in(-\oo,\oo)$.  We are then computing 
	\be\label{eq:lemmaeplsilonprescription}
		&\int_{-\oo}^{\oo} dv \<\O| V_n(x_n)\cdots V_1(x_1) \cO(x_0,z)|\O\>=\nn\\
		&=\lim_{\e\to +0}\int_{-\oo}^{\oo} dv \<\O| V_n(x_n-i n\e\hat e_0)\cdots V_1(x_1-i\e\hat e_0) \cO(x_0,z)|\O\>,
	\ee
	where $\hat e_0$ is the future-pointing unit vector in the time direction. The above $i\epsilon$ prescription arranges the operators so that they are time-ordered in Euclidean time, and this is precisely how the Wightman function should be defined as a distribution. Let us now write 
	\be
		x_k-i k\e\hat e_0=y_k+i \z_k,\quad k=0,1,\ldots n,
	\ee
	where both $y_k$ and $\z_k$ are real vectors. Positivity of energy implies that Wightman functions are analytic if $\z_k$ is in the absolute future of $\z_{k+1}$ for all $k$~\cite{streater2016pct}:\footnote{For example, it is easy to check that under this condition $(y_{ik}+i\z_{ik})^2\neq 0$ for all $y_{ik}$, and thus there are no obvious null cone singularities. More generally, see appendix~\ref{sec:contspincorrelators}.}
\be
\z_0 > \z_1 > \cdots > \z_n.
\ee
	 This condition clearly holds when the $x_k$ are real. If we then give an arbitrary positive imaginary part to $v$ while keeping $u$ and other components of $x_0$ fixed, $\z_0=\Im(v) z$ will remain in the future of $\z_1=-\e \hat e_0$ (see figure~\ref{fig:zetarelations}). Therefore, the integrand is an analytic function of $v$ in the upper half plane. If we can close the $v$ contour in the upper half plane, that would imply the required result.
	
	\begin{figure}[t]
		\centering
		\begin{tikzpicture}
		\draw[fill=black] (0,0) circle (0.05);
		\draw (0,0) -- (3,3);
		\draw (0,0) -- (-3,3);
		
		\draw[fill=black] (0,-1) circle (0.05);
		\draw (0,-1) -- (3,2);
		\draw (0,-1) -- (-3,2);
		
		\draw[dotted] (0, -1.2) -- (0, -1.7);
		\draw[dotted] (3, 3-1.2) -- (3, 3-1.7);
		\draw[dotted] (-3, 3-1.2) -- (-3, 3-1.7);
		
		\draw[->,color=blue] (0,1) -- (2,3);
		\draw[fill=black] (0,1) circle (0.05);
		\draw[<->, dashed] (0,0.05) -- (0,1-0.05);

		\node[below] at (0.3,1+0.1) {$\zeta_0$};
		\node[below] at (0.3,0+0.1) {$\zeta_1$};
		\node[below] at (0.3,-1+0.1) {$\zeta_2$};
		
		\node[left] at (0,0.5) {$\e$};
		
		\node[color=blue] at (0,2) {$\mathrm{Im}\, v>0$};
		\end{tikzpicture}
		\caption{Relationships between the imaginary parts $\zeta_k$. A deformation of $v$ in the positive imaginary direction is shown in blue.}
		\label{fig:zetarelations}
	\end{figure}
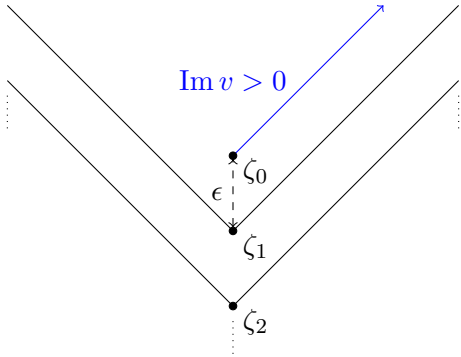
	
	According to the discussion around~\eqref{eq:lightdefinition}, conformal invariance implies that the integral~\eqref{eq:lightdefinition} is regular as $\alpha\to -0$, which in turn implies that the integrand of~\eqref{eq:lemmaeplsilonprescription} decays as $|v|^{-\De-J}$ for real $v$. We will now show that this is also true for complex $v$ in the upper half-plane, so we can close the contour as long as $\De+J>1$.
	
	To compute the rate of decay in $v$, we can use the OPE for the operators $V_i$, which converges acting on the left vacuum.\footnote{For this argument it is important that $i\e$-prescriptions and positive imaginary part of $v$ smear the operators so that we are working with normalizable states. An argument from the Euclidean OPE is that the $i\e$ shifts separate the operators on the Euclidean cylinder, and Lorentzian times do not affect  convergence of the OPE. The operators in the right hand side of the OPE can be placed anywhere in Euclidean future of $\cO$. Alternatively to (but not logically independently from) the OPE argument, we could have just started with $\<\O|\cO\wL[\cO]|\O\>$ in the first place, since states of the form $\int d^d x f(x)\<\O|\cO(x)$ are dense in the space of states which can have a non-zero overlap with $\wL[\cO]|\O\>$.} The leading contribution at large $v$ will be from $\cO$ in this OPE, leading to a two-point function of $\cO$. Because $v$ is moving in the direction of its polarization $z$, the decay of this two-point function is governed not by $\De$ but by $\De+J$. Indeed, we need to consider the two-point function
	\be
		\<\cO(0,z')\cO(u,v;z)\>.
	\ee
	The problem is then essentially two-dimensional: the statement that $v$ is along $z$ means that $\cO$ has definite left and right-moving weights of the 2d conformal subgroup. Invariance under the 2d conformal subgroup then selects the component of $z'$ with the same weights, so the two-point function is proportional to
	\be\label{eq:twoptexpectation}
	\<\cO(0,z')\cO(u,v;z)\>\propto \frac{(z'^1-z'^0)^J}{u^{\De-J}v^{\De+J}}.
	\ee
	Let us see this explicitly in the case of traceless-symmetric tensor $\cO$,
	\be
		\<\cO(0,z')\cO(u,v;z)\>\propto \frac{(z'_\mu I^{\mu\nu}(x_0) z_\nu)^J}{(uv)^\De},
	\ee
	where we have $x_0=\half v z+\half u z^\perp$. Here $z^\perp=(-1,1,0,\ldots)$ is the basis vector for the $u$ coordinate and we have $(z\cdot z^\perp)=2$. The numerator is then
	\be
		z'_\mu I^{\mu\nu}(x_0) z_\nu&=(z'\cdot z)-\frac{2u}{uv}\left(\half(z'\cdot z)v+\half(z'\cdot z^\perp)u\right)=(z'^1-z'^0)\frac{u}{v}.
	\ee
	This indeed leads to the expected form~\eqref{eq:twoptexpectation}.
	
In summary, we can close the $v$ contour in the upper half plane to give zero whenever $\De+J>1$.
\end{proof}

Recall that the condition $\De+J>1$ is true for all non-scalar operators in unitary CFTs, and for all non-identity scalar operators in $d\geq 4$ dimensions. 

As as simple corollary of lemma~\ref{lemma:light}, light transforms of local operators not acting on the vacuum can be expressed in terms of commutators. For example,
\be
\label{eq:commutatorexamples}
\<\O|\cO_1 \wL[\cO_3] \cO_2|\O\> &= \<\O|[\cO_1, \wL[\cO_3]] \cO_2|\O\> = \<\O|\cO_1[\wL[\cO_3], \cO_2]|\O\>.
\ee
Note that these commutators vanish at spacelike separations, so the integral in the light transforms only receives contributions from timelike separations. More explicitly, we can understand the commutators (\ref{eq:commutatorexamples}) as follows. In the integral
\be
\int_{-\oo}^\oo d\a (-\a)^{-\De-J}\<\O|\cO_1 \cO_3(x-z/\a,z) \cO_2|\O\>,
\ee
there is one singularity in the lower half-plane where $3$ becomes lightlike from $1$ and another in the upper half-plane where $3$ becomes lightlike from $2$ (figure~\ref{fig:alphacontours}). If we deform the contour to wrap around the first singularity ($3\sim 1$), we obtain the commutator $[\cO_1,\cO_3]$; if we deform the contour around the second singularity ($3\sim 2)$, we obtain $[\cO_3,\cO_2]$.

\begin{figure}[ht!]
	\centering
	\begin{tikzpicture}

\draw[fill=black] (0.98,0.3) circle (0.07);
\draw[fill=black] (-0.73,-0.3) circle (0.07);

\draw[] (-3.5,1.3) -- (-3.5,0.8) -- (-4,0.8);

\draw[->,line width=0.7] (-4,0) -- (0,0);
\draw[line width=0.7] (0,0) -- (4,0);

\draw[blue,->,line width=0.7] (-4,-0.15) -- (-2.35,-0.15);
\draw[blue,line width=0.7] (-2.35,-0.15) -- (-0.7,-0.15) to[out=0,in=0,distance=0.2cm] (-0.7,-0.45);
\draw[blue,->,line width=0.7] (-0.7,-0.45) -- (-2.35,-0.45);
\draw[blue,line width=0.7] (-2.35,-0.45) -- (-4,-0.45);
\draw[red,line width=0.7] (2.35,0.15) -- (4,0.15);
\draw[red,->,line width=0.7]  (0.96,0.45) to[out=180,in=180,distance=0.2cm] (0.96,0.15) -- (2.35,0.15);
\draw[red,line width=0.7] (0.96,0.45)-- (2.35,0.45);
\draw[red,->,line width=0.7]  (4,0.45)-- (2.35,0.45);

	\node[above] at (-3.75,0.8) {$\a$}; 
	\node[below] at (-0.73,-0.4) {$3\sim 1$};
	\node[above] at (0.96,0.4) {$3\sim 2$};
	\end{tikzpicture}
	\caption{Contour prescriptions for the $\a$ integral in the light transform of a three-point function (\ref{eq:commutatorexamples}). The black contour corresponds to $\<\Omega|\cO_1 \wL[\cO_3]\cO_2|\Omega\>$, the blue contour corresponds to $\<\O|[\cO_1, \wL[\cO_3]] \cO_2|\O\>$, and the red contour corresponds to $\<\O|\cO_1[\wL[\cO_3], \cO_2]|\O\>$.}
	\label{fig:alphacontours}
\end{figure}

Lemma~\ref{lemma:light} has the following simple consequence for time-ordered correlators:
\begin{lemma}
\label{lemma:timeorderedlighttransform}
	Let $\cO$ be a local primary operator with $\De+J>1$. In a time-ordered correlator
	\be\label{eq:lemmatimeordered}
		\<V_1\ldots V_n \wL[\cO]\>_\O,
	\ee
	if the integration contour of $\wL[\cO]$ crosses only past or only future null cones, the transform is zero. Note that on the Lorentzian cylinder, generically, the contour crosses the null cone of each $V_i$ exactly once.
\end{lemma}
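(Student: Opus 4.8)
The plan is to imitate the proof of Lemma~\ref{lemma:light}: realize $\wL[\cO]$ as an integral along a null geodesic, treat the integrand as an analytic function of the contour parameter, and close the contour in a half-plane on which it is holomorphic and decaying. Using the representation~(\ref{eq:nullrayintegral}), $\wL[\cO](x,z)=\int_{-\oo}^{\oo}d\a\,\cO(y+\a z,z)$ with $x=y-\oo z$, the statement becomes
\be
\int_{-\oo}^{\oo} d\a\, g(\a)=0,\qquad g(\a)\equiv\<V_1(x_1)\cdots V_n(x_n)\,\cO(y+\a z,z)\>_\O .
\ee
First I would locate the singularities of $g$ in the complex $\a$-plane. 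Since $(y+\a z-x_i)^2=(y-x_i)^2+2\a\,z\.(y-x_i)$ is affine in $\a$ (using $z^2=0$), the branch point at which $\cO$ collides with the light cone of $V_i$ sits at a value of $\a$ whose imaginary part, once the Feynman $i\e$ of the time-ordered correlator is imposed, has the sign of $-z\.(y-x_i)$. Moreover $z\.(y+\a z-x_i)$ is $\a$-independent and equals $z\.(y-x_i)$; at a crossing $w-x_i$ is null, future- (past-)pointing precisely when the contour meets the future (past) light cone of $V_i$, while a future (resp.\ past) null vector has non-positive (resp.\ non-negative) inner product with the future null $z$. Combining these facts yields the dichotomy: the contour meets the \emph{future} light cone of $V_i$ exactly when the associated singularity lies in the open \emph{upper} half $\a$-plane, and the \emph{past} light cone exactly when it lies in the open \emph{lower} half-plane. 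This is the analytic face of the commutator picture in figure~\ref{fig:alphacontours}.

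Granting that the entire singular set of the time-ordered integrand on the ray --- including the composite/Landau thresholds involving $\cO$ and two or more of the $V_i$ --- obeys the same dichotomy, the hypothesis of the lemma places \emph{all} of these singularities strictly on one side: in the open upper half-plane if the contour crosses only future light cones of the $V_i$, and in the open lower half-plane if it crosses only past light cones. Hence $g$ is holomorphic on the complementary closed half-plane. It then remains to control $|\a|\to\oo$: conformal invariance of the light transform --- equivalently the regularity of~(\ref{eq:lightdefinition}) as $\a\to-0$, which upon mapping to the frame in which $x$ lies at past null infinity says that $g$ approaches a fixed correlator with $\cO$ at $\tsym x$ (or at $x$) dressed by the explicit factor $(-\a)^{-\De-J}$ --- gives $g(\a)=O(|\a|^{-\De-J})$, uniformly as $|\a|\to\oo$ inside the closed half-plane where $g$ is holomorphic. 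Because $\wL[\cO]$ exists precisely when $\De+J>1$, a large semicircular arc in that half-plane contributes nothing, and since $g$ has no singularities inside it, Cauchy's theorem gives $\int_{-\oo}^{\oo}g\,d\a=0$. The variants in which the $\wL[\cO]$ contour is a $\tsym^{\pm k}$-shifted null geodesic, or a union of such, are treated identically; this also covers the generic configuration in which each $V_i$'s null cone is crossed exactly once.

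The step I expect to be the main obstacle is justifying the singularity dichotomy beyond the leading one-operator light-cone collision. For that collision it is the elementary null-vector sign computation indicated above; for the subleading anomalous thresholds it rests on the structure of the Feynman prescription and the primitive analyticity domain of time-ordered correlators, and the cleanest rigorous route --- and the one most consistent with the rest of the paper --- is to restrict to the regime in which the relevant OPEs converge, as is done in section~\ref{sec:positivityandtheanec}. A secondary and more routine point is upgrading the $|\a|^{-\De-J}$ decay so that it holds uniformly out to $\a=\oo$ throughout the closed half-plane; this parallels the estimate already carried out in the proof of Lemma~\ref{lemma:light}, where it follows from the OPE of the $V_i$ together with the $i\e$-smearing, supplemented here by the two-dimensional conformal subgroup argument showing that displacing $\cO$ along $z$ produces decay governed by $\De+J$ rather than by $\De$.
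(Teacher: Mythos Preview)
Your approach works in spirit but takes a harder route than necessary, and the gap you flag --- controlling the composite/Landau singularities of the time-ordered correlator in the complex $\a$-plane --- is real and is not addressed by restricting to a regime of OPE convergence. The paper bypasses it entirely by presenting the lemma as an immediate consequence of Lemma~\ref{lemma:light}. The observation you are missing is that under the hypothesis the time-ordered integrand coincides with a single \emph{Wightman} function, with $\cO$ adjacent to the vacuum, along the entire contour. If the contour crosses only past null cones of the $V_i$, then a future-directed null ray meeting the past light cone of $V_i$ passes from the timelike past of $V_i$ into the spacelike region, never the reverse; hence for every $\a$ the operator $\cO$ is either in the past of each $V_i$ or spacelike from it, and can be pushed to the rightmost slot in the time-ordering:
\be
\<V_1\cdots V_n\,\cO(y+\a z,z)\>_\O=\<\O|\,T\{V_1\cdots V_n\}\,\cO(y+\a z,z)\,|\O\>.
\ee
Integrating over $\a$ gives $\<\O|T\{V_1\cdots V_n\}\wL[\cO]|\O\>=0$ by Lemma~\ref{lemma:light}. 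The future-null-cone case is the mirror image with $\cO$ acting on the future vacuum.

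This dissolves your obstacle: the analyticity and $|\a|^{-\De-J}$ decay you need are not those of a time-ordered correlator with Feynman $i\e$ (for which anomalous thresholds are a genuine concern), but those of a Wightman function with $\cO$ next to the vacuum --- precisely what Lemma~\ref{lemma:light} already established via positive-energy analyticity and the OPE estimate. Your sign dichotomy for the two-point light-cone branch points is correct and is in fact consistent with this picture, but it is re-deriving, for the leading singularities only, a special case of what Wightman analyticity gives for free once you recognize the integrand for what it is.
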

Note that here the notation~\eqref{eq:lemmatimeordered} means that $\wL$ is applied to a physical time-ordered correlation function, as opposed to time-ordering acting on the continuous spin operator $\wL[\cO]$. (Since continuous spin operators are necessarily non-local, it is unclear how to define the latter time-ordering in a Lorentz-invariant way, see appendix~\ref{sec:contspincorrelators}.) We also use the subscript $\O$ to stress that we mean a physical correlation function, as opposed to a conformally-invariant tensor structure.

Finally, let us note that if we use the usual Wightman $i\e$-prescription,\footnote{In other words, add small Euclidean times to the operators to make the expectation value time-ordered in Euclidean time.} the light transform of a Wightman function is an analytic function of its arguments, including the polarizations. This follows simply from the fact that it is an integral of an analytic function. This is consistent with our statements concerning analyticity of Wightman functions of continuous-spin operators in appendix~\ref{sec:contspincorrelators}.

\subsection{Light transform of a Wightman function}

As a concrete example, and because it will play an important role later, let us compute the light-transform of the Wightman function
\be
\label{eq:threeptfnexample}
	\<0|\phi_1(x_1)\cO(x_3,z)\phi_2(x_2)|0\>=\frac{\p{2z\.x_{23}\, x_{13}^2 - 2z\.x_{13}\, x_{23}^2}^J}{x_{12}^{\De_1+\De_2-\De+J} x_{13}^{\De_1+\De-\De_2+J} x_{23}^{\De_2+\De-\De_1+J}},
\ee
where $\phi_i$ are scalar operators with dimensions $\De_i$, and $\cO$ has dimension $\De$ and spin $J$. (Our three-point structure normalization differs by a factor of $2^J$ from some more conventional normalizations. Our conventions are summarized in appendix~\ref{app:23conventions}.) In the above expression, the Wightman $i\e$ prescription is implicit. As discussed at the end of the introduction, we use the convention that expectation values in the state $|\O\>$ denote physical correlation functions, whereas the expectation values in the state $|0\>$ denote two- or three-point tensor structures fixed by conformal invariance. The same comment applies to time-ordered correlation functions $\<\cdots\>_\O$ and $\<\cdots\>$ respectively.

Because the light-transform of a local operator annihilates the vacuum (lemma~\ref{lemma:light}), it is equivalent to the commutators
\be
\label{eq:lighttransformexample}
	\<0|\phi_1 \wL[\cO] \phi_2|0\>=\<0|\phi_1 [\wL[\cO],\phi_2]|0\>=\<0|[\phi_1,\wL[\cO]] \phi_2|0\>.
\ee
Specifically, let us compute the third expression above,
\be\label{eq:3ptlight}
	\<0|\big[\phi_1(x_1),\wL[\cO](x_3,z)\big] \phi_2(x_2)|0\>=\int_{-\infty}^{+\infty} d\alpha (-\alpha)^{-\De-J} 
	\<0|\left[\phi_1(x_1),\cO\left(x_3-\frac{z}{\alpha},z\right)\right] \phi_2(x_2)|0\>.
\ee

\begin{figure}[t]
	\centering
	\begin{tikzpicture}

	\draw[fill=black] (0,0) circle (0.05);
	\draw (-2,-2) -- (2,2);
	\draw (2,-2) -- (-2,2);

	\draw[fill=black] (2,0) circle (0.05);
	\draw (-2+2,-2) -- (2+2,2);
	\draw (2+2,-2) -- (-2+2,2);
	
	\draw[fill=black] (-.5,-1.5) circle (0.05);
	\draw[->,dashed] (-.5,-1.5) -- (-.5+3.5,-1.5+3.5);
	
	\draw[line width=2, color=blue] (-.5,-1.5) -- (-.5+1,-1.5+1);

	\node[above] at (0,0.1) {$1$}; 
	\node[above] at (2,0.1) {$2$}; 
	\node[left] at (-0.5,-1.5) {$3$}; 
	\end{tikzpicture}
	\caption{Causal relationships between points in the light transform (\ref{eq:lighttransformexample}). The original integration contour is the union of the solid blue line and the dashed line. The solid blue line shows the region where the commutator $[\f_1,\cO]$ is non-zero.}
	\label{fig:light3ptrelations}
\end{figure}
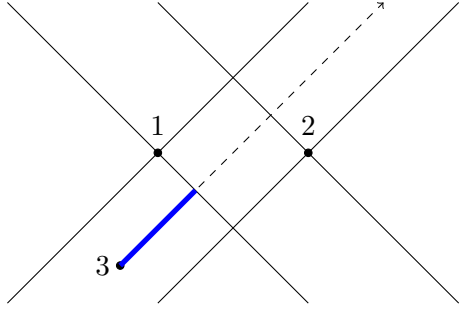

Since the light transform of a Wightman function is analytic (see section~\ref{sec:lightproperties} and appendix~\ref{sec:contspincorrelators}), we can compute it for any choice of causal relationships, and obtain the answer for other configurations by analytic continuation. We will work with the configuration in figure~\ref{fig:light3ptrelations}. All points lie in a single Poincare patch. The points $1$ and $2$ are spacelike separated, and the integration contour starts at $3<1$ and ends at $3^+>2$. The commutator $[\f_1,\cO]$ vanishes at spacelike separation, so the upper limit of the integral (\ref{eq:3ptlight}) gets restricted to the value of $\a$ when $3$ crosses the past null cone of $1$.

In our configuration, we have
\be
	(z\cdot x_{13}) &< 0,\label{eq:ineq1} \\ 
	-2\frac{(z\cdot x_{13})}{x_{13}^2} &< -2\frac{(z\cdot x_{23})}{x_{23}^2}.\label{eq:ineq2}
\ee
The first inequality follows because $z$ and $x_{13}$ are future-pointing and $x_{13}$ is not null. The second inequality expresses the fact that the null cone of $1$ is crossed before the null cone of $2$.

Taking into account that $x_{13}^2=e^{i\pi}|x_{13}^2|$ for the ordering $\f_1\cO$ and $x_{13}^2=e^{-i\pi}|x_{13}^2|$ for the ordering $\cO\f_1$, and restricting the range of integration to the past lightcone of $1$, we find 
\be
	&\<0|[\phi_1(x_1),\wL[\cO]](x_3,z) \phi_2(x_2)|0\>=\nn\\
	&=-2i\sin\pi\tfrac{\De_1+\De-\De_2+J}{2}\int_{-\infty}^{-\frac{2(z\cdot x_{13})}{x_{13}^2}} d\alpha (-\alpha)^{-\De-J} 
	\frac{\p{2z\.x_{23}\, x_{13}^2 - 2z\.x_{13}\, x_{23}^2}^J}{x_{12}^{\De_1+\De_2-\De+J} |x_{13'}|^{\De_1+\De-\De_2+J} x_{23'}^{\De_2+\De-\De_1+J}},
\ee
where $x_3'=x_3-z/\alpha$. Note that the factor $(\ldots)^J$ in the numerator is independent of $\alpha$ because $z$ is null. We thus need to compute
\be
	&\int_{-\infty}^{-\frac{2(z\cdot x_{13})}{x_{13}^2}} d\alpha (-\alpha)^{-\De-J} 
	\frac{1}{|x_{13'}|^{\De_1+\De-\De_2+J} x_{23'}^{\De_2+\De-\De_1+J}}\nn\\
	&=\int_{\frac{2(z\cdot x_{13})}{x_{13}^2}}^{+\infty} d\alpha 
	\frac{1}{|\alpha x_{13}^2-2(z\. x_{13})|^{\frac{\De_1+\De-\De_2+J}{2}} (\a x_{23}^2-2(z\. x_{23}))^{\frac{\De_2+\De-\De_1+J}{2}}}\nn\\	
	&=\frac{\Gamma(\De+J-1)\Gamma\left(1-\frac{\De+\De_1-\De_2+J}{2}\right)}{\Gamma\left(\frac{\De-\De_1+\De_2+J}{2}\right)}\frac{1}{|x_{13}|^{\De_1+\De-\De_2+J}x_{23}^{\De_2+\De-\De_1+J}}\left(\frac{2(z\cdot x_{13})}{x_{13}^2} -\frac{2(z\. x_{23})}{x_{23}^2}\right)^{1-\De-J}.
\ee
By~\eqref{eq:ineq1}, $\alpha$ has constant sign, which allows us to go to the second line. Because of~\eqref{eq:ineq2}, the function of $z$ which enters $(\ldots)^{1-\De-J}$ is positive, so the result is well-defined. 

Putting everything together, we find
\be
\label{eq:finalresultlighttransform}
&\<0|\phi_1(x_1)\,\wL[\cO](x_3,z)\, \phi_2(x_2)|0\> \nn\\
	&=L(\f_1\f_2[\cO])
	\frac{
		\p{2z\.x_{23}\, x_{13}^2 - 2z\.x_{13}\, x_{23}^2}^{1-\De}
	}{(x_{12}^2)^{\frac{\De_1+\De_2-(1-J)+(1-\De)}{2}}(-x_{13}^2)^{\frac{\De_1+(1-J)-\De_2+(1-\De)}{2}}(x_{23}^2)^{\frac{\De_2+(1-J)-\De_1+(1-\De)}{2}}},
	\ee
where
	\be
	\label{eq:lighttransformcoefficient}
	L(\f_1\f_2[\cO])&\equiv
-2\pi i \frac{\Gamma(\De+J-1)}{\Gamma(\frac{\De+\De_1-\De_2+J}{2})\Gamma(\frac{\De-\De_1+\De_2+J}{2})}.
\ee
The result (\ref{eq:finalresultlighttransform}) indeed takes the form of a conformally-invariant correlation function of $\f_1$ and $\f_2$ with an operator of dimension $1-J$ and spin $1-\De$. Note how continuous spin structures arise in a natural way from the light transform. Note also that (\ref{eq:finalresultlighttransform}) is pure negative-imaginary in the configuration of figure~\ref{fig:light3ptrelations}, where all quantities in the denominator are real. This is related to Rindler positivity as we discuss in section~\ref{sec:rindlerpositivity}.

Although we did the computation in a specific configuration, we have expressed the result in terms of an analytic function of the positions. Because the result should be analytic, the resulting expression (\ref{eq:finalresultlighttransform}) is valid for any configuration.  The $i\e$-prescription in (\ref{eq:finalresultlighttransform}) is the same as for the original Wightman function. In particular, if we move $x_3$ back into a configuration where all the points are spacelike separated, we obtain a phase
\be
\label{eq:spacelikephase}
e^{i\pi\frac{\De_1+(1-J)-\De_2+(1-\De)}{2}}
\ee
coming from $-x_{13}^2$ becoming negative. This phase will play a role in section~\ref{sec:algebraoftransforms}.

\subsection{Light transform of a time-ordered correlator}

Finally, let us discuss the light-transform of a time-ordered correlator $\<\cO_1\cO_2\wL[\cO_3]\>$. By lemma (\ref{lemma:timeorderedlighttransform}), this is nonzero only if $2^- < 3 < 1$ (as in figure~\ref{fig:light3ptrelations}) or $1^- < 3 < 2$. In the first nonzero configuration $2^- < 3 <1$, the time-ordered correlator is equivalent to the Wightman function $\<0|\cO_1 \cO_3 \cO_2|0\>$ along the entire integration contour of the light transform. The other nonzero configuration differs by $1\leftrightarrow 2$. Thus, we have
\be
\label{eq:lighttransformtimeordered1}
\<\cO_1\cO_2 \wL[\cO_3]\> &= \<0|\cO_1 \wL[\cO_3] \cO_2|0\> \th(2^- < 3 < 1) + \<0|\cO_2 \wL[\cO_3] \cO_1|0\> \th(1^- < 3 < 2).
\ee
Note that here the standard Wightman functions $\<0|\cO_1\cO_3\cO_2|0\>$ and $\<0|\cO_2\cO_3\cO_1|0\>$ (on which the light transforms act) are related to each other by analytic continuation and not by merely by relabeling the operators in the standard tensor structures $\<0|\ldots|0\>$.

For example, consider the three-point structure (\ref{eq:threeptfnexample}), now assumed to have $i\e$ prescriptions appropriate for a time-ordered correlator.  From (\ref{eq:lighttransformtimeordered1}) and our computation for the Wightman function (\ref{eq:finalresultlighttransform}), the light-transform is
\be
&\<\f_1 \f_2 \wL[\cO](x_3,z)\> = 
L(\f_1\f_2[\cO]) \nn\\
&\quad \x\left[
	\frac{
		\p{2z\.x_{23}\, x_{13}^2 - 2z\.x_{13}\, x_{23}^2}^{1-\De}
	}{(x_{12}^2)^{\frac{\De_1+\De_2-(1-J)+(1-\De)}{2}}(-x_{13}^2)^{\frac{\De_1+(1-J)-\De_2+(1-\De)}{2}}(x_{23}^2)^{\frac{\De_2+(1-J)-\De_1+(1-\De)}{2}}}\th(2^- < 3 < 1)\right.\nn\\
&\quad \left.
\quad+\ 
\frac{
		(-1)^J\p{2z\.x_{13}\, x_{23}^2 - 2z\.x_{23}\, x_{13}^2}^{1-\De}
	}{(x_{12}^2)^{\frac{\De_1+\De_2-(1-J)+(1-\De)}{2}}(x_{13}^2)^{\frac{\De_1+(1-J)-\De_2+(1-\De)}{2}}(-x_{23}^2)^{\frac{\De_2+(1-J)-\De_1+(1-\De)}{2}}}\th(1^- < 3 < 2).
\right]
\label{eq:lighttransformtimeordered}
\ee
The factor of $(-1)^J$ in the second term comes from the fact that the original structure $\<\f_1\f_2\cO\>$ picks up $(-1)^J$ when we swap $1\leftrightarrow 2$.\footnote{As we explain in appendix~\ref{sec:contspincorrelators}, time-ordered correlators with continuous spin do not make sense, so we must assume $J$ is an integer in this computation. This means that the factor $(-1)^J$ is unambiguous. The light transform $\<\f_1 \f_2 \wL[\cO]\>$ still gives a sensible continuous-spin structure because the result (\ref{eq:lighttransformtimeordered}) is no longer a time-ordered correlator, e.g.\ it has $\theta$-functions.}

\subsection{Algebra of integral transforms}

\label{sec:algebraoftransforms}

The $\wL$-transformation in (\ref{eq:finalresultlighttransform}) has the curious property that $\wL^2$ is a nontrivial function of $\De_1,\De_2,\De$ and $J$, even though it originates from a Weyl reflection $(\De,J)\leftrightarrow(1-J,1-\De)$ that squares to $1$. Specifically, its square acting on a three-point Wightman function is given by
\be
	\<0|\phi_1(x_1)\,\wL^2[\cO](x_3,z)\, \phi_2(x_2)|0\>&=\a_{\De_1,\De_2,\De,J}
	\<0|\phi_1(x_1)\cO(x_3,z)\phi_2(x_2)|0\>,
\ee
where
\be
\a_{\De_1,\De_2,\De,J}
&=
 e^{i\pi\frac{\De_1+\De-\De_2+J}{2}}L(\f_1\f_2[\cO^\mathrm{L}])\x e^{i\pi\frac{\De_1+(1-J)-\De_2+(1-\De)}{2}}
 L(\f_1\f_2[\cO]) \nn\\
	&=
	\frac{\pi}{(\De+J-1)\sin\pi(\De+J)}(e^{i\pi(\De_1-\De_2)}-e^{i\pi(\De+J)})(e^{i\pi(\De_1-\De_2)}-e^{-i\pi(\De+J)}).	
	\label{eq:lightsquareaction}
\ee
The phases in the first line of (\ref{eq:lightsquareaction}) are from (\ref{eq:spacelikephase}).

Note that the square of the light transform {\it does\/} give back a three-point function of the same functional form as the original. However, the coefficient $\a_{\De_1,\De_2,\De,J}$ depends on $\De_1,\De_2$ in a non-trivial way that cannot be removed by redefining $\wL$ by some function of $\De,J$ alone. This is in contrast to the Euclidean shadow transform, which squares to a coefficient $\cN(\De,J)$ that is independent of the correlation function it acts on  (appendix~\ref{app:euclideanintegrals}).

This ``anomaly" in the group relation $\mathrm{L}^2 = 1$ occurs for the following reason. The group-theoretic origin of $\wL$ only guarantees that it squares to a multiple of the identity when acting on principal series representations $\cP_{\De,J}$ defined on the conformal compactification of Minkowski space $\cM_d^c$. However, here we are applying it to the space $\tl \cP_{\De,J}$ defined on the universal cover $\tl \cM_d$. The squared transformation $\wL^2$ still commutes with $\tl {\SO}(d,2)$, so it becomes a non-trivial automorphism of the representation $\widetilde\cP_{\De,J}$.

By Schur's lemma, nontrivial automorphisms can only occur in reducible representations. Indeed, as discussed in section~\ref{sec:reptheoryreview}, $\widetilde\cP_{\De,J}$ is reducible and its irreducible components are the eigenspaces of $\tsym$. Within these irreducible components $\wL^2$ must act by a constant, and thus we should have
\be
	\wL^2=f_\mathrm{L}(\De,J,\tsym).
\ee
Furthermore, note that $\wL^2[\cO](x,z)$ only depends on the values of $\cO$ between $x$ and $\tsym^2x$. This means that $f_\mathrm{L}(\De,J,\tsym)$ must be at most a quadratic polynomial in $\tsym$. Finally, because $\wL^2[\cO]$ vanishes when acting on the past or future vacuum, $f_\mathrm{L}(\De,J,\tsym)$ should have roots at the eigenvalues of $\tsym$ in $\cO|\O\>$ and $\<\O|\cO$ inside a correlation function,\footnote{Here we need the adjoint action as $\cO\to\tsym \cO\tsym^{-1}$, c.f.\ equation~\eqref{eq:tsymeigenvalue}.} which are $e^{\pm i\pi(\De+J)}$. In fact, as we show explicitly in appendix~\ref{app:squarelight},
\be\label{eq:lightsquare}
	\wL^2=f_\mathrm{L}(\De,J,\tsym)=\frac{\pi}{(\De+J-1)\sin\pi(\De+J)}(\tsym-e^{i\pi(\De+J)})(\tsym-e^{-i\pi(\De+J)}).
\ee
This immediately implies~\eqref{eq:lightsquareaction} because $e^{i\pi(\De_1-\De_2)}$ is the eigenvalue of $\tsym$ acting on $\cO$ in the Wightman function $\<0|\phi_1(x_1)\cO(x_3,z)\phi_2(x_2)|0\>$. To see this, write the action of $\tsym$ on $\cO$ as
\be
	\<0|\phi_1(x_1)\tsym\cO(x_3,z)\tsym^{-1}\phi_2(x_2)|0\>
\ee
and use~\eqref{eq:tsymeigenvalue}. 

In fact, we can also turn this reasoning around and use the relatively simple computation \eqref{eq:lightsquareaction} to fix the polynomial $f_\mathrm{L}(\De,J,\tsym)$ in general. This will be helpful in appendix~\ref{app:proof} where we will need the statement that for general Lorentz irreps $\rho$ the ratio
\be\label{eq:tsymdep}
	\frac{f_L(\De,\rho,\tsym)}{(\tsym-\gamma)(\tsym-\gamma^{-1})},
\ee
where $\gamma$ is the eigenvalue in~\eqref{eq:tsymeigenvalue} corresponding to $(\De,\rho)$, is independent of $\tsym$.

More generally, this reasoning implies that relations between restricted Weyl reflections $w\in \mathrm{D}_8$ also hold for the corresponding integral transforms, but only up to multiplication by polynomials in $\tsym$ with coefficients depending on $\De$ and $J$. In the remainder of this section we derive these modified relations between integral transforms.

First of all, some relations hold by construction given the definitions in section~\ref{sec:weylandintegral},
\be
	\wS &= \wSJ \wSD = \wSD \wSJ, \nn\\
	\wF &= \wSJ\wL \wSJ, \nn\\
	\wR &= \wSJ\wL, \nn\\
	\wRb &= \wL \wSJ.
\ee
Furthermore, we already know that (for simplicity, we consider only $\tl\cP_{\De,J,\l}$ with trivial $\l$)
\be
	\wL^2&=f_\mathrm{L}(\De,J,\tsym),\label{eq:wLsquare}\\
	\wSJ^2&=f_J(J).\label{eq:wSJsquare}
\ee
Here $f_\mathrm{L}$ is a quadratic polynomial in $\tsym$ defined in~\eqref{eq:lightsquare}, while $f_J(J)$ depends only on $J$ and is equal to the square of Euclidean shadow transform in $d-2$ dimensions:
\be
	f_J(J) = \frac{\pi^{d-1}}{(J+\frac{d-2}{2})\sin\pi(J+\frac{d}{2})}\frac{1}{\Gamma(-J)\Gamma(J+d-2)}.
\ee
That is, $f_J(J)=\cN(-J,0)$ in $d-2$ dimensions, where $\cN(\De,J)$ in $d$ dimensions is given in~\eqref{eq:plancherelexample}.
These equations allow us to compute
\be
	\wR \wRb &=f_\mathrm{L}(\De,2-d-J,\tsym)f_J(J),\\
	\wRb \wR&=f_\mathrm{L}(\De,J,\tsym)f_J(1-\De).
\ee

As we show in appendix~\ref{app:shadowLSLrelation}, there is another relation,
\be\label{eq:wsdwlwsjrelation}
	2\wSD=i\tsym^{-1}\, \wL \wSJ\wL.
\ee
Together with $\wS= \wSJ \wSD = \wSD \wSJ$ this implies
\be
	2\wS=i\tsym^{-1}\, \wR^2=i\tsym^{-1}\, \wRb^2,
\ee
and thus we find
\be
	4\wS^2=-\tsym^{-2} \wR^2 \wRb^2 =-\tsym^{-2} f_\mathrm{L}(\De,2-d-J,\tsym)f_J(J)f_\mathrm{L}(J+d-1,1-d+\De,\tsym)f_J(1-\De).
\ee
Due to $4\wS^2=-\tsym^{-2} (\wSJ \wL)^4=-\tsym^{-2} (\wL\wSJ)^4$, we also have
\be\label{eq:wSJLfourth}
	(\wL\wSJ)^4=(\wSJ \wL)^4=f_\mathrm{L}(\De,2-d-J,\tsym)f_J(J)f_\mathrm{L}(J+d-1,1-d+\De,\tsym)f_J(1-\De).
\ee
At this point it is obvious that $f_J$ and $f_\mathrm{L}$ completely determine the relations between all integral transforms, since $\mathrm{D}_8$ is generated by $\mathrm{L}$ and $\mathrm{S}_J$ modulo $\mathrm{L}^2=\mathrm{S}_J^2=(\mathrm{S}_J\mathrm{L})^4=1$ and we have already found the generalization of these relations to the integral transforms $\wL$ and $\wSJ$ in~\eqref{eq:wLsquare},~\eqref{eq:wSJsquare}, and~\eqref{eq:wSJLfourth}. 

A convenient way to summarize these results is by using normalized versions of $\wL$ and $\wSJ$. Specifically, we define
\be
	\hat\wL&\equiv \wL \frac{1}{\Gamma(\De+J-1)(\tsym-e^{i\pi(\De+J)})},\\
	\hat{\mathbf{S}}_J&\equiv \wSJ\frac{\Gamma(-J)}{\pi^{\frac{d-2}{2}}\Gamma(J+\frac{d-2}{2})},
\ee
where $\De$ and $J$ in there right hand side should be understood as operators reading off the dimension and spin of the functions they act upon. One can then check the following relations
\be
	\hat\wL^2=1,\qquad \hat{\mathbf{S}}^2_J=1,\qquad (\hat\wL \hat\wS_J)^4=(\hat\wS_J\hat\wL)^4=1.
\ee
These normalized transforms therefore generate the dihedral group $\mathrm{D}_8$ without any extra coefficients.  Note that $\hat\wL$ is very non-local because it has $\tsym$ in the denominator. In particular, by doing a Taylor expansion in $\tsym$ we see that it involves a sum over an infinite number of different Poincare patches. Thus, even though $\hat\wL$ satisfies a simpler algebra, we mostly prefer to work with $\wL$.

\section{Light-ray operators}
\label{sec:lightray}

In this section, we explain how to fuse a pair of local operators $\cO_1,\cO_2$ into a light-ray operator $\mathbb{O}_{i,J}$ which gives an analytic continuation in spin $J$ of the light-transform of local operators in the $\cO_1\x\cO_2$ OPE. This amounts to defining correlation functions
\be
	\<\O|V_1\ldots V_k \mathbb{O}_{i,J} V_{k+1}\ldots V_n|\O\>
\ee
in terms of those of $\cO_1$ and $\cO_2$,
\be
	\<\O|V_1\ldots V_k \cO_1\cO_2 V_{k+1}\ldots V_n|\O\>.
\ee
When $J$ is an integer, $\mathbb{O}_{i,J}$ is related to a local operator in the $\cO_1\cO_2$ OPE, and these correlation functions are linked by Euclidean harmonic analysis~\cite{Dobrev:1977qv}. Our strategy will be to start with this relation, rephrase it in Lorentzian signature, and then analytically continue in $J$. By the operator-state correspondence, it suffices to consider just two insertions $V_i$, and for simplicity we will also restrict to scalars $\cO_1=\f_1$ and $\cO_2=\f_2$. (The generalization to arbitrary spin of $\cO_1,\cO_2$ will be straightforward.)

\subsection{Euclidean partial waves}
\label{sec:euclideanpartialwaves}

Consider a Euclidean correlation function $\<\f_1 \f_2 V_3 V_4 \>_\Omega$, where the $V_3$ and $V_4$ are local operators of any spin (not necessarily primary) and $\f_1,\f_2$ are local primary scalars. By the Plancherel theorem for $\SO(d+1,1)$ (due to Harish-Chandra \cite{harish-chandra1970}), such a correlation function can be expanded in partial waves $P_{\De,J}$ that diagonalize the action of the conformal Casimirs acting simultaneously on points $1$ and $2$~\cite{Dobrev:1977qv,Fitzpatrick:2011dm},\footnote{For general spin operators we should also include contributions from a discrete series of partial waves.}${}^,$\footnote{In \cite{Fitzpatrick:2011dm}, the process of forming the Euclidean partial wave $P_{\De,J}$ is called ``conglomeration."}
\be
\label{eq:multipointcompleteness}
\<V_3 V_4 \f_1 \f_2\>_\Omega &= \sum_{J=0}^\oo \int_{\frac d 2}^{\frac d 2 + i\oo} \frac{d\De}{2\pi i} \mu(\De,J) \int d^d x P_{\De,J}^{\mu_1\cdots \mu_J} (x_3,x_4,x) \<\tl \cO^\dagger_{\mu_1\cdots \mu_J}(x) \f_1 \f_2\>.
\ee
Here, $\cO$ has spin $J$ and dimension $\De\in \frac d 2 + i\R^+$ on the principal series. The factor $\mu(\De,J)$ is the Plancherel measure (\ref{eq:plancherelexample}), which we have inserted in order to simplify later expressions. For traceless-symmetric $\cO$ there is no difference between representations $\tl\cO^\dagger$ and $\tl\cO$, but we will keep the daggers in what follows with the view towards the more general case.

Let us make two technical comments about the applicability of this formula. It follows directly from $L^2(G)$ harmonic analysis on $\SO(d+1,1)$ if $\De_1-\De_2$ is pure imaginary (possibly $0$) and $\<V_3 V_4 \f_1 \f_2\>_\Omega$ is square-integrable in the sense that
\be
	\int d^dx_1 d^dx_2 \,x_{12}^{-2d+4\Re \De_1} \<V_3 V_4 \f_1 \f_2\>_\Omega (\<V_3 V_4 \f_1 \f_2\>_\Omega)^* < \oo.
\ee
This is precisely the situation when the conformal Casimir operators acting on points 1 and 2 are self-adjoint and we can perform their spectral analysis.\footnote{The reason why it is important to have $\De_1-\De_2\in i\R$ is that the adjoint of a Casimir operator acts on functions with conjugate shadow scaling dimensions $\tl\De_i^*$. This is a different space of functions than the one $\<V_3 V_4 \f_1 \f_2\>_\Omega$ lives in unless $\tl\De_i^*=\De_i$, which is the case when $\De_i\in\frac{d}{2}+i\R$ are principal series representations. It furthermore turns out that only $\De_1-\De_2$ is important for the argument, since $\De_1+\De_2$ can be changed by multiplying $\<V_3 V_4 \f_1 \f_2\>_\Omega$ by a two-point function $x_{12}^{\delta}$ for some $\delta$, and such two-point functions cancel out in equations.} Neither of these conditions is satisfied by a typical correlator in a physically-relevant CFT. Lifting the restriction of square integrability is conceptually easy and is similar to the usual Fourier transform: non-square integrable correlation functions can be interpreted as distributions (of some kind) and their partial waves also become distributions.\footnote{The distributional contribution to the partial wave can be analyzed by subtracting a finite number of contributions of low dimensional operators to make the function better behaved. This analysis was essentially performed in~\cite{Caron-Huot:2017vep} and in generic cases amounts to a deformation of $\De$-contour in~\eqref{eq:multipointcompleteness}.} 

Relaxing the restriction $\De_1-\De_2\in i\R$, on the other hand, seems to be hard to do from first principles, since the Casimir operators are not self-adjoint anymore. We will thus not attempt to do this here and instead adopt the following pedestrian approach: we will imagine multiplying correlation functions by products of scalar two-point functions $x_{ij}^{\kappa\delta_{ij}}$ with $\kappa=1$ so that the scaling dimensions of external operators will formally become principal series (this will of course modify the conformal block decomposition of these functions).\footnote{Note that such two-point functions have the right Wightman analyticity properties, and thus do not spoil the analyticity of physical correlators which we use in the arguments below.} We perform harmonic analysis for these modified functions and then remove the auxiliary two-point functions by sending $\kappa\to 0$. For this to make sense we have to assume that the final expressions can be analytically continued to $\kappa = 0$.

With these comments in mind, we may proceed with~\eqref{eq:multipointcompleteness}. Using the bubble integral (\ref{eq:bubbleintegral}), we find that $P_{\De,J}$ is given by
\be
\label{eq:euclideanintegralforpartialwave}
&P^{\mu_1\cdots \mu_J}_{\De,J}(x_3,x_4,x) =\p{\<\f_1 \f_2\tl \cO^\dagger \>,\<\tl \f_1^\dagger \tl \f_2^\dagger \cO\>}^{-1}_E\int d^d x_1 d^d x_2 \<V_3 V_4 \f_1 \f_2\>_\Omega \<\tl \f_1^\dagger \tl \f_2^\dagger \cO^{\mu_1\cdots \mu_J}(x)\>,
\ee
where
\be
\p{\<\f_1 \f_2\tl \cO^\dagger \>,\<\tl \f_1^\dagger \tl \f_2^\dagger \cO\>}_E=\frac{2^{2J}\hat C_J(1)}{2^d\vol(\SO(d-1))}
\ee 
is the three-point pairing defined in appendix~\ref{eq:euclideanpairings}.
In anticipation of performing the light-transform, let us contract spin indices of $\cO$ with a null polarization vector $z^\mu$ to give
\be
\label{eq:euclideanintegralforpartialwavez}
&P_{\De,J}(x_3,x_4,x,z) =\p{\<\f_1 \f_2\tl \cO^\dagger \>,\<\tl \f_1^\dagger \tl \f_2^\dagger \cO\>}^{-1}_E\int d^d x_1 d^d x_2 \<V_3 V_4 \f_1 \f_2\>_\Omega \<\tl \f_1^\dagger \tl \f_2^\dagger \cO(x,z)\>,
\ee
where $\cO(x,z) = \cO^{\mu_1\cdots\mu_J}(x)z_{\mu_1}\cdots z_{\mu_J}$.

Physical correlation functions $\<V_3 V_4 \cO_*\>_\Omega$ of operators $\cO_*$ in the $\f_1 \x \f_2$ OPE are residues of the partial waves, 
\be
\label{eq:residueequation}
f_{12*} \<V_3 V_4 \cO_*(x,z)\>_\Omega &= -\Res_{\De= \De_*} \left.\mu(\De,J) S_E(\f_1\f_2 [\tl \cO^\dagger]) P_{\De,J}(x_3,x_4,x,z)\right|_{J=J_*}.
\ee
Here, $S_E(\f_1\f_2 [\tl \cO^\dagger])$ is the shadow transform coefficient (\ref{eq:scalarshadowfactor}), and $f_{12*}$ is the OPE coefficient of $\cO_*\in \f_1\x \f_2$.  Equation (\ref{eq:residueequation}) is a simple generalization of the standard result for primary four-point functions.
We derive it in appendix~\ref{sec:argumentforresidues}.

\subsection{Wick-rotation to Lorentzian signature}

To obtain the promised analytic continuation of $\wL[\cO]$, we need to first go to Lorentzian signature, and then apply the light transform.

We thus Wick-rotate all the operators $\f_1,\f_2,V_3,V_4,\cO$ to Lorentzian signature by setting
\be
\tau &= (i+\e) t,
\ee
where $\tau$ and $t$ are Euclidean and Lorentzian time, respectively. In more detail, we simultaneously rotate the time coordinates of each of the operators $\f_1,\f_2,V_3,V_4,\cO$. For the operators $V_3,V_4,\cO$, this means we analytically continue in the coordinates $x_3,x_4,x$. The operators $\f_1,\f_2$ are being integrated over in~\eqref{eq:euclideanintegralforpartialwavez}, and we rotate their respective integration contours simultaneously with the analytic continuation of $x_3,x_4,x$. Simultaneous Wick-rotation turns Euclidean correlators into time-ordered Lorentzian correlators. The result is a double-integral of time-ordered correlators over Minkowski space
\be
\label{eq:afterwick}
&P_{\De,J}(x_3,x_4,x,z) = -\p{\<\f_1 \f_2\tl \cO^\dagger \>,\<\tl \f_1^\dagger \tl \f_2^\dagger \cO\>}^{-1}_E\int_{\oo \approx 1,2} d^d x_1 d^d x_2 \<V_3 V_4 \f_1 \f_2\>_\Omega \<\tl \f_1^\dagger \tl \f_2^\dagger \cO(x,z)\>.
\ee
Here, we have chosen a generic point $x_\oo$ on the Lorentzian cylinder $\tl \cM_d$ and written Minkowski space as the Poincare patch that is spacelike from this point.\footnote{In particular the result must be independent of which point we choose for $x_\oo$. The spurious dependence of formulas on $x_\oo$ will go away soon.}${}^,$\footnote{Note that we do not place $\cO(x,z)$ at infinity before performing the Wick rotation, in contrast to \cite{Simmons-Duffin:2017nub}. The reason is that in our case the region of integration for $1,2$ is independent of the position of $\cO$ so it is easier to analytically continue in the position of $\cO$.} All the points $1,2,3,4,x$ are constrained to lie within this patch. The minus sign in (\ref{eq:afterwick}) comes from two Wick rotations in the measure $d\tau_1 d\tau_2 = -dt_1 dt_2$.

\subsection{The light transform and analytic continuation in spin}

\begin{figure}[t]
	\centering
	\begin{subfigure}[t]{0.4\textwidth}
		\begin{tikzpicture}
		\draw[fill=yellow, opacity = 0.2] (-1.5,-1.5) -- (-3,0) -- (0,3) -- (1.5,1.5) -- cycle;
		\draw[fill=red, opacity = 0.1] (-1.5,-1.5) -- (0,-3) -- (3,0) -- (1.5,1.5) -- cycle;
		
		\draw (-3,0) -- (0,3) -- (3,0) -- (0,-3) -- cycle;
		
		\draw[fill=black] (-3,0) circle (0.07); 
		\draw[fill=black] (3,0) circle (0.07);  
		
		\draw[fill=black] (-1,0) circle (0.07); 
		\draw[fill=black] (1,0) circle (0.07);  
		
		\draw[fill=black] (-1.5,-1.5) circle (0.07);  
		\draw[fill=black] (1.5,1.5) circle (0.07);  
		
		\draw[opacity=.3] (-2,1) -- (1,-2); 
		\draw[opacity=.3] (-2,-1) -- (1,2); 
		
		\draw[opacity=.3] (2,1) -- (-1,-2); 
		\draw[opacity=.3] (2,-1) -- (-1,2); 
		
		\draw[dashed, color=blue] (-1.5,-1.5) -- (1.5,1.5);
		
		\node[right] at (3,0) {$\oo$};
		\node[left] at (-3,0) {$\oo$};
		
		\node[left] at (-1.05,0) {$4$};
		\node[right] at (1.05,0) {$3$};	
		
		\node[left] at (-1.55,-1.55) {$x$};
		\node[right] at (1.55,1.55) {$x^+$};
		
		\node at (-1,1) {$1$};
		\node at (1,-1) {$2$};
		\end{tikzpicture}
		\caption{Integration region in~\eqref{eq:bilocaldefinition}}
		\label{fig:nullrayoperator}
	\end{subfigure}
	\hspace{1.5cm}
	\begin{subfigure}[t]{0.4\textwidth}
		\begin{tikzpicture}
		\draw[fill=yellow, opacity = 0.2,yellow] (-1.5,-1.5) -- (-1.65,-1.35) -- (1.35,1.65) -- (1.5,1.5) -- cycle;
		\draw[fill=red, opacity = 0.1,red] (-1.5,-1.5) -- (-1.35,-1.65) -- (1.65,1.35) -- (1.5,1.5) -- cycle;
		
		\draw (-3,0) -- (0,3) -- (3,0) -- (0,-3) -- cycle;
		
		\draw[fill=black] (-3,0) circle (0.07); 
		\draw[fill=black] (3,0) circle (0.07);  
		
		\draw[fill=black] (-1,0) circle (0.07); 
		\draw[fill=black] (1,0) circle (0.07);  
		
		\draw[fill=black] (-1.5,-1.5) circle (0.07);  
		\draw[fill=black] (1.5,1.5) circle (0.07);  
		
		\draw[opacity=.3] (-2,1) -- (1,-2); 
		\draw[opacity=.3] (-2,-1) -- (1,2); 
		
		\draw[opacity=.3] (2,1) -- (-1,-2); 
		\draw[opacity=.3] (2,-1) -- (-1,2); 
		
		\draw[dashed, color=blue] (-1.5,-1.5) -- (1.5,1.5);
		
		\node[right] at (3,0) {$\oo$};
		\node[left] at (-3,0) {$\oo$};
		
		\node[left] at (-1.05,0) {$4$};
		\node[right] at (1.05,0) {$3$};	
		
		\node[left] at (-1.55,-1.55) {$x$};
		\node[right] at (1.55,1.55) {$x^+$};
		
		\node at (-1/4,1/4) {$1$};
		\node at (1/4,-1/4) {$2$};
		\end{tikzpicture}
		\caption{Region $S$ in~\eqref{eq:bilocaldefinition} which contributes to the residue.}
		\label{fig:regionS}
	\end{subfigure}
	\caption{The configuration of points within the Poincare patch of $\oo$. Point $4$ is in the future of $x$ and $3$ is in the past of $x^+$, while $x$ is null separated and in the past of $\oo$. The shaded yellow (red) region is the region of integration for 1 (2) after taking the light transform, in the first term in equations~\eqref{eq:afterlighttransform} and~\eqref{eq:bilocaldefinition}. The dashed null line is spanned by $z$. Note that in (b), for $d>2$ the region $S$ extends in and out of the picture, while the dashed null line doesn't.}
\end{figure}
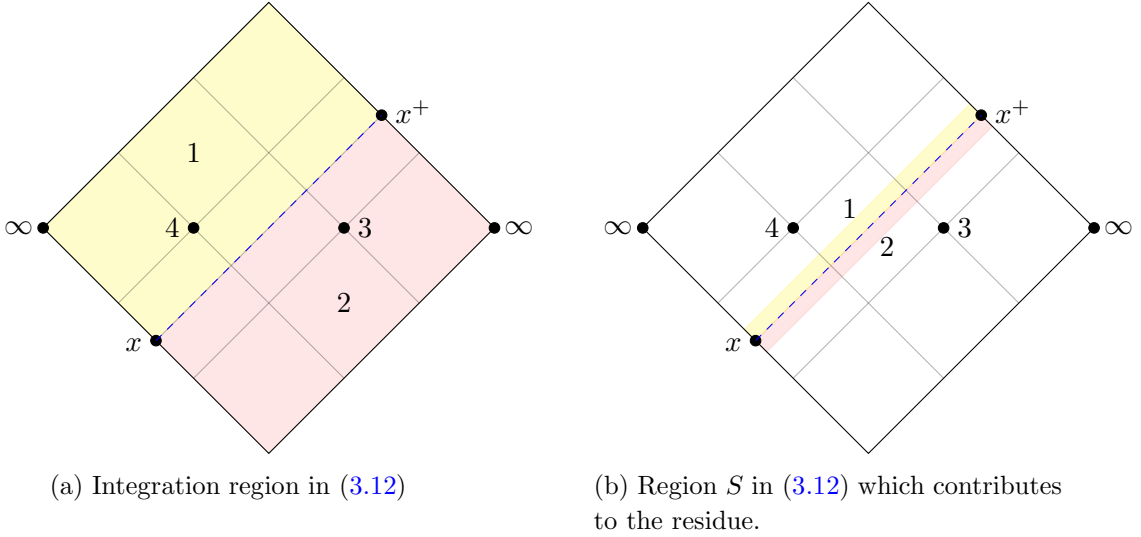

Let us now move $\cO(x,z)$ to  past null infinity and perform the light transform. We choose $3,4$ such that $3^-<x<4$, so that the left-hand side is nonzero, see figure~\ref{fig:nullrayoperator}. Since $\cO$ is on the Euclidean principal series, the condition $\Re(\De+J)>1$ is satisfied and we can plug in~(\ref{eq:lighttransformtimeordered1}) to find
\be
\wL[P_{\De,J}](x_3,x_4,x,z) &= -\p{\<\f_1 \f_2\tl \cO^\dagger \>,\<\tl \f_1^\dagger \tl \f_2^\dagger \cO\>}^{-1}_E\int_{\substack{2^- < x < 1 \\ \oo \approx 1,2}} d^d x_1 d^d x_2 \<V_3 V_4 \f_1 \f_2\>_\Omega \<0|\tl \f_1^\dagger \wL[\cO](x,z) \tl \f_2^\dagger|0\> \nn\\
& \qquad + (1\leftrightarrow 2).
\label{eq:afterlighttransform}
\ee
See the discussion below~\eqref{eq:lighttransformtimeordered1} for the precise meaning of the $(1\leftrightarrow 2)$ term.

Let us now define
\be\label{eq:bilocaldefinition}
	\mathbb{O}_{\De,J}(x,z) &\equiv \frac{\mu(\De,J)S_E(\f_1\f_2[\tl\cO^\dagger])}{\p{\<\f_1 \f_2\tl \cO^\dagger \>,\<\tl \f_1^\dagger \tl \f_2^\dagger \cO\>}_E}\int_{\substack{2^- < x < 1 \\ \oo \approx 1,2}} d^d x_1 d^d x_2\<0|\tl \f_1^\dagger \wL[\cO](x,z) \tl \f_2^\dagger|0\> \f_1 \f_2+
	(1\leftrightarrow 2).
\ee
It is implicit here that $x$ is null separated from $\oo$. This expression makes sense (at least formally) for continuous $J$. The euclidean three-point structure $\<\tl \f_1^\dag \tl \f_2^\dag \cO\>$ that we started with is single-valued only for integer $J$. However, due to the particular Wightman ordering the structures in (\ref{eq:bilocaldefinition}) are well-defined for any $J$, as discussed in appendix~\ref{sec:contspincorrelators}. In order to continue to non-integer $J$, we must also choose an analytic continuation of the prefactors in (\ref{eq:bilocaldefinition}), which we discuss in more detail below. One consequence is that we have two different analytic continuations: one from even values of $J$ that we denote $\mathbb{O}_{\De,J}^+$, and one from odd values of $J$ that we denote $\mathbb{O}_{\De,J}^-$.

For integer $J$,~\eqref{eq:afterlighttransform} and~\eqref{eq:residueequation} imply that the residues $\mathbb{O}_{i,J}^\pm$, defined by
\be\label{eq:lightraydefinition}
	\mathbb{O}^\pm_{\De,J}(x,z)\sim \frac{1}{\De-\De^\pm_i(J)}\mathbb{O}^\pm_{i,J}(x,z),
\ee
have the same three-point functions as light-transforms of local operators in the $\f_1\times\f_2$ OPE. (We include a $\pm$ subscript on $\De_i^\pm(J)$ because the positions of poles in the $(\De,J)$ plane are in general different for the even/odd cases.) To be precise, when $J$ is an integer, the residue of a \textit{time-ordered} correlator, where time-ordering acts on $\f_1$ and $\f_2$ inside the definition of $\mathbb{O}^\pm_{\De,J}$,
\be\label{eq:TOofODeJ}
	\<V_3 V_4 \mathbb{O}^\pm_{\De,J}(x,z)\>_\O,
\ee
agrees with
\be\label{eq:TOofLLocal}
	f_{12\cO}\<V_3 V_4 \wL[\cO_{i,J}]\>_\O.
\ee
for a local operator $\cO_{i,J}$, where $\pm$ is determined by $(-1)^J=\pm 1$.

We now claim that, for any $J$, the residue in~\eqref{eq:TOofODeJ} comes from a region $S$ where $\f_1$ and $\f_2$ are simultaneously almost null-separated from $x$ and from each other, see figure~\ref{fig:regionS}. Indeed, we always expect singularities in correlators when points are null-separated. In integrated correlators,  such singularities can be removed by $i\e$-prescriptions. However, lightlike singularities in the region $S$ are not removed because they coincide with boundaries in the integration regions for $x_1,x_2$. In a time-ordered correlator, we can also have singularities at coincident points. However, we expect singularities related to the $\f_1\times\f_2$ OPE to come from $1$ being lightlike to $2$ and not from other coincident limits. 

Let us focus on the first term of~\eqref{eq:bilocaldefinition}. For this term, it is guaranteed that $1\geq 3$, $2\leq 4$, and $1\geq 2$. In the region $S$ we furthermore have $1\leq 4$ and $2\geq 3$, i.e.\ we have the ordering $4\geq 1\geq 2 \geq 3$, and the contribution of the first term of $\eqref{eq:bilocaldefinition}$ to the time-ordered correlator~\eqref{eq:TOofODeJ} agrees with its contribution to the Wightman function
\be
	\<\O|V_4 \mathbb{O}^\pm_{\De,J} V_3|\O\>.
\ee
The same obviously holds for the second term, and, moreover,~\eqref{eq:TOofLLocal} agrees with the Wightman function
\be
	f_{12\cO}\<\O|V_4 \wL[\cO_{i,J}]V_3|\O\>.
\ee
Since any state in CFT can be approximated by local operators $V_i$ acting on the vacuum in an arbitrarily small region, this implies that we can interpret~\eqref{eq:bilocaldefinition} and~\eqref{eq:lightraydefinition} as operator equations. Furthermore, by construction, for non-negative integer $J$ we must have, as an operator equation,
\be
	\mathbb{O}^\pm_{i,J}=f_{12\cO}\wL[\cO_{i,J}]\qquad(J\in \Z_{\geq 0},\ (-1)^J=\pm 1)
\ee
for some local operator $\cO_{i,J}$. 

For non-integer $J$ the definition~\eqref{eq:bilocaldefinition} with~\eqref{eq:lightraydefinition} provides an analytic continuation in $J$ of $\wL[\cO_{i,J}]$. As we will show in section~\ref{sec:inversionformulae}, it is precisely the matrix elements of $\mathbb{O}^\pm_{\De,J}$ and $\mathbb{O}^\pm_{i,J}$ which are computed by Caron-Huot's Lorentzian inversion formula. As discussed above, the residues $\mathbb{O}^\pm_{i,J}$ should only depend on the region of the integral where $\f_1$ and $\f_2$ are almost null-separated. In fact, it is natural to expect that the residue is further localized onto the null line defined by $z$. Hence we refer to them as light-ray operators. In the next subsection we show this explicitly in the case of mean field theory (MFT).

In our argument for the existence of light-ray operators, it is not necessary that $\mathbb{O}^\pm_{\De,J}$ be a meromorphic function with simple poles. We expect that any non-analyticity in $\mathbb{O}^\pm_{\De,J}$ in the $(\De,J)$ plane should come from the region where $\f_1$ and $\f_2$ are lightlike-separated. Thus, for example, it should be possible to define light-ray operators by taking discontinuities across branch cuts of $\mathbb{O}^\pm_{\De,J}$ (if they exist). Determining the analyticity structure of $\mathbb{O}^\pm_{\De,J}$ in the $(\De,J)$ plane is an important problem for the future.

As mentioned above, to analytically continue $\mathbb{O}^\pm_{\De,J}$ in spin, we must choose an analytic continuation in $J$ of the prefactors
\be
\label{eq:prefactorweneedtocontinue}
&\frac{\mu(\De,J)S_E(\f_1\f_2[\tl\cO^\dagger])}{\p{\<\f_1 \f_2\tl \cO^\dagger \>,\<\tl \f_1^\dagger \tl \f_2^\dagger \cO\>}_E}\nn\\
&= (-1)^J \frac{\G(J+\tfrac d 2)\G(d+J-\De)\G(\De-1)}{2\pi^d \G(J+1)\G(\De-\tfrac d 2)\G(\De+J-1)} \frac{\G(\tfrac{\De+J+\De_1-\De_2}{2})\G(\tfrac{\De+J-\De_1+\De_2}{2})}{\G(\tfrac{d-\De+J+\De_1-\De_2}{2})\G(\tfrac{d-\De+J-\De_1+\De_2}{2})}.
\ee
Additionally, the term in (\ref{eq:bilocaldefinition}) with $(1\leftrightarrow 2)$ has a prefactor differing by $(-1)^J$.
Because of the $(-1)^J$'s, we must make two separate analytic continuations from even and odd $J$, leading to $\mathbb{O}^\pm_{\De,J}$. In general, we expect the even and odd spectrum of light-ray operators to be different because they are distinguished by an eigenvalue $s_{\mathbb{O}}=\pm 1$, as explained in section~\ref{eq:evenvsodd}. For example, in MFT with a real scalar $\f$, the analytic continuation of even-$J$ two-$\f$ operators is nontrivial, but there are no odd-$J$ two-$\f$ operators.

The analytic continuation of the remaining $\G$-function factors in (\ref{eq:prefactorweneedtocontinue}) is determined by requiring that they be meromorphic and polynomially bounded at infinity in the right half-plane. This is important for the Sommerfeld-Watson resummation discussed in section~\ref{sec:sommerfeld}. The expression (\ref{eq:prefactorweneedtocontinue}) satisfies these conditions, so provides a good analytic continuation. When $\f_1,\f_2$ are not scalars, then we can relate the prefactor to a rational function of $J$ times (\ref{eq:prefactorweneedtocontinue}) using weight-shifting operators~\cite{Karateev:2017jgd,ShadowFuture}, and this provides a good analytic continuation in that case as well.

Although we have assumed scalar $\f_1,\f_2$ in this section for simplicity, the generalization to arbitrary representations $\cO_1,\cO_2$ is straightforward. We discuss some aspects of the general case in section~\ref{sec:generalizedinversiontwo}.

\subsubsection{More on even vs.\ odd spin}
\label{eq:evenvsodd}

There is a natural operation that distinguishes even-spin and odd-spin light-ray operators. First recall that every quantum field theory has an anti-unitary symmetry $J_\Omega = \mathsf{CRT}$ which acts on local operators by \cite{streater2016pct}
\be
\label{eq:modularconjugation}
J_\Omega \cO^a(x) J_\Omega^{-1} = \p{i^F (e^{-i\pi \cM^{01}})^{a}{}_b \cO^{b}(\bar x)}^\dag.
\ee
Here, $\bar x = (-x^0, -x^1,x^2,\dots, x^{d-1})$ is the Rindler reflection of $x$, $(\cM^{\mu\nu})^a{}_b$ are  representation matrices associated to the Lorentz generators $M^{\mu\nu}$,\footnote{The $M^{\mu\nu}$ are antihermitian in our conventions.}
\be
[M^{\mu\nu},\cO^a(0)] &= -(\cM^{\mu\nu})^a{}_b \cO^b (0),
\ee
and $F$ is the fermion number of $\cO$. We call $J_\Omega$ ``Rindler conjugation" because it is  the modular conjugation operator for the Rindler wedge in the vacuum state \cite{doi:10.1063/1.522898}.\footnote{The alternative notation $\mathsf{CRT}$ comes from the fact that this operator reverses charges and implements a reflection in both time and a single spatial direction. By contrast the operator $\mathsf{CPT}$ implements a reflection in all spatial directions simultaneously.} It is useful to introduce the notation
\be
\label{eq:defofbar}
\bar{\cO} &\equiv J_\Omega \cO J_\Omega^{-1}.
\ee
Note that $J_\Omega^2 = 1$. Furthermore, using (\ref{eq:defofbar}), we clearly have
\be
\bar{\cO_1 \cO_2} = \bar \cO_1 \bar \cO_2,
\ee
so Rindler conjugation preserves operator ordering.

Rindler conjugation is an anti-unitary symmetry. If we combine it with Hermitian conjugation, we obtain a linear map of operators
\be
\label{eq:linearbutnotsymmetry}
\cO &\to \bar{\cO}^\dag.
\ee
This is no longer a symmetry on Hilbert space because it reverses operator ordering.
Nevertheless, it makes sense to classify operators into eigenspaces of (\ref{eq:linearbutnotsymmetry}). Consider first a local operator $\cO(x,z)$ with spin $J$, and let us set $z=z_0=(1,1,0,\dots,0)$. We have
\be
\bar{\cO(x,z_0)}^\dag &= \cO(\bar x,\bar z_0) = (-1)^J \cO(\bar x,z_0).
\ee
Integrating $x$ along the $z_0$ direction, we obtain
\be
\bar{\wL[\cO](-\oo z_0,z_0)}^\dag &= (-1)^J \wL[\cO](-\oo z_0,z_0).
\ee
For a more general light-ray operator, we have
\be
\bar{\mathbb{O}(-\oo z_0, z_0)}^\dag &= s_\mathbb{O} \mathbb{O}(-\oo z_0, z_0),
\ee
where now the eigenvalue $s_\mathbb{O} = \pm 1$ is not necessarily related to the quantum number $J$. If we obtain $\mathbb{O}$ by analytically continuing $\wL[\cO]$ from the case where $J$ is even (odd), we will obtain $s_\mathbb{O}=+1$ ($-1$). In this work, we abuse terminology and refer to light-ray operators with $s_{\mathbb{O}}=+1$ as ``even-spin" and operators with $s_{\mathbb{O}}=-1$ as ``odd-spin."

\subsection{Light-ray operators in Mean Field Theory}
\label{sec:lightraymft}

In this section we explicitly show that $\mathbb{O}^\pm_{i,J}$ are light-ray operators in Mean Field Theory (MFT). For simplicity, we assume that the scalar operators in~\eqref{eq:bilocaldefinition} are distinct fundamental MFT scalars. More generally, we can imagine that they belong to two decoupled CFTs.

The kernel in~\eqref{eq:bilocaldefinition} is obtained from~\eqref{eq:finalresultlighttransform} by sending $x_3$ to past null infinity according to the rule
\be
	\cO(-z\oo,z)=\lim_{L\to +\oo}L^{\De+J}\cO(-L z,z),
\ee
i.e.
\be
	&\<0|\tl\phi_1^\dagger\wL[\cO]\tl\phi_2^\dagger|0\>=\nn\\
		&\quad=L(\tl\f_1^\dagger\tl\f_2^\dagger|\cO)
	\frac{2^{J-1}
	\p{z\.x_{2}\, x_{1}^2 - z\.x_{1}\, x_{2}^2}^{1-\De}
	}{
		(x_{12}^2)^{\frac{\tl\De_1+\tl\De_2+J-\De}{2}}
		(-z\.x_{1})^{\frac{\tl\De_1-\tl\De_2+2-\De-J}{2}}
		(z\.x_{2})^{\frac{\tl\De_2-\tl\De_1+2-\De-J}{2}}
	}.
\ee
The expression~\eqref{eq:finalresultlighttransform} was written for $1>3, 3\approx 2, 1\approx 2$. With these conditions, the ratio above is positive. In the integral we need to relax $1\approx 2$, which is done by adding $i\e$ to $x_2^0$ and $-i\e$ to $x_1^0$, according to the Wightman ordering above. We now introduce lightcone coordinates by writing
\be
	x_i=\half z v_i+\half z' u_i + \bx_i
\ee
with $z'^2=0, z'\. z=2$ and $\bx_i\.z=\bx_i\. z'=0$. Since this requires $z'$ to be past-directed, the $i\e$-prescription is equivalent to adding a positive imaginary part to $u_1$ and $v_2$ and negative to $u_2$ and $v_1$. We then find for the integral in the first term of~\eqref{eq:bilocaldefinition}
\be
	\frac{1}{4}\int du_1du_2dv_1dv_2 d^{d-2}\bx_1d^{d-2}\bx_2
	\frac{2^{J-1}
		\p{u_1u_2v_{12}+u_2\bx_1^2-u_1\bx_2^2}^{1-\De}
		\phi_1(x_1)\phi_2(x_2)
	}{
		(u_{12}v_{12}+\bx_{12}^2)^{\frac{\tl\De_1+\tl\De_2+J-\De}{2}}
		(-u_1)^{\frac{\tl\De_1-\tl\De_2+2-\De-J}{2}}
		u_2^{\frac{\tl\De_2-\tl\De_1+2-\De-J}{2}}
	}.
\ee
We have temporarily suppressed the light transform coefficient $L(\tl\f_1^\dagger\tl\f_2^\dagger[\cO])$.

The integration region has $u_1<0$ and $u_2>0$. Let us assume for now that $v_2>v_1$ and make the change of variables
\be
	u_1&=-r\a,\nn\\
	u_2&=r(1-\a),\nn\\
	\bx_i&=(r v_{21})^{\half}\bw_i.
\ee
The integral becomes
\be\label{eq:integralwithr}
	&\frac{1}{4}\int_0^1 d\a \int dv_1dv_2  d^{d-2}\bw_1d^{d-2}\bw_2\frac{2^{J-1}
		v_{21}^{-1-\frac{\De-\De_1-\De_2+J}{2}}\p{\a(1-\a)+(1-\a)\bw_1^2+\a\bw_2^2}^{1-\De}
	}{
		(1+\bw_{12}^2)^{\frac{\tl\De_1+\tl\De_2+J-\De}{2}}
		\a^{\frac{\tl\De_1-\tl\De_2+2-\De-J}{2}}
		(1-\a)^{\frac{\tl\De_2-\tl\De_1+2-\De-J}{2}}
	}\times\nn\\
&\times
	\int_0^{\infty}\frac{dr}{r} r^{-\frac{\De-\De_1-\De_2-J}{2}}\phi_1(-r\a,v_1,(r v_{21})^{\half}\bw_1)\phi_2(r(1-\a),v_2,(r v_{21})^{\half}\bw_2).
\ee
In the second line, we have isolated the integral
\be\label{eq:nullmellin}
	\int_0^{\infty}\frac{dr}{r} r^{-\frac{\De-\De_1-\De_2-J}{2}}\phi_1(-r\a,v_1,(r v_{21})^{\half}\bw_1)\phi_2(r(1-\a),v_2,(r v_{21})^{\half}\bw_2).
\ee
The region $r\sim 0$ corresponds to $\phi_1$ and $\phi_2$ being localized near the light ray defined by $z$.

Now imagine expanding the product of field operators in a power series in $r$. This is possible since we have assumed that $\phi_1$ and $\phi_2$ do not interact and thus there is no lightcone singularity between them.\footnote{If we consider $\f_1=\f_2=\f$, then in MFT we have $\f(x_1)\f(x_2)=:\!\f(x_1)\f(x_2)\!:\!+\<\O|\f(x_1)\f(x_2)|\O\>$. The singular term is positive-energy in $x_2$ and negative-energy in $x_1$. But in~\eqref{eq:bilocaldefinition} we are integrating against $\<0|\tl\f_1\wL[\cO]\tl\f_2|0\>$, which has the same energy conditions on $x_1$ and $x_2$. Since the integrals pick out the term with vanishing total energy in the direction of $z$ for both $x_1$ and $x_2$, the singular piece does not contribute to~\eqref{eq:bilocaldefinition}. See also the discussion in section~\ref{sec:inversionformulae}.} We find terms of the form
\be
	r^{n+m+\half(a+b)}(-\a)^n(1-\a)^m v_{21}^{\half(a+b)} \bw_1^a \bw_2^b.
\ee
Only terms with even values of $a+b$ contribute, since the $\bw_i$ integral is invariant under $\bw_i\to -\bw_i$. Therefore, $N=n+m+\half(a+b)\geq 0$ is an integer and the integral over $r$ takes the form
\be
	\int_0^{\infty}\frac{dr}{r} r^{-\frac{\De-\De_1-\De_2-J-2N}{2}}\sim -\frac{2}{\De-\De_1-\De_2-J-2N}.
\ee
The pole comes from the region of small $r$.  We can see this by imposing an upper cutoff on $r$: the residue will be independent of it. (In particular, we can make the cutoff depend on $\a$ and $\bw_i$ thereby cutting out arbitrary regions around the null ray and the residue won't change.) The pole is at
\be
	\De=\De_1+\De_2+J+2N,
\ee
which for integer $J$ are precisely the locations of double-trace operators $[\f_1\f_2]_{N,J}$. For every $N$, the residue of~\eqref{eq:nullmellin} only depends on a finite number of derivatives of $\f_i$ on the null ray, and thus is localized on it, as promised in the introduction. 

For simplicity, let us focus on the leading twist trajectory with $N=0$. The residue of~\eqref{eq:nullmellin} is then 
\be
	-2\f_1(0,v_1,0)\f_2(0,v_2,0)
\ee
and the residue of the integral~\eqref{eq:integralwithr} becomes
\be\label{eq:Rappearance}
	&\frac{-1}{2}\int_0^1 d\a \int  d^{d-2}\bw_1d^{d-2}\bw_2\frac{2^{J-1}
	\p{\a(1-\a)+(1-\a)\bw_1^2+\a\bw_2^2}^{1-\De_1-\De_2-J}
	}{
		(1+\bw_{12}^2)^{d-\De_1-\De_2}
		\a^{{-\De_1+1-J}}
		(1-\a)^{{-\De_2+1-J}}
	}\times\nn\\
	&\times
	\int dv_1 dv_2 (v_{21}+i\e)^{-1-J}
	\phi_1(0,v_1,0)\phi_2(0,v_2,0).
\ee
The first line is an overall coefficient which we compute in appendix~\ref{app:Rcoefficient} and here simply denote by $\cR(\De_1,\De_2,J)$. In the second line, we have restored the $i\e$ prescription for $v_i$, which allows us to relax the assumption $v_2>v_1$. (The factor $(v_{21}+i\e)^{-1-J}$ is understood to be positive for positive $v_{21}$ and real $J$.)

Combining everything together, we conclude that the leading twist operators $\mathbb{O}_{0,J}$ are given by
\be
	\mathbb{O}_{0,J}(-z\oo,z)
		=&i\frac{(-1)^J}{4\pi}\int ds dt \p{(t+i\e)^{-1-J}+(-1)^J(-t+i\e)^{-1-J}}\phi_1(0,s-t,0)\phi_2(0,s+t,0),	
\ee
where we have included the contribution of the second term in~\eqref{eq:bilocaldefinition}, performed the change of variables $v_1=s-t,\,v_2=s+t$, and used the identity
\be
L(\tl\f_1^\dagger\tl\f_2^\dagger[\cO])\cR(\De_1,\De_2,J)\frac{\mu(\De,J)S_E(\f_1\f_2[\tl\cO^\dagger])}{\p{\<\f_1 \f_2\tl \cO^\dagger \>,\<\tl \f_1^\dagger \tl \f_2^\dagger \cO\>}_E}=i\frac{(-1)^J2^{J-2}}{\pi}.
\ee
The analytic continuations from even and odd $J$ are\footnote{It is straightforward to check that $\overline{\mathbb{O}_{0,J}^\pm}^\dagger=\pm \mathbb{O}_{0,J}^\pm$.}
\be
	\mathbb{O}_{0,J}^+(-z\oo,z)&=
	+\frac{i}{4\pi}\int ds dt \p{(t+i\e)^{-1-J}+(-t+i\e)^{-1-J}}\phi_1(0,s-t,0)\phi_2(0,s+t,0),\label{eq:mftlightray+}\nn\\
	\mathbb{O}_{0,J}^-(-z\oo,z)&=
	-\frac{i}{4\pi}\int ds dt \p{(t+i\e)^{-1-J}-(-t+i\e)^{-1-J}}\phi_1(0,s-t,0)\phi_2(0,s+t,0).
\ee
These are exactly the null-ray operators advertised in the introduction. We can check that they are indeed primary by lifting their definitions to the embedding space, where they are variants of
\be
	\sim\int_{-\oo}^{+\oo} d\a d\b\, \f_1(Z-\a X)\f_2(Z-\b X) (\a-\b)^{-J-1}.
\ee
We discuss conformal invariance of this embedding-space integral in the next subsection.

For integer $J$ both kernels for the $t$-integral are equal to
\be
	&(t+i\e)^{-1-J}+(-1)^J(-t+i\e)^{-1-J}=\nn\\&\quad=\frac{(-1)^J}{\Gamma(J+1)}\frac{\ptl^J}{\ptl t^J}\p{(t+i\e)^{-1}-(t-i\e)^{-1}}=-2\pi i\frac{(-1)^J}{\Gamma(J+1)}\frac{\ptl^J}{\ptl t^J}\delta(t).
\ee
Thus, for integer $J$ we find
\be
	\mathbb{O}_{0,J}(-z\oo,z)&=\frac{(-1)^J}{\Gamma(J+1)}\int \frac{ds}{2}\, \phi_1(0,s,0)(\overset{\leftrightarrow}{\ptl_s})^J\phi_2(0,s,0)=\wL[[\f_1\f_2]_{0,J}](-z\oo,z).
\ee
Since total derivatives vanish in the integral over $s$, it follows that for integer spin $\mathbb{O}_{0,J}$ is given by the light transform of a primary double-twist operator of the form
\be
	[\f_1\f_2]_{0,J}(x,z)\equiv \frac{(-1)^J}{\Gamma(J+1)}\phi_1(x)(z\.{\ptl})^J\phi_2(x)+(z\.\ptl)(\ldots).
\ee

Let us check that these operators are correctly normalized. It was found in~\cite{Penedones:2010ue} that the full expression for the primary $[\f_1\f_2]_{0,J}$ is 
\be\label{eq:doubletraceexplicit}
	[\f_1\f_2]_{0,J}(x,z)=c_J\sum_{k=0}^J\frac{(-1)^k}{k!(J-k)!\Gamma(\De_1+k)\Gamma(\De_2+J-k)}(z\.\ptl)^k\phi_1(x)(z\.\ptl)^{J-k}\phi_2(x)
\ee
and in our case $c_J$ is given by
\be
	c_J=\frac{(-1)^J}{\Gamma(J+1)}\p{\sum_{k=0}^J\frac{1}{k!(J-k)!\Gamma(\De_1+k)\Gamma(\De_2+J-k)}}^{-1}.
\ee
If we write now
\be
	\<\f_1\f_2[\f_1\f_2]_{0,J}\>_\O=f_{12J}\<\f_1\f_2\cO_J\>,
\ee
and
\be
	\<[\f_1\f_2]_{0,J}[\f_1\f_2]_{0,J}\>_\O=\cC_J\<\cO_J\cO_J\>,\label{eq:doubletracetwopt}
\ee
where in the right hand side we use the standard structures defined in appendix~\ref{app:23conventions}, then our normalization conventions are such that $\cC_J/f_{12J}=1$.\footnote{To be more precise, if $\cO$ is an operator in $\f_1\times\f_2$ OPE, we are computing $[\f_1\f_2]_J=f_{12\cO}\cO/\cC_\cO$, which is independent of the normalization of $\cO$. Using $[\f_1\f_2]_J$ instead of $\cO$ then yields the claimed normalization condition.} It is a straightforward exercise to show using~\eqref{eq:doubletraceexplicit} that
\be
	\frac{\cC_J}{f_{12J}}&=(-1)^J\Gamma(J+1)c_J\sum_{k=0}^J\frac{1}{k!(J-k)!\Gamma(\De_1+k)\Gamma(\De_2+J-k)}=1.
\ee
In doing the calculation it is convenient to use the same null polarization vector for both operators in~\eqref{eq:doubletracetwopt}.

\subsubsection{Subleading families and multi-twist operators}

Although we will not compute the residue of $\mathbb{O}_{\De,J}$ for $N>0$, let us comment on the form of the light-ray operators that we expect to obtain, as well as on some further interesting generalizations. For simplicity, in this section we ignore $i\e$-prescriptions, the difference between even and odd $J$, and normalization factors. As mentioned above, the leading double-twist operators are essentially the primaries
\be
	\mathbb{O}_{0,J}(X,Z)\equiv \int d\a\, d\b\, \f_1(Z-\a X)\f_2(Z-\b X) (\a-\b)^{-J-1}.
\ee
The fact that $\mathbb{O}$ is a primary follows from conformal invariance of the integral on the right-hand side. According to the usual rules of the embedding space formalism~\cite{Costa:2011mg}, conformal invariance is equivalent to
\begin{enumerate}
	\item homogeneity in $X$ and $Z$ with degrees $-\De_\mathbb{O}$ and $J_\mathbb{O}$, and
	\item invariance under $Z\to Z+\l X$.
\end{enumerate}
The former requirement is fulfilled due to homogeneity of the measure $d\a\, d\b$, the ``wavefunction'' $(\a-\b)^{-J-1}$, and the original primaries $\f_i$, which leads to 
\be\label{eq:mftquantumnumbers0}
	\De_\mathbb{O}&=1-J,\nn\\
	J_\mathbb{O}&=1-\De_1-\De_2-J.
\ee
The latter requirement is satisfied due to translational invariance of the measure $d\a\, d\b$ and the wavefunction $(\a-\b)^{-J-1}$.

This leads to two simple observations. The first is that since the only requirement on $\f_i$ is that of being a primary, we can dress them with weight-shifting operators~\cite{Karateev:2017jgd}. For example, let $D_m$ be the Thomas/Todorov differential operator which increases the scaling dimension of a primary by $1$ and carries a vector embedding space index $m$. Then we can define
\be
	\mathbb{O}_{N,J}(X,Z)=\int d\a d\b (D_{m_1}\cdots D_{m_N}\f_1)(Z-\a X) (D^{m_1}\cdots D^{m_N}\f_2)(Z-\b X)(\a-\b)^{-J-1}.
\ee
By construction, we now have
\be
	\De_\mathbb{O}&=1-J,\nn\\
	J_\mathbb{O}&=1-\De_1-\De_2-J-2N.
\ee
With appropriate $i\e$-prescriptions for $\a$- and $\b$-contours, for integer $J$ these operators reduce to light transforms of the local family $[\f_1\f_2]_{N,J}$. It is clear how (at least in principle) this construction generalizes to non-scalar $\f_i$.

The second observation is that this construction straightforwardly generalizes to multi-twist operators. In particular, define
\be
	\mathbb{O}_\psi(X,Z)=\int d\a_1\cdots d\a_n \f_1(Z-\a_1 X)\cdots \f_n(Z-\a_n X) \psi(\a_1,\ldots,\a_n),
\ee
where $\psi$ is a wavefunction which is translationally-invariant and homogeneous,
\be
	\psi(\a_1+\b,\ldots,\a_n+\b)&=\psi(\a_1,\ldots,\a_n),\nn\\
	\psi(\l\a_1,\ldots,\l\a_n)&=\l^{-J-1} \psi(\a_1,\ldots,\a_n).
\ee
We can easily check that $\mathbb{O}_\psi$ is a primary with scaling dimension and spin given by
\be
	\De_\mathbb{O}&=1-J,\nn\\
	J_\mathbb{O}&=1-J+\sum_{i=1}^n \De_n.
\ee
Subleading families can be obtained as above, by dressing with weight-shifting operators. The generalization to non-scalar $\f_i$ is also clear.

\section{Lorentzian inversion formulae}
\label{sec:inversionformulae}

In this section we show that matrix elements of $\mathbb{O}_{\De,J}$ are computed by a Lorentzian inversion formula of the type discussed by Caron-Huot \cite{Caron-Huot:2017vep}. Our derivation will borrow some key steps from \cite{Simmons-Duffin:2017nub}. However the light transform will simplify the derivation to the point where its generalization to external spinning operators is obvious. In particular, after using the light transform in the appropriate way, it will be immediately clear why the conformal block $G_{J+d-1,\De-d+1}$ and its generalizations appear. For simplicity, we will present most of the derivation with scalar operators and generalize to spinning operators at the end.

\subsection{Inversion for the scalar-scalar OPE}

\subsubsection{The double commutator}

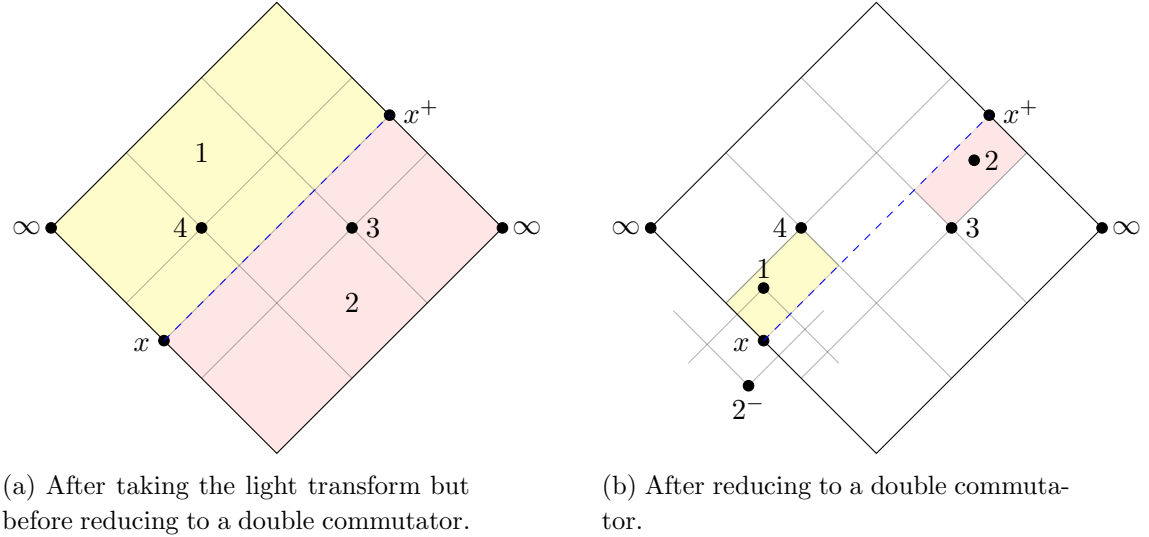
\begin{figure}
	\centering
	\begin{subfigure}[t]{0.4\textwidth}
		\begin{tikzpicture}
			\draw[fill=yellow, opacity = 0.2] (-1.5,-1.5) -- (-3,0) -- (0,3) -- (1.5,1.5) -- cycle;
			\draw[fill=red, opacity = 0.1] (-1.5,-1.5) -- (0,-3) -- (3,0) -- (1.5,1.5) -- cycle;
	
			\draw (-3,0) -- (0,3) -- (3,0) -- (0,-3) -- cycle;
	
			\draw[fill=black] (-3,0) circle (0.07); 
			\draw[fill=black] (3,0) circle (0.07);  
	
			\draw[fill=black] (-1,0) circle (0.07); 
			\draw[fill=black] (1,0) circle (0.07);  
			
			\draw[fill=black] (-1.5,-1.5) circle (0.07);  
			\draw[fill=black] (1.5,1.5) circle (0.07);  
			
			\draw[opacity=.3] (-2,1) -- (1,-2); 
			\draw[opacity=.3] (-2,-1) -- (1,2); 
	
			\draw[opacity=.3] (2,1) -- (-1,-2); 
			\draw[opacity=.3] (2,-1) -- (-1,2); 
			
			\draw[dashed, color=blue] (-1.5,-1.5) -- (1.5,1.5);
			
			\node[right] at (3,0) {$\oo$};
			\node[left] at (-3,0) {$\oo$};
			
			\node[left] at (-1.05,0) {$4$};
			\node[right] at (1.05,0) {$3$};	
			
			\node[left] at (-1.55,-1.55) {$x$};
			\node[right] at (1.55,1.55) {$x^+$};
			
			\node at (-1,1) {$1$};
			\node at (1,-1) {$2$};
		\end{tikzpicture}
		\caption{After taking the light transform but before reducing to a double commutator.}
		\label{fig:poincarebeforecommutator}
	\end{subfigure}
	\hspace{1.5cm}
	\begin{subfigure}[t]{0.4\textwidth}
		\begin{tikzpicture}
		\draw[fill=yellow, opacity = 0.2] (-1.5,-1.5) -- (-2,-1) -- (-1,0) -- (-0.5,-0.5) -- cycle;
		\draw[fill=red, opacity = 0.1] (1.5,1.5) -- (2,1) -- (1,0) -- (0.5,0.5) -- cycle;
		
		\draw (-3,0) -- (0,3) -- (3,0) -- (0,-3) -- cycle;
		
		\draw[fill=black] (-3,0) circle (0.07); 
		\draw[fill=black] (3,0) circle (0.07);  
		
		\draw[fill=black] (-1,0) circle (0.07); 
		\draw[fill=black] (1,0) circle (0.07);  
		
		\draw[fill=black] (-1.5,-0.8) circle (0.07); 
		\draw[fill=black] (1.3,0.9) circle (0.07);  
		\draw[fill=black] (1.3-3,0.9-3) circle (0.07);  
		
		\draw[fill=black] (-1.5,-1.5) circle (0.07);  
		\draw[fill=black] (1.5,1.5) circle (0.07);  
		
		\draw[opacity=.3] (-2,1) -- (1,-2); 
		\draw[opacity=.3] (-2,-1) -- (1,2); 
		
		\draw[opacity=.3] (2,1) -- (-1,-2); 
		\draw[opacity=.3] (2,-1) -- (-1,2); 

		\draw[opacity=.3] (-1.5,-0.8) -- (-2.5,-1.8); 
		\draw[opacity=.3] (-1.5,-0.8) -- (-0.5,-1.8); 
		
		\draw[opacity=.3] (1.3-3,0.9-3) -- (1.3-2,0.9-2); 
		\draw[opacity=.3] (1.3-3,0.9-3) -- (1.3-4,0.9-2); 
		
		\draw[dashed, color=blue] (-1.5,-1.5) -- (1.5,1.5);
		
		\node[right] at (3,0) {$\oo$};
		\node[left] at (-3,0) {$\oo$};
		
		\node[left] at (-1.05,0) {$4$};
		\node[right] at (1.05,0) {$3$};	
		
		\node[left] at (-1.55,-1.55) {$x$};
		\node[right] at (1.55,1.55) {$x^+$};
		
		\node[above] at (-1.5,-0.8) {$1$};
		\node[below] at (1.3-3,0.9-3) {$2^-$};
		\node[right] at (1.3,0.9) {$2$};
		\end{tikzpicture}
		\caption{After reducing to a double commutator.}
		\label{fig:poincareaftercommutator}
	\end{subfigure}
	\caption{The configuration of points within the Poincare patch of $x_\oo$ at various stages of the derivation. The blue dashed line shows the support of light transform of $\cO(x,z)$. The yellow (red) shaded region shows the allowed region for $1$ ($2$). In the right-hand figure, we indicate that $x$ is constrained to satisfy $2^- < x < 1$. Note that after reducing to a double-commutator, the yellow and red regions are independent of $x_\oo$ (as long as $x$ is lightlike from $x_\oo$).}
\end{figure}

Our starting point is the light-transformed expression~\eqref{eq:afterlighttransform}. Let us concentrate on the first term in (\ref{eq:afterlighttransform}). Because of the restrictions $3^- < x < 4$ and $2^- < x < 1$, the lightcone of $x$ splits Minkowski space into two regions, with $2,3$ in the lower region and $1,4$ in the upper, see figure~\ref{fig:poincarebeforecommutator}. Thus, we can write the integrand as
\be
\label{eq:integrandtimeordered}
\<\O|T\{V_4 \f_1\} T\{\f_2 V_3\}|\O\>\<0|\tl \f_1^\dagger \wL[\cO](x,z) \tl \f_2^\dagger|0\>.
\ee
Recall that in our notation, expectation values in the state $|\Omega\>$ denote physical correlation functions, whereas expectation values in the state $|0\>$ denote two- or three-point structures that are fixed by conformal invariance. (For instance, three-point structures $\<0|\cdots|0\>$ don't include OPE coefficients.)

We can now use the reasoning in lemma~\ref{lemma:light} to obtain a double commutator.\footnote{This argument is the same as the contour manipulation in \cite{Simmons-Duffin:2017nub}.} Consider a modified integrand where $\f_1$ acts on the future vacuum,
\be
\label{eq:modifiedintegrand}
\<\O|\f_1 V_4 T\{\f_2 V_3\}|\O\>\<0|\tl \f_1^\dagger \wL[\cO](x,z) \tl \f_2^\dagger|0\>.
\ee
Imagine integrating $\f_1$ over a lightlike line in the direction of $z$, with coordinate $v_1$ along the line. Because $\f_1$ acts on the future vacuum, the correlator is analytic in the lower half $v_1$-plane. Furthermore, at large $v_1$, the product of correlators goes like
\be
\frac{1}{v_1^{\De_1}} \x \frac{1}{v_1^{\frac{\tl \De_1 + \tl \De_2 + \De + J - 2}{2}}}.
\ee
Here, the first factor comes from the estimate (\ref{eq:twoptexpectation}) of $\<\O|\f_1 \cdots|\O\>$ using the OPE and the second factor comes from direct computation using the three-point function (\ref{eq:finalresultlighttransform}). Thus, we can deform the $v_1$ contour in the lower half-plane to give zero whenever
\be
\Re(2(d-2)+\De_1-\De_2+\De+J) > 0.
\ee
This condition is certainly true for $\De\in \frac d 2 + i\oo$ and $J\geq 0$, assuming (for now) that $\Re(\De_2-\De_1)=0$ (see section~\ref{sec:euclideanpartialwaves}).

Consequently, the $x_1$ integral vanishes if we replace (\ref{eq:integrandtimeordered}) with (\ref{eq:modifiedintegrand}), so we can freely replace
\be
T\{V_4 \f_1\} \  &\to\  T\{V_4 \f_1\}-\f_1 V_4 \ =\ [V_4,\f_1] \th(1<4).
\ee
By similar reasoning, we can replace 
\be
T\{\f_2 V_3\} \to [\f_2,V_3]\th(3<2).
\ee
Overall, we find a double commutator in the integrand, together with some extra restrictions on the region of integration
\be
\label{eq:afterdoublecommutator}
\int_{\substack{x<1<4 \\ 3<2<x^+}} d^d x_1 d^d x_2 \<\O|[V_4, \f_1] [\f_2,V_3]|\O\> \<0|\tl \f_1^\dagger \wL[\cO](x,z) \tl \f_2^\dagger|0\> + (1\leftrightarrow 2)
\ee
Note that the spurious dependence on the point at infinity $x_\oo$ has disappeared because the commutators are only nonzero if $x<1<4$ and $3<2<x^+$, and these restrictions imply that $1,2$ lie in the same Poincare patch as $3,4,x$.

In terms of $\mathbb{O}_{\De,J}$ we have
\be\label{eq:ODeJdoublecommutatorformula}
&\<V_4\mathbb{O}_{\De,J}(x,z)V_3\>_\Omega=\nn\\
&=\frac{\mu(\De,J)S_E(\f_1\f_2[\tl\cO^\dagger])}{\p{\<\f_1 \f_2\tl \cO^\dagger \>,\<\tl \f_1^\dagger \tl \f_2^\dagger \cO\>}_E}\int_{\substack{x<1<4 \\ 3<2<x^+}} d^d x_1 d^d x_2 \<\O|[V_4, \f_1] [\f_2,V_3]|\O\> \<0|\tl \f_1^\dagger \wL[\cO](x,z) \tl \f_2^\dagger|0\> + (1\leftrightarrow 2).
\ee
This gives a Lorentzian inversion formula analogous to the Euclidean inversion formula~\eqref{eq:euclideanintegralforpartialwave}. It is different from Caron-Huot's formula~\cite{Caron-Huot:2017vep} in that it is not formulated in terms of cross-ratio integrals and it is valid for non-primary or non-scalar $V_i$. The form of the inversion formula above will be useful in section~\ref{sec:positivityandtheanec} where we discuss the average null energy condition and its generalizations. Note also that the generalization to  operators $\cO_1$ and $\cO_2$ with nonzero spin is straightforward. In the rest of this subsection we show how to reduce~\eqref{eq:ODeJdoublecommutatorformula} to a cross-ratio integral in the form of~\cite{Caron-Huot:2017vep}.

\subsubsection{Inversion for a four-point function of primaries}
\label{sec:inversionforfourptofprimaries}

To obtain an integral over cross-ratios, let us specialize to the case where $V_3=\f_3$ and $V_4=\f_4$ are primary scalars. The partial wave $P_{\De,J}$ in this case is fixed by conformal invariance up to a coefficient:
\be
\label{eq:simplematrixelement}
\mu(\De,J)S_E(\f_1\f_2[\tl \cO^\dagger]) P_{\De,J}(x_3,x_4,x,z) &= C(\De,J) \<\f_3 \f_4 \cO(x,z)\>.
\ee
OPE data is encoded in the resiudes of $C(\De,J)$ by (\ref{eq:residueequation}),
\be
f_{12\cO_*}f_{34\cO_*} &= -\Res_{\De= \De_*} C(\De,J_*).
\ee
The matrix element $\<\f_4 \mathbb{O}_{\De,J}(x,z) \f_3\>_\Omega$ is the light-transform of~\eqref{eq:simplematrixelement}, so~\eqref{eq:ODeJdoublecommutatorformula} becomes
\be
& C(\De,J) \<0|\f_4 \wL[\cO](x,z) \f_3|0\>
\nn\\
&=-\frac {\mu(\De,J)S_E (\f_1 \f_2 [\tl \cO^\dagger])}{\p{\<\f_1 \f_2\tl \cO^\dagger \>,\<\tl \f_1^\dagger \tl \f_2^\dagger \cO\>}}\int_{\substack{x<1<4 \\ 3<2<x^+}} d^d x_1 d^d x_2 \<\O|[\f_4, \f_1] [\f_2,\f_3]|\O\> \<0|\tl \f_1^\dagger \wL[\cO](x,z) \tl \f_2^\dagger|0\>\nn\\&\qquad + (1\leftrightarrow 2).
\ee
For reasons that will become clear in a moment, let us replace $x_4\to x_4^+$ (equivalently act with $\tsym_4$ on both sides). This converts the condition $3^-<x<4$ into $3^-<x<4^+$. At the same time, let us make the change of variables $x_2\to x_2^+$ in the integral. We obtain
\be
\label{eq:formulawithprimaries}
& C(\De,J) \<0|\f_{4^+} \wL[\cO](x,z) \f_3|0\>
\nn\\
&=-\frac {\mu(\De,J)S_E(\f_1 \f_2 [\tl \cO])}{\p{\<\f_1 \f_2\tl \cO^\dagger \>,\<\tl \f_1^\dagger \tl \f_2^\dagger \cO\>}_E}\int_{3^-<2<x<1<4^+} d^d x_1 d^d x_2\, \<\O|[\f_{4^+}, \f_1] [\f_{2^+},\f_3]|\O\> \<0|\tl \f_1^\dagger \wL[\cO](x,z) \tl \f_{2^+}^\dagger|0\>\nn\\&\qquad + (1\leftrightarrow 2).
\ee
Explicitly, the structure on the left-hand side is (under the additional constraint $3>4$)
\be
&\<0|\f_{4^+} \wL[\cO](x_0,z) \f_3|0\>\nn\\
 &= L(\f_3\f_4[\cO])
	\frac{
		(-1)^J\p{2 z\.x_{40}\, x_{30}^2 -2z\.x_{30}\, x_{40}^2}^{1-\De}
	}{(-x_{43}^2)^{\frac{\De_4+\De_3+J-\De}{2}}(x_{40}^2)^{\frac{\De_4-\De_3+2-\De-J}{2}}(x_{30}^2)^{\frac{\De_3-\De_4+2-\De-J}{2}}},\phantom{>0}
	\label{eq:structlhs}
\ee
where $L(\f_3\f_4[\cO])$ is given by (\ref{eq:lighttransformcoefficient}).
This expression comes from making the replacements $1,2,3\to 3,4^+,0$ in the second line of (\ref{eq:lighttransformtimeordered}) and using $x_{i4^+}^2 = - x_{i4}^2$ and $z\.x_{4^+0}=-z\.x_{40}$.\footnote{These relations follow from the embedding space representation of these quantities as inner products with $X_4$. An alternative way to obtain this result is to use $\<0|\phi_{4^+}\wL[\cO]\f_3|0\>=\<0|\tsym\f_4\tsym^{-1}\wL[\cO]\f_3|0\>=e^{-i\pi\De_4}\<0|\f_4\wL[\cO]\f_3|0\>$ and then~\eqref{eq:finalresultlighttransform} with replacements $1\to4,2\to3,3\to 0$, analytically continued. The factor $(-1)^J$ comes from the fact that the standard structure~\eqref{eq:standardthreeptconvention} depends on a formal ordering of operators and we need $\<\f_3\f_4\cO\>$ by convention.}
Similarly, the structure in the right hand side is
\be\label{eq:structure12forH}
&\<0|\tl\phi_1^\dag\,\wL[\cO](x_0,z)\, \tl\phi_{2^+}^\dag|0\> \nn\\
&=L(\tl\f_1^\dag\tl\f_2^\dag[\cO])
\frac{
	\p{2z\.x_{10}\, x_{20}^2 - 2z\.x_{20}\, x_{10}^2}^{1-\De}
}{(-x_{12}^2)^{\frac{\tl \De_1+\tl \De_2+J-\De}{2}}(-x_{10}^2)^{\frac{\tl \De_1-\tl\De_2+2-\De-J}{2}}(-x_{20}^2)^{\frac{\tl\De_2-\tl\De_1+2-\De-J}{2}}}>0,
\ee
which follows from~\eqref{eq:finalresultlighttransform} by using the same rules.

We would now like to express the coefficient $C(\De,J)$ as an integral of the double-commutator $\<\O|[\f_{4^+}, \f_1] [\f_{2^+},\f_3]|\O\>$ against a conformal block. Both sides of the above equation transform like conformal three-point functions. We can pick out the coefficient $C(\De,J)$ by taking a conformally-invariant pairing of both sides with a three-point structure that is ``dual" to the one on the left-hand side.

In other words, in order to isolate $C(\De,J)$, we should find a structure $T$ such that
\be
\label{eq:pairingisone}
\big(T, \<0|\f_{4^+} \wL[\cO](x,z) \f_3|0\>\big)_L &= 1,
\ee
with the pairing $(\cdot,\cdot)_L$ defined in equation~\eqref{eq:lorentzianpairing} as
\be
\label{eq:lorentzianpairingmaintext}
&\p{\<\cO_3\cO_4\cO\>, \<\tl\cO_3^\dag\tl \cO_4^\dag\cO^{\mathrm{S}\dag}\>}_L \nn\\&\equiv \int_{\substack{4<3 \\ x\approx 3,4}} 
\frac{d^dx_3 d^dx_4 d^d x D^{d-2} z}{\vol(\tl{\SO}(d,2))} \<\cO_3(x_3)\cO_4(x_4) \cO(x,z)\> \<\tl\cO_3^\dag(x_3) \tl \cO_4^\dag(x_4) \cO^{\mathrm{S}\dag}(x,z)\>.
\ee
(Note the causal restrictions in the integral.) It will be convenient to write (\ref{eq:pairingisone}) using the shorthand notation
\be
T &= \<0|\f_{4^+} \wL[\cO](x,z) \f_3|0\>^{-1}.
\ee
For the pairing (\ref{eq:pairingisone}) to be well-defined, $\<0|\f_{4+} \wL[\cO]\f_3|0\>^{-1}$ must transform like a three-point function with representations $\<\tl \f_4^\dagger  \cO^{\mathrm{F}\dagger} \tl \f_3^\dagger \>$, where $\cO^\mathrm{F}$ has dimension and spin
\be
\De_{\cO^\mathrm{F}} \ &=\ J+d-1,\nn\\
J_{\cO^\mathrm{F}} \  &=\  \De-d+1.
\ee
The quantum numbers of $\cO^\mathrm{F}$ are precisely those appearing in Caron-Huot's block. We will see shortly that this is not a coincidence. Explicitly, the dual structure $\<0|\f_{4^+} \wL[\cO] \f_3|0\>^{-1}$ is given by (again for $3>4$)
\be
&\<0|\f_{4^+} \wL[\cO](x_0,z) \f_3|0\>^{-1}\nn\\
 &\quad=\frac{2^{2d-2}\vol(\SO(d-2))}{L(\f_3\f_4[\cO])}
	\frac{
		(-1)^J \p{2z\.x_{40}\, x_{30}^2 -2z\.x_{30}\, x_{40}^2}^{\De-d+1}
	}{(-x_{43}^2)^{\frac{\tl\De_4+\tl\De_3-J+\De-2d+2}{2}}(x_{40}^2)^{\frac{\tl\De_4-J-\tl\De_3-\De+2}{2}}(x_{30}^2)^{\frac{\tl\De_3-J-\tl\De_4-\De+2}{2}}}.
\label{eq:dualconfig}
\ee
This follows easily from the alternative characterization of the paring~\eqref{eq:lorentzianpairingmaintext} given in appendix~\ref{app:contspinpairings}.

Finally, pairing both sides of (\ref{eq:formulawithprimaries}) with $\<0|\f_{4^+} \wL[\cO] \f_3|0\>^{-1}$, we obtain
\be
\label{eq:integralforcspacetime}
C(\De,J)
&=\int_{\substack{1>2 \\ 3>4}} \frac{d^d x_1 \cdots d^d x_4}{\vol(\tl{\SO}(d,2))} \<\O|[\f_{4^+}, \f_1] [\f_{2^+},\f_3]|\O\> H_{\De,J}(x_i) + (1\leftrightarrow 2),
\ee
where
\be
H_{\De,J}(x_i)
&=-\frac {\mu(\De,J)S_E(\f_1 \f_2 [\tl \cO])}{\p{\<\f_1 \f_2\tl \cO^\dagger \>,\<\tl \f_1^\dagger \tl \f_2^\dagger \cO\>}_E}
\int_{2<x<1} d^d x D^{d-2} z \<0|\tl \f_1^\dagger \wL[\cO](x,z) \tl \f_{2^+}^\dagger|0\>\<0|\f_{4^+} \wL[\cO](x,z) \f_3|0\>^{-1}.
\label{eq:integralforh}
\ee
In the integral for $C(\De,J)$, all the pairs of points $x_i$ are spacelike separated except for $1>2$ and $3>4$. The causal relations in~\eqref{eq:integralforcspacetime} and~\eqref{eq:integralforh} come from the causal relations in~\eqref{eq:formulawithprimaries} and~\eqref{eq:lorentzianpairingmaintext} which are, together,
\be
	4^-<3^-<2<x<1<4^+<3^+.
\ee
Recalling that $a\approx b$ is equivalent to $a^-<b<a^+$ (figure~\ref{fig:causalrelationships}), we easily find that the above relations are the same as
\be
	&1>x>2,\quad 3>4,\nn\\
	&1\approx 3,\quad 1\approx 4,\quad 2\approx 3,\quad 2\approx 4.
\ee

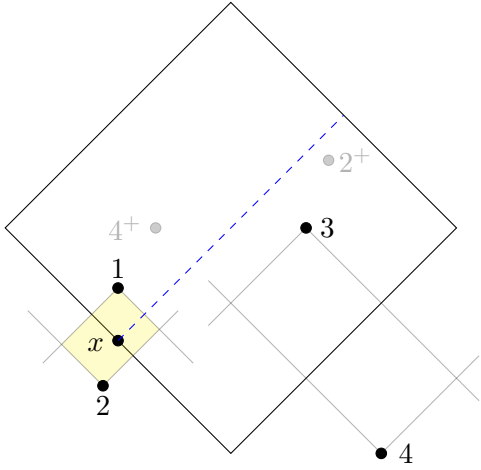
\begin{figure}[t!]
\centering
\begin{tikzpicture}
		
		\draw (-3,0) -- (0,3) -- (3,0) -- (0,-3) -- cycle;
		
		\draw[fill=black,opacity=0.2] (-1,0) circle (0.07);
		\draw[fill=black] (-1+3,0-3) circle (0.07);
		\draw[fill=black] (1,0) circle (0.07);
		\fill[yellow, opacity = 0.2] (-1.5,-0.8) -- (-0.95,-1.35) -- (1.3-3,0.9-3) -- (1.3-3-0.55,0.9-3+0.55) -- cycle;
		
		\draw[fill=black] (-1.5,-0.8) circle (0.07);
		\draw[fill=black,opacity=0.2] (1.3,0.9) circle (0.07);
		\draw[fill=black] (1.3-3,0.9-3) circle (0.07);
		
		\draw[fill=black] (-1.5,-1.5) circle (0.07);
		\draw[opacity=.3] (-0.3,-1.3) -- (1,0) -- (3.3,-2.3);
		\draw[opacity=.3] (-0.3,-0.7) -- (2,-3) -- (3.3,-1.7);

		\draw[opacity=.3] (-1.5,-0.8) -- (-2.5,-1.8);
		\draw[opacity=.3] (-1.5,-0.8) -- (-0.5,-1.8);
		
		\draw[opacity=.3] (1.3-3,0.9-3) -- (1.3-2,0.9-2);
		\draw[opacity=.3] (1.3-3,0.9-3) -- (1.3-4,0.9-2);
		
		\draw[dashed, color=blue] (-1.5,-1.5) -- (1.5,1.5);
		
		\node[left,opacity=0.3] at (-1.05,0) {$4^+$};
		\node[right] at (2.1,-3) {$4$};
		\node[right] at (1.05,0) {$3$};	
		
		\node[left] at (-1.55,-1.55) {$x$};
		
		\node[above] at (-1.5,-0.8) {$1$};
		\node[below] at (1.3-3,0.9-3) {$2$};
		\node[right,opacity=0.3] at (1.3,0.9) {$2^+$};
		\end{tikzpicture}
\caption{After (temporarily) relabeling the points $2^- \to 2$ and $4^- \to 4$, we have a configuration where $1>x>2$ and $3>4$, with all other pairs of points spacelike separated. This is the same configuration as in figure~\ref{fig:configcontinuousspinblocktwo} of appendix~\ref{sec:lorentzianblock}, where we compute the Lorentzian integral for a conformal block. The integration region for $x$ is shaded yellow. Importantly, it stays away from $3,4$, so the $3\to 4$ limit can be computed inside the integrand.}
\label{eq:lorentzianblockafterrelabeling}
\end{figure}

Now the benefit of performing the light-transform becomes clear. The integral~\eqref{eq:integralforh} over the diamond $2<x<1$ precisely takes the form of a well-known Lorentzian integral for a conformal block. Note that the integral (\ref{eq:integralforh}) is conformally-invariant and is an eigenfunction of the conformal Casimir operators acting on points $1,2$ (equivalently $3,4$) by construction. Importantly, the integral over $x$ stays away from the region near $3,4$, see figure~\ref{eq:lorentzianblockafterrelabeling}. Thus, we can determine its behavior in the OPE limit $3\to 4$ by simply taking the limit inside the integrand. (This limit corresponds to the Regge limit of the physical operators at $1,2^+,3,4^+$.) Any eigenfunction of the conformal Casimirs is fixed by its OPE limit, so this determines the full function. Thus, it's clear that $H_{\De,J}$ is proportional to a conformal block, with external operators $\tl \f_1^\dagger,\dots,\tl \f_4^\dagger$, and an exchanged operator with the quantum numbers of $\cO^\mathrm{F\dagger}$.

We perform this analysis in detail in appendix~\ref{sec:lorentzianblock}. Using the result (\ref{eq:resultforlorentzianintegralforblock}), we find
\be
H_{\De,J}(x_i) &= \frac{q_{\De,J}}{(-x_{12}^2)^{\frac{\tl\De_1+\tl\De_2}{2}} (-x_{34}^2)^{\frac{\tl\De_3+\tl\De_4}{2}}} \p{\frac{x_{14}^2}{x_{24}^2}}^{\frac{\tl\De_2-\tl\De_1}{2}} \p{\frac{x_{14}^2}{x_{13}^2}}^{\frac{\tl\De_3-\tl\De_4}{2}} G_{J+d-1,\De-d+1}^{\tl\De_i}(\chi,\bar\chi),
\label{eq:eqnforh}
\ee
where
\be
q_{\De,J} &=-(-1)^J\frac {2^{2d-2}\vol(\SO(d-2))}{\p{\<\f_1 \f_2\tl \cO^\dagger \>,\<\tl \f_1^\dagger \tl \f_2^\dagger \cO\>}_E}\frac{\mu(\De,J)S_E(\f_1 \f_2 [ \tl \cO])L(\tl \f_1 \tl \f_2[\cO])}{L(\f_3 \f_4[\cO])} b^{\tl \De_1,\tl \De_2}_{J+d-1,\De-d+1}\nn\\
&= - 2^{2d} \vol(\SO(d-2)) \frac{\G(\frac{\De+J+\De_1-\De_2}{2})\G(\frac{\De+J-\De_1+\De_2}{2})\G(\frac{\De+J+\De_3-\De_4}{2}) \G(\frac{\De+J-\De_3+\De_4}{2})}{16\pi^2 \G(\De+J)\G(\De+J-1)}.
\ee
(The quantity $b_{\De,J}^{\De_1,\De_2}$ is defined in (\ref{eq:bcoeff}) and the conformal block $G$ is defined in appendix~\ref{app:3ptgluing}.) Factors other than $b_{\De,J}^{\De_1,\De_2}$ come from~\eqref{eq:integralforh} and the structures~\eqref{eq:structure12forH} and~\eqref{eq:dualconfig}.
In the proof of the Lorentzian inversion formula in \cite{Simmons-Duffin:2017nub}, performed without using the light transform, one obtains an expression for $H_{\De,J}$ as an integral over a region totally spacelike from $1,2^+,3,4^+$, which is harder to understand. 

\subsubsection{Writing in terms of cross-ratios}

Finally, let us replace $2^+\to 2$ and $4^+\to 4$ so that the physical operators are again at the points $1,2,3,4$. The inversion formula reads
\be
C(\De,J)
&=\int_{\substack{4>1 \\ 2>3}} \frac{d^d x_1 \cdots d^d x_4}{\vol(\tl{\SO}(d,2))} \<\O|[\f_{4}, \f_1] [\f_{2},\f_3]|\O\> (\tsym_2^{-1} \tsym_4^{-1} H_{\De,J}(x_i)) + (1\leftrightarrow 2).
\label{eq:nicerintegralforC}
\ee
Here, $\tsym_i^{-1}$ denotes a shift $x_i\to x_i^-$ or, more generally, application of the $\tsym^{-1}$ to the operator at $i$-th position.
In the integrand, we can isolate quantities that depend only on cross-ratios, times a universal dimensionful factor $|x_{12}|^{-2d} |x_{34}|^{-2d}$,
\be
\<\Omega|[\f_4,\f_1][\f_2,\f_3]|\Omega\> (\tsym_2^{-1} \tsym_4^{-1} H_{\De,J}(x_i)) &= \frac{1}{|x_{12}|^{2d}|x_{34}|^{2d}}\frac{\<\Omega|[\f_4,\f_1][\f_2,\f_3]|\Omega\>}{T^{\De_i}(x_i)} G_{J+d-1,\De-d+1}^{\tl\De_i}(\chi,\bar\chi),
\ee
where
\be
\label{eq:absolutevaluestructure}
T^{\De_i}(x_i) &\equiv \frac{1}{|x_{12}|^{\De_1+\De_2}|x_{34}|^{\De_3+\De_4}} \p{\frac{|x_{14}|}{|x_{24}|}}^{\De_2-\De_1} \p{\frac{|x_{14}|}{|x_{13}|}}^{\De_3-\De_4}.
\ee
Since we now have a fixed causal ordering of the points, we do not have to worry about an $i\e$ prescription in these expressions and we can simply take absolute values of spacetime intervals.

We can gauge-fix (\ref{eq:nicerintegralforC}) to obtain an integral over cross-ratios alone. As explained in \cite{Simmons-Duffin:2017nub},\footnote{We use a definition of the measure on $\tl\SO(d,2)$ which differs from the one~\cite{Simmons-Duffin:2017nub} by a factor of $2^d$.} the measure becomes
\be
\int \frac{d^d x_1\cdots d^d x_4}{\vol(\tl\SO(d,2))} \frac{1}{|x_{12}|^{2d}|x_{34}|^{2d}}
&\to
\frac{1}{2^{2d} \vol(\SO(d-2))} \int_0^1 \int_0^1 \frac{d\chi d\bar\chi}{\chi^2 \bar\chi^2} \left|\frac{\bar\chi-\chi}{\chi\bar\chi}\right|^{d-2}.
\ee
Putting everything together, we find
\be
\label{eq:simonsformula}
C(\De,J) &= \frac{q_{\De,J}}{2^{2d} \vol(\SO(d-2))}\left[ \int_0^1 \int_0^1 \frac{d\chi d\bar\chi}{\chi^2 \bar\chi^2} \left|\frac{\bar\chi-\chi}{\chi\bar\chi}\right|^{d-2} \frac{\<\Omega|[\f_4,\f_1][\f_2,\f_3]|\Omega\>}{T^{\De_i}(x_i)} G_{J+d-1,\De-d+1}^{\tl\De_i}(\chi,\bar\chi)\right.\nn\\
&\left.\qquad\quad\quad +(-1)^J \int_{-\oo}^0\int_{-\oo}^0 \frac{d\chi d\bar\chi}{\chi^2 \bar\chi^2} \left|\frac{\bar\chi-\chi}{\chi\bar\chi}\right|^{d-2} \frac{\<\Omega|[\f_4,\f_2][\f_1,\f_3]|\Omega\>}{T^{\De_i}(x_i)} \hat G_{J+d-1,\De-d+1}^{\tl\De_i}(\chi,\bar\chi)\right].
\ee
Here, $\hat G_{\De,J}(\chi,\bar\chi)$ denotes the solution to the Casimir equation that behaves as $(-\chi)^{\frac{\De-J}{2}}(-\bar\chi)^{\frac{\De+J}{2}}$ for negative cross-ratios satisfying $|\chi|\ll|\bar\chi|\ll 1$. This precisely coincides with Caron-Huot's Lorentzian inversion formula.

\subsubsection{A natural formula for the Lorentzian block}
\label{sec:naturalformula}

To make it easy to generalize the above result to arbitrary representations, let us write it in a more transparent way. First we need to introduce more flexible notation for a conformal block. Let
\be
\frac{\<\cO_1\cO_2\cO\>\<\cO_3\cO_4\cO\>}{\<\cO\cO\>}
\ee
denote the conformal block formed by gluing the three-point structures in the numerator using the two-point structure in the denominator. We describe the gluing procedure in more detail in appendix~\ref{app:3ptgluing}. In particular, the gluing procedure is well-defined (for a restricted causal configuration) even if $\cO$ is a continuous-spin operator.
Using this notation, the coefficient function $C(\De,J)$ is defined by
\be
\label{eq:defofC}
\<\f_1\f_2\f_3\f_4\>_\Omega &= \sum_{J=0}^\oo \int_{\frac d 2 - i\oo}^{\frac d 2 + i\oo}\frac{d\De}{2\pi i} C(\De,J) \frac{\<\f_1\f_2 \cO\>\<\f_3\f_4\cO\>}{\<\cO\cO\>},
\ee
where $\cO$ has dimension $\De$ and spin $J$.

Using the same notation, we claim that the function $H_{\De,J}(x_i)$ in (\ref{eq:nicerintegralforC}) is given by
\be
\label{eq:claimabouth}
H_{\De,J}(x_i) &= -\frac{1}{2\pi i} \frac{(\tsym_2\<\f_1\f_2 \wL[\cO]\>)^{-1} (\tsym_4\<\f_3\f_4\wL[\cO]\>)^{-1}}{\<\wL[\cO]\wL[\cO]\>^{-1}},\qquad (1>2,3>4).
\ee
In the numerator, $(\tsym_2\<\f_1\f_2 \wL[\cO]\>)^{-1}$ is the dual structure to $\tsym_2\<\f_1\f_2 \wL[\cO]\>$ via the three-point pairing (\ref{eq:lorentzianpairing}). It is given by (\ref{eq:dualconfig}), with the replacement $3,4\to 1,2$. Note that while we have written the structures in the numerators in terms of light transforms of time ordered products, they can alternatively be written in terms of Wightman functions for the kinematics we are considering, since
\be
\tsym_2\<\f_1 \f_2 \wL[\cO]\> &= \tsym_2\<0|\f_2 \wL[\cO] \f_1|0\>\qquad (\textrm{when}\ 1>2,\quad 1,2\approx 0), \nn\\
\tsym_4\<\f_3 \f_4 \wL[\cO]\> &= \tsym_4\<0|\f_4 \wL[\cO] \f_3|0\>\qquad (\textrm{when}\ 3>4,\quad 3,4\approx 0).
\ee

The structure $\<\wL[\cO]\wL[\cO]\>^{-1}$ in the denominator is dual to the double light-transform of the time-ordered two-point function $\<\cO\cO\>$ via the conformally-invariant two-point pairing,
\be
\label{eq:twoptdenominatorpairing}
\p{\<\wL[\cO]\wL[\cO]\>^{-1}, \<\wL[\cO]\wL[\cO]\>}_L &= 1.
\ee
Here the pairing $(\cdot,\cdot)_L$ for two-point functions is defined in~\eqref{eq:2ptpairingL}. In order for the pairing in (\ref{eq:twoptdenominatorpairing}) to be conformally-invariant, $\<\wL[\cO]\wL[\cO]\>^{-1}$ must transform like a two-point function of $\cO^\mathrm{F}$.

We have already computed the three-point structures in the numerator, so to verify (\ref{eq:claimabouth}), we  need to compute $\<\wL[\cO]\wL[\cO]\>$. Here, it is important to treat two-point structures as distributions. By lemma~\ref{lemma:timeorderedlighttransform}, $\<\cO(x_1,z_1)\wL[\cO](x_2,z_2)\>$ vanishes if $x_2 > x_1$ or $x_2 < x_1$ --- i.e.\ it vanishes almost everywhere. However, it is nonzero if $x_1$ is precisely lightlike from $x_2$. Specifically, $\<\cO(x_1,z_1)\wL[\cO](x_2,z_2)\>$ is a distribution localized where $x_2$ is on the past lightcone of $x_1$.\footnote{Note that this is different from treating two-point functions as physical Wightman functions, so there is no contradiction with previous discussion.} In fact, it is proportional to the integral kernel for the ``floodlight transform" $\wF$.

Let us now actually compute $\<\wL[\cO]\wL[\cO]\>$. It is useful to think of this structure as an integral kernel $K$, defined by
\be
\label{eq:kerneltwopt}
(Kf)(x,z) &\equiv \int d^d x' D^{d-2} z'\,\<\wL[\cO](x,z)\wL[\cO](x',z')\> f(x',z').
\ee
In (\ref{eq:kerneltwopt}), we can integrate one of the $\wL$-transforms by parts, giving
\be
\label{eq:integratebyparts}
(Kf)(x,z) &= \int d^d x' D^{d-2} z'\,\<\wL[\cO](x,z)\cO(x',z')\> (\tsym^{-1} \wL[f])(x',z').
\ee

To simplify (\ref{eq:integratebyparts}) further, we can express the time-ordered two-point function $\<\cO\cO\>$ in terms of integral transforms and use the algebra derived in section~\ref{sec:algebraoftransforms}. When $x,x'$ are spacelike, $\<\cO(x,z)\cO(x',z')\>$ is precisely the kernel for $\wS$. However, $\wS$ is supported only in the region $x\approx x'$, whereas the time-ordered two-point function has support everywhere. More precisely, keeping track of the phases as we move $x,x'$ into different Poincare patches, we have
\be
\<\cO(x,z)\cO(x',z')\> &= \frac{(-2z\.z' (x-x')^2 + 4 z\.(x-x') z'\.(x-x'))^J}{((x-x')^{2}+i\e)^{\De+J}} \nn\\
&= \wS\p{1 + \sum_{n=1}^\oo e^{-in\pi(\De+J)} \tsym^n + \sum_{n=1}^\oo e^{-in\pi(\De+J)} \tsym^{-n}} \nn\\
&= \wS \frac{-2i \tsym\sin\pi(\De+J)}{(\tsym-e^{i\pi(\De+J)})(\tsym-e^{-i\pi(\De+J)})}.
\ee
Plugging this into (\ref{eq:integratebyparts}), we find
\be
K &= \wL \wS \frac{-2i \tsym\sin\pi(\De+J)}{(\tsym-e^{i\pi(\De+J)})(\tsym-e^{-i\pi(\De+J)})} \tsym^{-1} \wL \nn\\
&= \wS \frac{-2i\sin\pi(\De+J)}{(\tsym-e^{i\pi(\De+J)})(\tsym-e^{-i\pi(\De+J)})} \wL^2 \nn\\
&= \frac{-2\pi i}{\De+J-1} \wS,
\ee
where in the second line we used that $\wL,\wS,\tsym$ commute with each other, together with the formula  $\wL^2 = f_L(J+d-1,\De-d+1,\tsym)$, where $f_L$ is given in equation~(\ref{eq:lightsquare}). The arguments of $f_L$ come from the fact that $K$ acts on a representation with dimension $J+d-1$ and spin $\De-d+1$.

The kernel of $\wS$ in the last line is the two-point function of an operator with spin $1-\De$ and dimension $1-J$. Thus, using our two-point pairing (\ref{eq:2ptpairingL}), we find
\be
\label{eq:dualtwopt}
\<\wL[\cO]\wL[\cO]\>^{-1} &= -\frac{\De+J-1}{2\pi i} 2^{2d-2}\vol(\SO(d-2))\<\cO^\mathrm{F}\cO^\mathrm{F}\>,
\ee
where $\<\cO^\mathrm{F}\cO^\mathrm{F}\>$ is the standard two-point structure (\ref{eq:standardtwoptconvention}) for an operator with dimension $J+d-1$ and spin $\De-d+1$. Combining this with the three-point structures in the numerator, and comparing with the result (\ref{eq:eqnforh}) for $H_{\De,J}(x_i)$, we verify (\ref{eq:claimabouth}).

Note that (\ref{eq:claimabouth}) is independent of a choice of normalization of the integral transform $\wL$. In fact, it depends only on the three-point structures $\<\f_1\f_2\cO\>,\<\f_3\f_4\cO\>$, the two-point structure $\<\cO\cO\>$, and the existence of a conformally-invariant map between representations $\cP_{\De,J,\l}$ and $\cP_{1-J,1-\De,\l}$ (which $\wL$ implements). The formula would still be true if we chose different normalization conventions for two and three-point functions, because this would change the definition of $C(\De,J)$ in a compatible way, via (\ref{eq:defofC}). Because it is essentially independent of conventions, we call (\ref{eq:claimabouth}) a ``natural" formula.

\subsection{Generalization to arbitrary representations}
\label{sec:generalizedinversiontwo}

\subsubsection{The light transform of a partial wave}

The derivation in the previous section is straightforward to generalize to the case of arbitrary conformal representations $\f_i\to\cO_i$.  In this case, three-point functions admit multiple conformally-invariant structures $\<\cO_1\cO_2\cO\>^{(a)}$, so partial waves $P_{\cO,(a)}$ carry an additional structure label.\footnote{The possible structures in a three-point function of spinning operators are classified in \cite{Kravchuk:2016qvl}.} They are defined by
\be
\label{eq:multipointcompletenessgeneralop}
\<V_3 V_4 \cO_1 \cO_2\>_\Omega &= \sum_{\rho,a} \int_{\frac d 2}^{\frac d 2 + i\oo} \frac{d\De}{2\pi i} \mu(\De,J) \int d^d x P_{\cO,(a)}(x_3,x_4,x) \<\tl \cO^\dag(x) \cO_1 \cO_2\>^{(a)}.
\ee
(Here, we implicitly contract the $\SO(d)$ indices of $P_{\cO,(a)}$ and the operator $\tl\cO^\dag$.)

The logic leading to the double-commutator integral (\ref{eq:afterdoublecommutator}) is essentially unchanged. 
We find
\be
&\wL[P_{\cO,(a)}](x_3,x_4,x,z)\nn\\
&=-(\<\cO_1 \cO_2  \tl \cO^\dagger\>^{(a)},\<\tl \cO_1^\dagger \tl \cO_{2}^\dagger\cO\>^{(b)})^{-1}_E \int_{\substack{x<1<4 \\ 3<2<x^+}} d^d x_1 d^d x_2 \<\O|[V_4, \cO_1] [\cO_2,V_3]|\O\> \<0|\tl \cO_1^\dag \wL[\cO](x,z) \tl \cO_2^\dag|0\>^{(b)} \nn\\
&\quad+ (1\leftrightarrow 2),
\ee
where $(\<\cO_1 \cO_2  \tl \cO^\dagger\>^{(a)},\<\tl \cO_1^\dagger \tl \cO_{2}^\dagger\cO\>^{(b)})^{-1}_E$ is the inverse of the three-point pairing (\ref{eq:euclideanthreeptpairing}) defined by
\be
	(\<\cO_1 \cO_2  \tl \cO^\dagger\>^{(a)},\<\tl \cO_1^\dagger \tl \cO_{2}^\dagger\cO\>^{(b)})^{-1}_E(\<\cO_1 \cO_2  \tl \cO^\dagger\>^{(c)},\<\tl \cO_1^\dagger \tl \cO_{2}^\dagger\cO\>^{(b)})_E=\delta_{a}^c
\ee

\subsubsection{The generalized Lorentzian inversion formula}
\label{sec:generalizedinversion}

To generalize the remaining steps leading to the Lorentzian inversion formula, we seemingly need to understand  of all the factors entering the expression for $H_{\De,J}(x_i)$  (\ref{eq:eqnforh}). However, this is unnecessary because the generalization is obvious from the natural formula (\ref{eq:claimabouth}).

The coefficient function $C_{ab}(\De,\rho)$ we would like to compute is defined by 
\be
\<\cO_1\cdots\cO_4\>_\Omega &= \sum_{\rho,a,b} \int_{\frac d 2 -i\oo}^{\frac d 2 + i\oo} \frac{d\De}{2\pi i} C_{ab}(\De,\rho) \frac{\<\cO_1\cO_2\cO^\dagger\>^{(a)} \<\cO_3\cO_4\cO\>^{(b)}}{\<\cO\cO^\dagger\>},
\ee
where $\cO$ has dimension $\De$ and $\SO(d)$-representation $\rho$. Here, we sum over principal series representations $\cE_{\De,\rho}$, as well as three-point structures $a,b$. The obvious generalization of~\eqref{eq:integralforcspacetime} and~(\ref{eq:claimabouth}) is
\be
C_{ab}(\De,\rho) &= -\frac{1}{2\pi i}  \int_{\substack{4>1\\2>3}} \frac{d^d x_1\cdots d^d x_4}{\vol(\tl \SO(d,2))} \<\O|[\cO_4, \cO_1] [\cO_2,\cO_3]|\O\>\nn\\
&\qquad\qquad \qquad\qquad  \x \tsym_2^{-1} \tsym_4^{-1}\frac{\p{\tsym_2\<\cO_1 \cO_2 \wL[\cO^\dagger]\>^{(a)}}^{-1}\p{\tsym_4\<\cO_4 \cO_3 \wL[\cO]\>^{(b)}}^{-1}}{\<\wL[\cO]\wL[\cO^\dagger]\>^{-1}} +(1\leftrightarrow 2).
\label{eq:obviousgeneralization}
\ee
The dual structures in the numerator are defined by
\be
\p{\p{\tsym_2\<\cO_1 \cO_2 \wL[\cO^\dagger]\>^{(a)}}^{-1}, \tsym_2\<\cO_1\cO_2\wL[\cO^\dagger]\>^{(c)}}_L &= \de_a^c,\nn\\
\p{\p{\tsym_4\<\cO_4 \cO_3 \wL[\cO]\>^{(b)}}^{-1}, \tsym_4\<\cO_4\cO_3\wL[\cO]\>^{(d)}}_L &= \de_b^d,
\label{eq:threeptpairingsgeneralized}
\ee
where $(\cdot,\cdot)_L$ is the three-point pairing defined in (\ref{eq:lorentzianpairing}). 
The two-point structure in the denominator is the dual of $\<\wL[\cO]\wL[\cO^\dagger]\>$ via the two-point pairing (\ref{eq:2ptpairingL}). 

Note that the structure $\p{\tsym_2\<\cO_1 \cO_2 \wL[\cO^\dagger]\>^{(a)}}^{-1}$ transforms like a three-point function of representations $\<\tl \cO_1^\dag \tl \cO_2^\dag \cO^{\dag\mathrm{F}}\>$ and similarly for the operators 3 and 4. In (\ref{eq:obviousgeneralization}), we are implicitly contracting Lorentz indices of $\cO_i$ with their dual indices in these structures.

\subsubsection{Proof using weight-shifting operators}

Equation~\eqref{eq:obviousgeneralization} follows if we prove the generalization of the expression~\eqref{eq:claimabouth} for $H$, with $H$ defined using the appropriate generalization of~\eqref{eq:integralforh}. Specifically, the definition of $H$ becomes
\be
	H_{\De,\rho,(ab)}(x_i)
	&=-\mu(\De,\rho^\dagger)S_E(\cO_1 \cO_2 [\tl \cO^\dagger])^c{}_a(\<\cO_1 \cO_2  \tl \cO^\dagger\>^{(c)},\<\tl \cO_1^\dagger \tl \cO_{2}^\dagger\cO\>^{(d)})^{-1}\times\nn\\
	&\quad\times\int_{2<x<1} d^d x D^{d-2} z \<0|\tl \cO_1^\dagger \wL[\cO](x,z) \tl \cO_{2^+}^\dagger|0\>^{(d)}(\<0|\cO_{4^+} \wL[\cO](x,z) \cO_3|0\>^{(b)})^{-1}.
	\label{eq:integralforhgeneralized}
\ee
We want to prove that
\be
H_{\De,\rho,(ab)}(x_i)
&=-\frac{1}{2\pi i}  \frac{\p{\tsym_2\<\cO_1 \cO_2 \wL[\cO^\dagger]\>^{(a)}}^{-1}\p{\tsym_4\<\cO_4 \cO_3 \wL[\cO]\>^{(b)}}^{-1}}{\<\wL[\cO]\wL[\cO^\dagger]\>^{-1}}.\label{eq:claimforhgeneralized}
\ee
Our proof will proceed in two steps. Here we are going to show that if for a given $\rho$~\eqref{eq:claimforhgeneralized} is valid for some ``seed'' choice of $\SO(d)$ irreps of external operators, it is then valid for all choices of external irreps. In appendix~\ref{app:proof} using methods of~\cite{Karateev:2017jgd} we show that validity of~\eqref{eq:claimforhgeneralized} for traceless-symmetric $\rho$ implies its validity for seed blocks for all $\rho$. Together these statements imply~\eqref{eq:claimforhgeneralized} in full generality.

\paragraph{Generalizing the external representations}

It is convenient to consider the structure defined by
\be
	T_a\equiv \mu(\De,\rho^\dagger)S_E(\cO_1 \cO_2  [\tl \cO^\dagger])^c{}_a(\<\cO_1 \cO_2  \tl \cO^\dagger\>^{(c)},\<\tl \cO_1^\dagger \tl \cO_{2}^\dagger\cO\>^{(d)})^{-1}_E\<\tl \cO_1^\dagger \tl \cO_{2^+}^\dagger\cO\>^{(d)}.
\ee
We can check that
\be
	T_a={(\<\cO^\dagger\cO\>,\<\tl\cO^\dagger\tl\cO\>)_E}(\<\cO_1\cO_2\wSE[\cO^\dagger]\>^{(a)})^{-1}_E,
\ee
where all pairings and inverses are Euclidean. Indeed, we can compute the Euclidean paring
\be
	(T_d,\<\cO_1\cO_2\wSE[\cO^\dagger]\>^{(a)})_E=&S_E(\cO_1\cO_2[\cO^\dagger])^a{}_b(T_a,\<\cO_1\cO_2\tl\cO^\dagger\>^{(b)})_E\nn\\=&\mu(\De,\rho^\dagger)S_E(\cO_1\cO_2[\cO^\dagger])^a{}_bS_E(\cO_1 \cO_2  [\tl \cO^\dagger])^b{}_d\nn\\
	=&\mu(\De,\rho^\dagger)\cN(\De,\rho^\dagger)\delta^a_d={(\<\cO^\dagger\cO\>,\<\tl\cO^\dagger\tl\cO\>)}_E\delta^a_d.
\ee
Here we used the relation~\eqref{eq:plancherelvsshadowsq} between the Plancherel measure and the square of the Euclidean shadow transform. Importance of the structures $T_a$ comes from the fact that it is the light transform of their Wick rotation which enters~\eqref{eq:integralforhgeneralized}. 

We now choose some other $\SO(d)$ irreps $\rho'_1$ and $\rho'_2$ for operators $\cO'_1$ and $\cO'_2$ such that there is a unique tensor structure\footnote{In odd dimensions and for fermionic $\rho$ the number of tensor structures is always even, and so it is not possible to make this choice. However, there we can make a choice such that there is only one parity-even structure, which will be good enough.}
\be
	\<\cO'_1\cO'_2\tl\cO^\dagger\>.
\ee
We then can write
\be
	 T_a={(\<\cO^\dagger\cO\>,\<\tl\cO^\dagger\tl\cO\>)_E}\tsym^{-1}_2\cD_{12,a}\tsym_2 (\<\cO'_1\cO'_2\wSE[\cO^\dagger]\>)^{-1}_E,
\ee
where $\cD_{12,a}$ are contractions of weight-shifting operators acting on points $1$ and $2$~\cite{Costa:2011dw,Karateev:2017jgd}.\footnote{Note that $\tsym_2^{-1}\cD_{12,d}\tsym_2$ are differential operators which can be interpreted in Euclidean signature. In particular, if $\cD_{12,d}=D_{1,A}D_2^A$ for $A$ transforming in an irreducible representation $W$ of the conformal group then $\tsym_2^{-1}\cD_{12,d}\tsym_2$ is proportional to $\cD_{12,d}$ with coefficient equal to the eigenvalue of $\tsym$ in $W$.} We can use this to write
\be
	H_{\De,\rho,(ab)}(x_i)=\cD_{12,a} H'_{\De,\rho,(b)}(x_i),
\ee
where $H'$ is given by~\eqref{eq:integralforhgeneralized} with $\cO'_1$ and $\cO'_2$ instead of $\cO_1$ and $\cO_2$, and using the unique tensor structure on the left of $H'$. 

On the other hand, we can write
\be
\delta^a_d&=\frac{1}{(\<\cO^\dagger\cO\>,\<\tl\cO^\dagger\tl\cO\>)_E}(T_d,\<\cO_1\cO_2\wSE[\cO^\dagger]\>^{(a)})_E\nn\\
&=(\tsym_2^{-1}\cD_{12,d}\tsym_2(\<\cO'_1\cO'_2\wSE[\cO^\dagger]\>)^{-1},\<\cO_1\cO_2\wSE[\cO^\dagger]\>^{(a)})_E\nn\\
&=((\<\cO'_1\cO'_2\wSE[\cO^\dagger]\>)^{-1},(\tsym_2^{-1}\cD_{12,d}\tsym_2)^*\<\cO_1\cO_2\wSE[\cO^\dagger]\>^{(a)})_E,
\ee
where we integrated the differential operators $\tsym_2^{-1}\cD_{12,d}\tsym_2$ by parts inside the Euclidean pairing. This produces new operators $\cD_{12,d}^*$, which are again contractions of weight-shifting operators.\footnote{For details see appendix~\ref{app:operations} and~\cite{Karateev:2017jgd,ShadowFuture}.} We thus conclude that
\be
	(\tsym_2^{-1}\cD_{12,d}\tsym_2)^*\<\cO_1\cO_2\wSE[\cO^\dagger]\>^{(a)}=\delta^a_d\<\cO'_1\cO'_2\wSE[\cO^\dagger]\>.
\ee
Canceling $\wSE$ on both sides (it is invertible on generic tensor structures) we find
\be\label{eq:D12definitionwithoutinverses}
	(\tsym_2^{-1}\cD_{12,d}\tsym_2)^*\<\cO_1\cO_2\cO^\dagger\>^{(a)}=\delta^a_d\<\cO'_1\cO'_2\cO^\dagger\>.
\ee
We now want to show that 
\be\label{eq:D12target}
	\cD_{12,a}(\tsym_2\<\cO'_1\cO'_2 \wL[\cO^\dagger]\>)^{-1}_L=(\tsym_2\<\cO_1\cO_2 \wL[\cO^\dagger]\>^{(a)})^{-1}_L
\ee
where the inverse structure is understood with respect to Lorentzian pairing. This follows by doing the above calculation in reverse and in Lorentzian signature. First, we apply $\wL$ to both sides of~\eqref{eq:D12definitionwithoutinverses} and use $\tsym^*=\tsym^{-1}$,
\be
	\tsym_2^{-1}\cD_{12,d}^*\tsym_2\<\cO_1\cO_2\wL[\cO^\dagger]\>^{(a)}=\delta^a_d\<\cO'_1\cO'_2\wL[\cO^\dagger]\>,
\ee
Then, we apply $\tsym_2$ to both sides and take Lorentzian contraction with $(\tsym_2\<\cO'_1\cO'_2\wL[\cO^\dagger]\>)^{-1}_L$
\be
	((\tsym_2\<\cO'_1\cO'_2\wL[\cO^\dagger]\>)^{-1}_L,\cD_{12,d}^*\tsym_2\<\cO_1\cO_2\wL[\cO^\dagger]\>^{(a)})_L=\delta^a_d,
\ee
and finally integrate by parts,
\be
	(\cD_{12,d}(\tsym_2\<\cO'_1\cO'_2\wL[\cO^\dagger]\>)^{-1}_L,\tsym_2\<\cO_1\cO_2\wL[\cO^\dagger]\>^{(a)})_L=\delta^a_d.
\ee
This is equivalent to~\eqref{eq:D12target} The crucial point here is that integration by parts leads to the same operation on the weight-shifting operators both in Euclidean and Lorentzian signature (on integer-spin operators). A way to summarize this calculation is by saying that 
\be
	(\tsym_2\<\cO_1\cO_2\wL[\cO^\dagger]\>)^{-1}_L\quad\text{and}\quad\tsym_2(\<\cO_1\cO_2\wSE[\cO^\dagger]\>)^{-1}_E
\ee
have the same transformation properties under weight-shifting operators acting on 1 and 2.

This implies that if~\eqref{eq:claimforhgeneralized} is true for $\cO'_1$ and $\cO'_2$, it is also true for $\cO_1$ and $\cO_2$, since we can simply apply $\cD_{12,a}$ in both~\eqref{eq:integralforhgeneralized} and~\eqref{eq:claimforhgeneralized}. Since exactly the same tensor structure appears for the operators $\cO_3,\cO_4$ in~\eqref{eq:integralforhgeneralized} and~\eqref{eq:claimforhgeneralized}, an analogous (even simpler) argument works for this tensor structure as well. In conclusion, if~\eqref{eq:claimforhgeneralized} holds for a seed conformal block, it holds for all conformal blocks with the same $\rho$.

\section{Conformal Regge theory}
\label{sec:conformalregge}

\subsection{Review: Regge kinematics}

Consider a time-ordered four-point function of scalar operators $\<\f_1\cdots \f_4\>$. Its conformal block expansion in the $12\to 34$ channel takes the form
\be
\label{eq:usualOPE}
\<\f_1(x_1)\cdots \f_4(x_4)\> &=  \sum_{\De,J} p_{\De,J} G^{\De_i}_{\De,J}(x_i) \nn\\
&= \frac{1}{(x_{12}^2)^{\frac{\De_1+\De_2}{2}} (x_{34}^2)^{\frac{\De_3+\De_4}{2}}} \p{\frac{x_{14}^2}{x_{24}^2}}^{\frac{\De_2-\De_1}{2}} \p{\frac{x_{14}^2}{x_{13}^2}}^{\frac{\De_3-\De_4}{2}}
\sum_{\De,J} p_{\De,J} G^{\De_i}_{\De,J}(\chi,\bar\chi),
\ee
where $p_{\De,J}$ are products of OPE coefficients.
This expansion is convergent whenever $\chi,\bar\chi \in \C\backslash[1,\oo)$ \cite{Pappadopulo:2012jk}.
However, it fails to converge in the Regge limit.\footnote{The other OPE channels $14\to 23$ and $13\to 24$ are still convergent, though they are approaching the boundaries of their regimes of validity, as discussed in the introduction.}

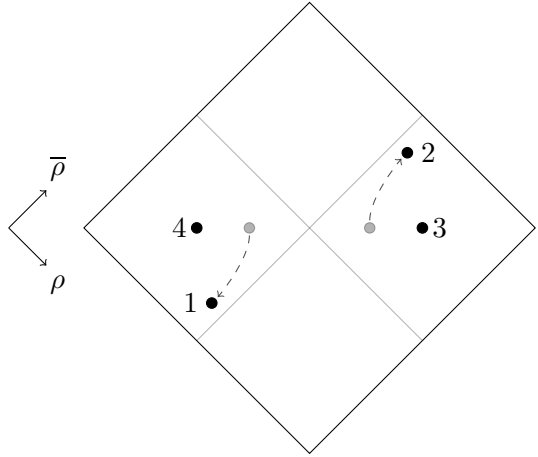
\begin{figure}[ht!]
	\centering
		\begin{tikzpicture}

		\draw[->] (-4,0) -- (-3.5,0.5);		
		\draw[->] (-4,0) -- (-3.5,-0.5);
		\draw[dashed,->,opacity=0.7] (0.8,0.1) to[out=90,in=-130,distance=0.25cm] (1.22,0.92);	
		\draw[dashed,->,opacity=0.7] (-0.8,-0.1) to[out=-90,in=50,distance=0.25cm] (-1.22,-0.92);	
		
		\draw (-3,0) -- (0,3) -- (3,0) -- (0,-3) -- cycle;
				
		\draw[fill=black] (-1.3,-1) circle (0.07); 
		\draw[fill=black] (1.3,1) circle (0.07);  
		\draw[fill=black] (-1.5,0) circle (0.07); 
		\draw[fill=black] (1.5,0) circle (0.07);  
		\draw[fill=black,opacity=0.3] (0.8,0) circle (0.07);  
		\draw[fill=black,opacity=0.3] (-0.8,0) circle (0.07);  
				
		\draw[opacity=.3] (-1.5,-1.5) -- (1.5,1.5);
		\draw[opacity=.3] (-1.5,1.5) -- (1.5,-1.5);
		
		\node[left] at (-1.35,-1) {$1$};
		\node[right] at (1.35,1) {$2$};	
		\node[left] at (-1.5,0) {$4$};
		\node[right] at (1.5,0) {$3$};
		\node[below] at (-3.35,-0.5) {$\rho$};
		\node[above] at (-3.35,0.5) {$\bar\rho$};
		
		\end{tikzpicture}
		\caption{The Regge limit in the configuration (\ref{eq:lightconecoordinatesforRegge}). We boost points $1$ and $2$  while keeping points $3$ and $4$ fixed. This configuration is related by an overall boost to the one in figure~\ref{fig:reggelimit}.}
		\label{fig:reggelimittwo}
\end{figure}

To reach the Regge regime, which was originally described for CFT correlators in \cite{Cornalba:2007fs}, let us place the operators in a 2d Lorentzian plane with lightcone coordinates
\be
\label{eq:lightconecoordinatesforRegge}
x_1 &= (-\rho,-\bar\rho),\nn\\
x_2 &= (\rho,\bar\rho), \nn\\
x_3 &= (1,1), \nn\\
x_4 &= (-1,-1).
\ee
The usual cross-ratios are given by
\be
\chi = \frac{4\rho}{(1+\rho)^2},\qquad \bar\chi &= \frac{4\bar\rho}{(1+\bar\rho)^2}.
\ee
It is also useful to introduce polar coordinates
\be
\rho = r e^{i\th}=rw, \qquad\bar\rho = re^{-i\th}=rw^{-1}.
\ee
In Euclidean signature, $r$ and $\th$ are real. By contrast in Lorentzian signature, $r$ is real, $\th$ becomes pure-imaginary (it is conjugate to a boost), and $\rho,\bar\rho$ become independent real variables. To reach the Regge regime, we apply a large boost to operators $1$ and $2$, while keeping $3$ and $4$ fixed (figure~\ref{fig:reggelimittwo}). More precisely, we take
\be
\th &= it + \e,\qquad (t\to \oo),
\ee
so that
\be
\label{eq:boostkinematics}
\rho = r e^{-t+i\e},\quad\bar\rho = re^{t-i\e},\qquad (t\to \oo).
\ee
Here, we use the correct $i\e$ prescription to compute a time-ordered Lorentzian correlator when $t>0$. With this prescription, the cross-ratios behave as follows. As $t$-increases, $\chi$ moves toward zero. Meanwhile, $\bar\chi$ initially increases, then goes counterclockwise around $1$, and finally decreases back to zero (figure~\ref{fig:crossratiosregge}).

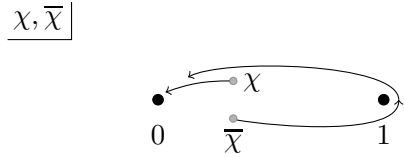
\begin{figure}[ht!]
	\centering
		\begin{tikzpicture}

\draw[] (-1.15,1.3) -- (-1.15,0.8) -- (-2,0.8);

\draw[fill=black] (0,0) circle (0.07);
\draw[fill=black] (3,0) circle (0.07);
\draw[fill=black,opacity=0.3] (1,-0.25) circle (0.05);
\draw[fill=black,opacity=0.3] (1,0.25) circle (0.05);
\draw[->] (0.95,0.25) to[out=-180,in=20] (0.1,0.1);
\draw[->] (1.05,-0.27) to[out=-10,in=-90,distance=0.5cm] (3.2,0);
\draw[->] (3.2,0) to[out=90,in=30,distance=0.5cm] (0.4,0.3);

	\node[above] at (-1.6,0.8) {$\chi,\bar\chi$}; 
	\node[below] at (0,-0.2) {$0$};
	\node[below] at (3,-0.2) {$1$};
	\node[below] at (1,-0.25) {$\bar\chi$};
	\node[above] at (1.25,-0.03) {$\chi$};
		\end{tikzpicture}
		\caption{The paths of the cross ratios $\chi,\bar\chi$ when moving from the Euclidean regime to the Regge regime. In the Euclidean regime, $\chi,\bar\chi$ are complex conjugates (gray points). As we boost $x_1,x_2$, the cross ratio $\chi$ decreases towards zero, while $\bar\chi$ moves counterclockwise around $1$ before decreasing towards zero. For sufficiently large $t$, $\bar\chi$ follows the same path as $\chi$, but we have separated the paths to clarify the figure.}
		\label{fig:crossratiosregge}
\end{figure}

The only difference between the Regge and $1\to 2$ OPE limits from the perspective of the cross-ratios $\chi,\bar\chi$ is the continuation of $\bar\chi$ around $1$. In both cases, we take $\chi,\bar\chi\to 0$. This is because the Regge limit resembles an OPE limit between points in different Poincare patches. This observation was made in \cite{Caron-Huot:2013fea}. Specifically, the configuration in figure~\ref{fig:reggelimittwo} is related by a boost to the one in figure~\ref{fig:reggelimitagain}. The Regge limit can thus be described as $1\to2^-$ and $3\to 4^-$. The cross-ratios $\chi,\bar\chi$ are unchanged when we apply $\tsym$ to any of the points, which is why they still go to zero in this limit.

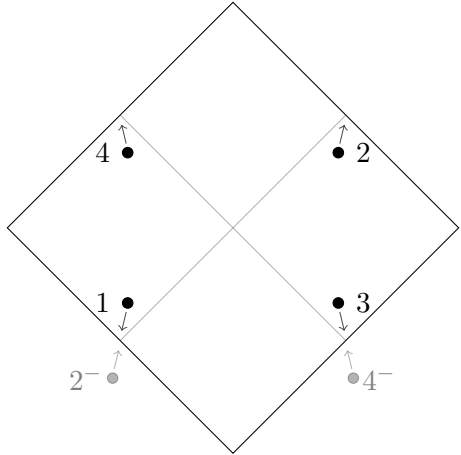
\begin{figure}[ht!]
	\centering
		\begin{tikzpicture}
		
		\draw (-3,0) -- (0,3) -- (3,0) -- (0,-3) -- cycle;
				
		\draw[fill=black] (-1.4,-1) circle (0.07); 
		\draw[fill=black] (1.4,1) circle (0.07);  
		\draw[fill=black] (-1.4,1) circle (0.07); 
		\draw[fill=black] (1.4,-1) circle (0.07);  
		\draw[fill=black,opacity=0.3] (-1.6,-2) circle (0.07); 
		\draw[fill=black,opacity=0.3] (1.6,-2) circle (0.07); 
				
		\draw[opacity=.3] (-1.5,-1.5) -- (1.5,1.5);
		\draw[opacity=.3] (-1.5,1.5) -- (1.5,-1.5);

		\draw[->,opacity=0.7] (-1.42,1.12) -- (-1.48,1.37);
		\draw[->,opacity=0.7] (-1.42,-1.12) -- (-1.48,-1.37);
		\draw[->,opacity=0.7] (1.42,1.12) -- (1.48,1.37);
		\draw[->,opacity=0.7] (1.42,-1.12) -- (1.48,-1.37);
		\draw[->,opacity=0.3] (-1.58,-1.88) -- (-1.52,-1.63);
		\draw[->,opacity=0.3] (1.58,-1.88) -- (1.52,-1.63);

		\node[left] at (-1.5,-1) {$1$};
		\node[right] at (1.5,1) {$2$};	
		\node[left] at (-1.5,1) {$4$};
		\node[right] at (1.5,-1) {$3$};	
		\node[left,opacity=0.5] at (-1.6,-2) {$2^-$};
		\node[left,opacity=0.5] at (2.3,-2) {$4^-$};
		
		\end{tikzpicture}
		\caption{Another description of the Regge limit is $x_1\to x_2^-$ and $x_3\to x_4^-$. The points $x_2^-,x_4^-$ are shown in gray. The cross-ratios $\chi,\bar\chi$ associated with the points $1,2,3,4$ are the same as those associated with $1,2^-,3,4^-$.}
		\label{fig:reggelimitagain}
\end{figure}

To understand what happens to the conformal block expansion~(\ref{eq:usualOPE}) in the Regge regime, we must compute the monodromy of $G^{\De_i}_{\De,J}(\chi,\bar\chi)$ from taking $\bar\chi$ counterclockwise around $1$. This was described in \cite{Caron-Huot:2017vep}. Firstly, we have the decomposition
\be
G^{\De_i}_{\De,J}(\chi,\bar\chi) &= g_{\De,J}^\mathrm{pure}(\chi,\bar\chi) + \frac{\G(J+d-2)\G(-J-\frac{d-2}{2})}{\G(J+\frac{d-2}{2})\G(-J)} g^\mathrm{pure}_{\De,2-d-J}(\chi,\bar\chi),
\ee
where $g^\mathrm{pure}_{\De,J}$ is the solution to the conformal Casimir equation defined by
\be
g^\mathrm{pure}_{\De,J}(\chi,\bar\chi) &= \chi^{\frac{\De-J}{2}} \bar\chi^{\frac{\De+J}{2}} \x (1 + \textrm{integer powers of }\chi/\bar\chi,\bar\chi) \qquad (\chi\ll \bar\chi \ll 1).
\ee
For small $\chi$, $g^\mathrm{pure}_{\De,J}$ has a simple form in terms of a hypergeometric function \cite{DO3},
\be
\label{eq:collinearblock}
g^\mathrm{pure}_{\De,J}(\chi,\bar\chi) &= \chi^{\frac{\De-J}{2}} k_{\De+J}(\bar\chi)\x(1+O(\chi)) \qquad\qquad (\chi\ll 1),\\
k_{2\bar h}(\bar\chi) &= \bar\chi^{\bar h} {}_2F_1\p{\bar h - \frac{\De_{12}}{2},\bar h+\frac{\De_{34}}{2},2\bar h,\bar\chi},
\ee
where $\De_{ij}\equiv\De_i-\De_j$. The monodromy of $g^\mathrm{pure}_{\De,J}$ as $\bar\chi$ goes around $1$ can then be determined from (\ref{eq:collinearblock}) using elementary hypergeometric function identities, keeping $\chi$ small so that the approximation (\ref{eq:collinearblock}) remains valid.

Let us defer discussing the precise form of the monodromy until section~\ref{sec:relationtolight}, and focus on one important feature. Note that $k_{2\bar h}(\bar\chi)$ is a conformal block for $\SL(2,\R)$. In particular, it is a solution to the conformal Casimir equation (a second-order differential equation) with eigenvalue $\bar h(\bar h - 1)$. Under monodromy, it will mix with the other solution, which differs by $\bar h \to 1-\bar h$. In terms of $\De,J$, this becomes
\be
\label{eq:reggeweyl}
(\De,J) &\to (1-J,1-\De),
\ee
i.e.\ it is the affine Weyl reflection associated to the light transform. After monodromy, in the limit $\chi,\bar\chi\to 0$ each block contains a term
\be
\chi^{\frac{\De-J}{2}} \bar\chi^{\frac{1-\De+1-J}{2}} &\sim e^{(J-1) t} \qquad(t\gg 1).
\ee
In other words, the monodromy of each block grows as $e^{(J-1)t}$ in the Regge limit.
Because the sum (\ref{eq:usualOPE}) includes arbitrarily large $J$, the OPE expansion formally diverges as $t\to \oo$.

In what follows, it will be important to understand the large-$J$ limit of conformal blocks in slightly more detail. We compute this in  appendix~\ref{app:largeJ}:
\be
\label{eq:largeJlimitofgpure}
g^\mathrm{pure}_{\De,J}(\chi,\bar\chi)&\sim 
\frac{4^\De f_{1-\De}(\tfrac 1 2(r+\tfrac 1 r)) w^{-J}}{(1-w^2)^{\frac{d-2}{2}}(r^2 + \tfrac 1 {r^2} - w^2 - \tfrac 1 {w^2})^{\frac 1 2}}\p{\frac{(1-\tfrac r w)(1-rw)}{(1+\tfrac r w)(1+rw)}}^{\frac{\De_{12}-\De_{34}}{2}}
\qquad(|J|\gg 1),
\ee
where $w= e^{i\th}$ and $f_{1-\De}(x)$ is given in (\ref{eq:radialsolngegenbauer}). For us, the most important feature of (\ref{eq:largeJlimitofgpure})  is that its $J$-dependence is $w^{-J}$. Note that the small-$w$ limit of (\ref{eq:largeJlimitofgpure}) is consistent with the claim that $g^\mathrm{pure}_{\De,J}$ grows as $w^{1-J} = e^{(J-1)t}$ in the limit $t\to\oo$.

\subsection{Review: Sommerfeld-Watson resummation}
\label{sec:sommerfeld}

Taking the monodromy of $\bar\chi$ around 1 requires leaving the region $|\bar\rho|<1$ where the sum over $\De$ in the conformal block expansion converges. The conformal partial wave expansion gives a way to avoid this problem: we replace a sum of the form $\sum_\De|\rho\bar\rho|^{\De/2}$ with an integral over $\De\in \frac d 2 + i\R$. This integral is better-behaved when $|\bar\rho|>1$.

In the Regge limit we still have the problem that each individual block grows like $e^{(J-1)t}$. This can be dealt with in a similar way: by replacing the sum over $J$ with an integral in the imaginary direction. This trick is called the Sommerfeld-Watson transform.

Let us begin with the conformal partial wave expansion
\be
\<\f_1(x_1)\cdots\f_4(x_4)\> &= \sum_{J=0}^\oo \int_{\frac d 2 - i\oo}^{\frac d 2 + i\oo} \frac{d\De}{2\pi i} C(\De,J) F^{\De_i}_{\De,J}(x_i), \nn\\
F^{\De_i}_{\De,J}(x_i) &\equiv \frac 1 2 \p{G^{\De_i}_{\De,J}(x_i) + \frac{S_E(\f_1\f_2[\cO])}{S_E(\f_3\f_4[\tl\cO])} G^{\De_i}_{d-\De,J}(x_i)}.
\ee
For integer $J$, the coefficient function $C(\De,J)$ can be written
\be
C(\De,J) &= C^t(\De,J) + (-1)^J C^u(\De,J), \qquad(J\in \Z),
\ee
where $C^t$ comes from the first term in the Lorentzian inversion formula (\ref{eq:simonsformula}), and $C^u$ comes from the second term with $1\leftrightarrow 2$. (The superscripts $t$ and $u$ stand for ``$t$-channel" and ``$u$-channel.") Each of the functions $C^{t,u}(\De,J)$ has a natural analytic continuation in $J$ that is bounded in the right half-plane. This follows from (\ref{eq:simonsformula}), since the conformal block $G^{\tl \De_i}_{J+d-1,\De-d+1}(\chi,\bar\chi)$ is well-behaved in the square $\chi,\bar\chi\in [0,1]$ when $J$ is in the right half-plane.

Let us split the partial wave $F_{\De,J}^{\De_i}$ into two pieces
\be
\label{eq:splitofpartialwave}
F_{\De,J}^{\De_i}(x_i) &= \cF_{\De,J}(x_i) + \cH_{\De,J}(x_i),
\ee
where $\cF_{\De,J}$ behaves like $w^{-J}$ at large $J$,
\be
\label{eq:explicitfpure}
& \cF_{\De,J}(x_i) \equiv \nn\\
&\frac{1}{(x_{12}^2)^{\frac{\De_1+\De_2}{2}} (x_{34}^2)^{\frac{\De_3+\De_4}{2}}} \p{\frac{x_{14}^2}{x_{24}^2}}^{\frac{\De_2-\De_1}{2}} \p{\frac{x_{14}^2}{x_{13}^2}}^{\frac{\De_3-\De_4}{2}} \frac 1 2 \p{g^\mathrm{pure}_{\De,J}(\chi,\bar\chi) + \frac{S_E(\f_1\f_2[\cO])}{S_E(\f_3\f_4[\tl\cO])} g^{\mathrm{pure}}_{d-\De,J}(\chi,\bar\chi)},
\ee
and $\cH_{\De,J}(x_i)$ represents the remaining terms, which behave like $w^{J+d-2}$ at large $J$.
We must treat the two terms in (\ref{eq:splitofpartialwave}) differently in the Sommerfeld-Watson transform. Let us focus on the first term. The sum over integer spins can be written as a contour integral
\be
\label{eq:firstpuresum}
\sum_{J=0}^\oo C(\De,J) \cF_{\De,J}(x_i) &= -\oint_\Gamma dJ \frac{C^t(\De,J) + e^{-i\pi J} C^u(\De,J)}{1-e^{-2\pi i J}} \cF_{\De,J}(x_i)\nn\\
&\qquad\qquad (\Re(\th)\in(0,\pi), \Im(\th)=0),
\ee
where the contour $\G$ encircles all the nonnegative integers clockwise. Here, we have carefully chosen the analytic continuation of $C(\De,J)$ so that the integrand is bounded at large $J$ in the right half-plane  whenever $\th$ satisfies the given conditions. For this, we use the fact that $\cF_{\De,J}(x_i)$ behaves as $w^{-J}$ at large $J$.
Because the other term in (\ref{eq:splitofpartialwave}) behaves as $w^{J+d-2}$ at large $J$, we must replace $e^{-i\pi J}\to e^{i\pi J}$ to get an integral for that term that is valid in the same range of $\th$.

The contour integral (\ref{eq:firstpuresum}) is more suitable than a na\"ive sum over spins for continuing to the Regge regime. Recall that the issue with a sum over $J$ was that a conformal block with spin $J$ grows as $e^{(J-1)t}$ in the Regge limit. Because the integrand in (\ref{eq:firstpuresum}) is well-behaved at large $J$, we can deform the contour $\G$ to a region where $\Re(J)<1$, so that its contributions die as $t\to \oo$.\footnote{A natural choice is the Lorentzian principal series $\Re(J) = -\frac{d-2}{2}$.} In doing so, we may pick up new poles in $C^{u,t}(\De,J)$ with real part $\Re(J)>1$. The rightmost such pole will dominate the correlator in the Regge limit. Denote the deformed contour, including these new poles, by $\G'$  (figure~\ref{fig:gammaprimecontour}).

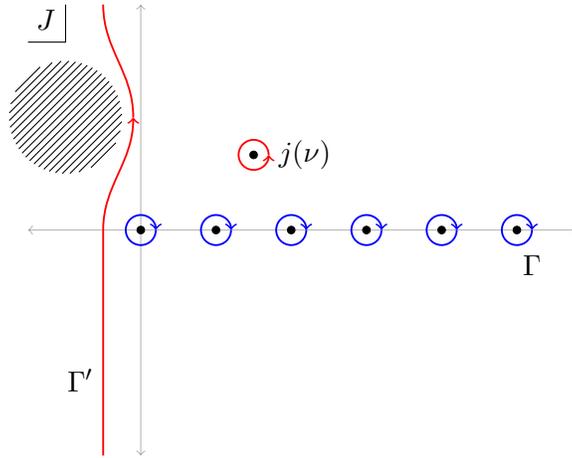
\begin{figure}[ht!]
\centering
\begin{tikzpicture}
\draw[] (-1,3) -- (-1,2.5) -- (-1.5,2.5);

\draw[->,opacity=0.3] (0,0) -- (-1.5,0);
\draw[->,opacity=0.3] (0,0) -- (0,3);
\draw[->,opacity=0.3] (0,0) -- (0,-3);
\draw[->,opacity=0.3] (0,0) -- (5.8,0);
\draw[fill=black] (0,0) circle (0.05);
\draw[fill=black] (1,0) circle (0.05);
\draw[fill=black] (2,0) circle (0.05);
\draw[fill=black] (3,0) circle (0.05);
\draw[fill=black] (4,0) circle (0.05);
\draw[fill=black] (5,0) circle (0.05);
\draw[fill=black] (1.5,1) circle (0.05);

\draw[] (-0.662,0.830591) -- (-0.331,1.16178);
\draw[] (-0.822,0.771524) -- (-0.272,1.32161);
\draw[] (-0.9416,0.752275) -- (-0.252,1.44162);
\draw[] (-1.0434,0.751257) -- (-0.251,1.5434);
\draw[] (-1.133,0.761969) -- (-0.262,1.63345);
\draw[] (-1.215,0.78141) -- (-0.281,1.71478);
\draw[] (-1.289,0.807926) -- (-0.308,1.78902);
\draw[] (-1.357,0.84052) -- (-0.341,1.85719);
\draw[] (-1.420,0.878568) -- (-0.379,1.91991);
\draw[] (-1.478,0.921687) -- (-0.422,1.97755);
\draw[] (-1.530,0.96967) -- (-0.470,2.03033);
\draw[] (-1.578,1.02245) -- (-0.522,2.07831);
\draw[] (-1.621,1.08009) -- (-0.580,2.12143);
\draw[] (-1.659,1.14281) -- (-0.643,2.15948);
\draw[] (-1.692,1.21098) -- (-0.711,2.19207);
\draw[] (-1.719,1.28522) -- (-0.785,2.21859);
\draw[] (-1.738,1.36655) -- (-0.867,2.23803);
\draw[] (-1.749,1.4566) -- (-0.9566,2.24874);
\draw[] (-1.748,1.55838) -- (-1.0584,2.24772);
\draw[] (-1.728,1.67839) -- (-1.178,2.22848);
\draw[] (-1.669,1.83822) -- (-1.338,2.16941);

\draw[<-,line width=0.7,blue] (0.2,0) to[out=90,in=0] (0,0.2) to[out=180,in=90] (-0.2,0) to[out=-90,in=180] (0,-0.2) to[out=0,in=-90] (0.2,0);
\draw[<-,line width=0.7,blue] (1.2,0) to[out=90,in=0] (1,0.2) to[out=180,in=90] (0.8,0) to[out=-90,in=180] (1,-0.2) to[out=0,in=-90] (1.2,0);
\draw[<-,line width=0.7,blue] (2.2,0) to[out=90,in=0] (2,0.2) to[out=180,in=90] (1.8,0) to[out=-90,in=180] (2,-0.2) to[out=0,in=-90] (2.2,0);
\draw[<-,line width=0.7,blue] (3.2,0) to[out=90,in=0] (3,0.2) to[out=180,in=90] (2.8,0) to[out=-90,in=180] (3,-0.2) to[out=0,in=-90] (3.2,0);
\draw[<-,line width=0.7,blue] (4.2,0) to[out=90,in=0] (4,0.2) to[out=180,in=90] (3.8,0) to[out=-90,in=180] (4,-0.2) to[out=0,in=-90] (4.2,0);
\draw[<-,line width=0.7,blue] (5.2,0) to[out=90,in=0] (5,0.2) to[out=180,in=90] (4.8,0) to[out=-90,in=180] (5,-0.2) to[out=0,in=-90] (5.2,0);
\draw[->,line width=0.7,red] (1.7,1) to[out=90,in=0] (1.5,1.2) to[out=180,in=90] (1.3,1) to[out=-90,in=180] (1.5,0.8) to[out=0,in=-90] (1.7,1);
\draw[->,line width=0.7,red] (-0.5,-3) -- (-0.5,0) to[out=90,in=-90] (-0.1,1.5);
\draw[line width=0.7,red] (-0.1,1.5) to[out=90,in=-90] (-0.5,3) -- (-0.5,3);

\node[left] at (-1,2.8) {$J$};
\node[below] at (5.2,-0.2) {$\Gamma$};
\node[left] at (-0.5,-2) {$\Gamma'$};
\node[right] at (1.7,1) {$j(\nu)$};
\end{tikzpicture}
\caption{Integration contours in the $J$ plane. The contour $\G$ (blue) encircles all the integers clockwise. The deformed contour $\G'$ runs parallel to the imaginary axis, asymptotically approaching $\Re(J)=-\frac{d-2}{2}$ at large imaginary $J$. In deforming the contour, we must ensure that $\G'$ avoids non-analyticities, like a pole at non-integer $J$, branch cuts, or other singularities. Here, we show a single non-integer pole at $J=j(\nu)$ and possible non-analyticities in the shaded region. However, this is only an example --- we don't know the  structure of the $J$-plane in general.}
\label{fig:gammaprimecontour}
\end{figure}

After deforming the contour, we now have a representation of the correlator that is valid in the strip 
\be
\Re(\th)\in (0,\pi),\quad \Im(\th)>0,
\ee
which includes the angle $\th=it+\e$ required for a time-ordered Lorentzian correlator. Thus, we can continue to the Regge regime. The continuation of $\cH_{\De,J}(x_i)$ does not give a growing contribution in the Regge limit, so let us ignore it for the moment. We find that the four-point function behaves as
\be
\<\f_1(x_1)\cdots \f_4(x_4)\> &\sim -\oint_{\Gamma'} dJ \int_{\frac d 2 - i\oo}^{\frac d 2 + i\oo} \frac{d\De}{2\pi i} \frac{C^t(\De,J) + e^{-i\pi J} C^u(\De,J)}{1-e^{-2\pi i J}} \cF_{\De,J}(x_i)^\circlearrowleft,
\label{eq:doubleintegralrep}
\ee
where $\cF_{\De,J}(x_i)^\circlearrowleft$ denotes the continuation to Regge kinematics, including the monodromy of $\bar\chi$ around 1 and phases arising from the prefactor in (\ref{eq:explicitfpure}).\footnote{Representing the correlator as an integral over both $\De$ and $J$ is natural from the point of view of Lorentzian harmonic analysis, where principal series representations are labeled by continuous $\De=\frac d 2 + is$ and $J=-\frac{d-2}{2}+it$. However, it is not immediately obvious how the representation (\ref{eq:doubleintegralrep}) is related to the Plancherel theorem for $\tl\SO(d,2)$. We leave this question for future work.}

In planar large-$N$ theories, the rightmost feature of $\G'$ is conjectured to be an isolated pole $J=j(\nu)$ where $\De=\frac d 2 + i\nu$. Assuming this is the case, we obtain
\be
\<\f_1(x_1)\cdots \f_4(x_4)\> &\sim - 2\pi i \int_{-\oo}^{\oo} \frac{d\nu}{2\pi} \Res_{J=j(\nu)} \frac{C^t(\tfrac d 2 + i\nu,J) + e^{-i\pi J} C^u(\tfrac d 2 + i\nu,J)}{1-e^{-2\pi i J}} \cF_{\tfrac d 2 + i\nu,J}(x_i)^\circlearrowleft.
\label{eq:reggepoleassumption}
\ee

\subsection{Relation to light-ray operators}
\label{sec:relationtolight}

The appearance of the affine Weyl transform (\ref{eq:reggeweyl}) is suggestive that Regge kinematics should be related to the light transform and light-ray operators. To see how, let us finally compute $\cF_{\De,J}(x_i)^\circlearrowleft$ using (\ref{eq:collinearblock}). We find
\be
\label{eq:monodromyoffpure}
\cF_{\De,J}(x_i)^\circlearrowleft &= -\frac{i \pi \G(\De+J)\G(\De+J-1)}{\G(\tfrac{\De+J+\De_{12}}{2})\G(\tfrac{\De+J-\De_{12}}{2})\G(\tfrac{\De+J+\De_{34}}{2})\G(\tfrac{\De+J-\De_{34}}{2})} T^{\De_i}(x_i) G_{1-J,1-\De}(\chi,\bar\chi) \nn\\
&\qquad + \dots,
\ee
where $T^{\De_i}(x_i)$ is the product of $|x_{ij}|$'s given in (\ref{eq:absolutevaluestructure}). Here, we have explicitly written the term that is growing in the Regge limit. The ``$\dots$"'s represent other solutions of the Casimir equations that do not grow in the Regge limit, coming from both $\mathcal{F}_{\De,J}$ and $\mathcal{H}_{\De,J}$. The above expression is valid in the configuration $4>1,2>3$, with other points spacelike-separated.  

Comparing with (\ref{eq:eqnforh}) and (\ref{eq:claimabouth}), we immediately recognize
\be
\cF_{\De,J}(x_i)^\circlearrowleft &= \frac{1}{2} \tsym_2^{-1} \tsym_4^{-1} \frac{(\tsym_2\<\f_1\f_2\wL[\cO^\dagger]\>)(\tsym_4\<\f_3\f_4\wL[\cO]\>)}{\<\wL[\cO]\wL[\cO^\dagger]\>} + \dots,
\label{eq:growinglight}
\ee
where we use the notation for a conformal block introduced in section~\ref{sec:naturalformula}.
Equation~(\ref{eq:growinglight}) is the main observation of this section. In the case where Regge kinematics is dominated by an isolated pole (\ref{eq:reggepoleassumption}), the residue $\Res_{J=j(\nu)}$ means that coefficients in the integrand can be interpreted as products of OPE coefficients for light-ray operators. This is because a nontrivial residue comes from the neighborhood of the light ray.\footnote{The same is true if the Regge limit is dominated by a cut instead of a pole, though now we have a doubly-continuous family of light-ray operators, parameterized by $\nu$ and $J$ along the cut.} Plugging (\ref{eq:growinglight}) into (\ref{eq:reggepoleassumption}), we find a sum/integral of conformal blocks for these light-ray operators. 

In the gauge-theory literature, the object that controls the Regge limit of a planar amplitude is called the ``Pomeron" \cite{Chew:1961ev,Gribov:1961ex}. Here, we see that for planar CFT correlation functions, the Pomeron is a light-ray operator: it is proportional to the rightmost residue in $J$ of $\mathbb{O}_{\De,J}$, for $\De\in \frac d 2 + i\R$.

The observation (\ref{eq:growinglight}) also lets us immediately generalize conformal Regge theory to arbitrary operator representations. In the Regge limit, we have 
\be
\<\cO_1(x_1)\cdots \cO_4(x_4)\> &\sim - \half\sum_{\l,a,b} \oint_{\Gamma'} dJ \int_{\frac d 2 - i\oo}^{\frac d 2 + i\oo} \frac{d\De}{2\pi i} \frac{C_{ab}(\De,J,\l)}{1-e^{-2\pi i J}} \nn\\
&\qquad\qquad\qquad \x \tsym_2^{-1} \tsym_4^{-1} \frac{(\tsym_2\<\cO_1\cO_2\wL[\cO^\dagger]\>^{(a)})(\tsym_4\<\cO_3\cO_4\wL[\cO]\>^{(b)})}{\<\wL[\cO]\wL[\cO^\dagger]\>}.
\label{eq:doubleintegralrepgeneral}
\ee
Here, $C_{ab}(\De,J,\l)$ is the unique analytic continuation of $C_{ab}(\De,\rho)$ such that $\frac{C_{ab}(\De,J,\l)}{1-e^{-2\pi i J}}e^{-i\th J}$  is bounded for large $J$ in the right-half plane and $\th\in(0,\pi)$. The weight $J$ is the length of the first row of the Young diagram of $\rho$, and $\l$ represents the remaining weights of $\rho$, as discussed in section~\ref{sec:reptheoryreview}. The indices $a,b$ run over three-point structures.

As before, it is straightforward to argue that  (\ref{eq:doubleintegralrepgeneral}) is the only possibility consistent with the scalar case and with weight-shifting operators. It would be interesting to verify it more directly, and in general to characterize all monodromies of blocks in terms of the integral transforms in section~\ref{sec:weylandintegral}. Note that (\ref{eq:doubleintegralrepgeneral}) displays a beautiful duality with the generalized Lorentzian inversion formula (\ref{eq:obviousgeneralization}).

We can try to interpret (\ref{eq:growinglight}) as a contribution to the non-vacuum OPE of $\f_1\f_2$ in the following way. We construct light-ray operators as an integral of the form~\eqref{eq:lightraykernelschematic}, which together with conformal symmetry implies that we should be able to write, schematically,
\be\label{eq:nonvacuumOPE}
	\f_1\f_2 = \int d\nu\,  \cB_{\nu, j(\nu)}[\mathbb{O}_{0,j(\nu)}] +\text{other contributions}.
\ee
Here $\cB$ is a kind of OPE kernel which is fixed by conformal symmetry, and the equation should be interpreted in an operator sense. The representation~\eqref{eq:growinglight} suggests that~\eqref{eq:nonvacuumOPE} is a good version of the OPE in non-vacuum states, with the first term giving the only possibly-growing contribution in the Regge limit.

The ``other contributions" can perhaps be understood by studying the terms that we ignored above, coming form $\mathcal{H}_{\De,J}$ and part of $\mathcal{F}_{\De,J}^\circlearrowleft$. We expect that they can be understood more systematically using harmonic analysis on the Lorentzian conformal group $\tl \SO(d,2)$. (We hope to address this in future work.) In a finite-$N$ CFT, the correlator saturates in the Regge limit --- i.e.\ it eventually stops growing. Thus, the details of these terms will presumably be important for determining the actual behavior of the correlator in the Regge limit.\footnote{We thank Sasha Zhiboedov for discussions on this point.}

\section{Positivity and the ANEC}
\label{sec:positivityandtheanec}

The average null energy condition (ANEC) states that $\cE=\wL[T]$ is a positive-semidefinite operator. The ANEC was proven in \cite{Faulkner:2016mzt} using information theory and in \cite{Hartman:2016lgu} using causality. The causality-based proof \cite{Hartman:2016lgu} proceeds by isolating the contribution of $\cE$ in a correlation function and using Rindler positivity to show that the contribution is positive. Isolating $\cE$ requires using the OPE outside its na\"ive regime of validity. However, the authors of \cite{Hartman:2016lgu} give an argument that one can still trust the leading term in the OPE in an asymptotic expansion in the lightcone limit.

From our work in section~\ref{sec:lightray}, we now have an alternative construction of $\cE$ as a special case of a light-ray operator. Using this construction, we can avoid asymptotic expansions and any technical issues associated with using the OPE outside its regime of validity. Beyond technical convenience, our approach gives extra flexibility. The authors of \cite{Hartman:2016lgu} also prove a higher-spin version of the ANEC:
\be
\label{eq:higherspinanec}
\cE_J \equiv \wL[X_J]\geq 0,\qquad (J=2,4,\dots),
\ee
where $X_J$ is the lowest-dimension operator with spin $J$.\footnote{More precisely, $X_J$ can be the lowest-dimension operator with spin $J$ in any OPE of the form $\cO^\dag \x \cO$.}${}^{,}$\footnote{The higher-spin version of the ANEC was first discussed in \cite{Komargodski:2016gci}, where it was also proven for sufficiently high spin.}${}^{,}$\footnote{The proof of the higher-spin ANEC in \cite{Hartman:2016lgu} relies on some assumptions about subleading terms when the OPE is used as an asymptotic expansion outside of its regime of convergence. We thank Tom Hartman for discussion on this point.} Our construction lets us generalize this statement to
\be
\label{eq:lightraypositivity}
\cE_J \geq 0,\qquad (J\in \R_{\geq J_\mathrm{min}}),
\ee
where $J_\mathrm{min} \leq 1$ is the smallest value of $J$ for which the Lorentzian inversion formula holds \cite{Caron-Huot:2017vep}. Here, $\cE_J(x,z)$ denotes the light-ray operator with dimension and spin $(1-J,1-\De)$, where $\De,J$ are real and $\De$ is minimal. This result follows by writing a sum rule for {\it all\/} light-ray operators, and simply observing that it is positive by Rindler positivity when $(\De,J)$ satisfy the above conditions. When $J$ is an integer, (\ref{eq:lightraypositivity}) reduces to (\ref{eq:higherspinanec}). However, when $J$ is not an integer, (\ref{eq:lightraypositivity}) is a new condition. 

A possible connection between Lorentzian inversion formulae and the ANEC was first suggested by Caron-Huot using a toy dispersion relation \cite{Caron-Huot:2017vep}. In this section, we are simply making the connection more precise.

\subsection{Rindler positivity}
\label{sec:rindlerpositivity}

Rindler positivity is a key ingredient in the causality-based proof of the ANEC \cite{Hartman:2016lgu}, so let us review it. Given $x=(t,y,\vec x)\in \R^{d+1,1}$, define the Rindler reflection
\be
\bar x &= \overline{(t,y,\vec x)} = (-t^*,-y^*,\vec x).
\ee
Rindler conjugation defined in (\ref{eq:modularconjugation}) and (\ref{eq:defofbar}) maps an operator $\cO$ in the right Rindler wedge to an operator $\bar \cO$ in the left Rindler wedge. For traceless-symmetric tensors, it is given by
\be
\bar{\cO(x,z)} &= \cO^\dag(\bar x, \bar z).
\ee

The statement of Rindler positivity is that
\be
\label{eq:rindlerpositivity}
\<\Omega|\bar{\cO}_1\cdots \bar{\cO}_n \cO_1 \cdots \cO_n|\Omega\> &\geq 0,
\ee
where $\cO_i$ are restricted to the right Rindler wedge
\be
\mathcal{W}_R &= \{(u,v,\vec{x})\ :\ uv>0,\ \mathop{\mathrm{arg}} v\in (-\tfrac \pi 2,\tfrac \pi 2),\ \vec{x}\in \R^{d-2}\}.
\ee
(Here, we use lightcone coordinates $u=y-t, v= y+t$.)

To establish (\ref{eq:rindlerpositivity}) for general causal configurations of the $\cO_i$, \cite{Casini:2010bf} appeals to Tomita-Takesaki theory. However, this is not necessary as argued in \cite{Hartman:2016lgu}. We can summarize their argument as follows. Because the operators $\cO_1\cdots \cO_n$ act on the vacuum, we can perform the OPE to replace
\be
\cO_1\cdots \cO_n|\Omega\> &= \sum_\cO C(x_i,x,\ptl_x) \cO(x)|\Omega\>,
\ee
where $C(x_i,x,\ptl_x)$ is a differential operator. We are free to choose $x$ to be any point in $\mathcal{W}_R$ (we cannot choose $x$ to be timelike from the $\bar x_i$). Truncating the sum, we approximate the right hand side by a local operator. The expectation value (\ref{eq:rindlerpositivity}) then becomes a Rindler-reflection symmetric two-point function. Positivity of this two-point function is a consequence of reflection-positivity, since the two points are spacelike-separated.

\subsection{The continuous-spin ANEC}

Following \cite{Hartman:2016lgu}, we will prove 
\be
\label{eq:thingweprovepositivityof}
i\<\Omega|\bar V \mathcal{E}'_J V|\Omega\>\geq 0,
\ee
where $V$ is any local operator located at a point $x_V=(0,\de,\mathbf{0})\in \mathcal{W}_R$ in the right Rindler wedge. Here, $\cE'_J$ is a continuous-spin light-ray operator of spin-$J$ with lowest twist, oriented along the null direction $z=(1,1,\vec{0})$. As argued in \cite{Hartman:2016lgu}, it follows that $\cE'_J$ satisfies the positivity condition
\be\label{eq:thingwewantpositivitybyrotation}
e^{i\frac \pi 2 J}\<\Omega|(R\.V)^\dag(t=-i\de)\, \mathcal{E}'_J\, (R\.V)(t=i\de)|\Omega\>\geq 0,
\ee
where $R$ rotates by $\frac \pi 2$ in the Euclidean $y\tau$-plane, with $\tau=it$, and $R\.V$ represents the action of $R$ on $V$ at the origin. States of the form $(R\.V)(t=i\de)|\Omega\>\in \mathcal{H}$ are dense in $\mathcal{H}$, by the state-operator correspondence. Thus, 
\be
\cE_J &\equiv e^{i\frac{\pi}{2} J}\cE_J'
\ee
is a positive operator.

Let $\f$ be a real scalar primary. We will produce $\cE_J'$ by smearing two $\f$ insertions. For simplicity, we will not attempt to divide by OPE coefficients in the $\f\x\f$ OPE. Thus,
when $J$ is an integer, we will actually have $\cE_J = f_{\f\f X_J}\wL[X_J]$, where $X_J$ is the lowest-twist operator of spin-$J$ in the $\f\x\f$ OPE and $f_{\f\f X_J}$ is an OPE coefficient. In particular $\cE_2$ in this section differs from the usual ANEC operator by a factor of $f_{\f\f T}$.

From (\ref{eq:ODeJdoublecommutatorformula}), we have
\be
\label{eq:integralforbbO}
i\<\bar V \mathbb{O}^+_{\De,J}(-\oo z,z) V\> &= 
 \int_{\substack{-\oo z< x_1<\bar x_V \\ x_V < x_2 < \oo z}} d^d x_1 d^d x_2 \<\Omega|[\bar {V},\f(x_1)][\f(x_2),V]|\Omega\>K_{\De,J}(x_1,x_2),  \nn\\
K_{\De,J}(x_1,x_2) &= \frac{2i\mu(\De,J) S_E(\f\f[\tl \cO])}{(\<\f\f\tl \cO\>,\<\tl \f\tl \f \cO\>)_E}\<0|\tl \f(x_1) \wL[\cO](-\oo z,z)\tl \f(x_2)|0\>.
\ee
We have included a factor of $2$ from the term $1\leftrightarrow 2$ in (\ref{eq:ODeJdoublecommutatorformula}), and we should interpret the prefactors in $K_{\De,J}$ as being analytically continued from even $J$. The matrix elements of $\mathcal{E}_J$ are defined by
\be
i\<\Omega|\bar V \cE'_J V|\Omega\> &= \Res_{\De=\De_*} i\<\bar V \mathbb{O}^+_{\De,J}(-\oo z,z) V\>,
\ee
where $\De_*$ is the location of the pole in $\mathbb{O}^+_{\De,J}$ with minimal real $\De$. The expression (\ref{eq:integralforbbO}) is guaranteed to be convergent for $\De\in \frac d 2 + i\R$ on the principal series. In particular it converges at $\De=\frac d 2$. Our strategy will be to show that $i\<\bar V \mathbb{O}^+_{\De,J}(x,z) V\>$ is strictly negative as we move rightward along the real axis starting from $\De=\frac d 2$ (figure~\ref{fig:positiveresidue}). It follows that the first pole we encounter must have positive residue.\footnote{Requiring negativity for all $\De$ between $\frac d 2$ and the first pole is stronger than necessary. It should be possible to improve our proof by establishing negativity only for $\De$ sufficiently close to the first pole.} 

\begin{figure}[ht!]
\centering
\begin{tikzpicture}
\draw[line width=0.7,blue] (0,-0.3) to[out=10,in=100] (3.4,-2);
\draw[line width=0.7,blue] (3.9,2) to[out=-80,in=170] (5,0.5);
\draw[opacity=0.5,->] (0,0) -- (-2,0);
\draw[opacity=0.5,->] (0,0) -- (5,0);
\draw[opacity=0.5,->] (0,0) -- (0,-2);
\draw[opacity=0.5,->] (0,0) -- (0,2);
\draw[dashed] (3.65,-2) -- (3.65,2);
\node[below] at (5,0) {$\Delta$};
\node[above] at (0,2) {$i\<\bar V \mathbb{O}^+_{\De,J} V\>$};
\node[left] at (0,-0.3) {$\De=\tfrac d 2$};
\node[below] at (1.52,-0.5) {$\textrm{negative}$};
\node[above] at (4,2) {positive residue};
\end{tikzpicture}
\caption{We show that $i\<\bar V \mathbb{O}^+_{\De,J} V\>$ is negative for $\De$ between $\tfrac d 2$ (the principal series) and the first pole. It follows that the first pole has positive residue.}
\label{fig:positiveresidue}
\end{figure}
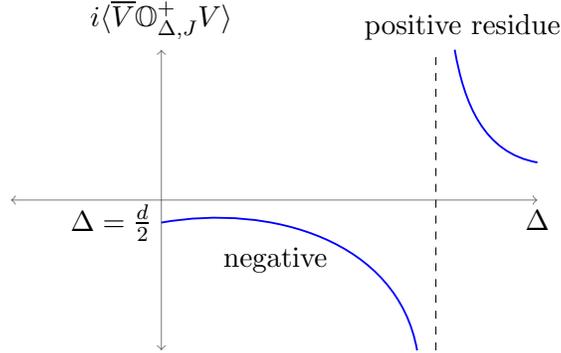

The kernel $K_{\De,J}$ is given by
\be
K_{\De,J}(x_1,x_2) &= \frac{2^J i\mu(\De,J) S_E(\f\f[\tl \cO])L(\tl\f\tl\f[\cO])}{(\<\f\f\tl \cO\>,\<\tl \f\tl \f \cO\>)_E} 
\frac{ (z\.x_2 x_1^2 - z\.x_1 x_2^2)^{1-\De}}{x_{12}^{2\tl \De_\f - \De + J} (-z\.x_1)^{\frac{2-J-\De}{2}}(z\.x_2)^{\frac{2-J-\De}{2}}}.
\ee
We would like to show that $K_{\De,J}(x_1,x_2)$ is a positive-definite kernel when integrated against Rindler-symmetric configurations of $x_1,x_2$. Note that this is a stronger condition than $K_{\De,J}(\bar x, x)\geq 0$ point-wise. 

Consider first an inversion $x\mto x'= \frac{x}{x^2}$ that places $\cE_J$ at null infinity. In this conformal frame, the three-point structure $\<0|\tl \f\wL[\cO] \tl \f|0\>$ becomes translationally invariant. Thus our kernel should be a translationally-invariant function of $x_1',x_2'$, times some scale-factors that depend independently on $x_1,x_2$. Indeed, it is easy to check
\be
&\frac{
		\p{z\.x_{2}\, x_{1}^2 - z\.x_{1}\, x_{2}^2}^{1-\De}
	}{x_{12}^{2\tl\De_\f-\De+J}(-z\.x_1)^{\frac{2-J-\De}{2}}(z\.x_2)^{\frac{2-J-\De}{2}}}
\nn\\
&\qquad= 
x_1'^{2\tl \De_\f} x_2'^{2\tl \De_\f}
\p{-z\.x_1'}^{\frac{J+\De-2}{2}}
\p{z\.x_2'}^{\frac{J+\De-2}{2}}
\frac{
	 \p{z\.(x_2'-x_1')}^{1-\De}
	}{(x_2'-x_1')^{2\tl\De_\f-\De+J}}.
\ee

Because our kernel originates from the light-transform of a three-point structure, it inherits Rindler positivity properties. These are made clear by going to a kind of complexified Fourier-space in the inverted coordinates $x_i'$. Define lightcone coordinates $x^-=u=y-t$ and $x^+=v=y+t$. One can prove the following identity which is valid in the right Rindler wedge $u,v>0$:
\be
\frac{u^{1-\De}}{(uv+\vec x^2)^{\frac {2\tl\De_\f-\De+J} 2}} &=
\frac{2^{2-2\tl\De_\f-J}}{
\pi^{\frac{d-2}{2}}
\G(\frac {2\tl\De_\f-\De+J} 2)
\G(\frac{2\tl\De_\f+J+\De-d}{2})
}\nn\\
&\qquad \x
\int_{k>0} d^d k\, (-k^2)^{\frac{2\tl\De_\f+\De+J-d-2}{2}}(-k^-)^{1-\De} f_k(x) \nn\\
f_k(x) &\equiv e^{-\frac 1 2 k^+ u+\frac 1 2 k^- v + i \vec k \. \vec x}.
\label{eq:rindlermomentum}
\ee
Here, the notation $k>0$ indicates that $k$ is restricted to the interior of the forward null cone. This ensures that $k^+ u$ is positive and $k^- v$ is negative, so that the integral is convergent. The complexified plane wave $f_k(x)$ is designed to satisfy
\be
f_k(x)^* &= f_k(-\bar x).
\ee

Putting everything together, we find
\be
K_{\De,J}(x_1,x_2) &=  \mathcal{K}_{\De,J} \int_{k>0} d^d k\, (-k^2)^{\frac{2\tl\De_\f+\De+J-d-2}{2}}(-k^-)^{1-\De} \psi_k(x_2)(\psi_k(\bar x_1))^*,
\ee
where
\be
\psi_k(x) &\equiv \frac{1}{x^{2\tl\De_\f}} \p{\frac{u}{x^2}}^{\frac{J+\De-2}{2}} \exp\p{\frac{-\frac 1 2 k^+ u+\frac 1 2 k^- v + i \vec k \. \vec x}{x^2}}, \\
\mathcal{K}_{\De,J} &= \frac{2^{1-d-\De+J+2\De_\f} \G(J+\tfrac d 2) \Gamma(\tfrac{J+d+1-\De}{2})\G(\De-1)}{\pi^{\frac{3(d-1)}{2}}\G(J+1)\G(\tfrac{d+J-\De}{2})\G(\De-\tfrac d 2)\G(\tfrac{J+\De+d-2\De_\f}{2})\G(\tfrac{J-\De+2d-2\De_\f}{2})}.
\label{eq:Kcoefficient}
\ee
Consequently, we can write
\be
i\<\bar V \mathbb{O}^+_{\De,J}(-\oo z,z) V\> 
&= 
-\mathcal{K}_{\De,J} \int_{k>0}  d^d k\, (-k^2)^{\frac{2\tl\De_\f+\De+J-d-2}{2}}(-k^-)^{1-\De}
\<\bar{\Theta_k}\Theta_k\>,\nn\\
\Theta_k &= \int_{x_V<x<\oo z} d^dx\, \psi_k(x)[\f(x),V] .
\label{eq:finalpositiveresult}
\ee
The coefficient $\mathcal{K}_{\De,J}$ is positive whenever 
\be
\label{eq:firstconditiononDelta}
\De-J < d,\quad\textrm{and}\quad
\De-J < 2(d-\De_\f).
\ee
This is also the condition for $K_{\De,J}(x_1,x_2)$ to be integrable without an $i\e$ prescription. When these conditions hold, the minus sign in (\ref{eq:finalpositiveresult}) ensures that the first nontrivial residue in $\De$ is positive. This proves the ANEC and its continuous spin generalization in this case.

Let us understand the condition $\De-J<2(d-\De_\f)$ in more detail. When this inequality fails, two things happen. Firstly, the factor
\be
\G\p{\frac{J-\De+2d-2\De_\f}{2}}
\ee
in $\cK_{\De,J}$ may no longer be positive. Secondly, the kernel $K_{\De,J}(x_1,x_2)$ develops a naively non-integrable singularity along the lightcone. To make sense of this singularity, one must take into account the appropriate $i\e$ prescription for $x_1,x_2$. This turns $K_{\De,J}(x_1,x_2)$ into a non-sign-definite distribution, and then we cannot conclude anything about the sign of (\ref{eq:finalpositiveresult}).
To get the strongest result, we should pick $\f$ to be the lowest-dimension scalar in the theory. The spin-2 ANEC then follows if $\De_\f \leq \frac{d+2}{2}$. Large-spin perturbation theory \cite{Fitzpatrick:2012yx,Komargodski:2012ek,Fitzpatrick:2015qma,Li:2015itl,Li:2015rfa,Alday:2015eya,Alday:2015ota,Alday:2015ewa,Simmons-Duffin:2016wlq,Alday:2016njk,Hofman:2016awc} and Nachtmann's theorem \cite{Nachtmann:1973mr,Komargodski:2012ek,Hartman:2015lfa,Costa:2017twz} imply that the minimum twist $\De-J$ at each spin $J$ is always less than $2\De_\f$. Thus, we can ensure $\De-J<2(d-\De_\f)$ if $\De_\f\leq \frac d 2$. This condition is also sufficient to ensure $\De-J<d$.
 Thus, the continuous-spin ANEC follows if $\De_\f \leq \frac d 2$.

\subsection{Example: Mean Field Theory}

The continuous spin version of ANEC is easy to check in MFT. (This is essentially the same calculation as in~\cite{Hartman:2016lgu,Klinkhammer:1991ki}.) We have already computed the leading twist operators $\cE_J'=\mathbb{O}_{0,J}^+$ in section~\ref{sec:lightraymft}. In this section we need the straightforward generalization of~\eqref{eq:mftlightray+} to the case of identical operators,
\be\label{eq:identicallightray}
\cE_J'=\mathbb{O}_{0,J}^+=\frac{i}{2\pi}\int ds dt (t+i\e)^{-1-J}:\!\f\p{\frac{s+t}{2}z}\f\p{\frac{s-t}{2}z}\!:,
\ee
with a future-directed null $z$. We can explicitly compute these operators in terms of creation-annihilation operators using 
\be\label{eq:mftfourier}
	\f(x)=N_{\De_\f}^{-\frac{1}{2}}\int_{p>0} \frac{d^dp}{(2\pi)^d} |p|^{\De_\f-\frac{d}{2}}\p{a^\dagger(p)e^{-ipx}+a(p)e^{ipx}},
\ee
where $\De_\f$ is the scaling dimension of $\f$ and
\be
	N_\De=\frac{2^{2\De-1}\pi^{\frac{d-2}{2}}}{(2\pi)^d}\Gamma(\De)\Gamma(\De-\tfrac{d-2}{2})>0.
\ee
The creation-annihilation operators satisfy the commutation relation
\be
	[a(p),a^\dagger(q)]=(2\pi)^d \delta^d(p-q).
\ee
Plugging~\eqref{eq:mftfourier} into~\eqref{eq:identicallightray}, we find 
\be
	\cE_J'=\frac{iN_{\De_\f}^{-\half}}{2\pi}\int_{p,q>0}\frac{d^dp}{(2\pi)^d}\frac{d^dq}{(2\pi)^d}\int ds dt (t+i\e)^{-1-J}&\left[a^\dagger(p)a^\dagger(q)e^{-\tfrac i 2 (p+q)\.z s-\tfrac i 2(p-q)\.z t}\right.\nn\\
	& +a(p)a(q)e^{\tfrac i 2(p+q)\.z s+\tfrac i 2(p-q)\.z t}\nn\\
	& +a^\dagger(p)a(q)e^{-\tfrac i 2(p-q)\.zs-\tfrac i 2(p+q)\.zt}\nn\\
	& \left.+a^\dagger(q)a(p)e^{\tfrac i 2(p-q)\.zs+\tfrac i 2(p+q)\.zt}\right].
\ee
The first two terms under the integral vanish because $s$-integration restricts $(p+q)\.z=0$, which is impossible since both $p$ and $q$ are in the forward null cone. This is consistent with the requirement that $\mathbb{O}^+_{0,J}$ should annihilate both past and future vacua. Since $(p+q)\.z<0$ we can close the $t$-contour in the upper half-plane for the third term (for $J>0$) and thus it also vanishes. We are left with the last term, where we can close the $t$-contour in the lower half-plane. Specifically, we get for $s$ and $t$ integrals
\be
	\int ds dt (t+i\e)^{-1-J}e^{\tfrac i 2(p-q)\.zs+\tfrac i 2(p+q)\.zt}
	&=\frac{2\pi^2\delta((p-q)\.z)e^{-\frac{i\pi}{2}(J+1)}}{\Gamma(J+1)}\p{\frac{-(p+q)\.z}{2}}^J.
\ee
Combining with the rest of the expression we find, using the lightcone coordinates $p=z p_v/2-z' p_u/2+\bp$ with $z\.z'=2$,
\be
	\cE'_J=\frac{\pi e^{-\frac{i\pi}{2}J}N_{\De_\f}^{-\half}}{\Gamma(J+1)}\int_0^\infty dp_u p_u^{J} A^\dagger (p_u)A(p_u),
\ee
where 
\be
	A(p_u)\equiv\int_{|\bp|<p_u p_v}\frac{dp_v d^{d-2}\bp}{(2\pi)^d} a(p_u,p_v,\bp).
\ee
For $\cE_J=e^{\frac{i\pi}{2}J}\cE_J'$ we then obtain
\be
	\cE_J=\frac{\pi N_{\De_\f}^{-\half}}{\Gamma(J+1)}\int_0^\infty dp_u p_u^{J} A^\dagger (p_u)A(p_u)\geq 0,
\ee
which is manifestly non-negative.

\subsection{Relaxing the conditions on $\De_\f$}

The conditions (\ref{eq:firstconditiononDelta}) are stronger than necessary because we have not assumed anything about the quantity that $K_{\De,J}(x_1,x_2)$ is integrated against. We can somewhat relax them as follows. Note that poles in $i\<\bar V \mathbb{O}_{\De,J}(-\oo z,z) V\>$ come from the region where $x_1,x_2$ are near the lightray $\R z$. In this region, we expect the correlator $\<\Omega|[\bar V,\f(x_1)][\f(x_2),V]|\Omega\>$ to depend most strongly on the positions $v_1,v_2$ of the operators along the light-ray and simple invariants built out of the relative position $x_1-x_2$, since $V,\bar V$ are far from the light ray.

To be more precise, consider the integral over $x_1,x_2$ in the coordinates of section~\ref{sec:lightraymft},
\be
&\frac{2^Ji\mu(\De,J) S_E(\f\f[\tl \cO])L(\tl\f\tl\f[\cO])}{(\<\f\f\tl \cO\>,\<\tl \f\tl \f \cO\>)_E}\nn\\
&\x\frac{1}{4}\int \frac {dr}{r} dv_1dv_2 d\a  d^{d-2}\bw_1d^{d-2}\bw_2\frac{2^{J-1}
		v_{21}^{-1-\frac{\De-\De_1-\De_2+J}{2}}\p{\a(1-\a)+(1-\a)\bw_1^2+\a\bw_2^2}^{1-\De}
	}{
		(1+\bw_{12}^2)^{\frac{\tl\De_1+\tl\De_2+J-\De}{2}}
		\a^{\frac{\tl\De_1-\tl\De_2+2-\De-J}{2}}
		(1-\a)^{\frac{\tl\De_2-\tl\De_1+2-\De-J}{2}}
	}\nn\\
&\qquad\times
	 r^{-\frac{\De-\De_1-\De_2-J}{2}}\phi(-r\a,v_1,(r v_{21})^{\half}\bw_1)\phi(r(1-\a),v_2,(r v_{21})^{\half}\bw_2).
\ee
The most important quantities built from $x_{12}$ are
\be
v_{21},\qquad x_{12}^2 = r v_{21}(1+\bw_-^2).
\ee
Let us make the approximation that, to leading order in $r$, the correlator $\<[\bar V, \f][\f, V]\>$ depends only on $v_1,v_2$ and $x_{12}^2$. That is, let us replace
\be
\phi(-r\a,v_1,(r v_{21})^{\half}\bw_1)\phi(r(1-\a),v_2,(r v_{21})^{\half}\bw_2) &\sim \f\p{-\frac{r}{2}(1+\bw_-^2),v_1,0} \f\p{\frac r 2 (1+\bw_-^2),v_2,0}.
\ee
This approximation would be valid, for example, if we could perform the OPE $\f(x_1)\x\f(x_2)$, since the leading terms in the OPE depend only on $v_{21}$ and $x_{12}^2$. However, our assumption is weaker than assuming that we can perform the OPE.

After rescaling $r\to r/(1+\bw_-^2)$, we can now perform the integrals over $\a$ and $\bw_\pm$, following the methods in appendix~\ref{app:Rcoefficient}. The result is
\be
&i\<\bar V \mathbb{O}^+_{\De,J}(-\oo z,z) V\>\nn\\
&\sim \frac{2^{d+J-4}}{\pi} \int \frac{dr}{r} dv_1 dv_2\, r^{\frac{2\De_\f -\De+J}{2}} v_{21}^{\frac{2\De_\f-\De-J-2}{2}} \<\Omega|[\bar V, \f(-\tfrac r 2,v_1,0)][\f(\tfrac r 2,v_2,0),V]|\Omega\> \nn\\
&= -\frac{2^{d+J-4}}{\pi\G\p{\tfrac{\De+J+2-2\De_\f}{2}}} \int \frac{dr}{r} r^{\frac{2\De_\f -\De+J}{2}}\int_0^\oo dk\, k^{\tfrac{\De+J-2\De_\f}{2}} \<\Omega|\bar{\Theta_k(r)}\Theta_k(r)|\Omega\>,
\label{eq:approximationforfirstresidue}
\ee
where
\be
\Theta_k(r) &\equiv \int_0^\oo dv\, e^{-k v} [\f(\tfrac r 2,v,0),V].
\ee
The integrand in (\ref{eq:approximationforfirstresidue}) should be correct to leading order at small $r$, which means the leading residue of $i\<\bar V \mathbb{O}_{\De,J}(-\oo z,z) V\>$ should be correct. This residue is manifestly positive whenever
\be
\De_\f < \frac{\De+J+2}{2}.
\ee
For example, this proves the continuous spin ANEC for all $J\geq 2$ if the lowest-dimension scalar in the theory has dimension $\De_\f \leq \frac{d+4}{2}$.

\section{Discussion}
\label{sec:discussion}

We have argued that every CFT contains light-ray operators that provide an analytic continuation in spin of the light-transforms of local operators. This gives a physical interpretation of Caron-Huot's Lorentzian inversion formula \cite{Caron-Huot:2017vep}. Our construction involves smearing two primary operators $\cO_1,\cO_2$ against a kernel to produce an object $\mathbb{O}_{\De,J}$, and then taking residues in $\De$ to localize the operators along a null ray. We have not shown rigorously that the integral localizes to a null ray (as opposed to a lightcone). However, we expect this is true based on the example of MFT and the fact that it's true for integer $J$. More generally, we expect that any singularity in the $(\De,J)$-plane should lead to a light-ray operator. (For instance, one could take the discontinuity across a branch cut instead of a residue.) It would be nice to understand better the structure of the $(\De,J)$-plane in general CFTs. We know that for nonnegative integer $J$, the object $\mathbb{O}_{\De,J}$ has simple poles in $\De$ at the locations of local operator dimensions. However, we do not know how it behaves for general complex $J$.\footnote{In planar $\cN=4$ SYM, beautiful pictures of the $(\De,J)$-plane have been constructed using integrability \cite{Alfimov:2014bwa,Gromov:2014caa,Gromov:2015wca,Alfimov:2018cms}.} We also have not addressed the question of whether different operators $\cO_1,\cO_2$ produce different light-ray operators. We expect that in a nonperturbative theory, the same set of light-ray operators should appear in every product $\cO_i\cO_j$, if allowed by symmetry. It would be nice to show this rigorously.

Light-ray operators have the advantage over local operators that they fit into a more rigid structure, due to analyticity in spin. However, unlike local operators, they are not included in the Hilbert space of the CFT on $S^{d-1}$ because they annihilate the vacuum. One way to realize them as states is to double the Hilbert space (with time running forwards in one copy and backwards in the other). The $\mathbb{O}_{i,J}$ then become states in the doubled Hilbert space.\footnote{$\mathbb{O}_{i,J}$ itself is a somewhat violent state. However, we can regularize it by acting on the thermofield double state with some temperature $\b$. We thank Alexei Kitaev for this suggestion.} A general message is that the doubled Hilbert space contains interesting structure that is not visible in a single copy, and it would be interesting to explore this idea further.

We have seen that light-ray operators enter the Regge limit of CFT four-point functions. It would be nice to understand the actual spectrum and OPE coefficients of continuous-spin light-ray operators in important physical theories (e.g.\ the 3d Ising model, $\cN=4$ SYM, and more), in order to determine what the Regge limit actually looks like in those theories.\footnote{Besides planar $\cN=4$ SYM, another CFT where the Regge limit of a four-point function has been computed is the 2d (supersymmetric) SYK model \cite{Murugan:2017eto}.}  Such operators have been explored in weakly-coupled gauge theories (see e.g.\ \cite{BALITSKY1989541,Braun:2003rp,Caron-Huot:2013fea,Balitsky:2013npa,Balitsky:2015oux,Balitsky:2015tca}), and it would be interesting to study other perturbative examples. For example, can one write a continuous-spin generalization of the Hamiltonian of the Wilson-Fisher theory \cite{Liendo:2017wsn}?

Another important question is the extent to which light-ray operators  form a complete basis for describing the Regge regime. Indeed, in our discussion in section~\ref{sec:conformalregge}, we ignored certain non-growing contributions in the Regge limit. It would be interesting to include them and give them operator interpretations. Perhaps lightcone operators or other types of nonlocal operators play a role. This question is also interesting in 1 dimension, where the analog of the Regge regime is the so-called ``chaos regime" of a four-point function.

In any spacetime dimension, we can ask: is there a complete basis of nonlocal operators transforming as primaries in Lorentzian signature? Identifying a complete basis could help in developing a generalization of the OPE that is valid in non-vacuum states. (The usual OPE still works as an asymptotic expansion in non-vacuum states, but we would like to find a convergent expansion.) Such a generalization would be a powerful tool for studying Lorentzian physics.

Relatedly, it would be interesting to study OPEs of light-ray operators with each other, especially the ANEC operator $\cE=\wL[T]$.\footnote{We thank Sasha Zhiboedov for discussion on this point.} In ``conformal collider physics" \cite{Hofman:2008ar} one considers ANEC operators starting at the same point $\cE(x,z_1)\cE(x,z_2)$ (usually taken to be spatial infinity $x=\oo$, so that the light-rays lie along future null infinity), and it is natural to study the limit where their polarization vectors coincide $z_1\to z_2$. This question was explored in \cite{Hofman:2008ar}, where it was argued that the leading term in the $\cE\x\cE$ OPE in $\cN=4$ SYM is a particular spin-3 light-ray operator that can be described in bulk string theory using the Pomeron vertex operator of \cite{Brower:2006ea}. It would be nice to determine a systematic expansion for this limit in a general CFT. Such an expansion could be useful for computing energy correlators and studying jet substructure in CFTs. Light-ray operators could also be useful for understanding aspects of deep inelastic scattering and PDFs.\footnote{We thank Juan Maldacena for this suggestion.}

In this work, inspired by Caron-Huot's beautiful result \cite{Caron-Huot:2017vep}, we have been led to an unusual hybrid of Euclidean and Lorentzian harmonic analysis, i.e.\ harmonic analysis with respect to the groups $\SO(d+1,1)$ and $\tl\SO(d,2)$. However, many of the resulting formulae suggest that it might be fruitful to start with $\tl \SO(d,2)$ from the beginning. For example, after applying the Sommerfeld-Watson trick, Regge correlators are written as an integral over $\De$ and $J$, which is suggestive of an expansion in Lorentzian principal series representations (this observation was also made recently in \cite{Raben:2018sjl}).
It will be important to develop this area further and explore its implications for many of the above questions.\footnote{We thank Abhijit Gadde for emphasizing this idea.}

The intrinsically Lorentzian integral transforms introduced in section~\ref{sec:weylandintegral} have been a key computational tool in this work. These transforms have a natural group-theoretic origin as Knapp-Stein intertwining operators for $\SO(d,2)$, but they can also be applied to representations of $\tl \SO(d,2)$. In this work, we have focused primarily on the light-transform, but the remaining transforms may also have interesting applications. For example, it would be interesting to compute the full monodromy matrix for spinning conformal blocks in terms of intertwining operators, generalizing (\ref{eq:growinglight}). Steps in this direction have already been taken in \cite{Isachenkov:2017qgn}.

One concrete result of this work is a generalization of Caron-Huot's Lorentzian inversion formula to four-point correlators of operators in arbitrary Lorentz representations. Caron-Huot's original formula has already proven useful in a variety of contexts \cite{Alday:2017gde,Dey:2017fab,Henriksson:2017eej,vanLoon:2017xlq,Turiaci:2018nua,Alday:2017vkk,Alday:2017zzv},\footnote{See also \cite{Lemos:2017vnx,Iliesiu:2018fao} for applications of Lorentzian inversion formulae to quantities other than vacuum four-point functions. It would be interesting to understand whether light-ray operators offer a useful perspective on these works.} and we hope that our generalization will be similarly useful. For example, one might try to determine all four-point functions in theories with weakly-broken higher spin symmetry, generalizing the results of \cite{Turiaci:2018nua}. It would also be interesting to study inversion formulae in the context of stress-tensor four-point functions, perhaps making contact with the sum rules in \cite{Cordova:2017zej,Gillioz:2018kwh}.

An important application of Lorentzian inversion formulae is to the lightcone bootstrap and large-spin perturbation theory \cite{Fitzpatrick:2012yx,Komargodski:2012ek,Fitzpatrick:2015qma,Li:2015itl,Li:2015rfa,Alday:2015eya,Alday:2015ota,Alday:2015ewa,Simmons-Duffin:2016wlq,Alday:2016njk,Hofman:2016awc}. Lorentzian inversion formulae make it particularly simple to study OPE coefficients and anomalous dimensions of ``double-twist operators" \cite{Fitzpatrick:2012yx,Komargodski:2012ek} and averaged OPE data for ``multi-twist" operators (see e.g.\ \cite{Alday:2017vkk,Alday:2017zzv}). An important problem for the future is to disentangle individual multi-twist trajectories. It is likely that this will require studying crossing symmetry for higher-point functions. We hope that light-ray operators will offer a useful perspective on this problem.

Another result of this work is a new proof of the average null energy condition (ANEC), obtained by combining the causality-based proof of \cite{Hartman:2016lgu} with the idea of an inversion formula. Our proof has some technical advantages over \cite{Hartman:2016lgu}. For example, it does not use the OPE outside its regime of validity, and it also allows one to move away from the asymptotic lightcone limit. However, it also has disadvantages. In particular, our proof requires the CFT to contain a sufficiently low-dimension operator, and this condition is absent in \cite{Hartman:2016lgu}. It would be interesting to understand whether this condition can be relaxed further while still using an inversion formula. Another technical point that is worth clarifying is the role/necessity of Rindler positivity, as opposed to the more easily-established ``wedge reflection positivity" \cite{Casini:2010bf} or the traditional positivity of norms.

The ANEC has a growing list of interesting applications in conformal field theory \cite{Hofman:2008ar,Chowdhury:2017vel,Cordova:2017zej,Cordova:2017dhq,Meltzer:2017rtf,Casini:2017vbe,Casini:2017roe}. However its higher-spin generalizations \cite{Hartman:2016lgu} have been less well-explored. We have additionally proven that the ANEC holds for continuous spin --- i.e.\ on the entire leading Regge trajectory. It would be interesting to understand the implications of this result, for example in a holographic context. (See \cite{Afkhami-Jeddi:2017rmx} for recent work on shockwave operators, which are holographically dual to light-ray operators.)  It would also be interesting to understand the information-theoretic role of continuous-spin operators. How do they behave under modular flow? Can they appear in OPEs of entangling twist defects? The ANEC can be improved to the quantum null energy condition (QNEC) \cite{Bousso:2015mna,Bousso:2015wca}, which was recently proven in \cite{Balakrishnan:2017bjg} together with a higher integer spin generalization. Is there a continuous-spin version of the QNEC?

\section*{Acknowledgements}

We thank Clay C\'ordova, Thomas Dumitrescu, Abhijit Gadde, Luca Iliesiu, Daniel Jafferis, Alexei Kitaev, Murat Kolo\u{g}lu, Raghu Mahajan, Eric Perlmutter, Matt Strassler, and Aron Wall for helpful discussions. We thank Simon Caron-Huot, Tom Hartman, Denis Karateev, Juan Maldacena, Douglas Stanford, and Sasha Zhiboedov for  discussions and comments on the draft. DSD is supported by Simons Foundation grant 488657 (Simons Collaboration on the Nonperturbative Bootstrap). This work was supported by DOE grant DE-SC0011632 and the Walter Burke Institute for Theoretical Physics.

\newpage

\appendix

\section{Correlators and tensor structures with continuous spin}
\label{sec:contspincorrelators}

In this appendix we assume that there exists a continuous-spin operator $\mathbb{O}(x,z)$ and study its Wightman functions. Note that here we are concerned with physical correlators. In other parts of this paper we discuss the existence of continuous-spin conformal invariants for fixed causal relations between the operator insertions, which is a very different problem -- Wightman functions must be well defined for arbitrary causal relationships between points.

\subsection{Analyticity properties of Wightman functions}

Recall that Wightman functions of local operators are analytic in their arguments when the appropriate $i\e$ prescription is introduced. More precisely, consider a Wightman function of local operators (suppressing polarization vectors for simplicity)
\be\label{eq:localwightman}
\<\O|\cO_n(x_n)\cdots \cO_1(x_1)|\O\>,
\ee
and let us split each $x_k$ into its real and imaginary parts,
\be\label{eq:complexx}
x_k=y_k+i\z_k,\qquad y_k,\z_k\in \R^{d-1,1}.
\ee
The Wightman function~\eqref{eq:localwightman} is analytic in the following region~\cite{streater2016pct,Haag:1992hx} (see \cite{Hartman:2015lfa} for a nice review):\footnote{In fact, these functions are analytic in an even larger region~\cite{streater2016pct,Haag:1992hx}, but we do not study consequences of this extended analyticity in this work.}
\be
\label{eq:tubecondition}
\z_1 > \z_2 >\cdots > \z_n.
\ee
Here, the notation $p>q$ means that $p-q$ is timelike and future-pointing. We will refer to this analyticity property as \textit{positive-energy analyticity}.

Positive-energy analyticity can be derived in the following way. We first represent the Wightman function~\eqref{eq:localwightman} as a Fourier transform
\be\label{eq:localwightmanfourier}
\<\O|\cO_n(x_n)\cdots \cO_1(x_1)|\O\>=\int \frac{d^dp_1}{(2\pi)^d}\cdots \frac{d^dp_n}{(2\pi)^d}e^{-ip_1x_1\ldots -ip_nx_n}\<\O|\cO_n(p_n)\cdots \cO_1(p_1)|\O\>.
\ee
The existence of the Fourier transform follows from the Wightman temperedness axiom. 
The Heisenberg equation implies
\be
\label{eq:heisenberg}
[H,\cO_i(x_i)]=-i\frac{\ptl}{\ptl x_i^0}\cO_i(x_i)
\quad\implies\quad
[H,\cO_i(p_i)]= p^0_i \cO_i(p_i),
\ee
and thus
\be
H\cO_i(p_i)\cdots \cO_1(p_1)|\O\>=(p^0_1+\ldots+ p_i^0)\cO_i(p_i)\cdots \cO_1(p_1)|\O\>.
\ee
In physical theories, all states have positive energies. Furthermore, positivity should hold in any Lorentz frame. Thus, we conclude that whenever $\<\O|\cO_n(p_n)\cdots \cO_1(p_1)|\O\>$ is nonvanishing,
\be\label{eq:spectrum}
p_1+\ldots +p_i\geq 0\qquad (i=1,\dots,n).
\ee
Here, the notation $p\geq 0$ means that $p$ is timelike or null and future-pointing.
Note that the real part of the exponential factor in~\eqref{eq:localwightmanfourier} is given by
\be
\exp\p{\z_1\.p_1+\ldots \z_n\.p_n} &=\exp[(p_n+\ldots+ p_1)\.\z_n\nn\\
&\quad\quad\quad+(p_{n-1}+\ldots+ p_1)\.(\z_{n-1}-\z_{n})\nn\\
&\quad\quad\quad+(p_{n-2}+\ldots+ p_1)\.(\z_{n-2}-\z_{n-1})\nn\\
&\quad\quad\quad+\dots\nn\\
&\quad\quad\quad+p_1\.(\z_1-\z_2)],
\ee
where $\z_k=\mathrm{Im}(x_k)$.  By translation-invariance, the first term in the exponential $(p_n+\ldots+p_1)\.\z_n$ can be replaced with zero. Suppose that the $\z_k$ satisfy (\ref{eq:tubecondition}). Due to $\eqref{eq:spectrum}$, all other terms in the exponential are non-positive and serve to damp the integral~\eqref{eq:localwightmanfourier}. Thus, we can make sense of the Wightman function as an analytic function in this region.

The above discussion in no way depends on locality properties of $\cO_i$. The only information about $\cO_i$ that we needed was the Heisenberg equation~\eqref{eq:heisenberg}. This is of course also satisfied by continuous-spin primary operators $\mathbb{O}(x,z)$, because it is simply part of the definition of being primary. This means that positive-energy analyticity also holds for Wightman functions involving continuous-spin operators. In the main text we construct examples of continuous-spin operators for which positive-energy analyticity can be checked explicitly.

This clarifies the properties of $\mathbb{O}(x,z)$ with respect to $x$. However, $\mathbb{O}(x,z)$ is also a non-trivial function of $z$, and it is interesting to study analyticity in $z$. For this, assume that we have already adopted the appropriate $i\e$-prescription. By using Lorentz and translation symmetries we can assume that we have inserted $\mathbb{O}(x,z)$ at $x=i\e\hat e_0=(i\e,0,\ldots,0)$ with $\e>0$. Then we have for $i,j=1\ldots d-1$
\be
[M_{ij},\mathbb{O}(i\e \hat e_0,z)]&=(z^j \ptl_{z^i}-z^i \ptl_{z^j})\mathbb{O}(i\e \hat e_0,z),
\ee
and so we have an $\mathrm{Spin}(d-1)\subset \tl\SO(d,2)$ subgroup which stabilizes position of $\mathbb{O}$ and allows us to change $z$. In particular, together with the homogeneity property~\eqref{eq:conthomogeneity} it allows us to relate all future-directed null $z$ to $z=\hat e_0+\hat e_1=(1,1,0,\ldots,0)$. Let $U_z\in \mathrm{Spin}(d-1)$ that takes $\a_z(\hat e_0+\hat e_1)$ with $\a_z>0$ to $z$. Then for a Wightman function with a single continuous-spin operator we can write
\be\label{eq:zanalytic}
&\<\O|\cO_n(x_n)\cdots\cO_k(x_k) \mathbb{O}(i\e \hat e_0,z) \cO_{k-1}(x_{k-1})\cdots \cO_1(x_1)|\O\>=\nn\\
&\qquad =\a_z^J\<\O|\cO_n(x_n)\cdots\cO_k(x_k) U_z\mathbb{O}(i\e \hat e_0,\hat e_0+\hat e_1) U^\dagger_z\cO_{k-1}(x_{k-1})\cdots \cO_1(x_1)|\O\>,
\ee
and compute the right hand side by acting with $U_z$ and $U^\dagger_z$ on the left and on the right. This action will act on the spin indices of local operators and also shift their positions. Change in the positions will, however, preserve the ordering of imaginary parts $\z_k$~\eqref{eq:complexx}, and thus the Wightman function will remain in the region of analyticity.\footnote{Note that in principle the stabilizer of $i\e \hat e_0$ includes a full $\mathrm{Spin}(d)\in \tl\SO(d,2)$. However, some of the transformations in $\mathrm{Spin}(d)\backslash\mathrm{Spin}(d-1)$ will change ordering of $\z_k$ and thus move Wightman function out of the region of analyticity.} Since we can take $U_z$ to depend on $z$ analytically in a neighborhood of any given $z$, this implies that in the absence of other continuous-spin operators the left hand side of~\eqref{eq:zanalytic} should be analytic in $z$. 

It would be interesting to understand the analyticity conditions in $z$ in presence of other continuous spin operators. This might depend on some extra assumptions about the nature of such operators, but it is natural to expect them to still be analytic. At least this is the case for the integral transforms defined in section~\ref{sec:weylandintegral}, since at fixed $i\e$-prescription these involve integrals of analytic functions.

\subsection{Two- and three-point functions}

Let us now study examples of Wightman functions of continuous-spin operators from the point of view of positive-energy analyticity. This is especially interesting in CFTs because the analytic structure of two- and three-point functions is fixed by conformal symmetry, and this turns out to be in strong tension with positive-energy analyticity. For simplicity, we focus on correlation functions involving the minimal number of continuous-spin operators. We also restrict to traceless-symmetric tensor operators. However, the same statements hold for general representations because the part of the tensor structure responsible for the discrete spin labels $\l$ is always positive-energy analytic.

A conformally-invariant two-point function of traceless-symmetric operators has the form
\be
\<\mathbb{O}(x_1,z_1)\mathbb{O}(x_2,z_2)\>\propto \frac{\p{2(x_{12}\.z_1)(x_{12}\.z_2)-x_{12}^2(z_1\.z_2)}^J}{x_{12}^{2(\De+J)}}.
\ee
It is easy to check that the denominator is positive-energy analytic for any choice of Wightman ordering, and we only need to study the numerator. For generic $z_1$ and $z_2$ we can write 
\be
x_{12}=\a z_1+\b z_2 + x_\perp,
\ee
where $x_\perp\.z_i=0$. Note that $x_\perp$ is spacelike, because it is orthogonal to the timelike vector $z_1+z_2$. (Recall that all polarization vectors are null and future-directed.) The numerator then takes the form
\be\label{eq:contspin2ptnum}
\p{2(x_{12}\.z_1)(x_{12}\.z_2)-x_{12}^2(z_1\.z_2)}^J=(-z_1\.z_2)^J x_\perp^{2J} > 0.
\ee
On the one hand, we see that this is positive and well-defined for all real $x_i$ and $z_i$. On the other hand, we can show that it is only positive-energy analytic for integer $J\geq 0$. Indeed, selecting a Wightman ordering and adding appropriate imaginary parts as in~\eqref{eq:complexx}, in any case we find that $\z_\perp$ is a spacelike vector (we can make it non-zero), because it is orthogonal to $z_1+z_2$. This means that by choosing an appropriate $y_{12}$ we can achieve
\be
x_\perp^2=y_\perp^2-\z^2_\perp+2i (y_\perp\. \z_\perp)=0,
\ee
and in particular wind $x_\perp^2$ around zero without leaving the region of positive-energy analyticity.\footnote{To be specific, we can wind $x_\perp^2$ around $0$ once with $y_{12}$ returning to the original position, and thus for~\eqref{eq:contspin2ptnum} to be single-valued, we need $J\in\Z$.}${}^,$\footnote{This argument doesn't work in $d=3$ because then $y_\perp$ and $\z_\perp$ are forced to lie in the same 1-dimensional subspace. In that case we are still free to change both $y_\perp$ and $\z_\perp$, and thus $x_\perp=y_\perp+i\z_\perp$, in a neighborhood of $0$. This leads to a weaker requirement that $J\in \half\Z_{\geq0}$. This has to do with the fact that for $d=3$ the null-cone is not simply-connected and it makes sense to consider multi-valued functions of $z$. In fact, fermionic operators can be described by double-valued functions of $z$. (If we write $z_\mu=\chi_\a\chi_\b \sigma^{\a\b}_\mu$ for a real spinor $\chi$, then we get polynomial functions of $\chi$.) Our argument thus shows that only single- and double-valued functions of $z$ are consistent with positive-energy analyticity. In higher dimensions we cannot describe fermionic representations by using a single null polarization and thus we do not get this subtlety.\label{foot:3dfermions}} Thus~\eqref{eq:contspin2ptnum} can not be analytic there unless $J$ is a non-negative integer.

This implies that the only way the Wightman two-point function of a generic continuous spin operator $\mathbb{O}$ can be positive-energy analytic is by being zero,\footnote{We derived this for generic $z_1$ and $z_2$, but as discussed in the previous section, we expect the Wightman functions to be continuous in polarizations.}
\be
\<\O|\mathbb{O}(x_1,z_1)\mathbb{O}(x_2,z_2)|\O\>=0.
\ee
In unitary theories vanishing of this two-point function implies
\be
\mathbb{O}(x,z)|\O\>=0.
\ee
This gives another derivation of the fact stated in the introduction: continuous-spin operators must annihilate the vacuum.

Let us now consider a three-point function with a single continuous-spin operator $\mathbb{O}$,
\be
\<\cO_1(x_1,z_1)\cO_2(x_2,z_2)\mathbb{O}(x_3,z_3)\>\propto f(x_i,z_i)\p{\frac{x_{13}\.z_3}{x_{13}^2}-\frac{x_{23}\.z_3}{x_{23}^2}}^{J_3-n_3},
\ee
where $f(x_i,z_i)$ is the part of the tensor structure which is manifestly positive-energy analytic, and is a homogeneous polynomial in $z_3$ with degree $n_3\geq 0$. The non-trivial part of the correlator can be written as 
\be
\p{\frac{x_{13}\.z_3}{x_{13}^2}-\frac{x_{23}\.z_3}{x_{23}^2}}^{J_3-n_3}=(v_{12,3}\. z_3)^{J_3-n_3},
\ee
where 
\be
v_{12,3}^2=\p{\frac{x_{13}}{x_{13}^2}-\frac{x_{23}}{x_{23}^2}}^2=\frac{x_{12}^2}{ x_{13}^2 x_{23}^2}.
\ee
We see that $v_{12,3}$ can be both spacelike and timelike, depending on the causal relationship between the three points $x_i$. This immediately implies that, for example, when all $x_{ij}$ are spacelike, the inner product $v_{12,3}\. z_3$ is not sign-definite and we need to invoke $i\e$-prescriptions to define $(v_{12,3}\. z_3)^{J_3-n_3}$, even for purely Euclidean configurations. For the $i\e$-prescriptions to make sense, the tensor structure must be positive-energy analytic. This means that in this situation, positive-energy analyticity is not only required for correlators to make physical sense, but also simply for the tensor structures to be single-valued.\footnote{This is in contrast to the two-point Wightman function case considered above, where~\eqref{eq:contspin2ptnum} is single-valued without the $i\e$-prescription.} To proceed, note that in the region of positive-energy analyticity $x_{ij}^2\neq 0$ and furthermore the map
\be
x\mapsto \frac{x}{x^2}
\ee
preserves the set of $x=y+i\z$ with future-directed (past-directed) timelike $\z$.\footnote{If $x^2=(y+i\z)^2=y^2-\z^2+2iy\.\z=0$ with timelike $\z$, then $y\.\z=0$, which implies that $y$ is spacelike and thus $y^2-\z^2>0$, leading to contradiction. Imaginary part of $\frac{x}{x^2}$ is, up to a positive factor, $\z(y^2-\z^2)-2y(y\.\z)$. For $y=0$ this is timelike and has the same direction as $\z$. For any $y$, this squares to $\z^2((y^2-\z^2)^2+4(y\.z)^2)<0$, and thus by continuity $\mathrm{Im}\frac{x}{x^2}$ remains timelike in the direction of $\z$.} Since it is also its own inverse, this implies that by varying $x_{13}$ and $x_{23}$ within the region of positive-energy analyticity, we can reproduce any pair of values for $q_1=\frac{x_{13}}{x_{13}^2}$ and $q_2=\frac{x_{23}}{x_{23}^2}$ with imaginary parts satisfying the same constraints as those of $x_{13}$ and $x_{23}$ respectively. This means that in the region of positive-energy analyticity for the orderings 
\be\label{eq:goodorderings}
\<0|\cO_2\mathbb{O}\cO_1|0\>\quad\text{and}\quad \<0|\cO_1\mathbb{O}\cO_2|0\>,
\ee
the vector $v_{12,3}=q_1-q_2$ has a timelike imaginary part restricted to be future-directed or past-directed respectively,
while for the orderings
\be\label{eq:badorderings}
\<0|\cO_i\cO_j\mathbb{O}|0\>\quad\text{and}\quad \<0|\mathbb{O}\cO_i\cO_j|0\>
\ee
this imaginary part is not restricted at all. In the former case $v_{12,3}\.z_3$ has either negative or positive imaginary part, and thus the inner product cannot vanish or wind around zero, while in the latter case this inner product can vanish or wind around zero. We thus conclude that the Wightman functions~\eqref{eq:goodorderings} are positive-energy analytic for any value of $J_3$, while the Wightman functions~\eqref{eq:badorderings} are positive-energy analytic only for integer $J_3\geq n_3$.\footnote{Recall that $n_3\leq J_3$ is the standard condition that we encounter when dealing with integer-spin tensor structures, it just means that $f(x_i,z_i)$ must be a polynomial in $z_3$ of degree at most $J_3$. The 3d subtlety we discussed in footnote~\ref{foot:3dfermions} would be visible here as well, if we allowed $f$ to be double-valued in $z$ (and polynomial in $\chi$), which would correspond to making the product $\cO_1\cO_2$ fermionic, thus forcing $J$ to be half-integer.} 

Again, recalling that the physical Wightman functions of continuous-spin operators must be positive-energy analytic, we are forced to conclude that Wightman functions~\eqref{eq:badorderings} vanish,
\be
\<\O|\cO_1\cO_2\mathbb{O}|\O\>=\<\O|\mathbb{O}\cO_1\cO_2|\O\>=0,
\ee
which of course consistent with the fact that $\mathbb{O}$ annihilates the vacuum. An interesting observation  is that the distinction we made above between the Wightman orderings~\eqref{eq:goodorderings} and~\eqref{eq:badorderings} conflicts with microcausality, because for spacelike-separated points all these Wightman functions would be equal.\footnote{Recall that as noted above, the region of spacelike separation is the problematic one, because there $v_{12,3}$ is spacelike and $v_{12,3}\.z_3$ is not sign-definite.} This means that non-trivial continuous-spin operators \textit{must} be non-local, as stated in the introduction, in the sense that they cannot satisfy microcausality.

A consequence of non-locality is that a physical correlator involving a continuous-spin operator is not well-defined without specifying an operator ordering even if all the distances are spacelike. This in particular means that time-ordered correlators are not quite well-defined in the presence of continuous-spin operators (i.e.\ how do we order $\mathbb{O}$ when it is spacelike from something?). This also makes it unclear how one would define Euclidean correlators for continuous spin (the usual Wick-rotation to Euclidean signature requires micro-causality). Another problem with attempting to describe continuous-spin operators in Euclidean signature is that under Euclidean rotation group $\SO(d)$ the orbit of a single null direction in $\R^{d-1,1}$ consists of all null directions in $\C^d$. Thus we would need to define $\mathbb{O}(x,z)$ for all complex null $z$, but above it was very important to have \textit{future-directed real} $z$ to establish positive-energy analyticity of at least some Wightman functions. 

\subsection{Conventions for two- and three-point tensor structures}
\label{app:23conventions}

When working with integer spin the simplest way to specify standard tensor structures is to give their expressions in Euclidean signature or, equivalently, in Lorentzian signature with all points are spacelike separated. With continuous spin, Euclidean signature is not an option, and as we saw above even for spacelike separations in Lorentzian signature care must be taken to define phases of three-point functions. In this section we briefly record our conventions for symmetric tensor operators.

We will choose the following convention for a two-point function in Lorentzian signature:
\be
\label{eq:standardtwoptconvention}
\<\cO(x_1,z_1)\cO(x_2,z_2)\> &= \frac{(-2 z_1 \. I(x_{12}) z_2)^J}{x_{12}^{2\De}} \nn\\
I^\mu{}_\nu(x) &= \de^\mu{}_\nu - 2\frac{x^\mu x_\nu}{x^2}.
\ee
The nonstandard numerator is so that the two-point function is positive when $1$ and $2$ are spacelike separated and $z_{1,2}$ are future-pointing null vectors. For local operators this completely defines standard Wightman two-point functions via $i\e$ prescriptions. For continuous-spin operators physical Wightman functions vanish, but we still need two-point conformal invariants in some calculations (like the definition of the $\wS$-transform), and for these purposes it suffices to specify the two-point invariant for spacelike $x_{12}$.

Now consider a three-point function $\<\phi_1(x_1)\phi_2(x_2)\cO(x_3,z)\>$, where $\f_1$ and $\f_2$ are scalars and $\cO$ has dimension $\De$ and spin $J$. We demand that the correlator (either Wightman or time-ordered) should be positive when $1,2,3$ are mutually spacelike and $z\.x_{23}\, x_{13}^2 - z\.x_{13}\, x_{23}^2 > 0$. Our precise convention is
\be
\label{eq:standardthreeptconvention}
\<\phi_1(x_1)\phi_2(x_2)\cO(x_3,z)\>=\frac{\p{2z\.x_{23}\, x_{13}^2 - 2z\.x_{13}\, x_{23}^2}^J}{x_{12}^{\De_1+\De_2-\De+J} x_{13}^{\De_1+\De-\De_2+J} x_{23}^{\De_2+\De-\De_1+J}}.
\ee
This is unambiguous for local operators since at spacelike separations there is no difference between various Wightman orderings and time-ordering.\footnote{Note however that this notation for the standard structure is somewhat abusive. For physical correlators we of course have $\<\f_1\f_2\cO\>_\O=\<\f_2\f_1\cO\>_\O$, but the standard structure~\eqref{eq:standardthreeptconvention} gains a $(-1)^J$ under this permutation. This leads to several appearances of $(-1)^J$ in our formulas which are awkward to explain.} If $J$ is continuous, we are necessarily talking about a Wightman function and we need to specify the ordering. Our choice is
\be\label{eq:standardthreeptconventioncontspin}
	\<0|\phi_1(x_1)\mathbb{O}(x_3,z)\phi_2(x_2)|0\>=\frac{\p{2z\.x_{23}\, x_{13}^2 - 2z\.x_{13}\, x_{23}^2}^J}{x_{12}^{\De_1+\De_2-\De+J} x_{13}^{\De_1+\De-\De_2+J} x_{23}^{\De_2+\De-\De_1+J}},
\ee
defined to be positive under the same conditions as~\eqref{eq:standardthreeptconvention}.

The nontraditional factors of $2$ in (\ref{eq:standardtwoptconvention}) and 
(\ref{eq:standardthreeptconvention}) are so that the associated conformal blocks have simple behavior in the limit of small cross-ratios
\be
\frac{\<\f_1\f_2\cO\>\<\cO\f_3\f_4\>}{\<\cO\cO\>} &\sim \p{\prod x_{ij}^\#}\chi^{\frac{\De-J}{2}} \bar \chi^{\frac{\De+J}{2}} \qquad \chi\ll \bar \chi \ll 1.
\ee
They also simplify several formulae in the main text.

\section{Relations between integral transforms}

\subsection{Square of light transform}
\label{app:squarelight}

In this appendix we explicitly compute the square of the light transform. In order to do this, we need to assume that the operator that the light transform acts upon belongs to the Lorentzian principal series
\be
	\De=\frac{d}{2}+is,\quad J=-\frac{d-2}{2}+iq,
\ee
so that $\De+J=1+i(s+q)=1+i\w$ and $\De^L+J^L=2-\De-J=1-i(s+q)=1-i\w$ and thus both the first and the second light transforms make sense if $w\neq 0$.

It will also be convenient to use the expression for the light transform in the coordinates $(\tau,\vec e)$ on $\widetilde\cM_d$. In these coordinates the polarization vector $z$ can be described as $(z^0, \vec z)$ where $\vec z$ is tangent to $S^{d-1}$ at $\vec e$, i.e.\ $\vec z\cdot \vec e=0$, and we have $(z^0)^2=|\vec z|^2$. We then have
\be\label{eq:lightoncylinder}
\wL[\cO](\tau,\vec e;z^0,\vec z)=\int_{0}^\pi d\kappa \,(\sin\kappa)^{\De+J-2}
(z^0)^{1-\De}\cO(\tau+\kappa,\cos\kappa\, \vec e + \sin\kappa\, \tfrac{\vec z}{z^0};
1,\cos\kappa\, \tfrac{\vec z}{z^0} - \sin\kappa\, \vec e
).
\ee
Note that this form also makes it manifest that there is no singularity associated to $\a=0$ in~\eqref{eq:lightdefinition}.

The square of light transform becomes
\be
\wL^2[\cO](\tau,\vec e; z^0, \vec z)=&\int_0^\pi\int_0^\pi d\kappa d\kappa' (z^0)^{J} (\sin \kappa')^{-\De-J}(\sin \kappa)^{\De+J-2}\nn\\ &\quad\times\cO(\tau+\kappa+\kappa',\cos(\kappa+\kappa') \vec e+\sin(\kappa+\kappa')\tfrac{\vec z}{z^0}; 1, \cos(\kappa+\kappa')\tfrac{\vec z}{z^0}-\sin(\kappa+\kappa') \vec e)\nn\\
=&\int_0^{2\pi} d\kappa K(\kappa) (z^0)^J\cO(\tau+\kappa,\cos\kappa\, \vec e + \sin\kappa\, \tfrac{\vec z}{z^0};
1,\cos\kappa\, \tfrac{\vec z}{z^0} - \sin\kappa\, \vec e
),
\ee
where
\be
K(\kappa)=\int_{\max(-\kappa/2,\kappa/2-\pi)}^{\min(\kappa/2,\pi-\kappa/2)} d\eta (\sin \frac{\kappa}{2}-\eta)^{-1-i\w}(\sin \frac{\kappa}{2}+\eta)^{-1+i\w}.
\ee
 
To compute $K(\kappa)$, for $\kappa\neq 0,\pi,2\pi$ we can use the substitution 
\be
	e^\beta = \frac{\sin\p{\frac{\kappa}{2}+\eta}}{\sin\p{\frac{\kappa}{2}-\eta}},
\ee
which turns the integral into
\be
	K(\kappa)=\frac{1}{\sin\kappa}\int_{-\oo}^{+\oo} d\beta e^{iw\beta}=0,\quad (\w\neq0).
\ee
This means that $K(\kappa)$ is supported at $\kappa=0,\pi,2\pi$. Let us thus consider first the contribution near $\kappa=0$. Near $\kappa=0$ we can expand both sines and find, introducing a regulator $\epsilon$,
\be
K(\kappa)=&\int_{-\kappa/2}^{\kappa/2} d\eta \left(\frac{\kappa}{2}-\eta\right)^{-1-i\w+\epsilon}\left(\frac{\kappa}{2}+\eta\right)^{-1+i\w+\epsilon}\nn\\
=&\kappa^{-1+2\epsilon}\int_{-1/2}^{1/2} d\eta \left(\frac{1}{2}-\eta\right)^{-1-i\w+\epsilon}\left(\frac{1}{2}+\eta\right)^{-1+i\w+\epsilon}\nn\\
=&(2\e)\kappa^{-1+2\epsilon}\frac{\Gamma(i\w+\epsilon)\Gamma(-i\w+\epsilon)}{(2\e)\Gamma(2\epsilon)}.\qquad(\kappa\ll 1)
\ee
For $\e\to 0$, using
\be
	(2\e)\kappa^{-1+2\epsilon}\to\delta(\kappa),\quad(\kappa>0)
\ee
we find
\be
	K(\kappa)=\Gamma(-i\w)\Gamma(i\w)\delta(\kappa)=\frac{\pi}{(\De+J-1)\sin\pi(\De+J)}\delta(\kappa),\quad(\kappa\ll1).
\ee
The calculation near $\kappa=2\pi$ is the same and thus we have 
\be
K(\kappa)=\frac{\pi}{(\De+J-1)\sin\pi(\De+J)}(\delta(\kappa)+\delta(\kappa-2\pi))+\<\text{contribution from }\pi\>
\ee

To find the contribution from $\kappa=\pi$, write $\kappa=\pi-r$ for small $0<r\ll 1$.\footnote{There is going a similar contribution from $r<0$.} We have now
\be
K(\kappa)=&\int_{-\frac{\pi}{2}+\frac{r}{2}}^{\frac{\pi}{2}-\frac{r}{2}} d\eta \left(\sin \frac{\pi}{2}-\frac{r}{2}-\eta\right)^{-1-i\w}\left(\sin \frac{\pi}{2}-\frac{r}{2}+\eta\right)^{-1+i\w}\nn\\
=&\int_{0}^{\pi-r} d\eta \left(\sin r+\eta\right)^{-1-i\w}\left(\sin \eta\right)^{-1+i\w}\nn\\
\approx & \int_{0}^{N r} d\eta \left(r+\eta\right)^{-1-i\w+\epsilon}\eta^{-1+i\w+\epsilon}+\int_{0}^{N r} d\eta \left(r+\eta\right)^{-1+i\w+\epsilon}\eta^{-1-i\w+\epsilon}\nn\\
=& r^{-1+2\epsilon}\left[\int_{0}^\infty d\eta \left(1+\eta\right)^{-1+i\w+\epsilon}\eta^{-1-i\w+\epsilon}+\int_{0}^\infty d\eta \left(1+\eta\right)^{-1-i\w+\epsilon}\eta^{-1+i\w+\epsilon}\right]\nn\\
=& r^{-1+2\epsilon}\frac{\pi\Gamma(1-2\epsilon)}{\Gamma(2-J-\De-\epsilon)\Gamma(J+\De-\epsilon)}\left(\csc(\pi(J+\De-\epsilon))-\csc(\pi(J+\De+\epsilon))\right).
\ee
Here $0\ll r\ll Nr\ll 1$ and the two terms come from the two sides of the integral. We can now compute for small $\L>0$
\be
\lim_{\epsilon\to 0}\int_{\pi-\Lambda}^\pi K(\kappa) d\kappa=-\frac{\pi \cos\pi(\De+J)}{(\De+J-1)\sin\pi(\De+J)}.
\ee
Recalling also that there is also a contribution from the negative values of $r$, we find the final result
\be
K(\kappa)=\frac{\pi}{(\De+J-1)\sin\pi(\De+J)}\left(\delta(\kappa)-2\cos\pi(\De+J)\delta(\kappa-\pi)+\delta(\kappa-2\pi)\right).
\ee
In terms of action on $\cO$ this immediately implies
\be
	\wL^2=&\frac{\pi}{(\De+J-1)\sin\pi(\De+J)}\left(1-2\cos\pi(\De+J)\tsym+\tsym^2\right)\nn\\=&\frac{\pi}{(\De+J-1)\sin\pi(\De+J)}\p{\tsym-e^{i\pi(\De+J)}}\p{\tsym-e^{-i\pi(\De+J)}}.
\ee

\subsection{Relation between shadow transform and light transform}
\label{app:shadowLSLrelation}

In this appendix we prove the relation~\eqref{eq:wsdwlwsjrelation}. As in the preceding part of this appendix, we must assume that~\eqref{eq:wsdwlwsjrelation} acts on an operator in the Lorentzian principal series so that this action is well-defined. We have
\be\label{eq:LSLintegral}
	\wL \wSJ\wL[\cO](x,z)=\int D^{d-2}z' d\a_1 d\a_2 (-\a_1)^{-\De-J}(-\a_2)^{d-2+J-\De}
	(-2 z\.z')^{1-d+\De}\cO(x-z'/\a_1-z/\a_2,z')
\ee
Let us write $x'=x-z'/\a_1-z/\a_2$. Then we have 
\be
	I(x-x')z=z-2\frac{(z'/\a_1+z/\a_2)(z'/\a_1+z/\a_2)\.z}{(z'/\a_1+z/\a_2)^2}=-\frac{\a_2}{\a_1}z'.
\ee
Considering the integral in the region of large negative $\a_1$ and $\a_2$ we find
\be\label{eq:intermediateintegralforLSL}
	&\int D^{d-2}z' d\a_1 d\a_2 (-\a_1)^{-\De-J}(-\a_2)^{d-2+J-\De}
	\p{-\a_1\a_2(x-x')^2}^{1-d+\De}\p{\frac{\a_1}{\a_2}}^J\cO(x',-I(x-x')z)\nn\\
	&=\int D^{d-2}z' d\a_1 d\a_2 (-\a_1)^{1-d}(-\a_2)^{-1}
	\p{-(x-x')^2}^{1-d+\De}\cO(x',-I(x-x')z)
\ee
We would now like to replace the integral $\int D^{d-2}z' d\a_1 d\a_2$ by $\int d^dx'$. For this we write
\be
	1=\int d^dx' \delta^d(x-x'-z'/\a_1-z/\a_2)
\ee
and then compute
\be
	&\int D^{d-2}z' d\a_1 d\a_2 (-\a_1)^{1-d}(-\a_2)^{-1}\delta^d(x-x'-z'/\a_1-z/\a_2)\nn\\
	&=\int \frac{2 d^d z'  d\a_1 d\a_2}{\vol\R}  \theta(z^{\prime0})\delta(z^{\prime2})(-\a_1)(-\a_2)^{-1}\delta^d(-\a_1(x-x')+z'+\a_1 z/\a_2)\nn\\
	&=\int \frac{2 d\a_1 d\a_2}{\vol\R} \delta(((x-x')-z/\a_2)^2)(-\a_1)^{-1}(-\a_2)^{-1}\nn\\
	&=-2(x-x')^{-2}.
\ee
We thus conclude that~\eqref{eq:intermediateintegralforLSL} is equal to
\be
	2\int d^d x' (-(x-x')^2)^{\De-d} \cO(x',-I(x-x')z).
\ee
More precisely, it is the contribution to~\eqref{eq:LSLintegral} from the region of large negative $\a_i$. We recognize that it has precisely the form of $\tsym$-shifted Lorentzian shadow integral~\eqref{eq:lorentzshadow}, i.e.
\be
	2\wSD=i\tsym^{-1}\wL\wSJ\wL.
\ee

\section{Harmonic analysis for the Euclidean conformal group}
\label{sec:euclideanharmonicanalysis}

\subsection{Pairings between three-point structures}
\label{eq:euclideanpairings}

The conformal representation of an operator $\cO$ is labeled by a scaling dimension $\De$ and an $\SO(d)$ representation $\rho$. The representation $\tl \cO^\dag$ has dimension $d-\De$ and $\SO(d)$ representation $\rho^*$ (the dual of $\rho$). Thus, there is a natural conformally-invariant pairing between $n$-point functions of $\cO_i$'s and $n$-point functions of $\tl \cO_i^\dag$'s, given by multiplying and integrating over all points modulo the conformal group,
\be
\p{\<\cO_1\cdots \cO_n\>,\<\tl \cO_1^\dag \cdots \tl \cO_n^\dag\>}_E &= \int \frac{d^d x_1\cdots d^d x_n}{\vol(\SO(d+1,1))}\<\cO_1\cdots \cO_n\>\<\tl \cO_1^\dag \cdots \tl \cO_n^\dag\>.
\ee
Here, we are implicitly contracting Lorentz indices between each pair $\cO_i$ and $\tl \cO^\dag_i$. The ``$E$" subscript stands for ``Euclidean."

This pairing is particularly simple for three-point structures. In that case, we can use conformal transformations to set $x_1=0,x_2=e,x_3=\oo$ (with $e$ a unit vector), and no integrations are necessary. The pairing becomes simply
\be
\label{eq:euclideanthreeptpairing}
\p{\<\cO_1 \cO_2 \cO_3\>,\<\tl \cO_1^\dag \tl \cO_2^\dag \tl \cO_3^\dag\>}_E &= \frac{1}{2^d\vol(\SO(d-1))} \<\cO_1(0)\cO_2(e)\cO_3(\oo)\>\<\tl \cO_1^\dag(0) \tl \cO_2^\dag(e) \tl \cO_3^\dag(\oo)\>.
\ee
The factor $2^{-d}$ comes from the Fadeev-Popov determinant for the above gauge-fixing.\footnote{Note that \cite{Simmons-Duffin:2017nub} used a convention where $\vol(\SO(d+1,1))$ was defined to include an extra factor of $2^{-d}$ to cancel the Fadeev-Popov determinant. Here, we prefer not to cancel this factor because it simplifies other formulae in this work.} The factor $\vol(\SO(d-1))$ is the volume of the stabilizer group of three points.

As an example, a scalar-scalar-spin-$J$ correlator has a single tensor structure $\<\f_1\f_2\cO_{3,J}\>$ given in (\ref{eq:standardthreeptconvention}). The pairing in that case is
\be
\label{eq:scalarscalarspinjpairing}
\p{\<\f_1\f_2\cO_{3,J}\>,\<\tl \f_1\tl \f_2\tl \cO_{3,J}\>}_E
&= \frac{2^{2J}}{2^d\vol(\SO(d-1))}(e^{\mu_1}\cdots e^{\mu_J}-\textrm{traces})(e_{\mu_1}\cdots e_{\mu_J} - \textrm{traces}) \nn\\
&= \frac{2^{2J}\hat C_J(1)}{2^d\vol(\SO(d-1))},
\ee
where $\hat C_J(x)$ is defined in (\ref{eq:hatC}).

\subsection{Euclidean conformal integrals}
\label{app:euclideanintegrals}

Suppose $\cO,\cO'$ are principal series representations, with dimensions $\De=\frac d 2+is,\De'=\frac d 2 + is'$ with $s,s'>0$ and $\SO(d)$ representations $\rho,\rho'$.
A ``bubble" integral of two three-point functions is proportional to their three-point pairing, 
\be
\label{eq:bubbleintegral}
\int d^dx_1 d^dx_2 \< \cO_1 \cO_2 \cO^a(x) \> \<\tl O^\dag_1 \tl O^\dag_2 \tl O^{'\dag}_b(x') \> &= \frac{\p{\< \cO_1 \cO_2 \cO\>,\<\tl O^\dag_1 \tl O^\dag_2 \tl O^{\dag}\>}_E}{ \mu(\De,\rho)} \de^a_b \de(x-x') \de_{\cO\cO'},\nn\\
\de_{\cO\cO'} &\equiv 2\pi \de(s-s') \de_{\rho \rho'}.
\ee
The right-hand side contains a term $\de_{\cO\cO'}$ restricting the representations $\cO,\cO'$ to be the same, since this is the only possibility allowed by conformal invariance.\footnote{Eq.~(\ref{eq:bubbleintegral}) is sometimes written including two terms --- one with $\de(s-s')$ and another with $\de(s+s')$. Here we have only one term because we have restricted $s,s'> 0$. The other term can be obtained by performing the shadow transform on either $\cO$ or $\tl \cO^{'\dag}$.} Here, $a,b$ are indices for the representations $\rho,\rho^*$ of $\SO(d)$, respectively. We have suppressed the $\SO(d)$ indices of the other operators, for brevity.

The factor $\mu(\De,\rho)$ in the denominator is called the Plancherel measure. It is known in great generality \cite{Dobrev:1977qv} (see \cite{ShadowFuture} for an elementary derivation). In this work, we will only need $\mu(\De,J)$ for symmetric traceless tensors:
\be
 \mu(\De,J) 
 &= \frac{\dim \rho_J }{2^d\vol(\SO(d))}\frac{\G(\De-1)\G(d-\De-1)(\De+J-1)(d-\De+J-1)}{\pi^d \G(\De-\frac d 2)\G(\frac d 2-\De)},\nn\\
 \dim\rho_J &=  \frac{\G(J+d-2)(2J+d-2)}{\G(J+1)\G(d-1)}.
 \label{eq:plancherelexample}
\ee
Here $\dim \rho_J$ is the dimension of the spin-$J$ representation of $\SO(d)$.

Another conformal integral we will need is the Euclidean shadow transform of a three-point function of two scalars and a symmetric traceless tensor
\be
\<\f_1\f_2 \wSE[\cO](y)\> &= \int d^d x \<\tl \cO(y) \tl \cO^\dag(x)\>\<\f_1 \f_2\cO(x)\>\nn\\
&= S_E(\f_1\f_2[\cO])\<\f_1\f_2\tl \cO(y)\>,
\ee
where
\be
\label{eq:scalarshadowfactor}
S_E(\f_1\f_2[\cO]) &= (-2)^J \frac{\pi^{d/2}\G(\De-\frac d 2)\G(\De+J-1)}{\G(\De-1)\G(d-\De+J)} \frac{\G(\frac{d-\De+\De_1-\De_2+J}{2})\G(\frac{d-\De+\De_2-\De_1+J}{2})}{\G(\frac{\De+\De_1-\De_2+J}{2})\G(\frac{\De+\De_2-\De_1+J}{2})}.
\ee
The factor of $(-2)^{J}$ relative to \cite{Simmons-Duffin:2017nub} is because we are using a different normalization convention for the two-point function (\ref{eq:standardtwoptconvention}).

The square of the shadow transform is related to the Plancherel measure by \cite{Dobrev:1977qv} (see \cite{ShadowFuture} for an elementary derivation)
\be
\label{eq:plancherelvsshadowsq}
\wSE^2 &= \frac{1}{\mu(\De,\rho)} \frac{\<\cO(0)\cO^\dag(\oo)\>\<\tl\cO(\oo)\tl \cO^\dag(0)\>}{2^d\vol(\SO(d))}\equiv \cN(\De,\rho),
\ee
where the indices in two-point functions are implicitly contracted. In the case of a spin-$J$ representation, we have
\be
\cN(\De,J) &=  \frac{2^{2J} \dim \rho_J}{2^d\mu(\De,J)\vol(\SO(d))},
\ee
Indeed, we can easily verify
\be
\label{eq:shadowsq}
S_E(\f_1\f_2[\cO])S_E(\f_1\f_2[\tl\cO]) &= \cN(\De,J).
\ee

\subsection{Residues of Euclidean partial waves}
\label{sec:argumentforresidues}

In this section, we prove~\ref{eq:residueequation}. The proof for primary four-point functions is standard (see e.g.\ \cite{Dobrev:1977qv,Simmons-Duffin:2017nub}). We now give a slightly more complicated argument that works for $n$-point functions. However, the key ingredients are identical to the standard argument.

Consider the integral in the completeness relation (\ref{eq:multipointcompleteness}),
\be
I &= \int d^d x P_{\De,J}(x) \< \tl \cO(x) \f_1 \f_2\>.
\ee
The partial wave $P_{\De,J}$ also depends on the coordinates $x_3,\dots,x_k$, but they don't play a role in the current discussion so we have suppressed them. We have also suppressed Lorentz indices. When we have a product of an operator and its shadow at coincident points, we will assume their Lorentz indices are contracted.

Note that $I$ is an eigenvector of the Casimirs of the conformal group acting simultaneously on points $1$ and $2$. Thus, it is completely determined by its behavior in the OPE limit $x_1\to x_2$. There are two contributions in this limit. The first comes from the regime where $x$ is sufficiently far from $x_1,x_2$ that we can use the $1\x 2$ OPE inside the integrand:
\be
\label{eq:doOPE}
\<\f_1 \f_2 \tl \cO(x)\> &= C_{12\tl \cO}(x_1,x_2,x',\ptl_{x'})\<\tl \cO(x') \tl \cO(x)\>.
\ee
Here, $C_{12\tl \cO}$ is a differential operator that encodes the sum over descendants in the $\f_1 \x \f_2$ OPE. The point $x'$ can be chosen arbitrarily inside a sphere separating $x_1,x_2$ from all other points.
We will abbreviate the right-hand side of (\ref{eq:doOPE}) as $C_{12\tl \cO}(x')\<\tl \cO(x')\tl \cO(x)\>$. Inserting (\ref{eq:doOPE}) and applying the shadow transform to the definition of $P_{\De,J}$ (\ref{eq:euclideanintegralforpartialwave}), we find
\be
I & \supset C_{12\tl \cO}(x') \int d^d x \<\tl \cO(x')\tl \cO(x)\>P_{\De,J}(x)
= S_E(\f_1\f_2[\cO]) C_{12\tl \cO}(x) P_{\tl \De,J}(x).
\label{eq:contributionone}
\ee

The second contribution to $I$ comes from the regime where $x$ is near both $x_1,x_2$ but far away from all other points. In this case, we can insert a shadow transform and then perform the OPE:
\be
I &= S_E(\f_1 \f_2[\cO])^{-1} \int d^d x d^d x' P_{\De,J}(x) \<\tl \cO(x) \tl \cO(x')\>\<\cO(x') \f_1 \f_2\> \nn\\
&\supset S_E(\f_1 \f_2[\cO])^{-1} \int d^d x d^d x' P_{\De,J}(x)  \<\tl \cO(x)\tl \cO(x')\> C_{12\cO}(x'')\<\cO(x'')\cO(x')\> \nn\\
&=S_E(\f_1 \f_2[\cO])^{-1} \cN(\De,J) C_{12\cO}(x) P_{\De,J}(x) \nn\\
&= S_E(\f_1 \f_2[\tl \cO]) C_{12\cO}(x) P_{\De,J}(x). 
\label{eq:contributiontwo}
\ee
Where we have used (\ref{eq:shadowsq}).

The two contributions (\ref{eq:contributionone}) and (\ref{eq:contributiontwo}) are already eigenvectors of the conformal Casimirs, so together they give the full answer for $I$. The two terms differ simply by the replacement $\De\leftrightarrow d-\De$. Thus, we can plug them into the completeness relation (\ref{eq:multipointcompleteness}) and use $\De\leftrightarrow d-\De$ symmetry to extend the $\De$ integral along the entire imaginary axis,
\be
\<V_3 \cdots V_k \cO_1 \cO_2\>_\Omega &= \sum_{J=0}^\oo \int_{\frac d 2 - i\oo}^{\frac d 2 + i\oo} \frac{d\De}{2\pi i} \mu(\De,J)  S_E(\f_1 \f_2[\tl \cO]) C_{12\cO} P_{\De,J}(x).
\ee
Because $C_{12\cO}$ dies exponentially at large positive $\De$, we can now close the $\De$ contour to the right and pick up poles along the positive real axis.  Comparing to the physical operator product expansion gives (\ref{eq:residueequation}).

\section{Computation of $\cR(\De_1,\De_2,J)$}
\label{app:Rcoefficient}
In this appendix we compute the coefficient $\cR$ appearing in the first line of~\eqref{eq:Rappearance}
\be
	\cR(\De_1,\De_2,J)\equiv -2^{J-2}\int d\a  d^{d-2}\bw_1d^{d-2}\bw_2\frac{2^{J-1}
		\p{\a(1-\a)+(1-\a)\bw_1^2+\a\bw_2^2}^{1-\De_1-\De_2-J}
	}{
		(1+\bw_{12}^2)^{d-\De_1-\De_2}
		\a^{{-\De_1+1-J}}
		(1-\a)^{{-\De_2+1-J}}
	}.
\ee
As the first step, we do the $\bw_i$ integrals. We define $\bw_-=\bw_{12}$ and $\bw_+=\bw_1+\bw_2$. The integral over $dw_i$ becomes (without the $-2^{J-2}$ and $\bw$-independent factors)
\be
	2^{2(\De_1+\De_2+J)-d}\int d^{d-2}\bw_+d^{d-2}\bw_-\frac{
		\p{4\a(1-\a)+\bw_+^2+\bw_-^2+2(1-2\a)\bw_+\.\bw_-}^{1-\De_1-\De_2-J}
	}{
		(1+\bw_-^2)^{d-\De_1-\De_2}
	}.
\ee
Now we shift $\bw_+\to \bw_+-(1-2\a)\bw_-$ to find
\be
	2^{2(\De_1+\De_2+J)-d}\int d^{d-2}\bw_+d^{d-2}\bw_-\frac{
	\p{4\a(1-\a)(1+\bw_-^2)+\bw_+^2}^{1-\De_1-\De_2-J}
	}{
		(1+\bw_-^2)^{d-\De_1-\De_2}
	}.
\ee
Rescaling $\bw_+$ we find
\be
	&\int d^{d-2}\bw_+d^{d-2}\bw_-\frac{
	\p{\a(1-\a)}^{1-\De_1-\De_2-J+\frac{d-2}{2}}\p{1+\bw_+^2}^{1-\De_1-\De_2-J}
}{
	(1+\bw_-^2)^{J+\frac{d}{2}}
}=\nn\\
&=(\a(1-\a))^{1-\De_1-\De_2-J+\frac{d-2}{2}}\times  \pi^{d-2}\frac{\Gamma(J+1)\Gamma{(-\frac{d}{2}+J+\De_1+\De_2)}}{\Gamma{(J+\frac{d}{2})}\Gamma{(J+\De_1+\De_2-1)}}.
\ee
The remaining $\a$-integral becomes
\be
	\int d\a \a^{-\De_2+\frac{d-2}{2}}(1-\a)^{-\De_1+\frac{d-2}{2}}=\frac{\Gamma(\frac{d}{2}-\De_1)\Gamma(\frac{d}{2}-\De_2)}{\Gamma(d-\De_1-\De_2)}.
\ee
Combining everything together we find
\be
	\cR(\De_1,\De_2,J)=-2^{J-2}\pi^{d-2}\frac{\Gamma(J+1)\Gamma{(-\frac{d}{2}+J+\De_1+\De_2)}}{\Gamma{(J+\frac{d}{2})}\Gamma{(J+\De_1+\De_2-1)}}\frac{\Gamma(\frac{d}{2}-\De_1)\Gamma(\frac{d}{2}-\De_2)}{\Gamma(d-\De_1-\De_2)}.
\ee

\section{Parings of continuous-spin structures}
\label{app:contspinpairings}

In this section we describe the natural conformally-invariant pairing between continuous spin structures. Recall that the Euclidean pairings are constructed from the basic invariant integral
\be
	\int d^d x \cO(x)\tl\cO^\dagger(x),
\ee
where contraction of $\SO(d)$ indices is implicit. This integral is conformally-invariant because if $\cO$ transforms in $(\De,\rho)$ then $\tl\cO^\dagger$ transforms in $(d-\De,\rho^*)$, where $\rho^*$ is the $\SO(d)$ irrep dual to $\rho$. We can therefore contract $\SO(d)$ indices and the dimensions in the integrand add up to $0$ (taking into account the measure $d^dx$). 

To pair continuous-spin structures in Lorentzian, we need to make use of the integral 
\be
	\int d^d x D^{d-2} z \cO(x,z)\cO^{\mathrm{S}\dagger}(x,z)
\ee
If $\cO$ transforms in $(\De,J,\l)$, then $\cO^{\mathrm{S}\dagger}$ transforms in $(d-\De,2-d-J,\l^*)$. The integrand has 0 homogeneity in $x$ and $z$, and $\l$-indices can be contracted.\footnote{Given that $\cO^{S}$ transforms in $(d-\De,2-d-J,\l)$, it is a bit non-trivial to understand why $\cO^{S\dagger}$ has $\l^*$. In odd dimensions $\l$ and $\l^*$ is the same irrep, so there is no question here. In even dimension $\dagger$ changes the sign of the last row of Young diagram of $(d-\De,2-d-J,\l)$ in the same way as it does for all $\mathfrak{so}(d)$-weights. In other words, it flips the sign if $d=4k$ and does nothing for $d=4k+2$. However, this last row is also the last row of $\l$ and $\l$ is an $\SO(d-2)$-irrep. It then turns out that from the $\SO(d-2)$ point of view, this action is equivalent to taking the dual. Another way to see this is that $\dagger$ is complex conjugation for $\SO(d-1,1)$, and thus for $\SO(d-2)$, which can be thought of as a subgroup of $\SO(d-1,1)$. But since $\SO(d-2)$ is compact, for it complex conjugation is the same as taking the dual.}

\subsection{Two-point functions}

Let us start with two-point functions. As discussed in section~\ref{sec:contspincorrelators}, two-point functions of continuous-spin operators do not make sense as Wightman functions, so in order to discuss them, we have to think about them simply as some conformal invariants defined at least for spacelike separated points.

That said, given a two-point structure for $\cO$ in representation $(\De,J,\l)$ and a two-point function for $\cO^\mathrm{S}$ in representation $\wS[(\De,J,\l)]=(d-\De,2-d-J,\l)$, we can define the two-point pairing by
\be\label{eq:2ptpairingL}
	\frac{(\<\cO\cO^\dagger\>,\<\cO^\mathrm{S}\cO^{\mathrm{S}\dagger}\>)_L}{\vol(\SO(1,1))^2}\equiv \int_{x_1\approx x_2} \frac{d^dx_1 d^dx_2 D^{d-2}z_1 D^{d-2} z_2}{\vol(\tl\SO(d,2))}\<\cO^a(x_1,z_1)\cO^{b\dagger}(x_2,z_2)\>\<\cO^\mathrm{S}_b(x_2,z_2)\cO^{\mathrm{S}\dagger}_a(x_1,z_1)\>,
\ee
where factor $\vol(\SO(1,1))^2$ is for future convenience\footnote{Similarly to the Euclidean case~\cite{ShadowFuture}, the right hand side can be alternatively computed in terms of Plancherel measure divided by $\vol(\SO(1,1))^2$. In Euclidean we get only one power of $\vol(\SO(1,1))$, which corresponds to the fact that there we have only one continuous parameter $\De$, while in Lorentzian we have both $\De$ and $J$.} and the subscript ``$L$'' stands for ``Lorentzian.'' On the right hand side, we divide by the volume of the conformal group since the integral is invariant under it. Formally, this means that we should compute the integral by gauge-fixing the action of conformal group and introducing an appropriate Faddeev-Popov determinant. To perform gauge-fixing, we can first put $x_1$ and $x_2$ into some standard configuration. A natural choice is to set $x_1=0$ and $x_2=\infty$ (spacelike infinity).\footnote{We define $\cO(\infty)=\lim_{L\to \infty}L^{2\De}\cO(Le)$, where $e$ is a conventional spacelike unit vector. We choose $e=(0,1,0,\ldots,0)$.} This configuration is still invariant under dilatation and Lorentz transformations. Thus we have
\be
	\frac{(\<\cO\cO^\dagger\>,\<\cO^\mathrm{S}\cO^{\mathrm{S}\dagger}\>)_L}{\vol(\SO(1,1))^2}=\int \frac{D^{d-2}z_1 D^{d-2} z_2}{2^d\vol(\SO(1,1)\times \SO(d-1,1))}\<\cO^a(0,z_1)\cO^{b\dagger}(\infty,z_2)\>\<\cO^\mathrm{S}_b(\infty,z_2)\cO^{\mathrm{S}\dagger}_a(0,z_1)\>,
\ee
where $2^d$ comes from the Faddeev-Popov determinant.\footnote{A fixed power of $2$ also goes into what we mean by $\vol(\SO(1,1))$.} If we define $z^R_2=(z_2^0,-z_2^1,z_2^2,\ldots,z_2^{d-1})$, so that Lorentz group transforms $z_1$ and $z_2^R$ in the same way, the integral
\be
	\int \frac{D^{d-2}z_1 D^{d-2} z^R_2}{\vol(\SO(d-1,1))}
\ee
essentially becomes the $(d-2)$-dimensional Euclidean conformal two-point integral.  It can also be computed by gauge-fixing, i.e.\ by setting $z^\mu_1=z^\mu_0\equiv(\half,\half,0,\ldots,0)$, which is the embedding-space representation of the origin of $\R^{d-2}$, $z^{R\,\mu}_2=z^\mu_\infty\equiv(\half,-\half,0,\ldots,0)$, which is the embedding-space representation of the infinity of $\R^{d-2}$. The stabilizer group of this configuration is $\SO(1,1)\times \SO(d-2)$, which consists of $(d-2)$-dimensional dilatations and rotations. We thus conclude
\be
\label{eq:resultforpairing}
	(\<\cO\cO^\dagger\>,\<\cO^\mathrm{S}\cO^{\mathrm{S}\dagger}\>)_L=\frac{1}{2^d 2^{d-2}\vol( \SO(d-2))}\<\cO^a(0,z_0)\cO^{b\dagger}(\infty,z^R_\infty)\>\<\cO^\mathrm{S}_b(\infty,z^R_\infty)\cO^{\mathrm{S}\dagger}_a(0,z_0)\>,
\ee
where we included another Faddeev-Popov determinant.  Note that the right hand side is proportional to $\dim\l$.

We can summarize this result as follows. Note that the product
\be
	\<\cO^a(x_1,z_1)\cO^{b\dagger}(x_2,z_2)\>\<\cO^\mathrm{S}_b(x_2,z_2)\cO^{\mathrm{S}\dagger}_a(x_1,z_1)\>
\ee
transforms in representation $(\De,J,\l)=(d,2-d,\bullet)$ at both $x_1$ and $x_2$. Thus we must have 
\be
	\<\cO^a(x_1,z_1)\cO^{b\dagger}(x_2,z_2)\>\<\cO^\mathrm{S}_b(x_2,z_2)\cO^{\mathrm{S}\dagger}_a(x_1,z_1)\>=A\frac{(-2 z_1 \. I(x_{12}) z_2)^{2-d}}{x_{12}^{2d}}.
\ee
For some constant $A$. Using (\ref{eq:resultforpairing}), we find
\be
	(\<\cO\cO^\dagger\>,\<\cO^\mathrm{S}\cO^{\mathrm{S}\dagger}\>)_L=\frac{A}{2^{2d-2}\vol(\SO(d-2))}.
\ee

\subsection{Three-point pairings}

We can analogously define a three-point pairing for continuous-spin structures,
\be
\label{eq:lorentzianpairing}
&\p{\<\cO_1\cO_2 \cO\>, \<\tl\cO_1^\dag \tl \cO_2^\dag \cO^{\mathrm{S}\dag}\>}_L \nn\\
&\equiv \int_{\substack{2<1 \\ x\approx 1,2}} \frac{d^dx_1 d^dx_2 d^d x D^{d-2} z}{\vol(\tl{\SO}(d,2))} \<\cO_1(x_1)\cO_2(x_2) \cO(x,z)\> \<\tl\cO_1^\dag(x_1) \tl \cO_2^\dag(x_2) \cO^{\mathrm{S}\dag}(x,z)\>.
\ee
Here, finite-dimensional Lorentz indices are implicitly contracted. Note that due to the fixed causal relationships between the points the continuous-spin structures are single-valued without $i\e$ prescriptions (see appendix~\ref{sec:contspincorrelators}). As in the Euclidean case, Lorentzian three-point pairings are simple to compute because they don't involve any actual integrals over positions. We can use the conformal group to fix all three points to a standard configuration consistent with the given causal relationships, for example
\be
x_2=0,\quad x_1=e^0,\quad x=\oo,
\ee
where $e^0$ is a unit vector in the $t$ direction. The Fadeev-Popov determinant associated with this choice is $2^{-d}$. All that remains is an integral over the polarization vector $z$,
\be
\label{eq:deflorentzianpairing}
&= \frac{1}{2^d\vol(\SO(d-1))} \int D^{d-2} z\, \<\cO_1(e^0)\cO_2(0) \cO(\oo,z)\> \<\tl\cO_1^\dag(e^0) \tl \cO_2^\dag(0) \cO^{\mathrm{S}\dag}(\oo,z)\>,
\ee
where $\vol(\SO(d-1))$ is the volume of the stabilizer group of the three points.\footnote{Note that the stabilizer group depends on the causal relationships of the points. For example, three spacelike points have stabilizer group $\SO(d-2,1)$.}
In practice, we can avoid doing the integral over $z$ as well. This is because the product in the integrand must be proportional to a three-point function of two scalars with dimension $d$ and a spinning operator with dimension $d$ and spin $2-d$. The integral of the $z$-dependent part of this product is always
\be
\frac{1}{2^d\vol(\SO(d-1))}\int D^{d-2} (-2z\.e^0)^{2-d} &= \frac{1}{2^{2d-2}\vol(\SO(d-2))}.
\ee
Thus, we can write
\be
\p{\<\cO_1\cO_2 \cO\>, \<\tl\cO_1^\dag \tl \cO_2^\dag \cO^{\mathrm{S}\dag}\>}_L &= \frac{1}{2^{2d-2}\vol(\SO(d-2))} \frac{\<\cO_1(e^0)\cO_2(0) \cO(\oo,z)\> \<\tl\cO_1^\dag(e^0) \tl \cO_2^\dag(0) \cO^{\mathrm{S}\dag}(\oo,z)\>}{(-2z\.e^0)^{2-d}}.
\ee

\section{Integral transforms, weight-shifting operators and integration by parts}
\label{app:operations}

In this appendix we elaborate on the interplay between integral transforms, weight-shifting operators, and conformally-invariant pairings, following~\cite{ShadowFuture} and generalizing the discussion to Lorentzian signature. For simplicity of discussion, we ignore possible signs coming from odd permutations of fermions.

\subsection{Euclidean signature}

In Euclidean signature we have one integral transform, $\wSE$, and a conformally-invariant pairing 
\be
(\cO,\tl\cO^\dagger) \equiv \int d^dx \cO(x)\tl\cO^\dagger(x),
\ee
where the spin indices are implicitly contracted. With respect to this paring we can define a conjugation on weight-shifting operators and on the integral transform,
\be
	(\cD\cO,\tl\cO^\dagger)&=(\cO,\cD^*\tl\cO^\dagger),\nn\\
	(\wSE\cO,\tl\cO^\dagger)&=(\cO,\wSE^*\tl\cO^\dagger).
\ee
We have $*^2=1$ and $\wSE^*=\wSE$. 

Furthermore, we can define Weyl reflection on weight-shifting operators according to
\be\label{eq:wsshadow}
	\wSE\cD=(\wSE[\cD])\wSE.
\ee
We then have
\be
	\wSE^2\cD=\wSE(\wSE[\cD])\wSE=(\wSE^2[\cD])\wSE^2,
\ee
and since $\wSE^2=\cN(\De,\rho)$, we have when acting on operators transforming in $(\De,\rho)$
\be\label{eq:wsdsquareEuclid}
	\wSE^2[\cD]=\frac{\cN(\De+\delta_\De,\rho+\delta_\rho)}{\cN(\De,\rho)}\cD,
\ee
where $(\delta_\De,\delta_\rho)$ is the weight by which $\cD$ shifts. Conjugating~\eqref{eq:wsshadow} we find
\be
	\wSE(\wSE[\cD])^*=\cD^*\wSE,
\ee
and thus
\be
	\wSE[\cD]^*=\wSE^{-1}[\cD^*].
\ee

We also note that the crossing equation for weight-shifting operators acting on a two-point function~\cite{Karateev:2017jgd} can be written in terms of shadow transform and conjugation. Namely, we can interpret $\wSE\cD^*$ as convolution with the kernel
\be
	\<\tl\cO(\cD\tl\cO^\dagger)\>,
\ee
while, on the other hand, it is equal to $\wSE[\cD^*]\wS$ which is convolution with (assume that $\cD\tl\cO^\dagger$ transforms as $\tl\cO'^\dagger$)
\be
	\<(\wSE[\cD^*]\tl\cO')\tl\cO'^\dagger\>.
\ee
We thus find the crossing equation
\be
	\<\tl\cO(\cD\tl\cO^\dagger)\>=\<(\wSE[\cD^*]\tl\cO')\tl\cO'^\dagger\>.
\ee

\subsection{Lorentzian signature}

The above discussion has an analogue in Lorentzian signature. Now we have more integral transforms, so let us denote a generic one by $\bW$. We also have a new pairing, given by
\be
	(\cO,\cO^{\mathrm{S}\dagger})_L=\int d^dx D^{d-2}z \cO(x,z)\cO^{\mathrm{S}\dagger}(x,z),
\ee
where the $\SO(d-2)$ indices are implicit and contracted. This pairing leads to a new conjugation operation on weight-shifting operators and on integral transforms,
\be
	(\cD\cO,\tl\cO^\dagger)_L&=(\cO,\overline\cD\tl\cO^\dagger)_L,\nn\\
	(\bW\cO,\tl\cO^\dagger)_L&=(\cO,\overline\bW\tl\cO^\dagger)_L.
\ee
Note that in general the Lorentzian and Euclidean conjugations do not commute (see below). Analogously to the Euclidean case, we find
\be\label{eq:lorentzianSoverlinecommutator}
\overline{\bW[\cD]}=\bW^{-1}[\overline\cD].
\ee

As in Euclidean signature, we can define the action of integral transforms on weight-shifting operators by
\be
	\bW\cD=(\bW[\cD])\bW.
\ee
In principle $\bW[\cD]$ can be a differential operator with coefficients which depend on $\tsym$. However, when acting on a function, the left hand side of this expression depends only on the values of this function in a set which fits in one Poincare patch. If $\bW[\cD]$ had non-trivial $t$ dependence, the same would not hold for the right hand side. Therefore $\bW[\cD]$ has to be a local weight-shifting differential operator.

It is easy to check that if two integral transforms commute, then their actions on weight-shifting operators also commute. Similarly to Euclidean case, relations such as $\wL^2=f_L(\De,J,\tsym)$ generalize to action on weight-shifting operators. Let us write down the square of an order two transform (any transform except $\wR$ and $\overline\wR$)
\be\label{eq:wsweylsquare}
	\bW^2[\cD]=f_W(\De,\rho,\tsym)\cD f_W^{-1}(\De,\rho,\tsym),
\ee
where $\De$ and $\rho$ are understood as operators which read off the scaling dimension and representation of whatever they act on. Let us comment on this formula in the case of $\wSD$. Modulo Wick rotation, we have the relation $\wSE=(-2)^J\wSD$ for traceless-symmetric operators. It follows that~\eqref{eq:wsdsquareEuclid} and~\eqref{eq:wsweylsquare} should be compatible. That is, we should have
\be\label{eq:lorentzeuclidconsistency}
	\frac{\cN(\De+\delta_\De,J+\delta_J)}{\cN(\De,J)}=\frac{4^{J+\delta_J}f_{\De}(\De+\delta_\De,J+\delta_J,c\tsym)}{4^Jf_{\De}(\De,J,\tsym)},
\ee
where $\delta_\De$, $\delta_J$ are the weights by which $\cD$ shifts, and $c$ is defined by
\be\label{eq:cdefn}
\tsym \cD \tsym^{-1} = c \cD.
\ee
I.e.\ $c$ is the eigenvalue of $\tsym$ in the finite-dimensional irrep of conformal group to which $\cD$ is associated. For example, for vector representation $c=-1$.
To check this relation, we can use the results of section~\ref{sec:algebraoftransforms} and in particular the relation~\eqref{eq:wsdwlwsjrelation} which implies (we consider traceless-symmetric case for simplicity)
\be
	f_{\De}(\De,J,\tsym)=-\tsym^{-2}f_L(\De,\rho,\tsym)f_{J}(1-\De)f_L(1-J,1-d+\De,\tsym).
\ee
It is then an easy exercise to verify that~\eqref{eq:lorentzeuclidconsistency} holds for vector weight-shifting operators~\cite{Karateev:2017jgd}.

Another useful result is obtained by substituting $\cD\to \bW^{-1}[\cD]$ into~\eqref{eq:wsweylsquare} to find
\be\label{eq:integraltransforminverse}
	\bW^{-1}[\cD]=f_W^{-1}(\De,\rho,\tsym)\bW[\cD] f_W(\De,\rho,\tsym).
\ee
For example,
\be
	\wL^{-1}[\cD]=\wL[\cD]\frac{f_L(\De,\rho,\tsym)}{f_L(\De+\wL[\delta_\De],\rho+\wL[\delta_\rho],c \tsym)},
\ee
where we kept explicit dependence of $f_L$ on $\tsym$, $(\wL[\delta_\De],\wL[\delta_\rho])$ is the weight by which $\wL[\cD]$ shifts. It is easy to check that $\tsym$-dependence indeed cancels out for $\cD$ in vector representation.

We can derive two-point crossing in terms of Lorentzian conjugation and $\wS$ transform,
\be\label{eq:twoptcrossing}
	\<\cO^\mathrm{S}(\cD\cO^{\mathrm{S}\dagger})\>=\<(\wS[\overline\cD]\cO'^{\mathrm{S}})\cO'^{\mathrm{S}\dagger}\>.
\ee
Comparing to the Euclidean form of two-point crossing leads to a useful relation
\be\label{eq:SDSrelation}
	\wSE[\cD^*]=\wS[\overline\cD].
\ee
We will need a version of this relation with order of integral transforms and conjugations interchanged. First,~\eqref{eq:SDSrelation} implies
\be
	(\wSE^{-1}[\cD])^*=\overline{\wS^{-1}[\cD]}.
\ee
Then we use that $\wSE$ and $\wS$ are proportional to their inverses. In particular, we find from~\eqref{eq:integraltransforminverse}
\be
	(f^{-1}_E(\De,\rho,\tsym)\wSE[\cD]f_E(\De,\rho,\tsym))^*&=\overline{(f^{-1}_{S}(\De,\rho,\tsym)\wS[\cD]f_{S}(\De,\rho,\tsym))},\nn\\
	f_E(\De,\rho,\tsym)(\wSE[\cD])^*f^{-1}_E(\De,\rho,\tsym)&=f_{S}(\De,\rho,\tsym)\overline{\wS[\cD]}f^{-1}_{S}(\De,\rho,\tsym),
\ee
where we temporarily interpret $\wSE$ as a Lorentzian transform defined by $(-2)^J\wSD$. We can now use
\be
	f_S(\De,\rho,\tsym)=\wS^2=\wSD^2 \wSJ^2=4^{-J}\wSE^2 \wSJ^2=4^{-J}f_E(\De,\rho,\tsym)f_J(\rho)
\ee
to conclude
\be
	\overline{\wS[\cD]}=4^Jf^{-1}_J(\rho)(\wSD[\cD])^*4^{-J}f_J(\rho).
\ee

\section{Proof of~\eqref{eq:claimforhgeneralized} for seed blocks}
\label{app:proof}

In this appendix we prove~\eqref{eq:claimforhgeneralized} for seed blocks by starting from the scalar case. For simplicity we consider only bosonic representations. We assume that $\cO_i$ are in $\SO(d)$ representations appropriate for the seed block for intermediate~$\rho$ which we are interested in. As discussed in section 4.4 of~\cite{Karateev:2017jgd}, we can assume that $\cO_2$ and $\cO_4$ are scalars in all seed blocks, so we don't have to change their representations. We start with the identity
\be\label{eq:seedwsdefinition}
{(\<\cO^\dagger\cO\>,\<\tl\cO^\dagger\tl\cO\>)_E}(\<\cO_1\cO_2\wSE[\cO^\dagger]\>)^{-1}_E={(\<\cO'^\dagger\cO'\>,\<\tl\cO'^\dagger\tl\cO'\>)_E}\cD_{1,A}\tl\cD^A(\<\cO_1\cO'_2\wSE[\cO^{\prime\dagger}]\>)^{-1}_E,
\ee
where $\cD$ and $\tl\cD$ are some weight-shifting operators,\footnote{Here tilde isn't related to shadow transform and $\tl\cD$ acts on the third position. The representation of index $A$ can be assumed to be vector.} while $\cO'_1$ and $\cO'$ come from a seed block for which we already know that~\eqref{eq:claimforhgeneralized} holds. A possible proportionality coefficient can be absorbed into the definition of either the weight-shifting operators or the tensor structures. Consider pairing both sides with $\<\cO_1\cO_2\wSE[\cO^\dagger]\>$ to obtain
\be
\frac{(\<\cO^\dagger\cO\>,\<\tl\cO^\dagger\tl\cO\>)_E}{(\<\cO'^\dagger\cO'\>,\<\tl\cO'^\dagger\tl\cO'\>)_E}=(\<\cO_1\cO_2\wSE[\cO^\dagger]\>,\cD_{1,A}\tl\cD^A(\<\cO_1\cO'_2\wSE[\cO^{\prime\dagger}]\>)^{-1}_E)_E.
\ee
Integrating by parts and using definitions of appendix~\ref{app:operations} we find
\be
\frac{(\<\cO^\dagger\cO\>,\<\tl\cO^\dagger\tl\cO\>)_E}{(\<\cO'^\dagger\cO'\>,\<\tl\cO'^\dagger\tl\cO'\>)_E}=(\<\cD_{1,A}^*\cO_1\cO_2\wSE[\wSE^{-1}[\tl\cD^*]^A\cO^\dagger]\>,(\<\cO_1\cO'_2\wSE[\cO^{\prime\dagger}]\>)^{-1}_E)_E,
\ee
which allows us to conclude
\be
\<\cD^*_{1,A}\cO_1\cO_2\wSE[\wSE^{-1}[\tl\cD^*]^A\cO^\dagger]\>=\frac{(\<\cO^\dagger\cO\>,\<\tl\cO^\dagger\tl\cO\>)_E}{(\<\cO'^\dagger\cO'\>,\<\tl\cO'^\dagger\tl\cO'\>)_E}\<\cO_1\cO'_2\wSE[\cO^{\prime\dagger}]\>,
\ee
or, canceling $\wSE$ on both sides,
\be\label{eq:seedwswithoutinverses}
\<\cD^*_{1,A}\cO_1\cO_2(\wSE^{-1}[\tl\cD^*]^A\cO^\dagger)\>=\frac{(\<\cO^\dagger\cO\>,\<\tl\cO^\dagger\tl\cO\>)_E}{(\<\cO'^\dagger\cO'\>,\<\tl\cO'^\dagger\tl\cO'\>)_E}\<\cO_1\cO'_2\cO^{\prime\dagger}\>.
\ee
We will use this characterization of $\cD$ and $\tl\cD$ later in the proof.

For now, let us apply~\eqref{eq:seedwsdefinition} to~\eqref{eq:integralforhgeneralized} and find that $H$ is given by
\be	
H_{\De,\rho}(x_i)
&=-\mu(\De,\rho'^\dagger)(\cO_1 \cO'_2  \wSE[\tl \cO'^\dagger])(\<\cO_1 \cO'_2  \tl \cO'^\dagger\>,\<\tl \cO_1^\dagger \tl \cO_{2}'^\dagger\cO'\>)^{-1}_E\times\nn\\
&\quad\times\int_{2<x<1} d^d x D^{d-2} z \<0|\cD_{1,A}\tl \cO_1^\dagger \wL[\tl\cD^A\cO](x,z)  \tl\cO_{2^+}'^\dagger|0\>(\<0|\cO_{4^+} \wL[\cO](x,z) \cO_3|0\>)^{-1}_L.
\ee
We now use
\be
\wL[\tl\cD^A\cO]=\wL[\tl\cD]^A\wL[\cO],
\ee
and integrate $\wL[\tl\cD]$ by parts. This gives
\be	
H_{\De,\rho}(x_i)
&=-\mu(\De,\rho'^\dagger)(\cO_1 \cO'_2  \wSD[\tl \cO'^\dagger])(\<\cO_1 \cO'_2  \tl \cO'^\dagger\>,\<\tl \cO_1^\dagger \tl \cO_{2}'^\dagger\cO'\>)^{-1}_E\times\nn\\
&\quad\times\int_{2<x<1} d^d x D^{d-2} z \<0|\cD_{1,A}\tl \cO_1^\dagger \wL[\cO](x,z)  \tl\cO_{2^+}'^\dagger|0\>\overline{\wL[\tl\cD]}^A(\<0|\cO_{4^+} \wL[\cO](x,z) \cO_3|0\>)^{-1}_L,
\ee
where $\overline{\wL[\tl\cD]}$ acts on the middle position in the right three-point structure. We can further apply a crossing transformation on the right three-point structure as in~\cite{Karateev:2017jgd} to make all differential operators act on the external operators only. We will not do this in detail, because we will anyway reverse this step in a moment. Let us denote the resulting differential operator acting on external operators by $\mathfrak{D}$. 

The conclusion of the above calculation is schematically that 
\be
H_{\rho}=\mathfrak{D}H_{\rho'},
\ee
where $H_{\rho'}$ is some conformal for which we know~\eqref{eq:claimforhgeneralized} to hold. We can thus apply $\mathfrak{D}$ to~\eqref{eq:claimforhgeneralized} written for $H_{\rho'}$. Since the right three-point structure in~\eqref{eq:claimforhgeneralized} and~\eqref{eq:integralforhgeneralized} is the same, we can unwind the steps in the derivation of~$\mathfrak{D}$ which were performed solely on the right three-point structure to conclude
\be
H_{\De,\rho}(x_i)=-\frac{1}{2\pi i}  \frac{\cD_{1,A}\p{\tsym_2\<\cO_1 \cO'_2 \wL[\cO'^\dagger]\>}^{-1}_L \overline{\wL[\tl\cD]}^A\p{\tsym_4\<\cO_4 \cO_3 \wL[\cO]\>}^{-1}_L}{(\<\wL[\cO']\wL[\cO']\>)^{-1}_L}.
\ee
We can use~\eqref{eq:blockwsrule} to write this as
\be\label{eq:intermediateH}
H_{\De,\rho}(x_i)=-\frac{1}{2\pi i}  \frac{{(\<\wL[\cO]\wL[\cO]\>)^{-1}_L}}{{(\<\wL[\cO']\wL[\cO']\>)^{-1}_L}}\frac{\wS[\wL[\tl\cD]]^A\cD_{1,A}\p{\tsym_2\<\cO_1 \cO'_2 \wL[\cO'^\dagger]\>}^{-1}_L \p{\tsym_4\<\cO_4 \cO_3 \wL[\cO]\>}^{-1}_L}{(\<\wL[\cO]\wL[\cO]\>)^{-1}_L}.
\ee

We now want to express
\be
\wS[\wL[\tl\cD]]^A\cD_{1,A}(\<\cO_1 \cO'_2 \wL[\cO'^\dagger]\>)^{-1}_L
\ee
in terms of 
\be
(\<\cO_1 \cO_2 \wL[\cO^\dagger]\>)^{-1}_L.
\ee
To do this, let us consider the Lorentzian pairing
\be\label{eq:derivationpairing}
&\p{\<\cO_1 \cO_2 \wL[\cO^\dagger]\>,\wS[\wL[\tl\cD]]^A\cD_{1,A}(\<\cO_1\cO'_2 \wL[\cO'^\dagger]\>)^{-1}_L}_L\nn\\
&=\p{\overline{\wS[\wL[\tl\cD]]}^A\cD^*_{1,A}\<\cO_1 \cO_2 \wL[\cO^\dagger]\>,(\<\cO_1 \cO'_2 \wL[\cO'^\dagger]\>)^{-1}_L}_L.
\ee
We can use the results of appendix~\ref{app:operations} and~\ref{sec:algebraoftransforms} to write
\be
\overline{\wS[\wL[\tl\cD]]}=\overline{\wL[\wS[\tl\cD]]}=\wL^{-1}[\overline{\wS[\tl\cD]}]=\frac{f_L(\wL[\De],\wL[\rho^\dagger],\tsym)}{f_L(\wL[\De]+\wL[\delta_\De],\wL[\rho^\dagger]+\wL[\delta_\rho],c\tsym)}\wL[\overline{\wS[\tl\cD]}],
\ee
where $(\delta_\De,\delta_\rho)$ is the weight by which $\overline{\wS[\tl\cD]}$ shifts and $c$ is defined by~\eqref{eq:cdefn} for $\tl\cD$. Since we consider only bosonic representations, $c=\pm 1$ ($c=-1$ for vector weight-shifting operators). We have $(\De+\delta_\De,\rho^\dagger+\delta_\rho)=(\De',\rho'^\dagger)$.
We furthermore have
\be
\wL[\overline{\wS[\tl\cD]}]\wL[\cO^\dagger]=\wL[\overline{\wS[\tl\cD]}\cO^\dagger]=\frac{4^{-J}f_J(\rho^\dagger)}{4^{-J'}f_J(\rho'^\dagger)}\wL[(\wSD[\tl\cD])^*\cO^\dagger]
\ee
and thus 
\be
\overline{\wS[\wL[\tl\cD]]}^A\cD^*_{1,A}\<\cO_1 \cO_2 \wL[\cO^\dagger]\>=\frac{4^{-J}f_J(\rho^\dagger)}{4^{-J'}f_J(\rho'^\dagger)}\frac{f_L(\wL[\De],\wL[\rho^\dagger],\tsym)}{f_L(\wL[\De'],\wL[\rho'^\dagger],c\tsym)}
\<\cO_1 \cD^*_{1,A}\cO_2 \wL[(\wSE[\tl\cD])^*\cO^\dagger]\>.
\ee
Now use $(\wSE[\tl\cD])^*=\wSE^{-1}[\tl\cD^*]$, apply $\wL$ to both sides of~\eqref{eq:seedwswithoutinverses} and conclude
\be
\overline{\wS[\wL[\tl\cD]]}^A\cD^*_{1,A}\<\cO_1 \cO_2 \wL[\cO^\dagger]\>=\frac{4^{-J}f_J(\rho^\dagger)}{4^{-J'}f_J(\rho'^\dagger)}\frac{f_L(\wL[\De],\wL[\rho^\dagger],\tsym)}{f_L(\wL[\De'],\wL[\rho'^\dagger],c\tsym)}
\frac{(\<\cO^\dagger\cO\>,\<\tl\cO^\dagger\tl\cO\>)_E}{(\<\cO'^\dagger\cO'\>,\<\tl\cO'^\dagger\tl\cO'\>)_E}\<\cO_1\cO'_2\wL[\cO^{\prime\dagger}]\>.
\ee
This implies that the pairing~\eqref{eq:derivationpairing} is equal to
\be
\frac{4^{-J}f_J(\rho^\dagger)}{4^{-J'}f_J(\rho'^\dagger)}\frac{f_L(\wL[\De],\wL[\rho^\dagger],\tsym)}{f_L(\wL[\De'],\wL[\rho'^\dagger],c\tsym)}
\frac{(\<\cO^\dagger\cO\>,\<\tl\cO^\dagger\tl\cO\>)_E}{(\<\cO'^\dagger\cO'\>,\<\tl\cO'^\dagger\tl\cO'\>)_E}
\ee
and thus
\be
&\wS[\wL[\tl\cD]]^A\cD_{1,A}\<\cO_1 \cO'_2 \wL[\cO'^\dagger]\>^{-1}\nn\\
&=	\frac{4^{-J}f_J(\rho^\dagger)}{4^{-J'}f_J(\rho'^\dagger)}\frac{f_L(\wL[\De],\wL[\rho^\dagger],\tsym)}{f_L(\wL[\De'],\wL[\rho'^\dagger],c\tsym)}
\frac{(\<\cO^\dagger\cO\>,\<\tl\cO^\dagger\tl\cO\>)_E}{(\<\cO'^\dagger\cO'\>,\<\tl\cO'^\dagger\tl\cO'\>)_E}(\<\cO_1 \cO_2 \wL[\cO^\dagger]\>)^{-1}_L.
\ee
Collecting all the pieces, we find that~\eqref{eq:intermediateH} implies~\eqref{eq:claimforhgeneralized} for the seed $H$ if
\be
C=\frac{{(\<\wL[\cO]\wL[\cO]\>)^{-1}_L}}{{(\<\wL[\cO']\wL[\cO']\>)^{-1}_L}}\frac{4^{-J}f_J(\rho^\dagger)}{4^{-J'}f_J(\rho'^\dagger)}\frac{f_L(\wL[\De],\wL[\rho^\dagger],\tsym)}{f_L(\wL[\De'],\wL[\rho'^\dagger],c\tsym)}
\frac{(\<\cO^\dagger\cO\>,\<\tl\cO^\dagger\tl\cO\>)_E}{(\<\cO'^\dagger\cO'\>,\<\tl\cO'^\dagger\tl\cO'\>)_E}=1.
\ee

\paragraph{Proof that $C=1$}

First, we note that
\be
\frac{(\<\cO^\dagger\cO\>,\<\tl\cO^\dagger\tl\cO\>)_E}{(\<\cO'^\dagger\cO'\>,\<\tl\cO'^\dagger\tl\cO'\>)_E}=\frac{4^J\dim\rho^\dagger}{4^{J'}\dim \rho'^\dagger}.
\ee
Furthermore, $f_J$ is square of shadow transform in $d-2$ dimensions. Thus if we write $\rho^\dagger=(J,\lambda)$ then (similarly to appendix~\ref{sec:euclideanharmonicanalysis})
\be
f_J(\rho^\dagger)\propto \frac{\dim\l}{\mu(\rho^\dagger)},
\ee
where $\mu$ is the Plancherel measure for $\SO(d-1,1)$. Furthermore, the ratio
\be
\frac{\mu(\rho^\dagger)}{\dim\rho^\dagger}
\ee
is independent of $\rho$~\cite{Dobrev:1977qv,ShadowFuture}. This implies that
\be\label{eq:proofEpairingsgone}
\frac{4^{-J}f_J(\rho^\dagger)}{4^{-J'}f_J(\rho'^\dagger)}
\frac{(\<\cO^\dagger\cO\>,\<\tl\cO^\dagger\tl\cO\>)_E}{(\<\cO'^\dagger\cO'\>,\<\tl\cO'^\dagger\tl\cO'\>)_E}=\frac{\dim\lambda}{\dim\lambda'}.
\ee
Furthermore, we can write
\be\label{eq:proofdimlgone}
\frac{\dim\lambda}{\dim\lambda'}=\frac{(\<\cO'\cO'^\dagger\>)^{-1}_L}{(\<\cO\cO^\dagger\>)^{-1}_L},
\ee
which is due to
\be
	(\<\cO\cO^\dagger\>,\<\cO^{S}\cO^{S\dagger}\>)_L\propto \dim\l,
\ee
and similarly for primed quantities (see appendix~\ref{app:contspinpairings}).

Now we need to recall the calculation of $\<\wL[\cO]\wL[\cO^\dagger]\>$. We have for the kernel which is represented by the time-ordered two-point function $\<\cO\cO^\dagger\>$,
\be
\<\cO\cO^\dagger\>=\wS (1+\sum_{n=1}^\infty \g^{-n}(\tsym^n+\tsym^{-n}) ),
\ee
where $\g$ is the eigenvalue of $\tsym$ corresponding to $\cO$, see~\eqref{eq:tsymeigenvalue}. The calculation in section~\ref{sec:naturalformula} then yields, in the same sense as above,
\be
\<\wL[\cO]\wL[\cO^\dagger]\>=\wS (1+\sum_{n=1}^\infty \g^{-n}(\tsym^n+\tsym^{-n}) ) \tsym^{-1} f_L(\wF[\De],\wF[\rho],\tsym).
\ee
Since $\wL$ commutes with $\wS$, we find that we can replace $f_L(\wF[\De],\wF[\rho],\tsym)$ by $f_L(\wL[\De],\wL[\rho],\tsym)$. This implies
\be
\frac{(\<\wL[\cO]\wL[\cO]\>)^{-1}_L}{(\<\wL[\cO']\wL[\cO']\>)^{-1}_L}=\frac{(1+\sum_{n=1}^\infty \g'^{-n}(\tsym^n+\tsym^{-n}) ) f_L(\wL[\De'],\wL[\rho'],\tsym)}{(1+\sum_{n=1}^\infty \g^{-n}(\tsym^n+\tsym^{-n}) ) f_L(\wL[\De],\wL[\rho],\tsym)}\frac{(\<\cO\cO^\dagger\>)^{-1}_L}{(\<\cO'\cO'^\dagger\>)^{-1}_L}.
\ee
Recall that $\overline{\wS[\tl\cD]}$ takes $\cO$ to $\cO'$ and $c\tl\cD=\tsym\tl\cD\tsym^{-1}$, which implies $\g'=c\g=\pm g$. (Recall we consider only bosonic representations.) Thus we have
\be
	&\frac{(1+\sum_{n=1}^\infty \g'^{-n}(\tsym^n+\tsym^{-n}) ) }{(1+\sum_{n=1}^\infty \g^{-n}(\tsym^n+\tsym^{-n}) )}f_L(\wL[\De'],\wL[\rho'],\tsym)\nn\\
	&=\frac{(1+\sum_{n=1}^\infty \g'^{-n}(\tsym^n+\tsym^{-n}) ) }{(1+\sum_{n=1}^\infty \g'^{-n}((c\tsym)^n+(c\tsym)^{-n}) )}f_L(\wL[\De'],\wL[\rho'],\tsym)\nn\\
	&=\frac{(c\tsym-\gamma)(c\tsym-\gamma^{-1})}{(\tsym-\gamma)(\tsym-\gamma^{-1})}f_L(\wL[\De'],\wL[\rho'],\tsym)\nn\\
	&=f_L(\wL[\De'],\wL[\rho'],c\tsym),
\ee
where we used the fact that~\eqref{eq:tsymdep} is $\tsym$-independent. We thus conclude that
\be
\frac{(\<\wL[\cO]\wL[\cO^\dagger]\>)^{-1}_L}{(\<\wL[\cO']\wL[\cO'^\dagger]\>)^{-1}_L}=\frac{f_L(\wL[\De'],\wL[\rho'],c\tsym)}{f_L(\wL[\De],\wL[\rho],\tsym)}\frac{(\<\cO\cO^\dagger\>)^{-1}_L}{(\<\cO'\cO'^\dagger\>)^{-1}_L}.
\ee
By combining this equation with~\eqref{eq:proofEpairingsgone} and~\eqref{eq:proofdimlgone} we see that indeed\footnote{Since we for simplicity restricted to bosonic representations, we haven't been very careful with distinguishing $\rho$ and $\rho^\dagger$. (There is no difference except possibly for self-dual tensors.) It would be interesting to repeat our argument in a more careful manner, accounting for fermionic representations as well.}
\be
	C=1.
\ee

\section{Conformal blocks with continuous spin}
\label{sec:conformalblockscontspin}

\subsection{Gluing three-point structures}
\label{app:3ptgluing}

Consider two three-point structures $\<\cO_1\cO_2\cO\>$ and $\<\cO \cO_3 \cO_4\>$. We can glue them into a conformal block as follows. We find a linear operator $B_{12\cO}(x_{12})$ such that in the OPE limit $1\to 2$, the first three-point structure becomes
\be
\label{eq:opelimit}
\<\cO_1 \cO_2 \cO^\dagger(x)\> &\sim B_{12\cO}(x_{12}) \<\cO(x_2)\cO^\dagger(x)\>,\qquad (|x_{12}|\ll|x_1-x|,|x_2-x|).
\ee
For example, when $\cO_1,\cO_2,\cO$ are all scalars, we have
\be
B_{12\cO}(x_{12}) &= x_{12}^{\De_\cO-\De_1-\De_2}.
\ee
($B_{12\cO}$ can be extended to a differential operator such that (\ref{eq:opelimit}) becomes an equality away from the  $1\to 2$ limit, but this is not necessary for the current discussion.) Note that to define $B_{12\cO}$ we must choose a normalization of the two-point structure $\<\cO\cO\>$.

We define a conformal block $G^{\cO_i}_\cO(x_i)$ as the conformally-invariant solution to the conformal Casimir equation~\cite{DO2} whose OPE limit is
\be\label{eq:Gopelimit}
G^{\cO_i}_\cO(x_i) &\sim B_{12\cO}(x_{12})\<\cO(x_2)\cO_3\cO_4\>,\qquad (|x_{12}|\ll|x_{ij}|).
\ee
It is very useful to introduce the following notation for a conformal block, which makes manifest the choices of two- and three-point structures needed to define it
\be
\label{eq:nicenotationforblock}
G_\cO^{\cO_i}(x_i) &= \frac{\<\cO_1 \cO_2 \cO^\dagger\>\<\cO \cO_3 \cO_4\>}{\<\cO\cO^\dagger\>}.
\ee
In our convention $\cO$ appears in the OPE $\cO_1\times\cO_2$ and $\cO^\dagger$ in the OPE $\cO_3\times\cO_4$.

\subsubsection{Example: integer spin in Euclidean signature}

As an example, let us review the case of external scalars $\f_1,\dots,\f_4$ and an exchanged operator $\cO$ with integer spin $J$,
\be
G_{\De,J}^{\De_i}(x_i) &= \frac{\<\f_1\f_2\cO\>\<\f_3\f_4\cO\>}{\<\cO\cO\>},
\ee
where $\<\f_1\f_2\cO\>$ and $\<\f_3\f_4\cO\>$ are the standard three-point structures (\ref{eq:standardthreeptconvention}) and $\<\cO\cO\>$ is the standard two-point structure (\ref{eq:standardtwoptconvention}). We will assume that all points are in Euclidean signature.

In the OPE limit $1\to 2$, we have
\be
\label{eq:onetwolimit}
\<\f_1\f_2\cO(x_0,z)\> &\sim \frac{1}{x_{12}^{\De_1+\De_2-\De+J}} \frac{(-2z\.I(x_{20})\.x_{12})^J}{x_{20}^{2\De}} \nn\\
&= \frac{1}{x_{12}^{\De_1+\De_2-\De+J}} x_{12}^{\mu_1}\cdots x_{12}^{\mu_J} \<\cO_{\mu_1\cdots\mu_J}(x_2)\cO(x_0,z)\>. 
\ee
To compute the leading behavior of the block, it suffices to take the limit $3\to 4$ in $\<\f_3\f_4\cO\>$,
\be
\<\f_3\f_4\cO_{\mu_1\cdots\mu_J}(x_2)\> &= \frac{1}{x_{34}^{\De_3+\De_4-\De+J}} \frac{(-2I(x_{42})\.x_{34})_{\mu_1}\cdots (-2I(x_{42})\.x_{34})_{\mu_J}-\textrm{traces}}{x_{42}^{2\De}}.
\ee
(This limit is identical to the first line of (\ref{eq:onetwolimit}) after replacing $1,2,0\to3,4,2$ and stripping off the polarization vector $z$.) Thus the OPE limit of the resulting block is
\be
G_{\De,J}^{\De_i}(x_i) &\sim \frac{x_{12}^{\mu_1}\cdots x_{12}^{\mu_J}}{x_{12}^{\De_1+\De_2-\De+J}x_{34}^{\De_3+\De_4-\De+J}}  \frac{(-2I(x_{42})\.x_{34})_{\mu_1}\cdots (-2I(x_{42})\.x_{34})_{\mu_J}-\textrm{traces}}{x_{42}^{2\De}}
\nn\\
&=
 \frac{1}{x_{12}^{\De_1+\De_2}x_{34}^{\De_3+\De_4}}\p{\frac{x_{12}^2x_{34}^2}{x_{42}^{4}}}^{\De/2} 2^J\hat C_J\p{\frac{-x_{12}\.I(x_{42})\.x_{34}}{|x_{12}||x_{34}|}}.
\ee
Here, we've used the identity
\be
(m^{\mu_1}\cdots m^{\mu_J})(n_{\mu_1}\cdots n_{\mu_J} - \textrm{traces}) &= |m|^J |n|^J \hat C_J\p{\frac{m\.n}{|m||n|}},
\ee
where
\be
\label{eq:hatC}
\hat C_J(\eta) &= \frac{\G(\frac{d-2}{2})\G(J+d-2)}{2^J \G(J+\frac{d-2}{2})\G(d-2)}  {}_2F_1\p{-J,J+d-2,\frac{d-1}{2},\frac{1-\eta}{2}}
\ee
is proportional to a Gegenbauer polynomial (note in particular that for $\eta=1$ the hypergeometric function reduces to 1). Factoring out some standard kinematical factors, we find
\be
\label{eq:blockwithstandardfactors}
G_{\De,J}^{\De_i}(x_i) &= \frac{1}{(x_{12}^2)^{\frac{\De_1+\De_2}{2}} (x_{34}^2)^{\frac{\De_3+\De_4}{2}}} \p{\frac{x_{14}^2}{x_{24}^2}}^{\frac{\De_2-\De_1}{2}} \p{\frac{x_{14}^2}{x_{13}^2}}^{\frac{\De_3-\De_4}{2}} G_{\De,J}^{\De_i}(\chi,\bar\chi),
\ee
where  $G_{\De,J}^{\De_i}(\chi,\bar\chi)$ is a solution to the conformal Casimir equations normalized so that
\be
G_{\De,J}^{\De_i}(\chi,\bar\chi) &\sim (\chi \bar\chi)^{\De/2} \p{\frac{\chi}{\bar\chi}}^{-J/2},\qquad (\chi\ll\bar\chi\ll1).
\ee
Here, $\chi,\bar\chi$ are conformal cross-ratios defined by $u=\chi\bar\chi$, $v=(1-\chi)(1-\bar\chi)$. This is the standard conformal block in the normalization convention of \cite{Caron-Huot:2017vep,Simmons-Duffin:2017nub}.

\subsubsection{Example: continuous spin in Lorentzian signature}

Our definition of a conformal block also works when $\cO$ has continuous spin. However, now we must allow $B_{12\cO}$ to be an integral operator in the polarization vector of $\cO$. Let us again consider external scalars $\f_1,\dots,\f_4$. For later applications, we work in a Lorentzian configuration where all four points $1,2,3,4$ are in the same Minkowski patch, with the causal relationships $1>2$, $3>4$, and all other pairs spacelike-separated, see figure~\ref{fig:configcontinuousspinblock}.

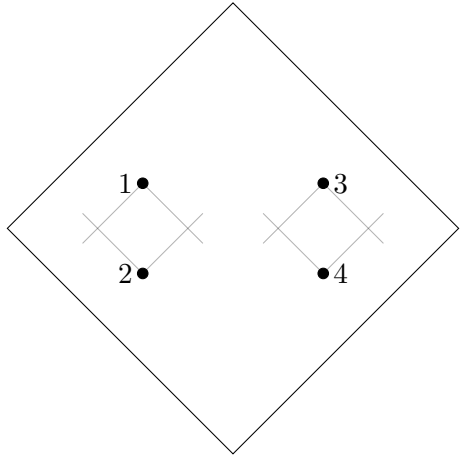
\begin{figure}[ht!]
	\centering
		\begin{tikzpicture}
		
		\draw (-3,0) -- (0,3) -- (3,0) -- (0,-3) -- cycle;
				
		\draw[fill=black] (-1.2,0.6) circle (0.07);
		\draw[fill=black] (1.2,0.6) circle (0.07);
		\draw[fill=black] (-1.2,-0.6) circle (0.07);
		\draw[fill=black] (1.2,-0.6) circle (0.07);

		\draw[opacity=0.3] (-2,-0.2) -- (-1.2,0.6) -- (-0.4,-0.2);
		\draw[opacity=0.3] (-2,0.2) -- (-1.2,-0.6) -- (-0.4,0.2);
		\draw[opacity=0.3] (2,-0.2) -- (1.2,0.6) -- (0.4,-0.2);
		\draw[opacity=0.3] (2,0.2) -- (1.2,-0.6) -- (0.4,0.2);
		
		\node[left] at (-1.2,0.6) {$1$};
		\node[left] at (-1.2,-0.6) {$2$};
		\node[right] at (1.2,0.6) {$3$};	
		\node[right] at (1.2,-0.6) {$4$};
		
		\end{tikzpicture}
		\caption{A configuration of points where $1>2$ and $3>4$, with all other pairs of points spacelike-separated. The three-point structure (\ref{eq:positivestruct}) is positive in this configuration.}
		\label{fig:configcontinuousspinblock}
\end{figure}

 We also modify the three-point structures by taking $x_{34}^2\to -x_{34}^2$ and $x_{12}^2\to -x_{12}^2$ so that they are positive when $x_0$ is spacelike from $1,2$ and $3,4$, since precisely these positive structures will appear later. Specifically, let
\be
T^{\De_1,\De_2}_{\De,J}(x_1,x_2,x_0,z) &= \frac{(2z\.x_{20} x_{10}^2 - 2z\.x_{10} x_{20}^2)^J}{(-x_{12}^2)^{\frac{\De_1+\De_2-\De+J}{2}} (x_{10}^2)^{\frac{\De_1+\De-\De_2+J}{2}} (x_{20}^2)^{\frac{\De_2+\De-\De_1+J}{2}}}.
\label{eq:positivestruct}
\ee
We will study the block
\be
\frac{T^{\De_1,\De_2}_{\De,J} T^{\De_3,\De_4}_{\De,J}}{\<\cO\cO\>},
\ee
where $\<\cO\cO\>$ is the two-point structure (\ref{eq:standardtwoptconvention}).
To define a block, our structures only need to be defined when $x_0$ is spacelike from the other points, so we do not need to give an $i\e$ prescription here. 

In the OPE limit $1\to 2$, we have
\be
T^{\De_1,\De_2}_{\De,J}(x_1,x_2,x_0,z) &\sim \frac{1}{(-x_{12}^2)^{\frac{\De_1+\De_2-\De+J}{2}}} \frac{(-2z\.I(x_{20})\.x_{12})^J}{(x_{20}^{2})^\De}\qquad (1\to 2).
\ee
The quantity on the right differs from the two-point structure $\<\cO(x_2,z')\cO(x_0,z)\>$ by the replacement $z'\to x_{12}$. We can no longer strip off $z'$ and contract indices with $x_{12}$. However, the replacement can be achieved via an integral transform:
\be\label{eq:Bdefinition}
T^{\De_1,\De_2}_{\De,J}(x_1,x_2,x_0,z) &\sim B_{12\cO}\<\cO(x_2,z') \cO(x_0,z)\> \nn\\
B_{12\cO}f(x',z') &= \frac{1}{(-x_{12}^2)^{\frac{\De_1+\De_2-\De-J-d+2}{2}}}  \frac{\G(J+d-2)}{\pi^{\frac{d-2}{2}} \G(J+\frac{d-2}{2})} \int D^{d-2} z' (-2 x_{12}\.z')^{2-d-J} f(x',z').
\ee

Now let us apply $B_{12\cO}$ to the three-point structure $T^{\De_3,\De_4}_{\De,J}(x_3,x_4,x_2,z)$, working in the limit $3\to 4$  (since this is sufficient to determine the small cross-ratio dependence of the resulting block). In doing so, we need the identity
\be
&\int D^{d-2} z'\, (-2 x_{12}\.z')^{2-d-J} (-2 z'\.I(x_{42})\.x_{34})^J\nn\\
 &= (-x_{12}^2)^{\frac{2-d-J}{2}} (-x_{34}^2)^{\frac J 2}\frac{2^{2-d} \vol(S^{d-2}) }{\hat C_J(1)}  \hat C_J\p{\frac{-x_{12} \. I(x_{42}) \. x_{34}}{(-x_{12}^2)^{1/2}(-x_{34}^2)^{1/2}}},
\label{eq:gegenbauerintegral}
\ee
where $\hat C_J(\eta)$ is given in (\ref{eq:hatC}). (Here, it is important that we use the correct definition of $\hat C_J$ for non-integer $J$.)
Using (\ref{eq:gegenbauerintegral}), we find that in the OPE limit
\be
\frac{T^{\De_1,\De_2}_{\De,J} T^{\De_3,\De_4}_{\De,J}}{\<\cO\cO\>}
&\sim
\frac{1}{(-x_{12}^2)^{\frac{\De_1+\De_2}{2}}(-x_{34}^2)^{\frac{\De_3+\De_4}{2}}}\p{\frac{x_{12}^2x_{34}^2}{x_{42}^{4}}}^{\De/2} 2^{J} \hat C_J\p{\frac{-x_{12} \. I(x_{42}) \. x_{34}}{(-x_{12}^2)^{1/2}(-x_{34}^2)^{1/2}}}, \nn\\
\ee
so that
\be
\label{eq:resultforcontinuousspipblock}
\frac{T^{\De_1,\De_2}_{\De,J} T^{\De_3,\De_4}_{\De,J}}{\<\cO\cO\>} &= \frac{1}{(-x_{12}^2)^{\frac{\De_1+\De_2}{2}} (-x_{34}^2)^{\frac{\De_3+\De_4}{2}}} \p{\frac{x_{14}^2}{x_{24}^2}}^{\frac{\De_2-\De_1}{2}} \p{\frac{x_{14}^2}{x_{13}^2}}^{\frac{\De_3-\De_4}{2}} G^{\De_i}_{\De,J}(\chi,\bar\chi).
\ee
This is the same result we would have gotten by pretending $J$ was an integer and performing the computation in the previous subsection. However, here we see that a conformal block with non-integer $J$ is well-defined and completely specified by continuous-spin two- and three-point structures.

\subsubsection{Rules for weight-shifting operators}

Let us consider how the gluing rule described in~\ref{app:3ptgluing} interacts with weight-shifting operators changing the internal representation. Suppose we can write
\be
	\<\cO_1 \cO_2 \cO^\dagger(x)\>=\<\cO_1 (\cD_A\cO_2') (\tl\cD^A\cO^{\prime\dagger})\>
\ee
for a pair of weight-shifting operators $\cD$ and $\tl\cD$. By acting with the same weight-shifting operators on~\eqref{eq:opelimit} for primed operators we find
\be
	\<\cO_1 \cO_2 \cO^\dagger(x)\>\sim (\cD_{2,A}B_{12\cO})(x_{12}) \<\cO(x_2)(\tl\cD^A\cO^{\prime\dagger})(x)\>.
\ee
Recall the crossing equation~\eqref{eq:twoptcrossing}, which holds when the two-point structures are related to the kernel of $\wS$-transform. Let us assume for now that this is the case. Then we find
\be
	\<\cO_1 \cO_2 \cO^\dagger(x)\>\sim  (\cD_{2,A}B_{12\cO})(x_{12}) \<(\wS[\overline{\tl\cD}]^A\cO)(x_2)\cO^{\prime\dagger}(x)\>.	
\ee
Substituting this into~\eqref{eq:Gopelimit}, we find
\be
G^{\cO_i}_\cO(x_i) &\sim (\cD_{2,A}B_{12\cO})(x_{12}) \<(\wS[\overline{\tl\cD}]^A\cO)(x_2)\cO_3\cO_4\>.
\ee
Using notation~\eqref{eq:nicenotationforblock} we can summarize this as\footnote{The results of~\cite{Karateev:2017jgd} concerning weight-shifting of the internal representation are recovered by further using crossing for the weight-shifting operator acting on the right three-point structure.}
\be
\label{eq:blockwsrulepre}
\frac{\<\cO_1 (\cD_A\cO_2) (\tl\cD^A\cO^{\prime\dagger})\>\<\cO \cO_3 \cO_4\>}{\<\cO\cO\>}=\frac{\<\cO_1 (\cD_A\cO_2) \cO^{\prime\dagger}\>\<(\wS[\overline{\tl\cD}]^A\cO) \cO_3 \cO_4\>}{\<\cO'\cO'\>}.
\ee
This holds if the two-point functions for $\cO$ and $\cO'$ are standard in the sense of being related to $\wS$-kernel. Generalization of this to arbitrary two-point functions is given by
\be
\label{eq:blockwsrule}
\frac{\<\cO_1 (\cD_A\cO_2) (\tl\cD^A\cO^{\prime\dagger})\>\<\cO \cO_3 \cO_4\>}{\<\cO\cO\>}=\frac{\<\cO'\cO'\>}{\<\cO\cO\>}\frac{\<\cO_1 (\cD_A\cO_2) \cO^{\prime\dagger}\>\<(\wS[\overline{\tl\cD}]^A\cO) \cO_3 \cO_4\>}{\<\cO'\cO'\>},
\ee
where the ratio of two-point functions is a scalar defined as
\be
	\frac{\<\cO'\cO'\>}{\<\cO\cO\>}\equiv \frac{\<\cO'\cO'\>}{\<\cO'\cO'\>_0}\frac{\<\cO\cO\>_0}{\<\cO\cO\>},
\ee
where the structures with subscript $0$ are standard and related to $\wS$-kernel. Note that we can reverse~\eqref{eq:blockwsrule} by replacing $\tl\cD\to \overline{\wS^{-1}[\tl\cD]}$. However, due to~\eqref{eq:lorentzianSoverlinecommutator} we have $\overline{\wS^{-1}[\tl\cD]}=\wS[\overline{\tl\cD}]$ and so we get the same rule for moving the operator from right to left.

\subsection{A Lorentzian integral for a conformal block}
\label{sec:lorentzianblock}

Conformal blocks in Euclidean signature can be computed via a ``shadow representation," where one integrates a product of three-point functions over Euclidean space \cite{Ferrara:1972uq,Dolan:2000ut,SimmonsDuffin:2012uy}. However, this integral produces a linear combination of a standard block $G^{\De_i}_{\De,J}$ and the so-called ``shadow block" $G^{\De_i}_{d-\De,J}$. The shadow block comes from regions of the integral where the OPE is not valid inside the integrand.

By contrast, there is a simple integral representation for a block alone (without its shadow) in Lorentzian signature \cite{Czech:2016xec}. The reason is that in Lorentzian signature, we can integrate over a conformally-invariant region that stays away from two of the points, say $x_{3,4}$. Thus, the $x_3\to x_4$ OPE limit can be taken inside the integrand and dictates the behavior of the result.

\begin{figure}[ht!]
	\centering
		\begin{tikzpicture}
		
		\draw (-3,0) -- (0,3) -- (3,0) -- (0,-3) -- cycle;
				
		\draw[fill=black] (-1.2,0.6) circle (0.07);
		\draw[fill=black] (1.2,0.6) circle (0.07);
		\draw[fill=black] (-1.2,-0.6) circle (0.07);
		\draw[fill=black] (1.2,-0.6) circle (0.07);
		\draw[fill=black] (-1.35,-0.05) circle (0.07);

		\draw[opacity=0.3] (-2,-0.2) -- (-1.2,0.6) -- (-0.4,-0.2);
		\draw[opacity=0.3] (-2,0.2) -- (-1.2,-0.6) -- (-0.4,0.2);
		\fill[yellow,opacity=0.3] (-1.8,0) -- (-1.2,0.6) -- (-0.6,0) -- (-1.2,-0.6) -- (-1.8,0);

		\draw[opacity=0.3] (2,-0.2) -- (1.2,0.6) -- (0.4,-0.2);
		\draw[opacity=0.3] (2,0.2) -- (1.2,-0.6) -- (0.4,0.2);
		
		\node[left] at (-1.2,0.7) {$1$};
		\node[left] at (-1.2,-0.7) {$2$};
		\node[right] at (1.2,0.6) {$3$};	
		\node[right] at (1.2,-0.6) {$4$};
		\node[right] at (-1.35,-0.05) {$0$};
		
		\end{tikzpicture}
		\caption{In the Lorentzian integral for a conformal block, the point $x_0$ is integrated over the diamond $2<0<1$ (yellow). Because the integration region is far away from points $3,4$, the $3\x 4$ OPE is valid inside the integral.}
		\label{fig:configcontinuousspinblocktwo}
\end{figure}
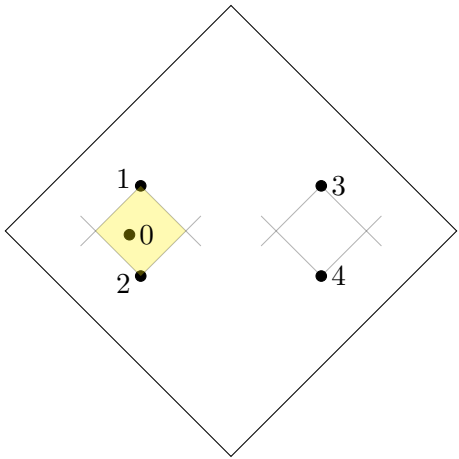

The Lorentzian integral for a conformal block plays an important role in section~\ref{sec:inversionforfourptofprimaries}, so let us compute it.  Consider the same configuration as in the previous subsection where $1,2,3,4$ are in the same Poincare patch, with $1>2$ and $3>4$, and other pairs of points spacelike separated from each other (figure~\ref{fig:configcontinuousspinblocktwo}). We can produce a conformal block in the $1,2\to 3,4$ channel by performing a shadow-like integral over the causal diamond $2<0<1$,
\be
\label{eq:lorentzianintegralforblock}
\cG_{\De,J} \equiv \int_{2<0<1} d^d x_0 D^{d-2} z \,|T^{\De_1,\De_2}_{d-\De,2-d-J}(x_1,x_2,x_0,z)| T^{\De_3,\De_4}_{\De,J}(x_3,x_4,x_0,z)
\ee
The notation $|T^{\De_1,\De_2}_{d-\De,2-d-J}|$ means that spacetime intervals $x_{ij}$ should appear with absolute values $|x_{ij}|$, so that the integrand is positive in the configuration we are considering. (This notation is somewhat imprecise, since when $\De_1,\De_2,\De,J$ are complex, we do not mean one should take the absolute value of the whole expression.)
When $J$ is an integer, there is a similar integral expression for a Lorentzian block with 
$\int D^{d-2} z$ replaced by index contractions. However (\ref{eq:lorentzianintegralforblock}) also works for continuous spin. 

The expression (\ref{eq:lorentzianintegralforblock}) is proportional to $G_{\De,J}$ because it is a conformally-invariant solution to the Casimir equation whose OPE limit agrees with the OPE limit of $T^{\De_3,\De_4}_{\De,J}$ (because the integration point stays away from $x_{3,4}$). The behavior of the integral in the limit $1\to 2$ is not immediately obvious. However, conformal invariance requires that this limit must be the same as $3\to 4$. 

More precisely, in the OPE limit $3\to 4$, we have
\be
T^{\De_3,\De_4}_{\De,J}(x_3,x_4,x_0,z) &\sim B_{34\cO}\<\cO(x_4,z') \cO(x_0,z)\> \qquad (3\to 4,\ \  0\approx3,4),
\ee
where $B_{34\cO}$ is the linear operator defined in~\eqref{eq:Bdefinition}. Plugging this in, we find
\be
\label{eq:curlyg}
\cG_{\De,J} &\sim B_{34\cO}\int_{2<0<1} d^d x_0 D^{d-2} z\, |T^{\De_1,\De_2}_{d-\De,2-d-J}(x_1,x_2,x_0,z)| \<\cO(x_4,z')\cO(x_0,z)\> \qquad (3\to 4).
\ee
The integral in the OPE limit now takes the form of an $\wS$-transform.

\subsubsection{Shadow transform in the diamond}

Let us evaluate the integral (\ref{eq:curlyg}) by splitting it into two steps: first we apply $\wSD$ and then subsequently $\wSJ$. 
For notational convenience, define
\be
\De_0 &\equiv d-\De\nn\\
J_0 &\equiv 2-d-J.
\ee
The $\wSD$ transform is fixed by conformal invariance up to a coefficient $a^{\De_1,\De_2}_{\De_0,{J_0}}$,
\be
&{\wSD}_0[|T^{\De_1,\De_2}_{d-\De,2-d-J}(x_1,x_2,x_0,z)| \th(2<0<1)]\nn\\
&= \int_{2<0<1} d^d x_0 \frac{1}{x_{04}^{2(d-\De_0)}} |T^{\De_1,\De_2}_{d-\De,2-d-J}(x_1,x_2,x_0,I(x_{04})z)|  \nn\\
 &= a^{\De_1,\De_2}_{\De_0,{J_0}} \frac{|2z\.x_{14} x_{24}^2-2z\. x_{24} x_{14}^2|^{J_0}}{|x_{12}|^{\De_1+\De_2-(d-\De_0)+{J_0}}|x_{14}|^{\De_1+(d-\De_0)-\De_2+{J_0}}|x_{24}|^{\De_2+(d-\De_0)-\De_1+{J_0}}}.
\ee
Here, we are writing expressions valid in the kinematical configuration we are considering, namely $2<0<1$ and $4\approx1,0,2$.
To find the coefficient, we choose the following configuration in lightcone coordinates
\be
x_0 &= (u,v,x_\perp),\nn\\
x_1 &= (1,0,0),\nn\\
x_2 &= (0,1,0),\nn\\
x_{4} &= (\oo,\oo,0),\nn\\
w &= I(x_{04})z = (2,0,0),
\ee
where the metric is $x^2 = uv+x_\perp^2$. Note that since $4$ is sent to infinity, $w$ is actually independent of $x_0$. Our integral becomes
\be
a^{\De_1,\De_2}_{\De_0,{J_0}} &= \frac 1 {2^{{J_0}+1}} \int du\, dv\, dx_\perp \frac{|2w\.x_{10} x_{20}^2-2w\. x_{20} x_{10}^2|^{J_0}}{|x_{12}|^{\De_1+\De_2-\De_0+{J_0}}|x_{10}|^{\De_1+\De_0-\De_2+{J_0}}|x_{20}|^{\De_2+\De_0-\De_1+{J_0}}} \nn\\
&= \frac {\vol(S^{d-3})} 2 \int du\, dv\, dr\, r^{d-3} \frac{(u(1-u)-r^2)^{J_0}}{(u(1-v)-r^2)^{\frac{\De_1-\De_2+\De_0+{J_0}}{2}}(v(1-u)-r^2)^{\frac{\De_2-\De_1+\De_0+{J_0}}{2}}}.
\ee

It is now straightforward to perform the $v$ integral over $v\in[\frac{r^2}{1-u},\frac{u-r^2}{u}]$, followed by the $r$ integral over $r\in[0,\sqrt{u(1-u)}]$, and finally the $u$ integral over $u\in[0,1]$.  The result is
\be
a^{\De_1,\De_2}_{\De_0,{J_0}} &= \frac{\pi^{\frac {d-2}{2}} \G(2-\De_0) \G(\frac{2-{J_0}-\De_0+\De_1-\De_2}{2}) \G(\frac{d+{J_0}-\De_0+\De_1-\De_2}{2}) \G(\frac{2-{J_0}-\De_0-\De_1+\De_2}{2})\G(\frac{d+{J_0}-\De_0-\De_1+\De_2}{2})}{2\G(1+\frac d 2 - \De_0)\G(2-{J_0}-\De_0)\G(d+{J_0}-\De_0)}.
\ee
Note that $a^{\De_1,\De_2}_{\De_0,{J_0}}=a^{\De_1,\De_2}_{\De_0,2-d-{J_0}}$, which is consistent with the requirement that $\wSD$ commute with $\wSJ$. We can additionally perform $\wSJ$ using
\be
\int D^{d-2} z' (-2 z\.z')^{2-d-{J_0}} (-2z'\.v)^{J_0}
&= \frac{\pi^{\frac{d-2}{2}} \G(-{J_0}-\frac{d-2}{2})}{\G(-{J_0})} (-v^2)^{\frac{d-2}{2}+{J_0}}(-2z\.v)^{2-d-{J_0}}.
\ee
Combining everything together, we find
\be
{\wS}_0[|T^{\De_1,\De_2}_{d-\De,2-d-J}(x_1,x_2,x_0,z)| \th(2<0<1)]
&= b^{\De_1,\De_2}_{\De,J} T^{\De_1,\De_2}_{\De,J}(x_1,x_2,x_4,z)\nn\\ 
b^{\De_1,\De_2}_{\De,J} &\equiv \frac{\pi^{\frac{d-2}{2}} \G(J+\frac{d-2}{2}) }{\G(J+d-2)}a^{\De_1,\De_2}_{d-\De,2-d-J}.
\label{eq:bcoeff}
\ee
Plugging this into (\ref{eq:curlyg}) and using (\ref{eq:resultforcontinuousspipblock}), we conclude 
\be
\cG_{\De,J}(x_i)
&= \frac{b^{\De_1,\De_2}_{\De,J} }{(-x_{12}^2)^{\frac{\De_1+\De_2}{2}} (-x_{34}^2)^{\frac{\De_3+\De_4}{2}}} \p{\frac{x_{14}^2}{x_{24}^2}}^{\frac{\De_2-\De_1}{2}} \p{\frac{x_{14}^2}{x_{13}^2}}^{\frac{\De_3-\De_4}{2}} G^{\De_i}_{\De,J}(\chi,\bar\chi).
\label{eq:resultforlorentzianintegralforblock}
\ee

\subsection{Conformal blocks at large $J$}
\label{app:largeJ}

In this appendix, we compute the large-$J$ behavior of a conformal block. Recall that we have the decomposition
\be
G^{\De_i}_{\De,J}(\chi,\bar\chi) &= g_{\De,J}^\mathrm{pure}(\chi,\bar\chi) + \frac{\G(J+d-2)\G(-J-\frac{d-2}{2})}{\G(J+\frac{d-2}{2})\G(-J)} g^\mathrm{pure}_{\De,2-d-J}(\chi,\bar\chi).
\ee
Thus it suffices to compute the large-$J$ behavior of $g^\mathrm{pure}_{\De,J}$.

The Casimir equation was solved in the large-$\De$ limit in \cite{Kos:2013tga,Kos:2014bka}. We can use this result together with an affine Weyl reflection to determine $g^\mathrm{pure}_{\De,J}$ at large $J$.
The solution from \cite{Kos:2014bka} is given by
\be
\label{eq:largedeltasoln}
\frac{r^{\De} f_J(\cos\th) }{(1-r^2)^{\frac{d-2}{2}}(1+r^2 + 2r\cos\th)^{\frac 1 2(1+\De_{12}-\De_{34})}(1+r^2 - 2r\cos\th)^{\frac 1 2(1+\De_{34}-\De_{12})}}\qquad(|\De|\gg 1),
\ee
where $r$ and $\th$ are defined by
\be
\rho = r e^{i\th},\quad \bar\rho = r e^{-i\th},\qquad\chi = \frac{4\rho}{(1+\rho)^2},\quad \bar\chi = \frac{4\bar\rho}{(1+\bar\rho)^2}.
\ee
From studying the regime $r \ll 1$, we find that $f_J(\cos\th)$ must obey the Gegenbauer differential equation. 

Note that the conformal Casimir equation has the following symmetries:
\be
(\De,J) &\leftrightarrow (1-J,1-\De),\nn\\
r &\leftrightarrow w=e^{i\th}.
\ee
The first is an affine Weyl reflection that preserves the Casimir eigenvalue. The second transformation is equivalent to $\bar\rho\leftrightarrow 1/\bar\rho$, which leaves $\bar\chi$ invariant, and therefore also leaves the Casimir equation invariant. Applying these transformations to (\ref{eq:largedeltasoln}), we find
\be
\frac{w^{1-J} f_{1-\De}\p{\tfrac 1 2(r+\tfrac 1 r)} }{(1-w^2)^{\frac{d-2}{2}}(1+w^2+w(r+1/r))^{\frac 1 2(1+\De_{12}-\De_{34})}(1+w^2-w(r+1/r))^{\frac 1 2(1+\De_{34}-\De_{12})}}\qquad(|J|\gg 1).
\ee
Note in particular that we have replaced large-$\De$ with large-$J$.
Demanding pure power behavior as $r\to 0$ requires us to choose the following solution to the Gegenbauer equation:
\be
\label{eq:radialsolngegenbauer}
f_J(x)&=(2x)^J {}_2F_1\p{\frac{-J}{2},\frac{1-J}{2},2-J-\frac d 2,\frac{1}{x^2}}.
\ee
Finally, fixing the constant out front and rearranging terms, we find (\ref{eq:largeJlimitofgpure}).
	
\bibliographystyle{JHEP}
\bibliography{refs}

\providecommand{\href}[2]{#2}\begingroup\raggedright\begin{thebibliography}{100}

\bibitem{Cornalba:2006xk}
L.~Cornalba, M.~S. Costa, J.~Penedones and R.~Schiappa, \emph{{Eikonal
  Approximation in AdS/CFT: From Shock Waves to Four-Point Functions}},
  \href{http://dx.doi.org/10.1088/1126-6708/2007/08/019}{\emph{JHEP} {\bf 08}
  (2007) 019}, [\href{https://arxiv.org/abs/hep-th/0611122}{{\tt
  hep-th/0611122}}].

\bibitem{Cornalba:2006xm}
L.~Cornalba, M.~S. Costa, J.~Penedones and R.~Schiappa, \emph{{Eikonal
  Approximation in AdS/CFT: Conformal Partial Waves and Finite N Four-Point
  Functions}},
  \href{http://dx.doi.org/10.1016/j.nuclphysb.2007.01.007}{\emph{Nucl. Phys.}
  {\bf B767} (2007) 327--351},
  [\href{https://arxiv.org/abs/hep-th/0611123}{{\tt hep-th/0611123}}].

\bibitem{Cornalba:2007fs}
L.~Cornalba, \emph{{Eikonal methods in AdS/CFT: Regge theory and multi-reggeon
  exchange}},  \href{https://arxiv.org/abs/0710.5480}{{\tt 0710.5480}}.

\bibitem{Cornalba:2008qf}
L.~Cornalba, M.~S. Costa and J.~Penedones, \emph{{Eikonal Methods in AdS/CFT:
  BFKL Pomeron at Weak Coupling}},
  \href{http://dx.doi.org/10.1088/1126-6708/2008/06/048}{\emph{JHEP} {\bf 06}
  (2008) 048}, [\href{https://arxiv.org/abs/0801.3002}{{\tt 0801.3002}}].

\bibitem{Cornalba:2009ax}
L.~Cornalba, M.~S. Costa and J.~Penedones, \emph{{Deep Inelastic Scattering in
  Conformal QCD}}, \href{http://dx.doi.org/10.1007/JHEP03(2010)133}{\emph{JHEP}
  {\bf 03} (2010) 133}, [\href{https://arxiv.org/abs/0911.0043}{{\tt
  0911.0043}}].

\bibitem{Banks:2009bj}
T.~Banks and G.~Festuccia, \emph{{The Regge Limit for Green Functions in
  Conformal Field Theory}},
  \href{http://dx.doi.org/10.1007/JHEP06(2010)105}{\emph{JHEP} {\bf 06} (2010)
  105}, [\href{https://arxiv.org/abs/0910.2746}{{\tt 0910.2746}}].

\bibitem{Maldacena:2015iua}
J.~Maldacena, D.~Simmons-Duffin and A.~Zhiboedov, \emph{{Looking for a bulk
  point}},  \href{https://arxiv.org/abs/1509.03612}{{\tt 1509.03612}}.

\bibitem{Brower:2006ea}
R.~C. Brower, J.~Polchinski, M.~J. Strassler and C.-I. Tan, \emph{{The Pomeron
  and gauge/string duality}},
  \href{http://dx.doi.org/10.1088/1126-6708/2007/12/005}{\emph{JHEP} {\bf 12}
  (2007) 005}, [\href{https://arxiv.org/abs/hep-th/0603115}{{\tt
  hep-th/0603115}}].

\bibitem{mack1977}
G.~Mack, \emph{Convergence of operator product expansions on the vacuum in
  conformal invariant quantum field theory}, {\emph{Comm. Math. Phys.} {\bf 53}
  (1977) 155--184}.

\bibitem{Maldacena:2015waa}
J.~Maldacena, S.~H. Shenker and D.~Stanford, \emph{{A bound on chaos}},
  \href{http://dx.doi.org/10.1007/JHEP08(2016)106}{\emph{JHEP} {\bf 08} (2016)
  106}, [\href{https://arxiv.org/abs/1503.01409}{{\tt 1503.01409}}].

\bibitem{Hartman:2015lfa}
T.~Hartman, S.~Jain and S.~Kundu, \emph{{Causality Constraints in Conformal
  Field Theory}}, \href{http://dx.doi.org/10.1007/JHEP05(2016)099}{\emph{JHEP}
  {\bf 05} (2016) 099}, [\href{https://arxiv.org/abs/1509.00014}{{\tt
  1509.00014}}].

\bibitem{Costa:2012cb}
M.~S. Costa, V.~Goncalves and J.~Penedones, \emph{{Conformal Regge theory}},
  \href{http://dx.doi.org/10.1007/JHEP12(2012)091}{\emph{JHEP} {\bf 1212}
  (2012) 091}, [\href{https://arxiv.org/abs/1209.4355}{{\tt 1209.4355}}].

\bibitem{Roberts:2014ifa}
D.~A. Roberts and D.~Stanford, \emph{{Two-dimensional conformal field theory
  and the butterfly effect}},
  \href{http://dx.doi.org/10.1103/PhysRevLett.115.131603}{\emph{Phys. Rev.
  Lett.} {\bf 115} (2015) 131603}, [\href{https://arxiv.org/abs/1412.5123}{{\tt
  1412.5123}}].

\bibitem{Murugan:2017eto}
J.~Murugan, D.~Stanford and E.~Witten, \emph{{More on Supersymmetric and 2d
  Analogs of the SYK Model}},  \href{https://arxiv.org/abs/1706.05362}{{\tt
  1706.05362}}.

\bibitem{Shenker:2014cwa}
S.~H. Shenker and D.~Stanford, \emph{{Stringy effects in scrambling}},
  \href{http://dx.doi.org/10.1007/JHEP05(2015)132}{\emph{JHEP} {\bf 05} (2015)
  132}, [\href{https://arxiv.org/abs/1412.6087}{{\tt 1412.6087}}].

\bibitem{Caron-Huot:2017vep}
S.~Caron-Huot, \emph{{Analyticity in Spin in Conformal Theories}},
  \href{https://arxiv.org/abs/1703.00278}{{\tt 1703.00278}}.

\bibitem{Fitzpatrick:2012yx}
A.~L. Fitzpatrick, J.~Kaplan, D.~Poland and D.~Simmons-Duffin, \emph{{The
  Analytic Bootstrap and AdS Superhorizon Locality}},
  \href{http://dx.doi.org/10.1007/JHEP12(2013)004}{\emph{JHEP} {\bf 1312}
  (2013) 004}, [\href{https://arxiv.org/abs/1212.3616}{{\tt 1212.3616}}].

\bibitem{Komargodski:2012ek}
Z.~Komargodski and A.~Zhiboedov, \emph{{Convexity and Liberation at Large
  Spin}}, \href{http://dx.doi.org/10.1007/JHEP11(2013)140}{\emph{JHEP} {\bf
  1311} (2013) 140}, [\href{https://arxiv.org/abs/1212.4103}{{\tt 1212.4103}}].

\bibitem{Fitzpatrick:2015qma}
A.~L. Fitzpatrick, J.~Kaplan, M.~T. Walters and J.~Wang, \emph{{Eikonalization
  of Conformal Blocks}},  \href{https://arxiv.org/abs/1504.01737}{{\tt
  1504.01737}}.

\bibitem{Li:2015itl}
D.~Li, D.~Meltzer and D.~Poland, \emph{{Conformal Collider Physics from the
  Lightcone Bootstrap}},
  \href{http://dx.doi.org/10.1007/JHEP02(2016)143}{\emph{JHEP} {\bf 02} (2016)
  143}, [\href{https://arxiv.org/abs/1511.08025}{{\tt 1511.08025}}].

\bibitem{Li:2015rfa}
D.~Li, D.~Meltzer and D.~Poland, \emph{{Non-Abelian Binding Energies from the
  Lightcone Bootstrap}},
  \href{http://dx.doi.org/10.1007/JHEP02(2016)149}{\emph{JHEP} {\bf 02} (2016)
  149}, [\href{https://arxiv.org/abs/1510.07044}{{\tt 1510.07044}}].

\bibitem{Alday:2015eya}
L.~F. Alday, A.~Bissi and T.~Lukowski, \emph{{Large spin systematics in CFT}},
  \href{http://dx.doi.org/10.1007/JHEP11(2015)101}{\emph{JHEP} {\bf 11} (2015)
  101}, [\href{https://arxiv.org/abs/1502.07707}{{\tt 1502.07707}}].

\bibitem{Alday:2015ota}
L.~F. Alday and A.~Zhiboedov, \emph{{Conformal Bootstrap With Slightly Broken
  Higher Spin Symmetry}},
  \href{http://dx.doi.org/10.1007/JHEP06(2016)091}{\emph{JHEP} {\bf 06} (2016)
  091}, [\href{https://arxiv.org/abs/1506.04659}{{\tt 1506.04659}}].

\bibitem{Alday:2015ewa}
L.~F. Alday and A.~Zhiboedov, \emph{{An Algebraic Approach to the Analytic
  Bootstrap}},  \href{https://arxiv.org/abs/1510.08091}{{\tt 1510.08091}}.

\bibitem{Simmons-Duffin:2016wlq}
D.~Simmons-Duffin, \emph{{The Lightcone Bootstrap and the Spectrum of the 3d
  Ising CFT}}, \href{http://dx.doi.org/10.1007/JHEP03(2017)086}{\emph{JHEP}
  {\bf 03} (2017) 086}, [\href{https://arxiv.org/abs/1612.08471}{{\tt
  1612.08471}}].

\bibitem{Alday:2016njk}
L.~F. Alday, \emph{{Large Spin Perturbation Theory}},
  \href{https://arxiv.org/abs/1611.01500}{{\tt 1611.01500}}.

\bibitem{Sachdev:1992fk}
S.~Sachdev and J.~Ye, \emph{{Gapless spin fluid ground state in a random,
  quantum Heisenberg magnet}},
  \href{http://dx.doi.org/10.1103/PhysRevLett.70.3339}{\emph{Phys. Rev. Lett.}
  {\bf 70} (1993) 3339}, [\href{https://arxiv.org/abs/cond-mat/9212030}{{\tt
  cond-mat/9212030}}].

\bibitem{kitaevfirsttalk}
A.~Kitaev. \url{http://online.kitp.ucsb.edu/online/joint98/kitaev/}.

\bibitem{Maldacena:2016hyu}
J.~Maldacena and D.~Stanford, \emph{{Remarks on the Sachdev-Ye-Kitaev model}},
  \href{http://dx.doi.org/10.1103/PhysRevD.94.106002}{\emph{Phys. Rev.} {\bf
  D94} (2016) 106002}, [\href{https://arxiv.org/abs/1604.07818}{{\tt
  1604.07818}}].

\bibitem{Polchinski:2016xgd}
J.~Polchinski and V.~Rosenhaus, \emph{{The Spectrum in the Sachdev-Ye-Kitaev
  Model}}, \href{http://dx.doi.org/10.1007/JHEP04(2016)001}{\emph{JHEP} {\bf
  04} (2016) 001}, [\href{https://arxiv.org/abs/1601.06768}{{\tt 1601.06768}}].

\bibitem{Simmons-Duffin:2017nub}
D.~Simmons-Duffin, D.~Stanford and E.~Witten, \emph{{A spacetime derivation of
  the Lorentzian OPE inversion formula}},
  \href{https://arxiv.org/abs/1711.03816}{{\tt 1711.03816}}.

\bibitem{mack19772}
G.~Mack, \emph{All unitary ray representations of the conformal group ${\rm
  su}(2, 2)$ with positive energy}, {\emph{Comm. Math. Phys.} {\bf 55} (1977)
  1--28}.

\bibitem{Ferrara:1972uq}
S.~Ferrara, A.~F. Grillo, G.~Parisi and R.~Gatto, \emph{{The shadow operator
  formalism for conformal algebra. vacuum expectation values and operator
  products}}, \href{http://dx.doi.org/10.1007/BF02907130}{\emph{Lett. Nuovo
  Cim.} {\bf 4S2} (1972) 115--120}.

\bibitem{SimmonsDuffin:2012uy}
D.~Simmons-Duffin, \emph{{Projectors, Shadows, and Conformal Blocks}},
  \href{http://dx.doi.org/10.1007/JHEP04(2014)146}{\emph{JHEP} {\bf 04} (2014)
  146}, [\href{https://arxiv.org/abs/1204.3894}{{\tt 1204.3894}}].

\bibitem{BALITSKY1989541}
I.~Balitsky and V.~Braun, \emph{Evolution equations for qcd string operators},
  \href{http://dx.doi.org/https://doi.org/10.1016/0550-3213(89)90168-5}{\emph{Nuclear
  Physics B} {\bf 311} (1989) 541 -- 584}.

\bibitem{Braun:2003rp}
V.~M. Braun, G.~P. Korchemsky and D.~Mueller, \emph{{The Uses of conformal
  symmetry in QCD}},
  \href{http://dx.doi.org/10.1016/S0146-6410(03)90004-4}{\emph{Prog. Part.
  Nucl. Phys.} {\bf 51} (2003) 311--398},
  [\href{https://arxiv.org/abs/hep-ph/0306057}{{\tt hep-ph/0306057}}].

\bibitem{Caron-Huot:2013fea}
S.~Caron-Huot, \emph{{When does the gluon reggeize?}},
  \href{http://dx.doi.org/10.1007/JHEP05(2015)093}{\emph{JHEP} {\bf 05} (2015)
  093}, [\href{https://arxiv.org/abs/1309.6521}{{\tt 1309.6521}}].

\bibitem{Balitsky:2013npa}
I.~Balitsky, V.~Kazakov and E.~Sobko, \emph{{Two-point correlator of twist-2
  light-ray operators in N=4 SYM in BFKL approximation}},
  \href{https://arxiv.org/abs/1310.3752}{{\tt 1310.3752}}.

\bibitem{Balitsky:2015oux}
I.~Balitsky, V.~Kazakov and E.~Sobko, \emph{{Three-point correlator of twist-2
  light-ray operators in N=4 SYM in BFKL approximation}},
  \href{https://arxiv.org/abs/1511.03625}{{\tt 1511.03625}}.

\bibitem{Balitsky:2015tca}
I.~Balitsky, V.~Kazakov and E.~Sobko, \emph{{Structure constant of twist-2
  light-ray operators in the Regge limit}},
  \href{http://dx.doi.org/10.1103/PhysRevD.93.061701}{\emph{Phys. Rev.} {\bf
  D93} (2016) 061701}, [\href{https://arxiv.org/abs/1506.02038}{{\tt
  1506.02038}}].

\bibitem{Hofman:2008ar}
D.~M. Hofman and J.~Maldacena, \emph{{Conformal collider physics: Energy and
  charge correlations}},
  \href{http://dx.doi.org/10.1088/1126-6708/2008/05/012}{\emph{JHEP} {\bf 05}
  (2008) 012}, [\href{https://arxiv.org/abs/0803.1467}{{\tt 0803.1467}}].

\bibitem{Dobrev:1977qv}
V.~K. Dobrev, G.~Mack, V.~B. Petkova, S.~G. Petrova and I.~T. Todorov,
  \emph{{Harmonic Analysis on the n-Dimensional Lorentz Group and Its
  Application to Conformal Quantum Field Theory}},
  \href{http://dx.doi.org/10.1007/BFb0009678}{\emph{Lect. Notes Phys.} {\bf 63}
  (1977) 1--280}.

\bibitem{Cordova:2017zej}
C.~Cordova, J.~Maldacena and G.~J. Turiaci, \emph{{Bounds on OPE Coefficients
  from Interference Effects in the Conformal Collider}},
  \href{http://dx.doi.org/10.1007/JHEP11(2017)032}{\emph{JHEP} {\bf 11} (2017)
  032}, [\href{https://arxiv.org/abs/1710.03199}{{\tt 1710.03199}}].

\bibitem{Cordova:2017dhq}
C.~Cordova and K.~Diab, \emph{{Universal Bounds on Operator Dimensions from the
  Average Null Energy Condition}},
  \href{http://dx.doi.org/10.1007/JHEP02(2018)131}{\emph{JHEP} {\bf 02} (2018)
  131}, [\href{https://arxiv.org/abs/1712.01089}{{\tt 1712.01089}}].

\bibitem{Faulkner:2016mzt}
T.~Faulkner, R.~G. Leigh, O.~Parrikar and H.~Wang, \emph{{Modular Hamiltonians
  for Deformed Half-Spaces and the Averaged Null Energy Condition}},
  \href{http://dx.doi.org/10.1007/JHEP09(2016)038}{\emph{JHEP} {\bf 09} (2016)
  038}, [\href{https://arxiv.org/abs/1605.08072}{{\tt 1605.08072}}].

\bibitem{Hartman:2016lgu}
T.~Hartman, S.~Kundu and A.~Tajdini, \emph{{Averaged Null Energy Condition from
  Causality}},  \href{https://arxiv.org/abs/1610.05308}{{\tt 1610.05308}}.

\bibitem{Dirac:1936fq}
P.~A.~M. Dirac, \emph{{Wave equations in conformal space}},
  \href{http://dx.doi.org/10.2307/1968455}{\emph{Annals Math.} {\bf 37} (1936)
  429--442}.

\bibitem{Mack:1969rr}
G.~Mack and A.~Salam, \emph{{Finite component field representations of the
  conformal group}},
  \href{http://dx.doi.org/10.1016/0003-4916(69)90278-4}{\emph{Annals Phys.}
  {\bf 53} (1969) 174--202}.

\bibitem{Boulware:1970ty}
D.~G. Boulware, L.~S. Brown and R.~D. Peccei, \emph{{Deep-inelastic
  electroproduction and conformal symmetry}},
  \href{http://dx.doi.org/10.1103/PhysRevD.2.293}{\emph{Phys. Rev.} {\bf D2}
  (1970) 293--298}.

\bibitem{Ferrara:1973eg}
S.~Ferrara, R.~Gatto and A.~F. Grillo, \emph{{Conformal algebra in space-time
  and operator product expansion}},
  \href{http://dx.doi.org/10.1007/BFb0111104}{\emph{Springer Tracts Mod. Phys.}
  {\bf 67} (1973) 1--64}.

\bibitem{Ferrara:1973yt}
S.~Ferrara, A.~F. Grillo and R.~Gatto, \emph{{Tensor representations of
  conformal algebra and conformally covariant operator product expansion}},
  \href{http://dx.doi.org/10.1016/0003-4916(73)90446-6}{\emph{Annals Phys.}
  {\bf 76} (1973) 161--188}.

\bibitem{Weinberg:2010fx}
S.~Weinberg, \emph{{Six-dimensional Methods for Four-dimensional Conformal
  Field Theories}},
  \href{http://dx.doi.org/10.1103/PhysRevD.82.045031}{\emph{Phys. Rev.} {\bf
  D82} (2010) 045031}, [\href{https://arxiv.org/abs/1006.3480}{{\tt
  1006.3480}}].

\bibitem{Luscher:1974ez}
M.~Luscher and G.~Mack, \emph{{Global Conformal Invariance in Quantum Field
  Theory}}, \href{http://dx.doi.org/10.1007/BF01608988}{\emph{Commun. Math.
  Phys.} {\bf 41} (1975) 203--234}.

\bibitem{Thomas1926}
T.~Y. Thomas, \emph{On conformal geometry}, {\emph{Proceedings of the National
  Academy of Sciences of the United States of America} {\bf 12} (1926)
  352--359}.

\bibitem{Dobrev:1975ru}
V.~K. Dobrev, V.~B. Petkova, S.~G. Petrova and I.~T. Todorov, \emph{{Dynamical
  Derivation of Vacuum Operator Product Expansion in Euclidean Conformal
  Quantum Field Theory}},
  \href{http://dx.doi.org/10.1103/PhysRevD.13.887}{\emph{Phys. Rev.} {\bf D13}
  (1976) 887}.

\bibitem{bailey1994}
T.~Bailey, M.~Eastwood and A.~Gover, \emph{Thomas's structure bundle for
  conformal, projective and related structures},
  \href{http://dx.doi.org/10.1216/rmjm/1181072333}{\emph{Rocky Mountain J.
  Math.} {\bf 24} (12, 1994) 1191--1217}.

\bibitem{Karateev:2017jgd}
D.~Karateev, P.~Kravchuk and D.~Simmons-Duffin, \emph{{Weight Shifting
  Operators and Conformal Blocks}},
  \href{http://dx.doi.org/10.1007/JHEP02(2018)081}{\emph{JHEP} {\bf 02} (2018)
  081}, [\href{https://arxiv.org/abs/1706.07813}{{\tt 1706.07813}}].

\bibitem{Costa:2011mg}
M.~S. Costa, J.~Penedones, D.~Poland and S.~Rychkov, \emph{{Spinning Conformal
  Correlators}}, \href{http://dx.doi.org/10.1007/JHEP11(2011)071}{\emph{JHEP}
  {\bf 11} (2011) 071}, [\href{https://arxiv.org/abs/1107.3554}{{\tt
  1107.3554}}].

\bibitem{Isachenkov:2016gim}
M.~Isachenkov and V.~Schomerus, \emph{{Superintegrability of $d$-dimensional
  Conformal Blocks}},
  \href{http://dx.doi.org/10.1103/PhysRevLett.117.071602}{\emph{Phys. Rev.
  Lett.} {\bf 117} (2016) 071602},
  [\href{https://arxiv.org/abs/1602.01858}{{\tt 1602.01858}}].

\bibitem{Isachenkov:2017qgn}
M.~Isachenkov and V.~Schomerus, \emph{{Integrability of Conformal Blocks I:
  Calogero-Sutherland Scattering Theory}},
  \href{https://arxiv.org/abs/1711.06609}{{\tt 1711.06609}}.

\bibitem{Cuomo:2017wme}
G.~F. Cuomo, D.~Karateev and P.~Kravchuk, \emph{{General Bootstrap Equations in
  4D CFTs}},  \href{https://arxiv.org/abs/1705.05401}{{\tt 1705.05401}}.

\bibitem{KnappStein1}
A.~W. Knapp and E.~M. Stein, \emph{Intertwining operators for semisimple
  groups}, \href{http://dx.doi.org/10.2307/1970887}{\emph{Ann. of Math. (2)}
  {\bf 93} (1971) 489--578}.

\bibitem{KnappStein2}
A.~W. Knapp and E.~M. Stein, \emph{Intertwining operators for semisimple
  groups. {II}}, \href{http://dx.doi.org/10.1007/BF01389898}{\emph{Invent.
  Math.} {\bf 60} (1980) 9--84}.

\bibitem{Epstein:1965zza}
H.~Epstein, V.~Glaser and A.~Jaffe, \emph{{Nonpositivity of energy density in
  Quantized field theories}},
  \href{http://dx.doi.org/10.1007/BF02749799}{\emph{Nuovo Cim.} {\bf 36} (1965)
  1016}.

\bibitem{streater2016pct}
R.~Streater and A.~Wightman, \emph{PCT, Spin and Statistics, and All That}.
\newblock Princeton Landmarks in Mathematics and Physics. Princeton University
  Press, 2016.

\bibitem{harish-chandra1970}
Harish-Chandra, \emph{Harmonic analysis on semisimple lie groups}, {\emph{Bull.
  Amer. Math. Soc.} {\bf 76} (05, 1970) 529--551}.

\bibitem{Fitzpatrick:2011dm}
A.~L. Fitzpatrick and J.~Kaplan, \emph{{Unitarity and the Holographic
  S-Matrix}}, \href{http://dx.doi.org/10.1007/JHEP10(2012)032}{\emph{JHEP} {\bf
  1210} (2012) 032}, [\href{https://arxiv.org/abs/1112.4845}{{\tt 1112.4845}}].

\bibitem{ShadowFuture}
D.~Karateev, P.~Kravchuk and D.~Simmons-Duffin, \emph{to appear}, .

\bibitem{doi:10.1063/1.522898}
J.~J. Bisognano and E.~H. Wichmann, \emph{On the duality condition for quantum
  fields}, \href{http://dx.doi.org/10.1063/1.522898}{\emph{Journal of
  Mathematical Physics} {\bf 17} (1976) 303--321},
  [\href{https://arxiv.org/abs/https://aip.scitation.org/doi/pdf/10.1063/1.522898}{{\tt
  https://aip.scitation.org/doi/pdf/10.1063/1.522898}}].

\bibitem{Penedones:2010ue}
J.~Penedones, \emph{{Writing CFT correlation functions as AdS scattering
  amplitudes}}, \href{http://dx.doi.org/10.1007/JHEP03(2011)025}{\emph{JHEP}
  {\bf 1103} (2011) 025}, [\href{https://arxiv.org/abs/1011.1485}{{\tt
  1011.1485}}].

\bibitem{Kravchuk:2016qvl}
P.~Kravchuk and D.~Simmons-Duffin, \emph{{Counting Conformal Correlators}},
  \href{https://arxiv.org/abs/1612.08987}{{\tt 1612.08987}}.

\bibitem{Costa:2011dw}
M.~S. Costa, J.~Penedones, D.~Poland and S.~Rychkov, \emph{{Spinning Conformal
  Blocks}}, \href{http://dx.doi.org/10.1007/JHEP11(2011)154}{\emph{JHEP} {\bf
  11} (2011) 154}, [\href{https://arxiv.org/abs/1109.6321}{{\tt 1109.6321}}].

\bibitem{Pappadopulo:2012jk}
D.~Pappadopulo, S.~Rychkov, J.~Espin and R.~Rattazzi, \emph{{OPE Convergence in
  Conformal Field Theory}},
  \href{http://dx.doi.org/10.1103/PhysRevD.86.105043}{\emph{Phys.Rev.} {\bf
  D86} (2012) 105043}, [\href{https://arxiv.org/abs/1208.6449}{{\tt
  1208.6449}}].

\bibitem{DO3}
F.~Dolan and H.~Osborn, \emph{{Conformal Partial Waves: Further Mathematical
  Results}},  \href{https://arxiv.org/abs/1108.6194v2}{{\tt 1108.6194v2}}.

\bibitem{Chew:1961ev}
G.~F. Chew and S.~C. Frautschi, \emph{{Principle of Equivalence for All
  Strongly Interacting Particles Within the S Matrix Framework}},
  \href{http://dx.doi.org/10.1103/PhysRevLett.7.394}{\emph{Phys. Rev. Lett.}
  {\bf 7} (1961) 394--397}.

\bibitem{Gribov:1961ex}
V.~N. Gribov, \emph{{Possible Asymptotic Behavior of Elastic Scattering}},
  {\emph{JETP Lett.} {\bf 41} (1961) 667--669}.

\bibitem{Komargodski:2016gci}
Z.~Komargodski, M.~Kulaxizi, A.~Parnachev and A.~Zhiboedov, \emph{{Conformal
  Field Theories and Deep Inelastic Scattering}},
  \href{http://dx.doi.org/10.1103/PhysRevD.95.065011}{\emph{Phys. Rev.} {\bf
  D95} (2017) 065011}, [\href{https://arxiv.org/abs/1601.05453}{{\tt
  1601.05453}}].

\bibitem{Casini:2010bf}
H.~Casini, \emph{{Wedge reflection positivity}},
  \href{http://dx.doi.org/10.1088/1751-8113/44/43/435202}{\emph{J. Phys.} {\bf
  A44} (2011) 435202}, [\href{https://arxiv.org/abs/1009.3832}{{\tt
  1009.3832}}].

\bibitem{Hofman:2016awc}
D.~M. Hofman, D.~Li, D.~Meltzer, D.~Poland and F.~Rejon-Barrera, \emph{{A Proof
  of the Conformal Collider Bounds}},
  \href{http://dx.doi.org/10.1007/JHEP06(2016)111}{\emph{JHEP} {\bf 06} (2016)
  111}, [\href{https://arxiv.org/abs/1603.03771}{{\tt 1603.03771}}].

\bibitem{Nachtmann:1973mr}
O.~Nachtmann, \emph{{Positivity constraints for anomalous dimensions}},
  \href{http://dx.doi.org/10.1016/0550-3213(73)90144-2}{\emph{Nucl.Phys.} {\bf
  B63} (1973) 237--247}.

\bibitem{Costa:2017twz}
M.~S. Costa, T.~Hansen and J.~Penedones, \emph{{Bounds for OPE coefficients on
  the Regge trajectory}},
  \href{http://dx.doi.org/10.1007/JHEP10(2017)197}{\emph{JHEP} {\bf 10} (2017)
  197}, [\href{https://arxiv.org/abs/1707.07689}{{\tt 1707.07689}}].

\bibitem{Klinkhammer:1991ki}
G.~Klinkhammer, \emph{{Averaged energy conditions for free scalar fields in
  flat space-times}},
  \href{http://dx.doi.org/10.1103/PhysRevD.43.2542}{\emph{Phys. Rev.} {\bf D43}
  (1991) 2542--2548}.

\bibitem{Alfimov:2014bwa}
M.~Alfimov, N.~Gromov and V.~Kazakov, \emph{{QCD Pomeron from AdS/CFT Quantum
  Spectral Curve}},
  \href{http://dx.doi.org/10.1007/JHEP07(2015)164}{\emph{JHEP} {\bf 07} (2015)
  164}, [\href{https://arxiv.org/abs/1408.2530}{{\tt 1408.2530}}].

\bibitem{Gromov:2014caa}
N.~Gromov, V.~Kazakov, S.~Leurent and D.~Volin, \emph{{Quantum spectral curve
  for arbitrary state/operator in AdS$_{5}$/CFT$_{4}$}},
  \href{http://dx.doi.org/10.1007/JHEP09(2015)187}{\emph{JHEP} {\bf 09} (2015)
  187}, [\href{https://arxiv.org/abs/1405.4857}{{\tt 1405.4857}}].

\bibitem{Gromov:2015wca}
N.~Gromov, F.~Levkovich-Maslyuk and G.~Sizov, \emph{{Quantum Spectral Curve and
  the Numerical Solution of the Spectral Problem in AdS5/CFT4}},
  \href{http://dx.doi.org/10.1007/JHEP06(2016)036}{\emph{JHEP} {\bf 06} (2016)
  036}, [\href{https://arxiv.org/abs/1504.06640}{{\tt 1504.06640}}].

\bibitem{Alfimov:2018cms}
M.~Alfimov, N.~Gromov and G.~Sizov, \emph{{BFKL Spectrum of N=4 SYM: non-Zero
  Conformal Spin}},  \href{https://arxiv.org/abs/1802.06908}{{\tt 1802.06908}}.

\bibitem{Liendo:2017wsn}
P.~Liendo, \emph{{Revisiting the dilatation operator of the Wilson–Fisher
  fixed point}},
  \href{http://dx.doi.org/10.1016/j.nuclphysb.2017.04.020}{\emph{Nucl. Phys.}
  {\bf B920} (2017) 368--384}, [\href{https://arxiv.org/abs/1701.04830}{{\tt
  1701.04830}}].

\bibitem{Raben:2018sjl}
T.~Raben and C.-I. Tan, \emph{{Minkowski Conformal Blocks and the Regge Limit
  for SYK-like Models}},  \href{https://arxiv.org/abs/1801.04208}{{\tt
  1801.04208}}.

\bibitem{Alday:2017gde}
L.~F. Alday, A.~Bissi and E.~Perlmutter, \emph{{Holographic Reconstruction of
  AdS Exchanges from Crossing Symmetry}},
  \href{http://dx.doi.org/10.1007/JHEP08(2017)147}{\emph{JHEP} {\bf 08} (2017)
  147}, [\href{https://arxiv.org/abs/1705.02318}{{\tt 1705.02318}}].

\bibitem{Dey:2017fab}
P.~Dey, K.~Ghosh and A.~Sinha, \emph{{Simplifying large spin bootstrap in
  Mellin space}}, \href{http://dx.doi.org/10.1007/JHEP01(2018)152}{\emph{JHEP}
  {\bf 01} (2018) 152}, [\href{https://arxiv.org/abs/1709.06110}{{\tt
  1709.06110}}].

\bibitem{Henriksson:2017eej}
J.~Henriksson and T.~Lukowski, \emph{{Perturbative Four-Point Functions from
  the Analytic Conformal Bootstrap}},
  \href{http://dx.doi.org/10.1007/JHEP02(2018)123}{\emph{JHEP} {\bf 02} (2018)
  123}, [\href{https://arxiv.org/abs/1710.06242}{{\tt 1710.06242}}].

\bibitem{vanLoon:2017xlq}
M.~van Loon, \emph{{The Analytic Bootstrap in Fermionic CFTs}},
  \href{http://dx.doi.org/10.1007/JHEP01(2018)104}{\emph{JHEP} {\bf 01} (2018)
  104}, [\href{https://arxiv.org/abs/1711.02099}{{\tt 1711.02099}}].

\bibitem{Turiaci:2018nua}
G.~J. Turiaci and A.~Zhiboedov, \emph{{Veneziano Amplitude of Vasiliev
  Theory}},  \href{https://arxiv.org/abs/1802.04390}{{\tt 1802.04390}}.

\bibitem{Alday:2017vkk}
L.~F. Alday and S.~Caron-Huot, \emph{{Gravitational S-matrix from CFT
  dispersion relations}},  \href{https://arxiv.org/abs/1711.02031}{{\tt
  1711.02031}}.

\bibitem{Alday:2017zzv}
L.~F. Alday, J.~Henriksson and M.~van Loon, \emph{{Taming the
  $\epsilon$-expansion with Large Spin Perturbation Theory}},
  \href{https://arxiv.org/abs/1712.02314}{{\tt 1712.02314}}.

\bibitem{Lemos:2017vnx}
M.~Lemos, P.~Liendo, M.~Meineri and S.~Sarkar, \emph{{Universality at large
  transverse spin in defect CFT}},
  \href{https://arxiv.org/abs/1712.08185}{{\tt 1712.08185}}.

\bibitem{Iliesiu:2018fao}
L.~Iliesiu, M.~Kologlu, R.~Mahajan, E.~Perlmutter and D.~Simmons-Duffin,
  \emph{{The Conformal Bootstrap at Finite Temperature}},
  \href{https://arxiv.org/abs/1802.10266}{{\tt 1802.10266}}.

\bibitem{Gillioz:2018kwh}
M.~Gillioz, X.~Lu and M.~A. Luty, \emph{{Graviton Scattering and a Sum Rule for
  the c Anomaly in 4D CFT}},  \href{https://arxiv.org/abs/1801.05807}{{\tt
  1801.05807}}.

\bibitem{Chowdhury:2017vel}
S.~D. Chowdhury, J.~R. David and S.~Prakash, \emph{{Constraints on parity
  violating conformal field theories in $d=3$}},
  \href{http://dx.doi.org/10.1007/JHEP11(2017)171}{\emph{JHEP} {\bf 11} (2017)
  171}, [\href{https://arxiv.org/abs/1707.03007}{{\tt 1707.03007}}].

\bibitem{Meltzer:2017rtf}
D.~Meltzer and E.~Perlmutter, \emph{{Beyond $a=c$: Gravitational Couplings to
  Matter and the Stress Tensor OPE}},
  \href{https://arxiv.org/abs/1712.04861}{{\tt 1712.04861}}.

\bibitem{Casini:2017vbe}
H.~Casini, E.~Teste and G.~Torroba, \emph{{Markov Property of the Conformal
  Field Theory Vacuum and the a Theorem}},
  \href{http://dx.doi.org/10.1103/PhysRevLett.118.261602}{\emph{Phys. Rev.
  Lett.} {\bf 118} (2017) 261602},
  [\href{https://arxiv.org/abs/1704.01870}{{\tt 1704.01870}}].

\bibitem{Casini:2017roe}
H.~Casini, E.~Teste and G.~Torroba, \emph{{Modular Hamiltonians on the null
  plane and the Markov property of the vacuum state}},
  \href{http://dx.doi.org/10.1088/1751-8121/aa7eaa}{\emph{J. Phys.} {\bf A50}
  (2017) 364001}, [\href{https://arxiv.org/abs/1703.10656}{{\tt 1703.10656}}].

\bibitem{Afkhami-Jeddi:2017rmx}
N.~Afkhami-Jeddi, T.~Hartman, S.~Kundu and A.~Tajdini, \emph{{Shockwaves from
  the Operator Product Expansion}},
  \href{https://arxiv.org/abs/1709.03597}{{\tt 1709.03597}}.

\bibitem{Bousso:2015mna}
R.~Bousso, Z.~Fisher, S.~Leichenauer and A.~C. Wall, \emph{{Quantum focusing
  conjecture}}, \href{http://dx.doi.org/10.1103/PhysRevD.93.064044}{\emph{Phys.
  Rev.} {\bf D93} (2016) 064044}, [\href{https://arxiv.org/abs/1506.02669}{{\tt
  1506.02669}}].

\bibitem{Bousso:2015wca}
R.~Bousso, Z.~Fisher, J.~Koeller, S.~Leichenauer and A.~C. Wall, \emph{{Proof
  of the Quantum Null Energy Condition}},
  \href{http://dx.doi.org/10.1103/PhysRevD.93.024017}{\emph{Phys. Rev.} {\bf
  D93} (2016) 024017}, [\href{https://arxiv.org/abs/1509.02542}{{\tt
  1509.02542}}].

\bibitem{Balakrishnan:2017bjg}
S.~Balakrishnan, T.~Faulkner, Z.~U. Khandker and H.~Wang, \emph{{A General
  Proof of the Quantum Null Energy Condition}},
  \href{https://arxiv.org/abs/1706.09432}{{\tt 1706.09432}}.

\bibitem{Haag:1992hx}
R.~Haag, \emph{{Local quantum physics: Fields, particles, algebras}}.
\newblock 1992.

\bibitem{DO2}
F.~Dolan and H.~Osborn, \emph{{Conformal partial waves and the operator product
  expansion}},
  \href{http://dx.doi.org/10.1016/j.nuclphysb.2003.11.016}{\emph{Nucl.Phys.}
  {\bf B678} (2004) 491--507},
  [\href{https://arxiv.org/abs/hep-th/0309180}{{\tt hep-th/0309180}}].

\bibitem{Dolan:2000ut}
F.~A. Dolan and H.~Osborn, \emph{{Conformal four point functions and the
  operator product expansion}},
  \href{http://dx.doi.org/10.1016/S0550-3213(01)00013-X}{\emph{Nucl. Phys.}
  {\bf B599} (2001) 459--496},
  [\href{https://arxiv.org/abs/hep-th/0011040}{{\tt hep-th/0011040}}].

\bibitem{Czech:2016xec}
B.~Czech, L.~Lamprou, S.~McCandlish, B.~Mosk and J.~Sully, \emph{{A
  Stereoscopic Look into the Bulk}},
  \href{http://dx.doi.org/10.1007/JHEP07(2016)129}{\emph{JHEP} {\bf 07} (2016)
  129}, [\href{https://arxiv.org/abs/1604.03110}{{\tt 1604.03110}}].

\bibitem{Kos:2013tga}
F.~Kos, D.~Poland and D.~Simmons-Duffin, \emph{{Bootstrapping the $O(N)$ vector
  models}}, \href{http://dx.doi.org/10.1007/JHEP06(2014)091}{\emph{JHEP} {\bf
  1406} (2014) 091}, [\href{https://arxiv.org/abs/1307.6856}{{\tt 1307.6856}}].

\bibitem{Kos:2014bka}
F.~Kos, D.~Poland and D.~Simmons-Duffin, \emph{{Bootstrapping Mixed Correlators
  in the 3D Ising Model}},
  \href{http://dx.doi.org/10.1007/JHEP11(2014)109}{\emph{JHEP} {\bf 11} (2014)
  109}, [\href{https://arxiv.org/abs/1406.4858}{{\tt 1406.4858}}].

\end{thebibliography}\endgroup

\end{document}